\tikzset{dotmark/.style={circle,fill,inner sep=1.5pt}}
\tikzset{emptymark/.style={circle,draw,fill=white,inner sep=1.5pt}}
\tikzset{crossmark/.style={thick,inner sep=1.5pt}}
\def\ShowAuthNotes{1}
\newcommand{\authnote}[3]{\textcolor{#3}{[{\bf #1:} { {#2}}]}}
\newcommand{\authnote}[3]{}
\newcommand{\A}{\mathcal{A}}
\newcommand{\Ab}{\mathbf{A}}
\def\twoheadleadsto{\tikz[baseline=(a.base)]{\draw[%
    decorate,decoration={zigzag,segment length=4, amplitude=.9},%
    ] (0,0) -- (.25, 0);%
    \draw[%
    -{Classical TikZ Rightarrow}.{Classical TikZ Rightarrow},%
    ] (.25, 0) -- (.4, 0);%
    \node (a) at (.4/2,-.03) {\phantom{\(\leadsto\)}};%
}}
\newcommand{\Oh}{\mathcal{O}}
\newcommand{\Ohtilde}{\tilde{\Oh}}
\newcommand{\Ohhat}{\hat{\Oh}}
\DeclareMathOperator*{\argmin}{arg\,min}
\DeclareMathOperator*{\argmax}{arg\,max}
\def\substr{\ensuremath \preccurlyeq}
\def\fragmentco#1#2{\bm{[}\,#1\,\bm{.\,.}\,#2\,\bm{)}}
\def\fragmentoc#1#2{\bm{(}\,#1\,\bm{.\,.}\,#2\,\bm{]}}
\def\fragmentoo#1#2{\bm{(}\,#1\,\bm{.\,.}\,#2\,\bm{)}}
\def\fragment#1#2{\bm{[}\,#1\,\bm{.\,.}\,#2\,\bm{]}}
\def\position#1{[#1]}
\def\threehalves{{}^3{\mskip -3.5mu/\mskip -3mu}_2\,}
\newcommand{\ceil}[1]{\lceil #1 \rceil}
\newcommand{\floor}[1]{\lfloor #1 \rfloor}
\newcommand{\rev}[1]{\overline{#1}}
\newcommand{\per}{\operatorname{per}}
\newcommand{\rot}{\operatorname{rot}}
\newcommand{\LZ}{\mathsf{LZ}}
\newcommand{\ed}{\delta_E}
\newcommand{\hd}{\delta_H}
\newcommand{\edp}[2]{{\delta_E}(#1,#2^*)}
\newcommand{\edl}[2]{{\delta_E}(#1,{}^*\!#2^*)}
\newcommand{\eds}[2]{{\delta_E}(#1,{}^*\!#2)}
\newcommand{\edal}[1]{\delta_E^{#1}}
\newcommand\selfed{\mathsf{self}\text{-}\ed}
\newcommand{\edshifted}[1]{\delta_E^{#1}} % same def as edal?
\newcommand{\edpx}[2]{{\delta_E}(#1,\circ#2)}
\newcommand{\edsx}[2]{{\delta_E}(#1,#2\circ)}
\newcommand{\OccE}{\mathrm{Occ}^E}
\newcommand{\OccH}{\mathrm{Occ}^H}
\newcommand{\Occ}{\mathrm{Occ}}
\newcommand{\last}{\mathrm{last}}
\newcommand{\gaped}{{\sc Gap Edit Distance}\xspace}
\newcommand{\shifted}{{\sc Shifted}\xspace}
\newcommand{\yes}{{\sc yes}\xspace}
\newcommand{\no}{{\sc no}\xspace}
\newcommand{\mA}{\mathcal{A}}
\newcommand{\mX}{\mathcal{X}}
\newcommand{\mY}{\mathcal{Y}}
\newcommand{\onto}{\twoheadleadsto}
\newcommand{\w}{\operatorname{w}}
\newcommand{\hi}{\hat{\imath}}
\newcommand{\hx}{\hat{x}}
\newcommand{\hy}{\hat{y}}
\newcommand{\sE}{\mathsf{E}}
\newcommand{\bG}{\mathbf{G}}
\newcommand{\bc}{\operatorname{bc}}
\newcommand\Tau{\mathcal{T}}
\newcommand{\pref}{\mathsf{pref}}
\newcommand{\suf}{\mathsf{suf}}
\newcommand{\mXpref}{\mX_{\mathsf{pref}}}
\newcommand{\mXsuf}{\mX_{\mathsf{suf}}}
\newcommand{\Z}{\mathbb{Z}}
\newcommand{\Zz}{\mathbb{Z}_{\ge 0}}
\def\alphav{128}
\def\betav{8}
\def\deltavN{3}
\def\deltavD{8}
\def\problembox#1{%
    \vspace{2mm}%
    \noindent\fbox{%
    \begin{minipage}{.985\linewidth}%
        #1
    \end{minipage}%
    }%
    \vspace{2mm}%
}
\newcommand{\defproblem}[3]{%
    \problembox{%
        \textsc{#1}\\
        {\bf{Input:}} #2  \\
        {\bf{Output:}} #3
    }%
}
\newcommand{\defproblemc}[4]{%
    \problembox{%
        \textsc{#1}\\
        {\bf{Input:}} #2  \\
        %{\bf{Alice's Message to Bob:}} #3  \\
        {\bf{Output:}} #3
    }%
}
\renewenvironment{cases}{%
  \matrix@check\cases\env@cases
}{%
  \endarray\right.%
}
\def\env@cases{%
  \let\@ifnextchar\new@ifnextchar
  \left\lbrace
  \def\arraystretch{1.1}%
  \array{@{\;}c@{\quad}l@{}}%
}
\def\mid{\ensuremath :}
\def\emptyset{\varnothing}
\def\epsilon{\varepsilon}
\newcommand{\zero}{\mathtt{0}}
\newcommand{\one}{\mathtt{1}}
\def\pn#1{\textsc{#1}}
\def\PMwE{\pn{Pattern Matching with Edits}\xspace}
\def\PMwM{\pn{Pattern Matching with Mismatches}\xspace}
\def\pr#1{\ensuremath{\mathsf{Pr}\!\bm{\left[}\,#1\,\bm{\right]}}}
\def\GS{Grover's Search\xspace}
\title{On the Communication Complexity of Approximate~Pattern~Matching}
\author{Tomasz Kociumaka}{Max Planck Institute for Informatics\\Saarland Informatics
Campus\\Saarbrücken, Germany}{tomasz.kociumaka@mpi-inf.mpg.de}{https://orcid.org/0000-0002-2477-1702}{}
\author{Jakob Nogler}{ETH Zürich\\Zurich, Switzerland}{jnogler@ethz.ch}{https://orcid.org/0009-0002-7028-2595}{}
\author{Philip Wellnitz}{National Institute of Informatics\\The Graduate University for Advanced Studies, SOKENDAI\\Tokyo, Japan}{wellnitz@nii.ac.jp}{https://orcid.org/0000-0002-6482-8478}{}
\authorrunning{T. Kociumaka, J. Nogler, and P. Wellnitz}
\begin{document}
\pagenumbering{roman}

\maketitle
\begin{abstract}
    The decades-old \emph{Pattern Matching with Edits} problem, given a length-$n$ string \(T\) (the text), a length-$m$ string $P$ (the pattern), and a positive integer $k$ (the threshold), asks to list all fragments of \(T\) that are at edit distance at most \(k\) from~\(P\).
    The one-way communication complexity of this problem is the minimum amount of space needed to encode the answer so that it can be retrieved without accessing the input strings $P$ and $T$.

    The closely related Pattern Matching with Mismatches problem (defined in terms of the Hamming distance instead of the edit distance) is already well understood from the communication complexity perspective:
    Clifford, Kociumaka, and Porat [SODA 2019] proved that $\Omega(n/m \cdot k \log(m/k))$ bits are necessary and $\Oh(n/m \cdot k\log (m|\Sigma|/k))$ bits are sufficient; the upper bound allows encoding not only the occurrences of $P$ in $T$ with at most $k$ mismatches but also the substitutions needed to make each $k$-mismatch occurrence exact.

    Despite recent improvements in the running time [Charalampopoulos, Kociumaka, and Wellnitz; FOCS 2020 and 2022], the communication complexity of Pattern Matching with Edits remained unexplored, with a lower bound of $\Omega(n/m \cdot k\log(m/k))$ bits and an upper bound of $\Oh(n/m \cdot k^3\log m)$ bits stemming from previous research.
    In this work, we prove an upper bound of \(\Oh(n/m \cdot k \log^2 m)\) bits, thus establishing the optimal communication complexity up to logarithmic factors.
    We also show that \(\Oh(n/m \cdot k \log m \log (m|\Sigma|))\) bits allow encoding, for each $k$-error occurrence of $P$ in $T$, the shortest sequence of edits needed to make the occurrence exact.
    Our result further emphasizes the close relationship between Pattern Matching with Mismatches and Pattern Matching with Edits.

    We leverage the techniques behind our new result on the communication complexity to obtain quantum algorithms for Pattern Matching with Edits: we demonstrate a quantum algorithm that uses \(\Oh(n^{1+o(1)}/m \cdot \sqrt{km})\) queries and \(\Oh(n^{1+o(1)}/m \cdot (\sqrt{k}m + k^{3.5}))\) quantum time.
    Moreover, when determining the existence of at least one occurrence, the algorithm uses $\Oh(\sqrt{n^{1+o(1)}/m} \cdot \sqrt{km})$ queries and $\Oh(\sqrt{n^{1+o(1)}/m} \cdot (\sqrt{k}m + k^{3.5}))$ time.
    For both cases, we establish corresponding lower bounds to demonstrate that the query complexity is optimal up to sub-polynomial factors.
\end{abstract}

\clearpage

\thispagestyle{empty}
{\tableofcontents}
\newpage
\pagenumbering{arabic}

\section{Introduction} \label{sec:intro}

While a \emph{string} is perhaps the most basic way to represent data, this fact makes \emph{algorithms} working on strings more applicable and powerful.
Arguably, the very first thing to do with any kind of data is to find \emph{patterns} in it.
The \emph{\pn{Pattern Matching}} problem for strings and its variations are thus perhaps among the most fundamental problems that Theoretical Computer Science has to offer.

In this paper, we study the practically relevant \emph{\PMwE} variation~\cite{S80}.
Given a text string \(T\) of length \(n\), a pattern string \(P\) of length \(m\), and a threshold \(k\), the aim is to calculate the set \(\OccE_k(P,T)\) consisting of (the starting positions of) all the fragments of \(T\) that are at most \(k\) edits away from the pattern \(P\).
In other words, we compute the set of \(k\)-error occurrences of \(P\) in \(T\), more formally defined as
\[
    \OccE_k (P,T)\coloneqq \{i\in \fragment{0}{n} \mid \exists_{j\in \fragment{i}{n}} \ed(P,T\fragmentco{i}{j}\leq k)\},
\]
where we utilize the classical edit distance \(\ed\) (also referred to as the Levenshtein distance)~\cite{Levenshtein66} as the distance measure.
Here, an edit is either an insertion, a deletion, or a substitution of a single character.

\defproblemc{Pattern Matching with Edits}
{a pattern $P$ of length $m$, a text $T$ of length $n$, and an integer threshold $k>0$.}
{the set $\OccE_{k}(P,T)$.}
\par

Even though the \PMwE problem is almost as classical as it can get, with key algorithmic advances (from \(\Oh(mn)\) time down to \(\Oh(kn)\) time) dating back to the early and late 1980s~\cite{S80,LV88,LandauV89}, major progress has been made very
recently, when Charalampopoulos, Kociumaka, and Wellnitz~\cite{CKW22} obtained an $\Ohtilde(n+k^{3.5} n/m)$-time%
\footnote{The $\Ohtilde(\cdot)$ and $\Ohhat(\cdot)$ notations suppress factors poly-logarithmic and sub-polynomial in the input size $n+m$, respectively.}
solution and thereby broke through the 20-years-old barrier of the $\Oh(n+k^4 n/m)$-time algorithm by Cole and Hariharan~\cite{ColeH98}.
And the journey is far from over yet: the celebrated Orthogonal-Vectors-based lower bound for edit distance~\cite{bi18} rules out only  \(\Oh(n + k^{2-\Omega(1)} n/m)\)-time algorithms (also consult \cite{CKW22} for details), leaving open a wide area of uncharted algorithmic territory.
In this paper, we provide tools and structural insights that---we believe---will aid the exploration of the said territory.

We add to the picture a powerful new finding that sheds new light on the solution
structure of the \PMwE problem---similar structural results~\cite{bkw19,CKW20} form the backbone of the aforementioned breakthrough~\cite{CKW22}.
Specifically, we investigate how much space is needed to store all \(k\)-error occurrences of \(P\) in \(T\).
We know from~\cite{CKW20} that $\Oh(n/m\cdot k^3 \log m)$ bits suffice since one may report the occurrences as $\Oh(k^3)$ arithmetic progressions if $n=\Oh(m)$.
However, such complexity is likely incompatible with algorithms running faster than \(\Ohtilde(n + k^3n/m)\).
In this paper, we show that, indeed, $\Oh(n/m\cdot k\log^2 m)$ bits suffice to represent the set $\OccE_{k}(P,T)$.

Formally, the communication complexity of \PMwE measures the space needed to encode the output so that it can be retrieved without accessing the input.
We may interpret this setting as a two-party game: Alice is given an instance of the problem and constructs a message for Bob, who must be able to produce the output of the problem given Alice's message.
Since Bob does not have any input, it suffices to consider one-way single-round communication protocols.

\begin{restatable*}{mtheorem}{ccompl}\label{thm:ccompl}
    The \PMwE problem admits a one-way deterministic communication protocol
    that sends $\Oh(n/m\cdot k\log^2 m)$ bits.
    Within the same communication complexity, one can also encode the family of all fragments of $T\fragmentco{i}{j}$ that satisfy $\ed(P,T\fragmentco{i}{j})\le k$, as well as all optimal alignments $P\onto T\fragmentco{i}{j}$ for each of these fragments.
    Further, increasing the communication complexity to $\Oh(n/m\cdot k\log m \log(m|\Sigma|))$, where $\Sigma$ denotes the input alphabet, one can also retrieve the edit information for each optimal alignment.
\end{restatable*}

Observe that our encoding scheme suffices to retrieve not only the set \(\OccE_{k}(P,T)\) (which contains only starting positions of the $k$-error occurrences) but also the fragments of \(T\) with edit distance at most \(k\) from~\(P\).
In other words, it allows retrieving all pairs $0\le i \le j\le n$ such that \(\ed(P,T\fragmentco{i}{j}) \le k\).

We complement \cref{thm:ccompl} with a simple lower bound that shows that our result is tight (essentially up to one logarithmic factor).

\begin{restatable*}{mtheorem}{cclb}\label{thm:cclb}
    Fix integers $n,m,k$ such that $n/2 \ge m > k > 0$.
    Every communication protocol for the \PMwE problem uses
    $\Omega(n/m\cdot k\log(m/k))$ bits for $P=\zero^m$ and some $T\in \{\zero,\one\}^n$.
\end{restatable*}

Observe that our lower bound holds for the very simple case that the pattern is the all-zeros string and only the text contains nonzero characters.
In this case, the edit distance of the pattern and another string depends only on the length and the number of nonzero characters in the other string, and we can thus easily compute the edit distance in linear time.

\paragraph*{From Structural Insights to Better Algorithms: A Success Story}

Let us take a step back and review how structural results aided the development of approximate-pattern-matching algorithms in the recent past.

First, let us review the key insight of~\cite{CKW20} that led to the breakthrough of~\cite{CKW22}.
Crucially, the authors use that, for any pair of strings \(P\) and \(T\) with \(|T| \le \threehalves \cdot |P|\) and threshold \(k\ge 1\), either (a) \(P\) has at most \(\Oh(k^2)\) occurrences with at most \(k\) edits in \(T\), or (b) \(P\) and the relevant part of \(T\) are at edit distance \(\Oh(k)\) to periodic strings with the same period.
This insight helps as follows:
First, one may derive that, indeed, all \(k\)-error occurrences of \(P\) in \(T\) form \(\Oh(k^3)\) arithmetic progressions.
Second, it gives a blueprint for an algorithm: one has to tackle just two important cases: an easy \emph{nonperiodic} case, where \(P\) and \(T\) are highly unstructured and $k$-error occurrences are rare, and a not-so-easy periodic case, where \(P\) and \(T\) are highly repetitive and occurrences are frequent but appear in a structured manner.

The structural insights of \cite{CKW20} have found widespread other applications.
For example, they readily yielded algorithms for differentially private approximate pattern matching~\cite{S24}, approximate circular pattern matching problems~\cite{CKP21,CKP22,CPR24}, and they even played a key role in obtaining small-space algorithms for (online) language distance problems~\cite{BKS23}, among others.

Interestingly, an insight similar to the one of \cite{CKW20} was first obtained
in~\cite{bkw19} for the much easier problem of \PMwM (where we allow neither insertions
nor deletions) before being tightened and ported to \PMwE in~\cite{CKW20}.
Similarly, in this paper, we port a known communication complexity bound from \PMwM to
\PMwE; albeit with a much more involved proof.
As proved in~\cite{CKP19}, \PMwM problem admits a one-way deterministic $\Oh(k\log(m|\Sigma|/k))$-bit communication protocol.
While we discuss later (in the Technical Overview) the result of \cite{CKP19} as well as
the challenges in porting it to \PMwE, let us highlight here that
their result was crucial for obtaining an essentially optimal \emph{streaming} algorithm
for \PMwM.

Finally, let us discuss the future potential of our new structural results.
First, as a natural generalization of \cite{CKP19}, \(\Ohhat(k)\)-space algorithms for
\PMwE should be plausible
in the semi-streaming and (more ambitiously) streaming models, because \(\Ohhat(k)\)-size edit distance sketches have been developed in parallel to this work~\cite{KS24}.
Nevertheless, such results would also require $\Ohhat(k)$-space algorithms constructing sketches and recovering the edit distance from the two sketches, and \cite{KS24} does not provide such space-efficient algorithms.
Second, our result sheds more light on the structure of the non-periodic case of \cite{CKW20}: as it turns out, when relaxing the notion of periodicity even further, we obtain a periodic structure also for patterns with just a (sufficiently large) constant number of $k$-error occurrences.
This opens up a perspective for classical \PMwE algorithms that are even faster than \(\Ohtilde(n/m + k^3)\).

\paragraph*{Application of our Main Result: Quantum Pattern Matching with Edits}

As a fundamental problem, \PMwE has been studied in a plethora of settings, including the compressed setting~\cite{GS13,Tis14,BLRSSW15,CKW20}, the dynamic setting~\cite{CKW20}, and the streaming setting~\cite{Sta17,KPS21,BK23}, among
others.
However, so far, the \emph{quantum setting} remains vastly unexplored.
While quantum algorithms have been developed for Exact Pattern Matching~\cite{HV03},
\PMwM~\cite{JN23}, Longest Common Factor (Substring)~\cite{GS22,AJ22,JN23}, Lempel--Ziv
factorization~\cite{GJKT24}, as well as other fundamental string
problems~\cite{AGS19,WY20,ABIKKPSS20,BEGHS21,CKKSW22}, no quantum algorithm for \PMwE has been known so far.
The challenge posed by \PMwE, in comparison to \PMwM, arises already from the fact that, while the computation of Hamming distance between two strings can be easily accelerated in the quantum setting, the same is not straightforward for the edit distance case.
Only very recently, Gibney, Jin, Kociumaka, and Thankachan~\cite{GJKT24} demonstrated a quantum edit-distance algorithm with the optimal query complexity of $\Ohtilde(\sqrt{kn})$ and the time complexity of $\Ohtilde(\sqrt{kn}+k^2)$.

We follow the long line of research on quantum algorithms on strings and employ our new
structural results (combined with the structural results from \cite{CKW20}) to obtain the
following quantum algorithms for the \PMwE problem.

\begin{restatable*}{mtheorem}{qpmwe}\label{thm:qpmwe}
    Let $P$ denote a pattern of length $m$, let $T$ denote a text of length $n$, and let $k > 0$ denote an integer threshold.
    \begin{enumerate}
        \item There is a quantum algorithm that solves the \PMwE problem using $\hat{\mathcal{O}}(n/m \cdot \sqrt{km})$ queries and $\hat{\mathcal{O}}(n/m \cdot (\sqrt{k}m + k^{3.5}))$ time.
        \item There is a quantum algorithm deciding whether $\OccE_k(P,T)\ne \emptyset$ using $\hat{\mathcal{O}}(\sqrt{n/m} \cdot \sqrt{km})$ queries and $\hat{\mathcal{O}}(\sqrt{n/m} \cdot (\sqrt{k}m + k^{3.5}))$ time.
        \qedhere
    \end{enumerate}
\end{restatable*}
\medskip

Surprisingly, for $n = \Oh(m)$, we achieve the same query complexity as quantum algorithms
for computing the (bounded) edit distance~\cite{GJKT24} and even the bounded Hamming
distance of strings (a simple application of \GS yields an $\Ohtilde(\sqrt{kn})$ upper bound; a matching \(\Omega(\sqrt{kn})\) lower bound is also known~\cite{BBC01}).
While we did not optimize the time complexity of our algorithms (reasonably, one could expect a time complexity of \(\Ohtilde( n/m \cdot (\sqrt{km} + k^{3.5}) )\) based on our structural insights and~\cite{CKW22}), we show that our query complexity is essentially optimal by proving a matching lower bound.

\begin{restatable*}{mtheorem}{qpmwelb}\label{thm:qpmwelb}
    Let us fix integers $n \ge m > k > 0$.
    \begin{enumerate}
        \item Every quantum algorithm that solves the \PMwE problem uses $\Omega(n/m\cdot \sqrt{k(m-k)})$ queries for $P=\zero^m$ and some $T\in {\{\zero,\one\}}^n$.
        \item Every quantum algorithm that decides whether $\OccE_k(P,T)\ne \emptyset$ uses $\Omega(\sqrt{n/m}\cdot \sqrt{k(m-k)})$ queries for $P=\zero^m$ and some $T\in {\{\zero,\one\}}^n$.
        \qedhere
    \end{enumerate}
\end{restatable*}
\medskip

Again, our lower bounds hold already for the case when the pattern is the all-zeroes string and just the text contains nonzero entries.

\section{Technical Overview}\label{sec:overview}

In this section, we describe the technical contributions behind our positive results: \cref{thm:ccompl,thm:qpmwe}.
We assume that $n \le \threehalves m$ (if the text is longer, one may split the text into $\Oh(n/m)$ overlapping pieces of length \(\Oh(m)\) each) and that $k = o(m)$ (for $k=\Theta(m)$, our results trivialize).

\subsection{Communication Complexity of Pattern Matching with Mismatches}

Before we tackle \cref{thm:ccompl}, it is instructive to learn how to prove an analogous
result for \PMwM.
Compared to the original approach of Clifford, Kociumaka, and Porat~\cite{CKP19}, we neither optimize logarithmic factors nor provide an efficient decoding algorithm; this enables significant simplifications.
Recall that our goal is to encode the set $\OccH_k(P,T)$, which is the Hamming-distance analog of the set $\OccE_k(P,T)$.
Formally, we set
\[\OccH_k(P,T) \coloneqq \{i\in \fragment{0}{n-m} : \hd(P, T\fragmentco{i}{i+m})\le k\}.\]
Without loss of generality, we assume that $\{0,n-m\}\subseteq \OccH_k(P,T)$, that is, $P$ has $k$-mismatch occurrences both as a prefix and as a suffix of $T$.
Otherwise, either we have $\OccH_k(P,T)=\emptyset$ (which can be encoded trivially), or we can crop $T$ by removing the characters to the left of the leftmost $k$-mismatch occurrence and to the right of the rightmost $k$-mismatch occurrence.

\begin{figure*}[t!]
    \renewcommand\tabularxcolumn[1]{m{#1}}
    \centering
    \begin{tabularx}{\linewidth}{*{3}{>{\centering\arraybackslash}X}}
    \includegraphics[scale=1.5]{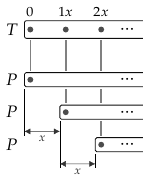}
    &
    \includegraphics[scale=1.3]{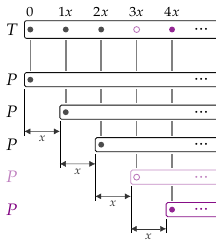}
    &
    \includegraphics[scale=1.5]{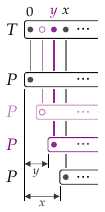}
    \\[-2ex]
    \begin{subfigure}[t]{\linewidth}
        \caption{The pattern \(P\) occurs in \(T\) starting at the positions \(0\), \(x\), and
            \(2x\); these starting positions form the arithmetic progression \((ix)_{0 \le
            i \le 2}\).}\label{fig1a}
    \end{subfigure}
    &
    \begin{subfigure}[t]{\linewidth}
        \caption{%
            Suppose that we were to identify an additional occurrence of \(P\) in
            \(T\) starting at position \(4x\).
            Now, since occurrences start at \(0, 2x\), and \(4x\) (which
            in particular implies that \(T\fragmentco{0}{2x + |P|} = T\fragmentco{2x}{4x +
            |P|}\)), as well as at position
            \(x\), we directly obtain that
            there is also an occurrence that starts at position \(3x\) in \(T\);
            which means that the arithmetic progression from \cref{fig1a} is extended to
            \((ix)_{0 \le i \le 4}\).
            More generally, one may prove that any additional occurrence at a position
            \(ix\) extends the existing arithmetic progression in a similar fashion.
        }\label{fig1b}
    \end{subfigure}
    &
    \begin{subfigure}[t]{\linewidth}
        \caption{
            Suppose that we were to identify an additional occurrence of \(P\) in
            \(T\) starting at position \(0 < y < x\).
            Now, similarly to \cref{fig1b}, we can argue that there is also an occurrence
            that starts at every position of the form \(i \text{gcd}(x, y)\) (this is a
            consequence of the famous Periodicity Lemma due to~\cite{FW65}; see
            \cref{lem:perlemma})---again an arithmetic progression.\\
            Crucially, the difference of the arithmetic progression obtained in this
            fashion decreased by a factor of at least two compared to the initial
            arithmetic progression.
        }\label{fig1c}
    \end{subfigure}
    \end{tabularx}
    \caption{
        The structure of occurrences of exact pattern matching is easy: either all
        exact occurrences of \(P\) in \(T\) form an arithmetic progression or there is
        just one such occurrence (which we may also view as a degenerate arithmetic
        progression).\\
        Depicted is a text \(T\) and exact occurrences starting at the positions denoted
        above the text; we may assume that there is an occurrence that starts at position
         \(0\) and that there is an occurrence that ends at position \(|T|-1\).
    }\label{fig1}
\end{figure*}

\subparagraph*{Encoding All \(k\)-Mismatch Occurrences.}
First, if $k=0$, as a famous consequence of the Periodicity Lemma~\cite{FW65}, the set
$\OccH_0(P,T)=\Occ(P,T)$ is guaranteed to form a single arithmetic progression (recall
that $n \le \threehalves m$ and see \cref{fct:periodicity}), and thus it can be encoded
using $\Oh(\log m)$ bits.
Consult \cref{fig1} for a visualization of an example.

If $k>0$, the set $\OccH_k(P,T)$ does not necessarily form an arithmetic progression.
Still, we may consider the smallest arithmetic progression that contains $\OccH_k(P,T)$ as a subset.
Since $0 \in \OccH_k(P,T)$, the difference of this progression can be expressed as $g\coloneqq\gcd(\OccH_k(P,T))$.

A crucial property of the $\gcd(\cdot)$ function is that, as we add elements to a set maintaining its greatest common divisor $g$, each insertion either does not change $g$ (if the inserted element is already a multiple of $g$) or results in the value $g$ decreasing by a factor of at least $2$ (otherwise).
Consequently, there is a set $\{0,n-m\}\subseteq S\subseteq \OccH_k(P,T)$ of size $|S| = \Oh(\log m)$ such that $\gcd(S) = \gcd(\OccH_k(P,T)) = g$.

The encoding that Alice produces consists of the set $S$ with each $k$-mismatch occurrences $i\in S$ augmented with the \emph{mismatch information} for $P$ and $T\fragmentco{i}{i+m}$, that is, a set
\[\{(j, P\position{j},T\position{i+j}) : j\in \fragmentco{0}{m} \text{ such that }P\position{j}\ne T\position{i+j}\}.\]
For a single $k$-mismatch occurrence, the mismatch information can be encoded in $\Oh(k\log(m|\Sigma|))$ bits, where $\Sigma$ is the alphabet of $P$ and $T$.
Due to $|S|=\Oh(\log m)$, the overall encoding size is $\Oh(k\log m \log(m|\Sigma|))$.

\begin{figure*}[tp]
    \renewcommand\tabularxcolumn[1]{m{#1}}
    \centering
    \begin{tabularx}{\linewidth}{*{2}{>{\centering\arraybackslash}X}}
        \includegraphics[scale=1.16]{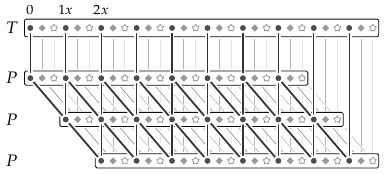}
        &
        \includegraphics[scale=1.17]{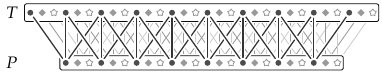}
        \\[-3ex]
        \begin{subfigure}[t]{\linewidth}
            \caption{Compare \cref{fig1a}.
                So far, we identified three occurrences of \(P\) in \(T\); each
                occurrence is an exact occurrence.
                Correspondingly,
                we have \(S = \left\{ (0, \emptyset),  (x, \emptyset),  (2x, \emptyset) \right\} \).\\
                With this set \(S\), we obtain three different black components, which we
                depict with a circle, a diamond, or a star.
            }\label{fig2a}
        \end{subfigure}
        &
        \begin{subfigure}[t]{\linewidth}
            \caption{The graph \(\bG_{S}\) that corresponds to \cref{fig2a}: observe how
            we collapsed the different patterns from \cref{fig2a} into a single pattern
        \(P\).\\
                In the example, we have three black components, that is, \(\bc(\bG_S) = 3\).
    }\label{fig2b}
        \end{subfigure}
        \\[1.5ex]
        \includegraphics[scale=1.17]{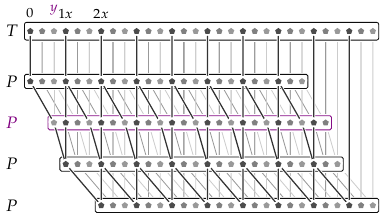}
        &
        \includegraphics[scale=1.17]{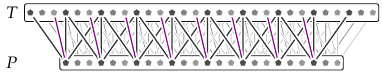}
        \\[-3ex]
        \begin{subfigure}[t]{\linewidth}
            \caption{
            Suppose that we were to identify an additional occurrence of \(P\) in
            \(T\) starting at position \(0 < y < x\) (highlighted in purple).
            From \cref{fig1c},
            we know how the set of all occurrences changes, but---and this is the crucial
            point--- we do not add all of these implicitly found occurrences to \(S\), but
            just \(y\).\\
            In our example, we observe that the black components collapse into a single
            black component, which we depict with a cloud.
        }\label{fig2c}
        \end{subfigure}
        &
        \begin{subfigure}[t]{\linewidth}
            \caption{The graph \(\bG_{S}\) that corresponds to \cref{fig2c}: observe how
            we collapsed the different patterns from \cref{fig2c} into a single pattern \(P\).
            Highlighted in purple are some of the edges that we added due to the new occurrence
        that we added to \(S\).\\
                In the example, we have one black components, that is, \(\bc(\bG_S) = 1\).
    }
            \label{fig2d}
        \end{subfigure}
        \\[1.5ex]
        \multicolumn{2}{>{\hsize=\dimexpr2\hsize+2\tabcolsep+\arrayrulewidth\relax}X}{%
            \vspace*{1.5ex}%
            \centering
            \includegraphics[scale=1.2]{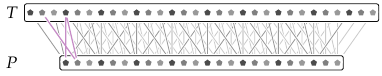}%
        }
        \\[-3ex]
        \multicolumn{2}{>{\hsize=\dimexpr2\hsize+2\tabcolsep+\arrayrulewidth\relax}X}{%
            \centering
            \begin{subfigure}[t]{2\linewidth}
                \caption{Recovering an occurrence in \(\bG_S\) from \cref{fig2d} that starts at position
                \(\text{gcd}(x, y)\), illustrated for the first character of the pattern.
            }
            \end{subfigure}
        }
    \end{tabularx}
    \caption{Compare \cref{fig1}: we fully understand the easy structure of exact pattern
        matching. In this figure, we reinterpret our knowledge in terms of the encoding
        scheme of Alice for \PMwM (in particular we show just
        the occurrences included in the set \(S\)) and
        showcase how the corresponding graph \(\bG_{S}\) and its black components
        evolve.\\
        We connect the
        same positions in \(P\), as well as pairs of positions that are aligned by an
        occurrence of \(P\) in \(T\).
        As there are no mismatches, every such line implies that the connected characters
        are equal.\\
        For each connected component of the resulting graph (a black component),
        we know that all involved positions in \(P\) and \(T\) must have the same symbol.
        For illustrative purposes,
        we assume that \(x = 3\) and
        we replace each character of a black component with a sentinel character (unique
        to that component),
        that is, we depict the strings \(P^\#\) and \(T^\#\).
    }\label{fig2}
\end{figure*}
\begin{figure*}[tp]
    \renewcommand\tabularxcolumn[1]{m{#1}}
    \centering
    \begin{tabularx}{\linewidth}{*{2}{>{\centering\arraybackslash}X}}
        \includegraphics[scale=1.16]{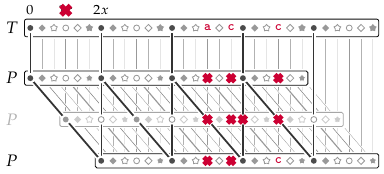}
        &
        \includegraphics[scale=1.17]{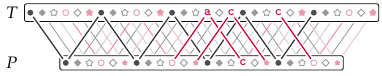}
        \\[-3ex]
        \begin{subfigure}[t]{\linewidth}
            \caption{Compare \cref{fig2a}.
                We depict  mismatched characters in an alignment of \(P\) to \(T\)
                by placing a cross over the corresponding
                character in \(P\).\\
                If we allow at most  \(3\) mismatches, we now do not have an occurrence
                starting at position \(x\) anymore; hence we obtain six black components.
        }\label{fig3a}
        \end{subfigure}
        &
        \begin{subfigure}[t]{\linewidth}
            \caption{The graph \(\bG_S\) that correspond to \cref{fig3a}.
                We make explicit characters that are different from the ``default''
                character of a component; the corresponding red edges (that are
                highlighted) are exactly the mismatch information that is stored in \(S\).
                For the remaining edges, the color depicts the color of the connected
                component that they belong to.\\
                In the example, we have four black components, that is, \(\bc(\bG_S) =
                4\).\\
                (Observe that contrary to what the image might make you believe, not
                every ``non-default'' character needs to end in a highlighted red edge.)
            }
        \end{subfigure}
        \\[1.5ex]
        \includegraphics[scale=1.17]{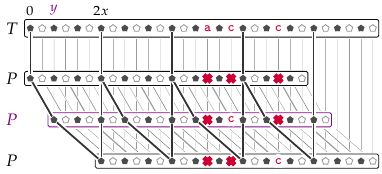}
        &
        \includegraphics[scale=1.17]{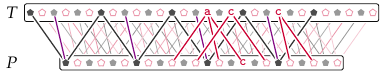}
        \\[-3ex]
        \begin{subfigure}[t]{\linewidth}
            \caption{
                Compare \cref{fig2c}.
            We are still able to identify an additional occurrence of \(P\) in
            \(T\) starting at position \(0 < y < x\) (highlighted in purple).
            Now, as before, connected components of \(\bG_S\) merge; this time, this also
            means that some characters that were previously part of a black component now
            become part of a red component (but crucially never vice-versa).\\
            In the example, this means that we now have just a single black component,
            that is, \(\bc(\bG_S) = 1\).
        }\label{fig3c}
        \end{subfigure}
        &
        \begin{subfigure}[t]{\linewidth}
            \caption{The graph \(\bG_S\) for the situation in \cref{fig3c}.
Again,                we make explicit characters that are different from the ``default''
                character of a component; the corresponding red edges (that are
                highlighted) are exactly the mismatch information that is stored in \(S\).
                For the remaining edges, the color depicts the color of the connected
                component that they belong to (where purple highlights some of the black
                edges added due to the new occurrence).
            }
        \end{subfigure}
        \\[1.5ex]
        \multicolumn{2}{>{\hsize=\dimexpr2\hsize+2\tabcolsep+\arrayrulewidth\relax}X}{%
            \vspace*{1.5ex}%
            \centering
            \includegraphics[scale=1.2]{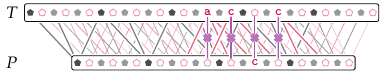}%
        }
        \\[-3ex]
        \multicolumn{2}{>{\hsize=\dimexpr2\hsize+2\tabcolsep+\arrayrulewidth\relax}X}{%
            \centering
            \begin{subfigure}[t]{2\linewidth}
                \caption{
                    Checking for an occurrence at position \(2 \text{gcd}(x, y)\) (which
                    would be an occurrence were it not for mismatched characters).
                    We check two things, first that the
                    black component
                    aligns; and second, for the red component where we know all
                    characters, we compute
                    exactly the Hamming distance (which is \(4\) in the example, meaning that there is
                    no occurrence at the position in question).
                }
            \end{subfigure}
        }
    \end{tabularx}
    \caption{
        Compared to \cref{fig2}, we now have characters in \(P\) and \(T\) that mismatch.
        Again, we showcase how the corresponding graph \(\bG_{S}\) and its black components
        evolve; in the example, we allow for up to \(k = 3\) mismatches.\\
        Again, for illustrative purposes,
        we assume that \(x = 3\) and
        we replace each character of a black component with a sentinel character (unique
        to that component),
        that is, we depict the strings \(P^\#\) and \(T^\#\).
    }\label{fig3}
\end{figure*}

\subparagraph*{Recovering the \(k\)-Mismatch Occurrences.}
It remains to argue that the encoding is sufficient for Bob to recover $\OccH_k(P,T)$.
To that end, consider a graph $\bG_S$ whose vertices correspond to characters in $P$ and $T$.
For every $i\in S$ and $j\in \fragmentco{0}{m}$, the graph $\bG_S$ contains an edge between $P\position{j}$ and $T\position{i+j}$.
If $P\position{j}=T\position{i+j}$, then the edge is \emph{black}; otherwise, the edge is \emph{red} and annotated with the values $P\position{j}\ne T\position{i+j}$.
Observe that Bob can reconstruct $\bG_S$ using the set $S$ and the mismatch information for the $k$-mismatch occurrences at positions $i\in S$.

Next, we focus on the connected components of the graph $\bG_S$.
We say that a component is black if all of its edges are black and red if it contains at least one red edge.
Observe that Bob can reconstruct the values of all characters in red components: the annotations already provide this information for vertices incident to red edges, and since black edges connect matching characters, the values can be propagated along black edges, ultimately covering all vertices in red components.
The values of characters in black components remain unknown, but each black component is guaranteed to be \emph{uniform}, meaning that every two characters in a single black component match.

The last crucial observation is that the connected components of $\bG_S$ are very structured: for every remainder $c\in \fragmentco{0}{g}$ modulo $g$, there is a connected component consisting of all vertices $P\position{i}$ and $T\position{i}$ with $i\equiv_g c$.
This can be seen as a consequence of the Periodicity Lemma~\cite{FW65} applied to strings obtained from $P$ and $T$ by replacing each character with a unique identifier of its connected component.
Consult \cref{fig2} for an illustration of an example for the special case
if there are no mismatches and consult \cref{fig3} for a visualization of an example with
mismatches.

\subparagraph*{Testing if an Occurrences Starts at a Given Position.}
With these ingredients, we are now ready to explain how Bob tests whether a given position $i\in \fragment{0}{n-m}$ belongs to $\OccH_k(P,T)$.
If $i$ is not divisible by $g$, then for sure $i\notin \OccH_k(P,T)$.
Otherwise, for every $j\in \fragmentco{0}{m}$, the characters $P\position{j}$ and $T\position{i+j}$ belong to the same connected component.
If this component is red, then Bob knows the values of $P\position{j}$ and $T\position{i+j}$, so he can simply check if the characters match.
Otherwise, the component is black, meaning that $P\position{j}$ and $T\position{i+j}$ are guaranteed to match.
As a result, Bob can compute the Hamming distance $\hd(P, T\fragmentco{i}{i+m})$ and check if it does not exceed $k$.
In either case (as long as $i$ is divisible by $g$), he can even retrieve the underlying mismatch information.

A convenient way of capturing Bob's knowledge about $P$ and $T$ is to construct auxiliary strings $P^\#$ and $T^\#$ obtained from $P$ and $T$, respectively, by replacing all characters in each black component with a sentinel character (unique for the component).
Then, $\OccH_k(P,T)=\OccH_k(P^{\#},T^{\#})$ and the mismatch information is preserved for the $k$-mismatch occurrences.

\subsection{The Communication Complexity of Pattern Matching with Edits}\label{sec:ccov}

On a very high level, our encoding for \PMwE builds upon the
approach for \PMwM presented above:
\begin{itemize}
    \item Alice still constructs an appropriate size-$\Oh(\log m)$ set $S$ of $k$-error occurrences of $P$ in $T$, including a prefix and a suffix of $T$.
    \item Bob uses the edit information for the occurrences in $S$ to construct a graph $\bG_S$ and strings $P^{\#}$ and~$T^{\#}$, obtained from $P$ and $T$ by replacing characters in some components with sentinel characters so that $\OccE_k(P,T)=\OccE_k(P^{\#},T^{\#})$.
\end{itemize}
At the same time, the edit distance brings new challenges, so we also deviate from the original strategy:
\begin{itemize}
    \item Connected components of $\bG_S$ do not have a simple periodic structure, so $g=\gcd(S)$ loses its meaning. Nevertheless, we prove that black components still behave in a structured way, and thus the number of black components, denoted $\bc(\bG_S)$, can be used instead.
    \item The value $\bc(\bG_S)$ is not as easy to compute as $\gcd(S)$, so we grow the set $S\subseteq \OccE_k(P,T)$ iteratively.
    In each step, either we add a single $k$-error occurrence so that $\bc(\bG_S)$ decreases by a factor of at least $2$, or we realize that the information related to the alignments already included in $S$ suffices to retrieve all $k$-error occurrences of $P$ in $T$.
    \item Once this process terminates, there may unfortunately remain $k$-error occurrences whose addition to $S$ would decrease $\bc(\bG_S)$---yet, only very slightly.
    In other words, such $k$-error occurrences generally obey the structure of black components, but may occasionally violate it.
    We need to understand where the latter may happen and learn the characters behind the black components involved so that they are not masked out in $P^\#$ and $T^\#$.
    This is the most involved part of our construction, where we use recent insights relating edit distance to compressibility~\cite{CKW23,GJKT24} and store compressed representations of certain fragments of $T$.
\end{itemize}

\subsubsection{General Setup}

Technically, the set $S$ that Alice constructs contains, instead of $k$-error occurrences $T\fragmentco{t}{t'}$, specific alignments $P\onto T\fragmentco{t}{t'}$ of cost at most $k$.
Every such alignment describes a sequence of (at most~$k$) edits that transform $P$ onto $T\fragmentco{t}{t'}$; see \cref{def:alignment}.
In the message that Alice constructs, each alignment is augmented with \emph{edit information}, which specifies the positions and values of the edited characters; see \cref{def:edinfo}.
For a single alignment of cost $k$, this information takes $\Oh(k\log(m|\Sigma|))$ bits, where $\Sigma$ is the alphabet of $P$ and $T$.

Just like for \PMwM, we can assume without loss of generality that $P$ has $k$-error occurrences both as a prefix or as a suffix of $T$.
Consequently, we always assume that $S$ contains an alignment $\mXpref$ that aligns $P$ with a prefix of $T$ and an alignment $\mXsuf$ that aligns $P$ with a suffix of~$T$.

The graph $\bG_S$ is constructed similarly as for mismatches: the vertices are characters of $P$ and $T$, whereas the edges correspond to pairs of characters aligned by any alignment in $S$.
Matched pairs of characters correspond to black edges, whereas substitutions correspond to red edges, annotated with the values of the mismatching characters.
Insertions and deletions are also captured by red edges; see \cref{def:bg} for details.

Again, we classify connected components of $\bG_S$ into black (with black edges only) and red (with at least one red edge).
Observe that Bob can reconstruct the graph $\bG_S$ and the values of all characters in red components and that black components remain \emph{uniform}, that is, every two characters in a single black component match.
Consult \cref{fig4} for a visualization of an example.

Finally, we define $\bc(\bG_S)$ to be the number of black components in $\bG_S$.
If $\bc(\bG_S)=0$, then Bob can reconstruct the whole strings $P$ and $T$, so we henceforth assume $\bc(\bG_S)>0$.

\subparagraph*{First Insights into \(\bG_S\).}
Our first notable insight is that black components exhibit periodic structure.
To that end,
write $P_{|S}$ for the subsequence of \(P\) that contains all characters of
\(P\) that are contained in a black component in \(\bG_S\)
and
write $T_{|S}$ for the subsequence of \(T\) that contains all characters of
\(T\) that are contained in a black component in \(\bG_S\).
Then, for every $c\in \fragmentco{0}{\bc(\bG_S)}$, there is
a component consisting of all characters $P_{|S}\position{i}$ and $T_{|S}\position{i}$
such that $i\equiv_{\bc(\bG_S)} c$; for a formal statement and proof, consult \cref{lem:periodicity}.
Also consult \cref{fig4c} for an illustration of an example.

Next, we denote the positions in $P$ and $T$ of the subsequent characters of $P_{|S}$ and
$T_{|S}$ belonging to a specific component $c\in \fragmentco{0}{\bc(\bG_S)}$ as
$\pi_0^c,\pi_1^c,\ldots$ and $\tau_0^c,\tau_1^c,\ldots$, respectively; see
\cref{def:pitau}.
The characterization of the black components presented above implies that $\pi_{j}^c < \pi_{j'}^{c'}$ if and only if either $j<j'$ or $j=j'$ and $c<c'$ (analogously for $\tau_j^c < \tau_{j'}^{c'}$).
We assume that the $c$th black component contains $m_c$ characters of $P$ and $n_c$ characters in $T$; note that $m_c\in \{m_0,m_0-1\}$ and $n_c\in \{n_0,n_0-1\}$.

\begin{figure*}[t!]
    \renewcommand\tabularxcolumn[1]{m{#1}}
    \centering
    \begin{tabularx}{\linewidth}{*{2}{>{\centering\arraybackslash}X}}
        \includegraphics[scale=1.16]{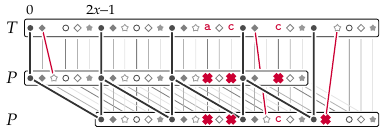}
        &
        \includegraphics[scale=1.17]{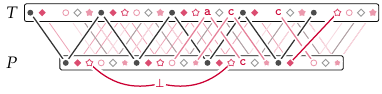}
        \\[-3ex]
        \begin{subfigure}[t]{\linewidth}
            \caption{Compare \cref{fig3a}. In addition to mismatched characters, we now
                also have missing characters in \(P\) and \(T\) (depicted by a white
                space). Further, as alignments for occurrences are no longer unique, we
                have to choose an alignment for each occurrence in the set \(S\) (which
                can fortunately be stored efficiently).
        }\label{fig4a}
        \end{subfigure}
        &
        \begin{subfigure}[t]{\linewidth}
            \caption{The graph \(\bG_S\) that corresponds to the situation in
            \cref{fig3a}. Observe that now, we also have a sentinel vertex \(\bot\) to
            represent that an insertion or deletion happened.
            Observe further that due to insertions and deletions,
            the last empty star character of \(T\) now belongs to the
            component of filled diamonds.\\
            In the example, we have two black components, that is, \(\bc(\bG_S) = 2\).
        }
        \end{subfigure}
        \\[3ex]
        \multicolumn{2}{>{\hsize=\dimexpr2\hsize+2\tabcolsep+\arrayrulewidth\relax}X}{%
            \vspace*{3ex}%
            \centering
            \includegraphics[scale=1.16]{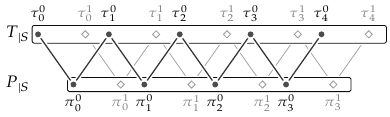}
        }
        \\[-3ex]
        \multicolumn{2}{>{\hsize=\dimexpr2\hsize+2\tabcolsep+\arrayrulewidth\relax}X}{%
            \centering
            \begin{subfigure}[t]{2\linewidth}
                \caption{An illustration of the additional notation that we use to analyze
                    \(\bG_S\).
                    Removing every character involved in
                    a red component, we obtain the strings \(T_{|S}\) and \(P_{|S}\).
                    For each black component, we
                    number the corresponding characters in \(P\) and \(T\) from left to
                    right.
            }\label{fig4c}
            \end{subfigure}
        }
    \end{tabularx}
    \caption{Compare \cref{fig2,fig3}. In addition to mismatches, we now also allow
    character insertions or deletions. In the example, we depict occurrence with at most
\(k = 4\) edits.}\label{fig4}
\end{figure*}

\subsubsection{Extra Information to Capture Close Alignments}
By definition of the graph $\bG_S$, the alignments in $S$ obey the structure of the black components.
Specifically, for every $\mX \in S$, there is a shift $i\in \fragment{0}{n_0-m_0}$ such that $\mX$ matches $P\position{\pi_j^c}$ with $T\position{\tau_{i+j}^c}$ for every $c\in \fragmentco{0}{\bc(\bG_S)}$ and $j\in \fragmentco{0}{m_c}$.
The quasi-periodic structure of $P$ and $T$ suggests that we should expect further shifts $i\in \fragment{0}{n_0-m_0}$ with
low-cost alignments matching $P\position{\pi_j^c}$ with $T\position{\tau_{i+j}^c}$ for every $c\in \fragmentco{0}{\bc(\bG_S)}$ and $j\in \fragmentco{0}{m_c}$.
Unfortunately, even if an optimum alignment $\mX:P\onto T\fragmentco{t}{t'}$ matches $P\position{\pi_j^c}$ with $T\position{\tau_{i+j}^c}$, there is no guarantee that it also matches $P\position{\pi_{j'}^{c'}}$ with $T\position{\tau_{i+j'}^{c'}}$ for other values $c'\in \fragmentco{0}{\bc(\bG_S)}$ and $j'\in \fragmentco{0}{m_{c'}}$.
Even worse, it is possible that no optimal alignment $P\fragmentco{\pi_j^{c-1}}{\pi_j^{c+1}}\onto T\fragmentco{\tau_{i+j}^{c-1}}{\tau_{i+j}^{c+1}}$ matches $P\position{\pi_j^c}$ with $T\position{\tau_{i+j}^c}$.
The reason behind this phenomenon is that the composition of optimal edit-distance alignments is not necessarily optimal (more generally, the edit information of optimal alignments $X\onto Y$ and $Y\onto Z$ is insufficient to recover $\ed(X,Z)$).

In these circumstances, our workaround is to identify a set $C_S\subseteq \fragmentco{0}{\bc(\bG_S)}$ such that the underlying characters can be encoded in $\Ohtilde(k|S|)$ space and every alignment $\mX:P \onto T\fragmentco{t}{t'}$ that we need to capture matches $P\position{\pi_j^c}$ with $T\position{\tau_{i+j}^c}$ for every $c\in \fragmentco{0}{\bc(\bG_S)}\setminus C_S$ and $j\in \fragmentco{0}{m_c}$.
For this, we investigate how an optimal alignment $\mX:P\onto T\fragmentco{t}{t'}$ may differ from a canonical alignment $\mA : P\onto T\fragmentco{t}{t'}$ that matches $P\position{\pi_j^c}$ with $T\position{\tau_{i+j}^c}$ for all $c\in \fragmentco{0}{\bc(\bG_S)}$ and $j\in \fragmentco{0}{m_c}$.
Following recent insights from~\cite{CKW23,GJKT24} (see~\cref{prp:edimpliesselfed}), we observe that the fragments of $P$ on which $\mA$ and $\mX$ are disjoint can be compressed into $\Oh(\ed^{\A}(P,T\fragmentco{t}{t'}))$ space (using Lempel--Ziv factorization~\cite{DBLP:journals/tit/ZivL77}, for example).
Moreover, the compressed size of each of these fragments is at most proportional to the cost of $\mA$ on the fragment.
Consequently, our goal is to understand where $\mA$ makes edits and learn all the fragments of $P$ (and $T$) with a sufficiently high density of edits compared to the compressed size.
Due to the quasi-periodic nature of $P$ and $T$, for each  $c\in \fragmentco{0}{\bc(\bG_S)}$, all characters in the $c$th black component are equal to $T\position{\tau_0^c}$, so we can focus on learning fragments of $T\fragment{\tau_0^0}{\tau_0^{\bc(\bG_S)-1}}$.

The bulk of the alignment $\mA$ can be decomposed into pieces that align $P\fragmentco{\pi_j^c}{\pi_j^{c+1}}$ onto $T\fragmentco{\tau_{i+j}^c}{\tau_{i+j}^{c+1}}$.
In \cref{lem:exfuncover}, we prove that $\ed(P\fragmentco{\pi_j^c}{\pi_j^{c+1}},T\fragmentco{\tau_{i+j}^c}{\tau_{i+j}^{c+1}})\le \w_S(c)$, where $\w_S(c)$ is the total cost incurred by alignments in $S$ on all fragments $P\fragmentco{\pi_{j'}^c}{\pi_{j'}^{c+1}}$ for $j'\in \fragmentco{0}{m_c}$.
Intuitively, this is because the path from $P\position{\pi_j^c}$ to $T\position{\tau_{i+j}^c}$ in $\bG_S$ allows obtaining an alignment $P\fragmentco{\pi_j^c}{\pi_j^{c+1}}\onto T\fragmentco{\tau_{i+j}^c}{\tau_{i+j}^{c+1}}$ as a composition of pieces of alignments in $S$ and their inverses.
Every component $c\in \fragmentco{0}{\bc(\bG_S)}$ uses distinct pieces, so the total weight $w\coloneqq \sum_c \w_S(c)$ does not exceed $k\cdot |S|$.

The weight function $\w_S(c)$ governs which characters of $T\fragment{\tau_0^0}{\tau_0^{\bc(\bG_S)-1}}$ need to be learned.
We formalize this with a notion of a \emph{period cover} $C_S\subseteq \fragmentco{0}{\bc(\bG_S)}$; see \cref{def:periodcover_alt}.
Most importantly, we require that $\fragment{a}{b}\subseteq C_S$ holds whenever the compressed size of $T\fragment{\tau_0^a}{\tau_0^{b}}$ is smaller than the total weight $\sum_{c=a-1}^b \w_S(c)$ (scaled up by an appropriate constant factor).
Additionally, to handle corner cases, we also learn the longest prefix and the longest suffix of $T\fragment{\tau_0^0}{\tau_0^{\bc(\bG_S)-1}}$ of compressed size $\Oh(w + k)$.
As proved in \cref{prp:encode_simple_funcover}, the set $\{(c, T\position{\tau_0^c}):c\in C_S\}$ can be encoded in $\Oh((w + k)\log(m|\Sigma|))=\Oh(k|S|\log(m|\Sigma|))$ bits on top of the graph $\bG_S$ (which can be recovered from the edit information for alignments in $S$).

Following the aforementioned strategy of comparing the regions where $\mX: P \onto T\fragmentco{t}{t'}$ is disjoint with the canonical alignment $\A : P \onto T\fragmentco{t}{t'}$, we prove the following result.
Due to corner cases arising at the endpoints of $T\fragmentco{t}{t'}$ and between subsequent fragments $T\fragment{\tau_{i+j}^0}{\tau_{i+j}^{\bc(\bG_S)-1}}$ and $T\fragment{\tau_{i+j+1}^0}{\tau_{i+j+1}^{\bc(\bG_S)-1}}$, the proof is rather complicated.

\begin{restatable*}{proposition}{prpclose}\label{prp:close}
    Let $\mX : P \onto T\fragmentco{t}{t'}$ be an optimal alignment of $P$ onto a fragment $T\fragmentco{t}{t'}$ such that $\ed(P, T\fragmentco{t}{t'})\le k$.
    If there exists $i \in \fragment{0}{n_0-m_0}$ such that $|\tau_{i}^0 - t - \pi_0^0| \leq w + 3k$, then the following holds for every $c\in \fragmentco{0}{\bc(\bG_S)}\setminus C_S$:
    \begin{enumerate}
        \item\label{it:close:in}  $\mX$ aligns $P\position{\pi^c_j}$ to $T\position{\tau^c_{i+j}}$ for every $j \in \fragmentco{0}{m_c}$, and
        \item\label{it:close:out} $\tau^c_{i'}\notin \fragmentco{t}{t'}$ for every $i'\in \fragmentco{0}{n_c}\setminus \fragmentco{i}{i+m_c}$.\qedhere
    \end{enumerate}
\end{restatable*}

\subsubsection{Extending $S$ with Uncaptured Alignments}
\Cref{prp:close} indicates that $S$ \emph{captures} all $k$-error occurrences $T\fragmentco{t}{t'}$ such that $|\tau_{i}^0 - t - \pi_0^0| \leq w + 3k$ holds for some $i\in \fragment{0}{n_0-m_0}$.
As long as $S$ does not capture some $k$-error occurrence $T\fragmentco{t}{t'}$, we add an underlying optimal alignment $\mX : P \onto T\fragmentco{t}{t'}$ to the set $S$.
In \cref{lem:periodhalves}, we prove that $\bc(\bG_{S\cup\{\mX\}})\le \bc(\bG_S)/2$ holds for such an alignment $\mX$.
For this, we first eliminate the possibility of $t+ \pi_0^0 \gg \tau_{n_0-m_0}^0$ (using $\mXsuf\in S$, which matches $P\position{\pi_0^0}$ with $T\position{\tau_{n_0-m_0}^0}$).
If $|\tau_{i}^0 - t - \pi_0^0| > w + 3k$ holds for every $i\in \fragmentco{0}{n_0}$, on the other hand, then there is no $c\in \fragmentco{0}{\bc(\bG_S)}$ such that $P\position{\pi_0^c}$ can be matched with any character in the $c$th connected component.
Consequently, each black component becomes red or gets merged with another black component, resulting in the claimed inequality $\bc(\bG_{S\cup\{\mX\}})\le \bc(\bG_S)/2$.

By \cref{lem:periodhalves} and since $\bc(\bG_S)\le m$ holds when we begin with $|S|=2$, the total size $|S|$ does not exceed $\Oh(\log m)$ before we either arrive at $\bc(\bG_S)=0$, in which case the whole input can be encoded in $\Oh(k|S|\log (m|\Sigma|))$ bits, or $S$ captures all $k$-error occurrences.
In the latter case, the encoding consists of the edit information for all alignments in $S$, as well as the set $\{(c, T\position{\tau_0^c}):c\in C_S\}$ encoded using \cref{prp:encode_simple_funcover}.
Based on this encoding, we can construct strings $P^\#$ and $T^\#$ obtained from $P$ and $T$, respectively, by replacing with $\#_c$ every character in the $c$th connected component for every $c\in \fragmentco{0}{\bc(\bG_S)}\setminus C_S$. As a relatively straightforward consequence of \cref{prp:close}, in \cref{prp:subhash} we prove that $\OccE_k(P,T)=\OccE_k(P^\#,T^\#)$ and that the edit information is preserved for every optimal alignment $P\onto T\fragmentco{t}{t'}$ of cost at most $k$.

\subsection{Quantum Query Complexity of Pattern Matching with Edits}

As an illustration of the applicability of the combinatorial insights behind our
communication complexity result (\cref{thm:ccompl}), we study quantum algorithms for \PMwE.
As indicated in \cref{thm:qpmwe,thm:qpmwelb}, the query complexity we achieve is only a sub-polynomial factor away from the unconditional lower bounds, both for the decision version of the problem (where we only need to decide whether $\OccE_k(P, T)$ is empty or not) and for the standard version asking to report $\OccE_k(P, T)$.

Our lower bounds (in \cref{thm:qpmwelb}) are relatively direct applications of the adversary method of Ambainins~\cite{Ambainis02}, so this overview is solely dedicated to the much more challenging upper bounds.
Just like for the communication complexity above, we assume that $n\le \threehalves m$ and $k=o(m)$.
In this case, our goal is to achieve the query complexity of $\Ohhat(\sqrt{km})$.

Our solution incorporates four main tools:
\begin{itemize}
    \item the universal approximate pattern matching algorithm of~\cite{CKW20},
    \item the recent quantum algorithm for computing (bounded) edit distance~\cite{GJKT24},
    \item the novel combinatorial insights behind \cref{thm:ccompl},
    \item a new quantum $n^{o(1)}$-factor approximation algorithm for edit distance that uses $\Ohhat(\sqrt{n})$ queries and is an adaptation of a classic sublinear-time algorithm of~\cite{gapED}.
\end{itemize}

\subsubsection{A Baseline Algorithm}

We set the stage by describing a relatively simple algorithm that relies only on the first two of the aforementioned four tools.
This algorithm makes $\Ohtilde(\sqrt{k^3m})$ quantum queries to decide whether $\OccE_k(P,T)=\emptyset$.

The findings of \cite{CKW20} outline two distinct scenarios: either there are \emph{few \(k\)-error occurrences of \(P\) in \(T\)} or the pattern is \emph{approximately periodic}.
In the former case, the set \(\OccE_{k}(P,T)\) is of size $\Oh(k^2)$, and it is contained in a union of $\Oh(k)$ intervals of length $\Oh(k)$ each.
In the latter case, a primitive \emph{approximate period} \(Q\) of small length \(|Q| = \Oh(m/k)\) exists such that \(P\) and the relevant portion of $T$ (excluding the characters to the left of the leftmost $k$-error occurrence and to the right of the rightmost $k$-error occurrence) are at edit distance \(\Oh(k)\) to substrings of \(Q^{\infty}\).
It is solely the pattern that determines which of these two cases holds: the initial two options in the following lemma correspond to the \emph{non-periodic} case, where there are few \(k\)-error occurrences of \(P\) in \(T\), whereas the third option indicates the (approximately) \emph{periodic} case, where the pattern admits a short approximate period $Q$.
Here, $\edl{S}{Q}$ denotes the minimum edit distance between $S$ and any substring of $Q^\infty$.

\begin{restatable*}[{\cite[Lemma 5.19]{CKW20}}]{lemmaq}{EI}\label{prp:EI}
    Let $P$ denote a string of~length $m$ and let $k \le m$ denote a positive integer.
    Then, at least one of the following holds:
    \begin{enumerate}[(a)]
        \item\label{item:a:prp:EI} The string $P$ contains $2k$ disjoint fragments $B_1,\ldots, B_{2k}$ (called \emph{breaks}) each having period $\per(B_i)> m/\alphav k$ and length $|B_i| = \lfloor m/\betav k\rfloor$.
        \item\label{item:b:prp:EI} The string $P$ contains disjoint \emph{repetitive regions} $R_1,\ldots, R_{r}$ of total length $\sum_{i=1}^r |R_i| \ge \deltavN/\deltavD \cdot m$ such that each region $R_i$ satisfies $|R_i| \ge m/\betav k$ and has a primitive \emph{approximate period} $Q_i$ with $|Q_i| \le m/\alphav k$ and $\edl{R_i}{Q_i} = \ceil{\betav k/m\cdot |R_i|}$.
        \item\label{item:c:prp:EI} The string $P$ has a primitive \emph{approximate period} $Q$ with $|Q|\le m/\alphav k$ and $\edl{P}{Q} < \betav k$. \qedhere
    \end{enumerate}
\end{restatable*}

\indent

The proof of \cref{prp:EI} is constructive, providing a classical algorithm that performs the necessary decomposition and identifies the specific case.
The analogous procedure for \PMwM also admits an efficient quantum implementation~\cite{JN23} using $\Ohtilde(\sqrt{km})$ queries and time.
As our first technical contribution (\cref{lem:analyzeP}), we adapt the decomposition algorithm for the edit case to the quantum setting so that it uses $\tilde{\mathcal{O}}(\sqrt{km})$ queries and $\tilde{\mathcal{O}}(\sqrt{km} + k^2)$ time.

Compared to the classic implementation in~\cite{CKW20} and the mismatch version in~\cite{JN23}, it is not so easy to efficiently construct repetitive regions.
In this context, we are given a length-$\lfloor m/\betav k\rfloor$ fragment with exact period $Q_i$ and the task is to extend it to $R_i$ so that $k_i\coloneqq \edl{R_i}{Q_i}$ reaches $\ceil{\betav k/m\cdot |R_i|}$.
Previous algorithms use Longest Common Extension queries and gradually grow $R_i$, increasing $k_i$ by one unit each time; this can be seen as an online implementation of the Landau--Vishkin algorithm for the bounded edit distance problem~\cite{LV88}.
Unfortunately, the near-optimal quantum algorithm for bounded edit distance~\cite{GJKT24} is much more involved and does not seem amenable to an online implementation.
To circumvent this issue, we apply exponential search (just like in Newton's root-finding method, this is possible even though the sign of $\ceil{\betav k/m\cdot |R_i|}-\edl{R_i}{Q_i}$ may change many times).
At each step, we apply a slightly extended version of the algorithm of~\cite{GJKT24} that allows simultaneously computing the edit distance between $R_i$ and multiple substrings of $Q_i^\infty$;
see \cref{lem:find_min_approx_per}.

Once the decomposition has been computed, the next step is to apply the structure of the pattern in various cases to find the $k$-error occurrences.
The fundamental building block needed here is a subroutine that \emph{verifies} an interval $I$ of $\Oh(k)$ positive integers, that is, computes $\OccE_k(P, T) \cap I$.
The aforementioned extension of the bounded edit distance algorithm of~\cite{GJKT24} (\cref{lem:find_min_approx_per}) allows implementing this operation using $\Ohtilde(\sqrt{km})$ quantum queries and $\Ohtilde(\sqrt{km}+k^2)$ time.

By directly following the approach of \cite{CKW20}, computing $\OccE_k(P, T)$ can be reduced to verification of $\Oh(k^2)$ intervals (the periodic case constitutes the bottleneck for the number of intervals), which yields total a query complexity of $\Ohtilde(\sqrt{k^5 m})$.
If we only aim to decide whether $\OccE_{k}(P, T)$, we can apply \GS on top of the verification algorithm, reducing the query complexity to $\Ohtilde(\sqrt{k^3 m})$.
One can also hope for further speed-ups based on the more recent results of~\cite{CKW22}, where the number of intervals is effectively reduced to $\Ohtilde(k^{1.5})$.
Nevertheless, already in the non-periodic case, where the number of intervals is $\Oh(k)$,
this approach does not provide any hope of reaching query complexity beyond
$\Ohtilde(\sqrt{k^2 m})$ for the decision version and $\Ohtilde(\sqrt{k^3 m})$ for the
reporting version of \PMwE.

\subsubsection{How to Efficiently Verify \boldmath$\Oh(k)$ Candidate Intervals}

As indicated above, the main bottleneck that we need to overcome to achieve the near-optimal query complexity is to verify $\Oh(k)$ intervals using $\Ohhat(\sqrt{km})$ queries. Notably, an unconditional lower bound for bounded edit distance indicates that $\Omega(\sqrt{km})$ queries are already needed to verify a length-$1$ interval.

A ray of hope stemming from our insights behind \cref{thm:ccompl} is that, as described in \cref{sec:ccov},
already a careful selection of just $\Oh(\log m)$ among the $k$-error occurrences reveals a lot of structure that can be ultimately used to recover the whole set $\OccE_k(P,T)$.
To illustrate how to use this observation, let us initially make an unrealistic assumption that every candidate interval $I$ contains a $K$-error occurrence for some $K=\Ohhat(k)$.
Such occurrences can be detected using the existing verification procedure using $\Ohtilde(\sqrt{Km})=\Ohhat(\sqrt{km})$ queries.

First, we verify the leftmost and the rightmost intervals. This allows finding the leftmost and the rightmost $K$-error occurrences of $P$ in $T$.
We henceforth assume that text $T$ is cropped so that these two $K$-error occurrences constitute a prefix and a suffix of $T$, respectively.
The underlying alignments are the initial elements of the set $S$ that we maintain using the insights of \cref{sec:ccov}. Even though these two alignments have cost at most $K$, for technical reasons, we subsequently allow adding to $S$ alignments of cost up to $K'=K+\Oh(k)$.
Using the edit information for alignments $\mX\in S$, we build the graph $\bG_S$, calculate its connected components, and classify them as red and black components.

If there are no black components, that is, $\bc(\bG_S)=0$, then the edit information for the alignments $\mX \in S$ allows recovering the whole input strings $P$ and $T$.
Thus, no further quantum queries are needed, and we complete the computation using a classical verification algorithm in $\Oh(m+k^{3})$ time.

If there are black components, we retrieve the positions $\pi_0^0,\ldots,\pi_{m_0-1}^0$ and $\tau_0^0,\ldots,\tau_{n_0-1}^0$ contained in the $0$-th black component.
Based on these positions, we can classify $K'$-error occurrences $T\fragmentco{t}{t'}$ into those that are \emph{captured} by $S$ (for which $|\tau_i^0-\pi_0^0-t|$ is small for some $i\in \fragment{0}{n_0-m_0}$) and those which are not captured by $S$.
Although we do not know $K'$-error occurrences other than those contained in $S$, the test of comparing $|\tau_i^0-\pi_0^0-t|$ against a given threshold (which is $\Oh(K'|S|)$) can be performed for any position $t$, and thus we can classify arbitrary positions $t\in \fragment{0}{|T|}$ into those that are captured by $S$ and those that are not.

If any of the candidate intervals $I$ contains a position $t\in I$ that is not captured by $S$, we verify that interval and, based on our assumption, obtain a $K$-error occurrence of $P$ in $T$ that starts somewhere within $I$.
Furthermore, we can derive an optimal alignment $\mX:P\onto T\fragmentco{t}{t'}$ whose cost does not exceed $K+|I|\le K'$ because $|I|=\Oh(k)$.
This $K'$-error occurrence is not captured by $S$, so we can add $\mX$ to $S$ and, as a result, the number of black components decreases at least twofold by \cref{lem:periodhalves}.

The remaining possibility is that $S$ captures all positions $t$ contained in the candidate intervals~$I$.
In this case, our goal is to construct strings $P^\#$ and $T^\#$ of \cref{prp:subhash}, which are guaranteed to satisfy $\OccE_{k}(P,T)\cap I = \OccE_{k}(P^\#,T^\#)\cap I$ for each candidate interval $I$ because $k\le K'$.
For this, we need to build a period cover $C_S$ satisfying \cref{def:periodcover_alt},
which requires retrieving certain compressible substrings of $T$.
The minimum period cover $C_S$ utilized in our encoding (\cref{prp:encode_simple_funcover}) does not seem to admit an efficient quantum construction procedure, so we build a slightly larger period cover whose encoding incurs a logarithmic-factor overhead; see \cref{def:periodcover}.
The key subroutine that we repeatedly use while constructing this period cover asks to compute the longest fragment of $T$ (or of the reverse text $\rev{T}$) that starts at a given position and admits a Lempel--Ziv factorization~\cite{DBLP:journals/tit/ZivL77} of size bounded by a given threshold.
For this, we use exponential search combined with the recent quantum LZ factorization algorithm~\cite{GJKT24}; see \cref{prp:quantumlz,prp:quantum_blackcover}.
Based on the computed period cover, we can construct the strings $P^\#$ and $T^\#$
and resort to a classic verification algorithm (that performs no quantum queries) to process all $\Oh(k)$ intervals $I$ in time $\Oh(m+k^3)$.

The next step is to drop the unrealistic assumption that every candidate interval $I$ contains a $K$-error occurrence of $P$.
The natural approach is to test each of the candidate intervals using an approximation algorithm that either reports that $\Occ_{k}(P,T)\cap I= \emptyset$ (in which case we can drop the interval since we are ultimately looking for $k$-error occurrences) or that  $\Occ_{K}(P,T)\cap I \ne \emptyset$ (in which case the interval satisfies our assumption).
Given that $|I|$ is much smaller than $K$, it is enough to approximate $\ed(P, T\fragmentco{t}{t+m})$ for an arbitrary single position $t\in I$ (distinguishing between distances at most $\Oh(k)$ and at least $K-\Oh(k)$).
Although the quantum complexity of approximating edit distance has not been studied yet, we observe that the recent sublinear-time algorithm of Goldenberg, Kociumaka, Krauthgamer, and Saha~\cite{gapED} is easy to adapt to the quantum setting, resulting in a query complexity of $\Ohhat(\sqrt{n})$ and an approximation ratio of $n^{o(1)}=\Ohhat(1)$; see \cref{sec:quantumgaped} for details.

Unfortunately, we cannot afford to run this approximation algorithm for every candidate interval: that would require $\Ohhat(k\sqrt{m})$ queries.
Our final trick is to use \GS on top: given a subset of the $\Oh(k)$ candidate intervals, using just $\Ohhat(\sqrt{km})$ queries, we can either learn that none of them contains any $k$-error occurrence (in this case, we can discard all of them) or identify one that contains a $K$-error occurrence.
Combined with binary search, this approach allows discarding some candidate intervals so that the leftmost and the rightmost among the remaining ones contain $K$-error occurrences.
The underlying alignments (constructed using the exact quantum bounded edit distance algorithm of~\cite{GJKT24}) are used to initialize the set $S$.
At each step of growing $S$, on the other hand, we apply our approximation algorithm to the set of all candidate intervals that are not yet (fully) captured by $S$.
Either none of these intervals contain $k$-error occurrences (and the construction of $S$ may stop), or we get one that is guaranteed to contain a $K$-error occurrence.
In this case, we construct an appropriate low-cost alignment $\mX$ using the exact algorithm and extend the set $S$ with $\mX$.
Thus, the unrealistic assumption is not needed to construct the set $S$ and the strings $P^\#$ and $T^\#$ using $\Ohhat(\sqrt{km})$ queries.
\subsubsection{Handling the Approximately Periodic Case}

Verifying $\Oh(k)$ candidate intervals was the only bottleneck of the non-periodic case of
\PMwE.
In the approximately periodic case, on the other hand, we may have $\Oh(k^2)$ candidate intervals, so a direct application of the approach presented above only yields an $\Ohhat(\sqrt{k^2m})$-query algorithm.

Fortunately, a closer inspection of the candidate intervals constructed in~\cite{CKW20} reveals that they satisfy the unrealistic assumption that we made above: each of them contains an $\Oh(k)$-error occurrence of~$P$.
This is because both $P$ and the relevant part of $T$ are at edit distance $\Oh(k)$ from substrings of $Q^\infty$ and each of the intervals contains a position that allows aligning $P$ into $T$ via the substrings of $Q^\infty$ (so that perfect copies of $Q$ are matched with no edits).
Consequently, the set $S$ of $\Oh(\log m)$ alignments covering all candidate intervals can be constructed using $\Ohtilde(\sqrt{km})$ queries.
Moreover, once we construct the strings $P^\#$ and $T^\#$, instead of verifying all $\Oh(k^2)$ candidate intervals, which takes $\Oh(m+k^4)$ time, we can use the classic  $\Ohtilde(m+k^{3.5})$-time algorithm of~\cite{CKW22} to construct the entire set $\OccE_k(P^\#,T^\#)=\OccE_k(P,T)$.

\section{Preliminaries}\label{sec:prelims}

\subparagraph*{Sets.}
For integers $i,j \in \mathbb{Z}$, we write $\fragment{i}{j}$ to denote the set $\{i, \dots, j\}$ and $\fragmentco{i}{j}$ to denote the set $\{i ,\dots, j - 1\}$; we define the sets $\fragmentoc{i}{j}$ and $\fragmentoo{i}{j}$ similarly.

For a set $S$, we write $kS$ to denote the set obtained from \(S\) by multiplying every element with \(k\), that is, $kS \coloneqq \{ k\cdot s \mid s \in S\}$.
Similarly, we define $\floor{S/k} \coloneqq \{ \floor{s/k} \mid s \in S \}$ and $k\floor{S/k} \coloneqq \{k \cdot \floor{s/k} \mid s \in S \}$.

\subparagraph*{Strings.}

An \emph{alphabet} \(\Sigma\) is a set of characters.
We write $X=X\position{0}\, X\position{1}\cdots X\position{n-1} \in \Sigma^{n}$ to denote a \textit{string} of length $|X|=n$ over $\Sigma$.
For a \emph{position} $i \in \fragmentco{0}{n}$, we say that $X\position{i}$ is the $i$-th character of $X$.
For integer indices $0 \leq i \le j \leq |X|$, we say that $X\fragmentco{i}{j} \coloneqq X\position{i} \cdots X\position{j-1}$ is a \emph{fragment} of $X$.
We may also write $X\fragment{i}{j-1}$, $X\fragmentoc{i-1}{j-1}$, or $X\fragmentoo{i-1}{j}$ for the fragment $X\fragmentco{i}{j}$.
A \emph{prefix} of~a string $X$ is a fragment that starts at position $0$, and a \emph{suffix} of~a string $X$ is a fragment that ends at position $|X|-1$.

A string $Y$ of~length $m\in \fragment{0}{n}$ is a \emph{substring} of another string $X$ if there is a fragment $X\fragmentco{i}{i + m}$ that is equal to $Y$.
In this case, we say that there is an \emph{exact occurrence} of~$Y$ at position $i$ in~$X$.
Further, we write $\Occ(Y,X) \coloneqq \{i\in \fragment{0}{n-m} : Y = X\fragmentco{i}{i+m}\}$ for the set of starting positions of the (exact) occurrences of $Y$ in $X$.

For two strings $A$ and $B$, we write $AB$ for their concatenation.
We write $A^k$ for the concatenation of $k$ copies of the string $A$.
We write $A^\infty$ for an infinite string (indexed with non-negative integers) formed as the concatenation of an infinite number of copies of the string \(A\).
A \textit{primitive} string is a string that cannot be expressed as $A^k$ for any string $A$ and any integer $k > 1$.

An integer $p\in \fragment{1}{n}$ is \emph{a period} of a string $X\in\Sigma^n$ if we have $X\position{i} = X\position{i + p}$ for all $i \in \fragmentco{0}{n-p}$.
\emph{The period} of a string \(X\), denoted $\per(X)$, is the smallest period of \(X\).
A string \(X\) is \textit{periodic} if \(\per(X) \le |X| / 2\).

An important tool when dealing with periodicity is Fine and Wilf's Periodicity Lemma~\cite{FW65}.

\begin{lemmaq}[Periodicity Lemma~\cite{FW65}] \label{lem:perlemma}
    If \(p, q\) are periods of a string \(X\) of length \(|X| \geq p + q - \gcd(p, q)\), then \(\gcd(p, q)\) is a period of \(X\).
\end{lemmaq}

This allows us to derive the following relationship between exact occurrences and periodicity.

\begin{lemma}\label{fct:periodicity}
    Consider a non-empty pattern $P$ and a text $T$ with $|T|\le 2|P|+1$.
    If $\{0,|T|-|P|\}\subseteq \Occ(P,T)$, that is, $P$ occurs both as a prefix and as a suffix of $T$, then $\gcd(\Occ(P,T))$ is a period of $T$.
\end{lemma}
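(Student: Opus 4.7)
The plan is to establish that $g \coloneqq \gcd(\Occ(P, T))$ is a period of $T$ by an induction that refines the gcd one occurrence at a time. Write $n \coloneqq |T|$ and $m \coloneqq |P|$; the hypothesis $n \le 2m + 1$ is used crucially when invoking the Periodicity Lemma (\cref{lem:perlemma}).

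First, I would record two elementary facts. Since $\{0, n - m\} \subseteq \Occ(P, T)$, the equality $T\fragmentco{0}{m} = P = T\fragmentco{n - m}{n}$ gives $T\position{j} = T\position{j + (n-m)}$ for every $j \in \fragmentco{0}{m}$, so $n - m$ is a period of $T$. Analogously, for every $i \in \Occ(P, T)$, the equality $T\fragmentco{0}{m} = P = T\fragmentco{i}{i + m}$ shows that $i$ is a period of the prefix $T\fragmentco{0}{i + m}$.

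Next, enumerate the occurrences starting from $i_0 \coloneqq n - m$ and listing the remaining elements of $\Occ(P, T) \setminus \{0, n - m\}$ as $i_1, \ldots, i_r$, and set $p_s \coloneqq \gcd(i_0, i_1, \ldots, i_s)$, so that $p_r = g$. I plan to prove by induction on $s$ that $p_s$ is a period of $T$; the base case $s = 0$ is the first observation above. For the inductive step, the prefix $T\fragmentco{0}{i_s + m}$ admits both $i_s$ and $p_{s-1}$ as periods (the latter inherited from $T$, valid since $p_{s-1}$ divides $n - m \le m + 1 \le i_s + m$). The Periodicity Lemma applies: its hypothesis $i_s + p_{s-1} - p_s \le i_s + m$ reduces to $p_{s-1} - p_s \le m$, which holds because $p_{s-1} \le m + 1$ and $p_s \ge 1$---this is precisely where $n \le 2m + 1$ enters. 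Hence $p_s$ is a period of the prefix $T\fragmentco{0}{i_s + m}$.

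The remaining step is to lift $p_s$ from a period of this prefix to a period of the entire $T$. Because $p_{s-1}$ is a period of $T$ (inductive hypothesis), every character $T\position{j}$ equals $T\position{j \bmod p_{s-1}}$, so $T$ is determined by the block $T\fragmentco{0}{p_{s-1}}$, which sits inside the prefix $T\fragmentco{0}{i_s+m}$ and therefore inherits the period $p_s$. Since $p_s$ divides $p_{s-1}$, the relation $T\position{j} = T\position{j + p_s}$ remains consistent even when $j + p_s$ crosses a block boundary of length $p_{s-1}$: the overflow is resolved by iterating the period $p_s$ inside the block up to $p_{s-1}/p_s$ times. I expect this lifting to be the main bookkeeping hurdle, though it is a standard consequence of the divisibility. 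Completing the induction gives that $p_r = g$ is a period of $T$, as required.
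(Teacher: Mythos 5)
Your proof is correct, but it is organized quite differently from the paper's. You treat each occurrence $i_s$ itself as a period of the prefix $T\fragmentco{0}{i_s+|P|}$ and refine the gcd one occurrence at a time, invoking the two-period Periodicity Lemma (\cref{lem:perlemma}) at every inductive step (the check $p_{s-1}-p_s\le |P|$, via $p_{s-1}\le |T|-|P|\le |P|+1$, is exactly where $|T|\le 2|P|+1$ enters, as you note), and then you lift the refined period from the prefix to all of $T$ using the previously established period $p_{s-1}$ of $T$ together with $p_s\mid p_{s-1}$; the lifting you describe informally is sound and amounts to $T\position{j}=T\position{j\bmod p_{s-1}}$ followed by $T\position{j'}=T\position{j'\bmod p_s}$ inside the prefix, with $(j\bmod p_{s-1})\bmod p_s=j\bmod p_s$. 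The paper instead takes the \emph{differences} of consecutive occurrences as periods of $P$, applies the Periodicity Lemma once to $P$ (using that the differences sum to $|T|-|P|\le |P|+g$), identifies the gcd of the differences with $g$ by telescoping, and only then transfers the period from $P$ to $T$ by covering each position of $T$ with an occurrence window, which forces an explicit corner case at $j=|P|$ when $|T|=2|P|+1$ and $\Occ(P,T)=\{0,|T|-|P|\}$. Your route buys a proof that uses only the pairwise form of \cref{lem:perlemma} and sidesteps that corner case entirely (when there are only the two extreme occurrences your induction stops at the base case), at the cost of an induction with $|\Occ(P,T)|$ lemma applications and the modular lifting bookkeeping; the paper's route is shorter globally but leans on a summed/multi-period use of the lemma and the window-covering transfer. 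Both arguments share the same degenerate blind spot when $|T|=|P|$ (where $\gcd(\Occ(P,T))=0$ is not a period under the paper's definition), so that is not a defect of your write-up specifically.
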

\begin{proof}
    Write $\Occ(P,T)=\{p_0,\ldots,p_{\ell-1}\}$, where $0=p_0 < \cdots < p_{\ell-1}=|T|-|P|$, and set $g\coloneqq\gcd(\Occ(P,T))$.
    From $T\fragmentco{p_i}{p_i+|P|}=P=T\fragmentco{p_{i-1}}{p_{i-1}+|P|}$, we obtain that, for each $i\in \fragmentoo{0}{\ell}$, the pattern $P$ has period $p_{i}-p_{i-1}$.
    Since
    \[\sum_{i\in \fragmentoo{0}{\ell}} \left(p_{i}-p_{i-1}\right)= p_{\ell-1}=|T|-|P| \le |P|+1 \le |P|+g,\]
    \cref{lem:perlemma} implies that $\gcd\{p_{i}-p_{i-1}\mid i\in \fragmentoo{0}{\ell}\}$ is a period of $P$.
    By repeatedly applying the property \(\gcd(a + b, b) = \gcd(a, b)\), we obtain $\gcd\{p_{i}-p_{i-1} \mid i\in \fragmentoo{0}{\ell}\} = g$.

    It remains to prove that $g$ is also a period of $T$, that is, we need to show that $T\position{j}=T\position{j\bmod g}$ holds for each $j\in \fragmentco{0}{|T|}$.
    To that end, fix a position \(j \in \fragmentco{0}{|T|}\).
    Unless \(j = |P|\), \(|T| = 2|P| + 1\), and \(\Occ(P,T)=\{0,|T|-|P|\}\), we have $j \in \fragmentco{p_i}{p_i+|P|}$ for some $i\in \fragmentco{0}{\ell}$.
    Now, as $g$ is a period of $P$ and a divisor of $p_i$, and because $\{0,p_i\}\subseteq
    \Occ(P,T)$, we obtain
    \[
        T\position{j} = P\position{j-p_i} = P\position{(j-p_i)\bmod g}= P\position{j\bmod g}=T\position{j\bmod g}.
    \]
    Finally, if $j=|P|$, $|T|=2|P|+1$, and $\Occ(P,T)=\{0,|T|-|P|\}$, then we observe that $g = |P|+1$, so $T\position{j}=T\position{j\bmod g}$ is trivially satisfied.
\end{proof}

\subparagraph*{Edit Distance and Alignments.}
The \emph{edit distance} (the \emph{Levenshtein distance}~\cite{Levenshtein66}) between two strings $X$ and $Y$, denoted by $\ed(X,Y)$, is the minimum number of character insertions, deletions, and substitutions required to transform $X$ into~$Y$.
Formally, we first define an \emph{alignment} between string fragments.
\begin{definition}[{\cite[Definition 2.1]{CKW22}}]\label{def:alignment}
    A sequence $\mA=(x_i,y_i)_{i=0}^{m}$ is an \emph{alignment} of $X\fragmentco{x}{x'}$ onto $Y\fragmentco{y}{y'}$, denoted by \(\mA: X\fragmentco{x}{x'} \onto Y\fragmentco{y}{y'}\), if it satisfies $(x_0,y_0)=(x,y)$, $(x_{i+1},y_{i+1})\in \{(x_{i}+1,\allowbreak y_{i}+1),(x_{i}+1,\allowbreak y_{i}),(x_{i},y_{i}+1)\}$ for $i\in \fragmentco{0}{m}$, and $(x_m,y_m) =(x',y')$. Moreover, for $i\in \fragmentco{0}{m}$:
    \begin{itemize}
        \item If $(x_{i+1},y_{i+1})=(x_{i}+1,y_{i})$, we say that $\mA$ \emph{deletes} $X\position{x_i}$.
        \item If $(x_{i+1},y_{i+1})=(x_{i},y_{i}+1)$, we say that $\mA$ \emph{inserts} $Y\position{y_i}$.
        \item If $(x_{i+1},y_{i+1})=(x_{i}+1,y_{i}+1)$, we say that $\mA$ \emph{aligns} $X\position{x_i}$ to $Y\position{y_i}$.
        If~additionally $X\position{x_i}=Y\position{y_i}$, we say that $\mA$ \emph{matches} $X\position{x_i}$ and $Y\position{y_i}$; otherwise, $\mA$ \emph{substitutes} $X\position{x_i}$ with $Y\position{y_i}$. \qedhere
    \end{itemize}
\end{definition}
Recall \cref{fig4} for a visualization of an example for an alignment.

The \emph{cost} of an alignment $\mA$ of $X\fragmentco{x}{x'}$ onto $Y\fragmentco{y}{y'}$, denoted by $\edal{\mA}(X\fragmentco{x}{x'},Y\fragmentco{y}{y'})$, is the total number of characters that $\mA$ inserts, deletes, or substitutes.
The edit distance $\ed(X,Y)$ is the minimum cost of an alignment of $X\fragmentco{0}{|X|}$ onto~$Y\fragmentco{0}{|Y|}$.
An alignment of $X$ onto $Y$ is \emph{optimal} if its cost is equal to $\ed(X, Y)$.

An alignment $\mA'':X\fragmentco{x}{x'}\onto Z\fragmentco{z}{z'}$ is a \emph{product} of alignments $\mA:X\fragmentco{x}{x'}\onto Y\fragmentco{y}{y'}$ and $\mA':Y\fragmentco{y}{y'}\onto Z\fragmentco{z}{z'}$ if, for every $(\bar{x},\bar{z})\in \mA''$, there is $\bar{y}\in \fragment{y}{y'}$ such that $(\bar{x},\bar{y})\in \mA$ and $(\bar{y},\bar{z})\in \mA'$.
A product alignment always exists, and every product alignment satisfies
\[
\edal{\mA''}(X\fragmentco{x}{x'},Z\fragmentco{z}{z'})\le
\edal{\mA}(X\fragmentco{x}{x'},Y\fragmentco{y}{y'})+\edal{\mA'}(Y\fragmentco{y}{y'},\allowbreak
Z\fragmentco{z}{z'}).
\]

For an alignment $\mA:X\fragmentco{x}{x'}\onto Y\fragmentco{y}{y'}$ with \(\mA = (x_i,y_i)_{i=0}^m\), we define the \emph{inverse alignment} $\mA^{-1} : Y\fragmentco{y}{y'}\onto X\fragmentco{x}{x'}$ as $\mA^{-1} \coloneqq (y_i,x_i)_{i=0}^m$.
The inverse alignment satisfies
\[
    \edal{\mA^{-1}}(Y\fragmentco{y}{y'},X\fragmentco{x}{x'})=\edal{\mA}(X\fragmentco{x}{x'},Y\fragmentco{y}{y'}).
\]

Given an alignment $\A:X\fragmentco{x}{x'}\onto Y\fragmentco{y}{y'}$ and a fragment $X\fragmentco{\bar{x}}{\bar{x}'}$ of $X\fragmentco{x}{x'}$, we write $\A(X\fragmentco{\bar{x}}{\bar{x}'})$ for the fragment $Y\fragmentco{\bar{y}}{\bar{y}'}$ of $Y\fragmentco{y}{y'}$ that $\A$ aligns against $X\fragmentco{\bar{x}}{\bar{x}'}$.
As insertions and deletions may render this definition ambiguous, we formally set
\[\bar{y} \coloneqq \min\{\hat{y} : (\bar{x},\hat{y})\in \A\}\quad\text{and}\quad
    \bar{y}' \coloneqq \left\{\begin{array}{c l}
            y' & \text{if }\bar{x}' = x',\\
            \min\{\hat{y}' : (\bar{x}',\hat{y}')\in \A\} & \text{otherwise}.
    \end{array}\right.
\]
This particular choice satisfies the following decomposition property.
\begin{fact}[{\cite[Fact 2.2]{CKW22}}]\label{fct:ali}
    For any alignment $\A$ of $X$ onto $Y$ and a decomposition $X=X_1\cdots X_t$ into $t$ fragments, $Y=\A(X_1)\cdots \A(X_t)$ is a decomposition into $t$ fragments with $\ed^\A(X,Y)= \sum_{i=1}^t \ed^{\A}(X_i,\A(X_i))$.
    Further, if \(\A\) is an optimal alignment, then $\ed(X,Y) = \sum_{i=1}^t \ed(X_i,\A(X_i))$.
\end{fact}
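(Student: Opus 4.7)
My plan is to handle the two assertions separately, first establishing the decomposition of $Y$ together with the cost-additivity for the fixed alignment $\A$, and then upgrading the identity to $\ed$ when $\A$ is optimal.

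For the first part, write $\A=(\bar{x}_i,\bar{y}_i)_{i=0}^{\ell}$ and let $0=\xi_0<\xi_1<\cdots<\xi_t=|X|$ be the breakpoints induced by the decomposition $X=X_1\cdots X_t$. Define indices $i_0<i_1<\cdots<i_t$ by $i_0=0$, $i_t=\ell$, and $i_j=\min\{i : \bar{x}_i=\xi_j\}$ for $j\in\fragmentoo{0}{t}$; these are exactly the indices that the recipe for $\A(\cdot)$ picks as the right endpoint of $\A(X_j)$ and the left endpoint of $\A(X_{j+1})$. A short case check of the asymmetric rule (the $\bar x'=x'$ branch versus $\bar x'<x'$) shows that $\A(X_j)=Y\fragmentco{\bar{y}_{i_{j-1}}}{\bar{y}_{i_j}}$ and that the right endpoint of $\A(X_j)$ equals the left endpoint of $\A(X_{j+1})$, hence $Y=\A(X_1)\cdots\A(X_t)$ is a valid decomposition into $t$ fragments.

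For the cost identity at the level of $\A$, I set $\A_j\coloneqq(\bar{x}_i,\bar{y}_i)_{i=i_{j-1}}^{i_j}$, which is an alignment $X_j\onto \A(X_j)$. The sub-alignments $\A_1,\ldots,\A_t$ partition the $\ell$ elementary steps $(\bar{x}_i,\bar{y}_i)\to(\bar{x}_{i+1},\bar{y}_{i+1})$ of $\A$; each such step individually contributes $0$ or $1$ to the cost depending on whether it is a match or one of insert/delete/substitute. Summing the contributions yields $\edal{\A}(X,Y)=\sum_{j=1}^t\edal{\A_j}(X_j,\A(X_j))$, and since $\A_j$ is by construction the restriction of $\A$ to $X_j\onto\A(X_j)$, the right-hand side equals $\sum_{j=1}^t\edal{\A}(X_j,\A(X_j))$.

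For the optimal case, I combine the previous identity with the product/concatenation property of alignments that the excerpt already records. On the one hand, $\ed(X,Y)=\edal{\A}(X,Y)=\sum_j\edal{\A}(X_j,\A(X_j))\ge\sum_j\ed(X_j,\A(X_j))$ by the minimality of $\ed(X_j,\A(X_j))$. On the other hand, picking an optimal alignment $\A'_j:X_j\onto\A(X_j)$ for each $j$ and concatenating them produces a single alignment $\A':X\onto Y$ whose cost is exactly $\sum_j\ed(X_j,\A(X_j))$, so $\ed(X,Y)\le\sum_j\ed(X_j,\A(X_j))$. The two inequalities give the claimed equality. The only mildly delicate point I anticipate is the asymmetric definition of $\A(X\fragmentco{\bar x}{\bar x'})$ at the right endpoint, which must be handled to ensure that consecutive fragments $\A(X_j)$ meet without overlap or gap; beyond that, everything is bookkeeping on the alignment sequence.
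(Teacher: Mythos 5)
The paper does not prove this statement at all: it is imported verbatim as \cite[Fact 2.2]{CKW22}, so there is no in-paper proof to compare against. Your argument is correct and is the standard one: splitting $\A$ at the first alignment index reaching each breakpoint $\xi_j$ (with the special rule at $\bar{x}'=x'$ handled by taking $i_t=\ell$) yields sub-alignments $\A_j: X_j \onto \A(X_j)$ that partition the elementary steps of $\A$, giving the cost identity, and the optimal case follows from the two inequalities you state, the upper bound coming from concatenating optimal sub-alignments $\A'_j$ (note this is concatenation of consecutive alignments sharing an endpoint, not the ``product'' $X\onto Y\onto Z$ operation from the preliminaries, though the paper uses such unions elsewhere, e.g.\ in the construction of $\mA_S^{j,i}$). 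The only cosmetic restriction is your assumption $\xi_0<\xi_1<\cdots<\xi_t$, which excludes empty fragments $X_j$; allowing $\xi_{j-1}=\xi_j$ only makes some $\A_j$ degenerate to a single point of cost $0$, and the argument goes through unchanged.
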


We use the following \emph{edit information} notion to encode alignments in space proportional to their costs.

\begin{definition}[Edit information]\label{def:edinfo}
    For an alignment $\mA=(x_i,y_i)_{i=0}^m$ of $X\fragmentco{x}{x'}$ onto $Y\fragmentco{y}{y'}$,
    the \emph{edit information} is defined as the set of 4-tuples $\sE_{X,
        Y}(\mA)=\{(x_i,\mathsf{cx}_i \mid y_i,\mathsf{cy}_i) : i\in
    \fragmentco{0}{m}\text{ and }\mathsf{cx}_i\ne \mathsf{cy}_i\}$, where
    \[\mathsf{cx}_i = \begin{cases}
        X\position{x_i} & \text{if }x_{i+1}=x_i+1,\\
        \varepsilon & \text{otherwise};
    \end{cases}\qquad\text{and}\qquad
    \mathsf{cy}_i = \begin{cases}
        Y\position{y_i} & \text{if }y_{i+1}=y_i+1,\\
        \varepsilon & \text{otherwise}.
    \end{cases}
    \]
\end{definition}

Observe that given two strings \(X\) and \(Y\), along with the edit information \(\sE_{X, Y}(\mA)\)
for an alignment \(\mA : X \onto Y\fragmentco{y}{y'}\) of non-zero cost,
we are able to fully reconstruct \(\mA\).
This is because elements in \(\mA\) without corresponding entries in \(\sE_{X, Y}(\mA)\) represent matches.
Therefore, we can deduce the missing pairs of \(\mA\) between two consecutive elements in \(\sE_{X, Y}(\mA)\),
as well as before (or after) the first (or last) element of \(\sE_{X, Y}(\mA)\).
This inference requires at least one 4-tuple to be contained in \(\sE_{X, Y}(\mA)\).

\subparagraph*{Pattern Matching with Edits.}

We denote the minimum edit distance between a string $S$ and any prefix of a string $T^\infty$ by $\edp{S}{T} \coloneqq \min\{\ed(S,T^\infty\fragmentco{0}{j}) \mid j \in \Zz\}$.
We denote the minimum edit distance between a string $S$ and any substring of $T^\infty$ by $\edl{S}{T} \coloneqq \min\{\ed(S,T^\infty\fragmentco{i}{j}) \mid i, j \in \Zz\text{ and }i \le j\}$.
Lastly, we set $\edp{S}{T} \coloneqq \min \{\ed(S,T^{\infty}\fragmentco{i}{j|T|}) \mid i,j \in \Zz, i\leq j|T|\}$.

In the context of two strings $P$ (referred to as the pattern) and $T$ (referred to as the text), along with a positive integer $k$ (referred to as the threshold), we say that there is a $k$-error or $k$-edits occurrence of $P$ in $T$ at position $i\in \fragment{0}{|T|}$ if $\ed(P, T\fragmentco{i}{j})\leq k$ holds for some position $j\in \fragment{i}{|T|}$.
The set of all starting positions of $k$-error occurrences of $P$ in $T$ is denoted by $\OccE_k(P,T)$; formally, we set
\[
    \OccE_k (P,T)\coloneqq \{i\in \fragment{0}{|T|} \mid \exists_{j\in \fragment{i}{|T|}}, \ed(P,T\fragmentco{i}{j}\leq k)\}.
\]

We often need to compute \(\OccE_k(P, T)\).
To that end, we use the recent algorithm of Charalampopoulos, Kociumaka, and Wellnitz~\cite{CKW22}.

\begin{lemma}[{\cite[Main Theorem 1]{CKW22}}]\label{prp:classical_pmwe}
    Let $P$ denote a pattern of length $m$, let $T$ denote a text of length~$n$, and let $k\in \Zz$ denote a threshold.
    Then, there is a (classical) algorithm that computes $\OccE_{k}(P,T)$ in time $\Ohtilde(n+n/m \cdot k^{3.5})$. \lipicsEnd
\end{lemma}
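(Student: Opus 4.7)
The plan is to follow the high-level strategy of Charalampopoulos, Kociumaka, and Wellnitz~\cite{CKW22}. First, I would reduce to the case $n = \Oh(m)$ by cutting $T$ into $\Oh(n/m)$ overlapping pieces of length $\Theta(m)$ each, with overlap at least $m+k$, so that every $k$-error occurrence of $P$ in $T$ is contained in some piece and each piece contributes to the output only a bounded number of times. Processing the pieces independently in $\Ohtilde(m + k^{3.5})$ time each then yields the claimed global bound.

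Second, I would classify the structure of $P$ by computing the decomposition from~\cref{prp:EI} in $\Ohtilde(m + k^{3.5})$ time. In the non-periodic regimes~\ref{item:a:prp:EI} and~\ref{item:b:prp:EI}, the structural result of~\cite{CKW20} implies that $\OccE_k(P,T)$ is confined to $\Oh(k)$ intervals of length $\Oh(k)$ each. These intervals can be localized by filtering candidate alignments through exact pattern matching on the $2k$ breaks (case~\ref{item:a:prp:EI}) or by approximate-periodicity tests on the repetitive regions (case~\ref{item:b:prp:EI}). Each of the $\Oh(k)$ candidate intervals can then be verified via a bounded-edit-distance computation in $\Ohtilde(m + k^{2.5})$ amortized time (e.g., via a suitably sparsified Landau--Vishkin dynamic program exploiting that all candidates within one interval share most of their alignment graph), giving total time $\Ohtilde(m + k^{3.5})$.

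Third, in the approximately periodic regime~\ref{item:c:prp:EI}, both $P$ and the relevant portion of $T$ are at edit distance $\Oh(k)$ from substrings of $Q^\infty$ for some primitive $Q$ with $|Q|\le m/(\alphav k)$. I would reduce matching $P$ into $T$ to aligning two almost-periodic strings: decompose each relative to runs of $Q$, represent each decomposition by its exceptional edits with respect to $Q^\infty$ (a list of size $\Oh(k)$), and run a dynamic program over states consisting of a shift modulo $|Q|$ and the current edit budget. This produces a succinct representation of $\OccE_k(P,T)$ as $\Ohtilde(k)$ structured families (essentially arithmetic progressions up to small perturbations), enumerable in total time $\Ohtilde(m + k^{3.5})$.

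The main obstacle is the periodic case: beating the Cole--Hariharan bound of $\Ohtilde(m + k^4)$ requires a delicate amortization of bounded-edit-distance computations across many aligned copies of $Q$ in $P$ and in $T$, and the exponent $3.5$ emerges from combining combinatorial identities on edit-distance alignments between nearly periodic strings with fast subroutines for bounded edit distance. Verifying that each reduction preserves the $\Ohtilde(m + k^{3.5})$ budget (and that the classification into~\ref{item:a:prp:EI}--\ref{item:c:prp:EI} itself can be computed within this budget) is where the bulk of the work lies, and is exactly what is carried out in~\cite{CKW22}.
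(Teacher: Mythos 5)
This statement is not proved in the paper at all: it is imported verbatim as \cite[Main Theorem 1]{CKW22} and used as a black box, so the paper's ``proof'' is simply the citation. Had you just invoked that result, you would have matched the paper exactly. Instead, you attempt to sketch the external proof, and as a proof your sketch has a genuine gap: every step that is actually hard is asserted rather than argued. Concretely, the claim that each of the $\Oh(k)$ candidate intervals can be verified in ``$\Ohtilde(m+k^{2.5})$ amortized time (e.g., via a suitably sparsified Landau--Vishkin dynamic program)'' is unsubstantiated --- taken at face value it even gives $\Ohtilde(km+k^{3.5})$ total, which exceeds the budget unless you explain precisely what is shared across intervals --- and, more importantly, the periodic case is where the theorem's content lives. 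Your proposed dynamic program over states ``(shift modulo $|Q|$, current edit budget)'' is essentially the classical treatment of the periodic case and would at best recover the older $\Ohtilde(m+k^4)$-type bounds of Cole--Hariharan/\cite{CKW20}; the improvement to $k^{3.5}$ in \cite{CKW22} hinges on additional structural and algorithmic ideas for nearly periodic strings that your outline does not supply. You acknowledge this yourself by deferring the bulk of the work to the citation, which means the proposal is an annotated pointer to \cite{CKW22} rather than a proof.

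For the purposes of this paper, the correct move is the minimal one: state the lemma with the citation, as the authors do, and do not re-derive it. If you do want to re-derive it, the argument must include the specific mechanism by which the periodic case is brought below $k^4$, since that is exactly the step your sketch leaves blank.
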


\subparagraph*{Compression and Lempel--Ziv Factorizations.}

We say that a fragment $X\fragmentco{i}{i+\ell}$ is a \emph{previous factor} if it has an earlier occurrence in $X$, that is, $X\fragmentco{i}{i+\ell}=X\fragmentco{i'}{i'+\ell}$ holds for some $i'\in \fragmentco{0}{i}$.
An \emph{LZ77-like factorization} of $X$ is a factorization $X = F_1\cdots F_f$ into non-empty \emph{phrases} such that each phrase $F_j$ with $|F_j|>1$ is a previous factor.
In the underlying \emph{LZ77-like representation}, every phrase is encoded as follows (consult \cref{fig:lz77_example} for a visualization of an example).
\begin{itemize}
    \item A previous factor phrase $F_j=X\fragmentco{i}{i+\ell}$ is encoded as $(i',\ell)$, where $i'\in \fragmentco{0}{i}$ satisfies $X\fragmentco{i}{i+\ell}=X\fragmentco{i'}{i'+\ell}$.
    The position \(i'\) is chosen arbitrarily in case of ambiguities.
    \item Any other phrase $F_j=X\position{i}$ is encoded as $(X\position{i},0)$.
\end{itemize}

\begin{figure}[t]
    \resizebox{\textwidth}{!}{%
        \centering
        \begin{tabular}{m{12mm} m{7mm}| m{7mm} | m{7mm} | m{7mm} | m{7mm} m{7mm} | m{7mm}
            m{7mm} m{7mm} m{7mm} m{7mm} | m{7mm} m{7mm} m{7mm} |m{7mm}}
        Index &0 & 1 & 2 & 3 & 4 & 5 & 6 & 7 & 8 & 9 & 10 & 11 & 12 & 13 & 14\\
        $X$ & $\mathtt{a}$  & $\mathtt{b}$ & $\mathtt{a}$  & $\mathtt{c}$ & $\mathtt{a}$ & $\mathtt{b}$ & $\mathtt{c}$ & $\mathtt{a}$ & $\mathtt{b}$ & $\mathtt{c}$ & $\mathtt{a}$ & $\mathtt{a}$ & $\mathtt{a}$ & $\mathtt{a}$ & $\mathtt{b}$ \\
        \end{tabular}
    }
    \caption{The LZ77 factorization of a string $X = \mathtt{abacabcabcaaaab}$ of length $n = 15$.
    The resulting encoding has $z=8$ elements: $(\mathtt{a}, 0)$, $(\mathtt{b}, 0)$, $(0,1)$, $(\mathtt{c}, 0)$, $(0, 2)$, $(3, 5)$, $(10, 3)$, $(8,1)$.}\label{fig:lz77_example}
\end{figure}

The LZ77 factorization~\cite{DBLP:journals/tit/ZivL77} (or the LZ77 parsing) of a string $X$, denoted by $\LZ(X)$ is an LZ77-like factorization that is constructed by greedily parsing $X$ from left to right into the longest possible phrases.
More precisely, the $j$-th phrase $F_j=X\fragmentco{i}{i+\ell}$ is the longest previous factor starting at position $i$; if no previous factor starts at position \(i\), then $F_j$ is the single character \(X\position{i}\).
It is known that the aforementioned greedy approach produces the shortest possible LZ77-like factorization.

\subparagraph*{Self-Edit Distance of a String.}
Another measure of compressibility that is instrumental in this paper is the \emph{self-edit distance} of a string, recently introduced by Cassis, Kociumaka, and Wellnitz~\cite{CKW23}.
An alignment $\mA : X \onto X$ is a \emph{self-alignment} if $\mA$ does not align any character $X\position{x}$ to itself.
The \emph{self-edit distance} of $X$, denoted by $\selfed(X)$, is the minimum cost of a self-alignment; formally, we set $\selfed(X) \coloneqq \min_{\mA} \edal{\mA} (X, X)$, where the minimization ranges over all self-alignments $\mA : X \onto X$.
A small self-edit distance implies that we can efficiently encode the string.

\begin{lemma} \label{prp:lz_selfed}
    For any string $X$, we have $|\LZ(X)| \leq 2\cdot\selfed(X)$.
\end{lemma}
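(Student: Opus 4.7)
The plan is to exhibit, from any optimal self-alignment $\A : X \onto X$ of cost $s = \selfed(X)$, an LZ77-like factorization of $X$ with at most $2s$ phrases; because $\LZ(X)$ is the shortest such factorization, this yields $|\LZ(X)| \le 2s$.

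I would begin by grouping the match steps of $\A$ into $R$ maximal runs of consecutive matches. The $j$-th run witnesses an equality $X\fragmentco{a_j}{b_j} = X\fragmentco{a'_j}{b'_j}$ between two equal-length fragments of $X$, with $a_j \neq a'_j$ because $\A$ is a self-alignment (the diagonal $y - x$ is constant along the run, and equals $0$ would require a self-match). I call the fragment with the larger starting index the \emph{later} fragment of that run. The factorization I construct uses one back-reference phrase per run, covering that run's later fragment by pointing at its earlier partner, together with a length-one literal phrase for every position of $X$ that lies outside every later fragment.

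The crux is that the later fragments are pairwise disjoint. Define $I_2$ as the set of positions $i$ for which $\A$ contains a match step whose left coordinate is $i$ and whose right coordinate $y$ satisfies $y < i$, and $J_2$ symmetrically as the set of positions $i$ for which $\A$ contains a match step whose right coordinate is $i$ and whose left coordinate $x$ satisfies $x < i$. Then $I_2$ is precisely the union of later fragments of negative-diagonal runs, $J_2$ is the union of later fragments of positive-diagonal runs, and within each of $I_2,J_2$ the fragments are disjoint by the alignment structure. The claim thus reduces to $I_2 \cap J_2 = \emptyset$: if $i \in I_2 \cap J_2$ were witnessed by two match steps $(i, y)$ and $(x, i)$ with $y < i$ and $x < i$, then the coordinate-wise monotonicity of $\A$ would fail in either possible alignment-ordering of these two steps, a contradiction. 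I expect this disjointness statement to be the main obstacle, since it requires exploiting monotonicity in both coordinates simultaneously with the no-self-match property.

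With $I_2 \cap J_2 = \emptyset$ in hand, the $R$ later fragments partition the $M := |I_2| + |J_2|$ matched positions, and the remaining $|X| - M = s_d + s_s \le s$ positions (where $s_d$ and $s_s$ count the deletions and substitutions of $\A$) each become a literal phrase. To bound $R$, I would observe that the diagonal $y - x$ equals $0$ at both endpoints of $\A$ yet is nonzero on every match step (again by no-self-match), so $\A$ must begin and end with a non-match step and must contain at least one non-match step between any two consecutive runs; hence the total number $s$ of non-match steps satisfies $s \ge R + 1$ whenever $M \ge 1$, giving $R \le s - 1$. The degenerate case $M = 0$ is immediate because then $s \ge |X|$. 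Combining yields at most $R + (s_d + s_s) \le (s - 1) + s = 2s - 1 \le 2s$ phrases, as needed.
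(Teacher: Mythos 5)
Your proof is correct and follows essentially the same route as the paper: turn an optimal self-alignment into an LZ77-like factorization whose phrases are the maximal match runs (each a previous factor) plus one literal per deleted/substituted character, then invoke the optimality of greedy LZ77. The only difference is that the paper sidesteps your disjointness lemma ($I_2 \cap J_2 = \emptyset$) by first mirroring the alignment above the main diagonal (WLOG $x \ge y$), so that every matched fragment lies on one side and the run count is bounded via the forced deletion of $X\position{0}$ rather than your endpoint-diagonal argument.
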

\begin{proof}
    Consider an optimal self-alignment $\mA : X \onto X$.
    Without loss of generality, we may assume that each point $(x,y)\in \mA$ satisfies $x\ge y$.%
    \footnote{If this is not the case, we write $\mA = (x_t,y_t)_{t=0}^m$ and observe that $\mA' = (\max(x_t,y_t), \min(x_t,y_t))_{t=0}^m$ is a self-alignment of the same cost.
    Geometrically, this means that we replace parts of $\mA$ below the main diagonal with their mirror image.}
    Let us partition $X$ into individual characters that $\mA$ deletes or substitutes, as well as maximal fragments that $\mA$ matches perfectly.

    Observe that each fragment $X\fragmentco{x}{x'}$ that $\mA$ matches perfectly is a previous factor:
    indeed, $X\fragmentco{x}{x'}$ matches $X\fragmentco{y}{y'}$, and $x>y$ holds because $(x,y)\in \mA$, $x\ge y$, and $\mA$ does not match $X\position{x}$ with itself.
    Consequently, $X\fragmentco{x}{x'}$ is a previous factor and the partition defined above forms a valid LZ77-like representation.

    Since $\mA$ makes at most $\selfed(X)$ edits and (if $|X|>0$) this includes a deletion of $X\position{0}$, the total number of phrases in the partition does not exceed $2\cdot \selfed(X)$.
\end{proof}

Note that, since $\selfed(X)$ is insensitive to string reversal, we also have $|\LZ(\rev{X})| \leq 2 \cdot \selfed(X)$, where $\rev{X}=X[|X|-1]\cdots X[1]X[0]$ is the reversal of $X$.
We use the following known properties of $\selfed$.

\begin{lemmaq}[Properties of $\selfed$, {\cite[Lemma~4.2]{CKW23}}] \label{prp:prop_selfed}
    For any string  $X$, all of the following hold:
    \begin{description}
        \item[Monotonicity.] For any $\ell' \leq \ell \leq r \leq r' \in \fragment{0}{|X|}$, we have $\selfed(X\fragmentco{\ell}{r}) \leq \selfed(X\fragmentco{\ell'}{r'})$.
        \item[Sub-additivity.] For any $m \in \fragment{0}{|X|}$, we have $\selfed(X) \leq \selfed(X\fragmentco{0}{m}) + \selfed(X\fragmentco{m}{|X|})$.
        \item[Triangle inequality.] For any string $Y$, we have $\selfed(Y) \leq \selfed(X) + 2\ed(X,Y)$. \qedhere
    \end{description}
\end{lemmaq}

Cassis, Kociumaka, and Wellnitz~\cite{CKW23} also proved the following lemma that bounds the self-edit distance of $Y$ in the presence of two disjoint alignments mapping $Y$ to nearby fragments of another string~$X$.

\begin{lemmaq}[{\cite[Lemma~4.5]{CKW23}}]\label{prp:edimpliesselfed}
    Consider strings $X, Y$ and alignments $\A : Y \onto X\fragmentco{i}{j}$ and $\A' : Y \onto X\fragmentco{i'}{j'}$.
    If there is no $(y, x) \in \A \cap \A'$  such that both $\A$ and $\A'$ match $Y\position{y}$ with $X\position{x}$, then
    \[
        \selfed(Y) \leq |i - i'| + \edal{\mA}(Y, X\fragmentco{i}{j}) + \edal{\mA'}(Y, X\fragmentco{i'}{j'}) + |j-j'|.
        \qedhere
    \]
\end{lemmaq}

\section{New Combinatorial Insights for Pattern Matching with Edits}
\label{sec:combres}

In this section, we fix a threshold $k$ and two strings $P$ and $T$ over a common input alphabet $\Sigma$.
Furthermore, we let $S$ be a set of alignments of $P$ onto fragments of $T$ such that all alignments in $S$ have cost at most $k$.

\subsection{The Periodic Structure induced by \texorpdfstring{$S$}{S}}

We now formally define the concepts introduced in the Technical Overview.
Again, revisit \cref{fig4} for a visualization of an example.

\begin{definition}\label{def:bg}
    We define the undirected graph $\bG_{S} = (V, E)$ as follows.
    \begin{description}
        \item[The vertex set $V$]contains:
        \begin{itemize}
            \item $|P|$ vertices representing characters of $P$;
            \item $|T|$ vertices representing characters of $T$; and
            \item one special vertex $\bot$.
        \end{itemize}
        \item[The edge set $E$]contains the following edges for each alignment $\mX\in S$:
        \begin{enumerate}[(i)]
            \item $\{P\position{x},\bot\}$ for every character $P\position{x}$ that $\mX$ deletes;\label{it:bg:i}
            \item $\{\bot, T\position{y}\}$ for every character $T\position{y}$ that $\mX$ inserts;\label{it:bg:ii}
            \item $\{P\position{x},T\position{y}\}$ for every pair of characters $P\position{x}$ and $T\position{y}$ that $\mX$ aligns.\label{it:bg:iii}
        \end{enumerate}
        We say that an edge $\{P\position{x},T\position{y}\}$ is \emph{black} if $\mX$ matches $P\position{x}$ and $T\position{y}$;
        all the remaining edges are \emph{red}.
    \end{description}

    We say that a connected component of $\bG_{S}$ is \emph{red} if it contains at least one red edge; otherwise, we say that the connected component is \emph{black}.
    We denote with $\bc(\bG_{S})$ the number of black components in $\bG_{S}$.
\end{definition}

Note that all vertices contained in black components correspond to characters of $P$ and $T$.
Moreover, all characters of a single black component are the same---a remarkable property.
This is because the presence of a black edge indicates that some alignment in $S$ matches the two corresponding characters.
Hence, the two characters must be equal.
Since a black component contains only black edges, all characters contained in one such component are equal.

Suppose we know the edit information $\sE_{P, T}(\mX)$ for each alignment $\mX: P\onto T\fragmentco{t}{t'}$ in $S$.
Then, we can reconstruct the complete edge set of $\bG_{S}$, without needing any other information.
Moreover, we can also distinguish between red and black edges, and consequently between red and black components.

Next, we introduce the property that we assume about $S$.
This property induces the periodic structure in $P$ and $T$ at the core of our combinatorial results.

\begin{definition}
    We say $S$ \emph{encloses $T$} if $|T| \le 2 \cdot |P| - 2k$ and there exist two distinct alignments $\mXpref, \mXsuf \in S$ such that $\mXpref$ aligns $P$ with a prefix of $T$ and $\mXsuf$ aligns $P$ with a suffix of $T$, or equivalently $(0,0) \in \mXpref$ and $(|P|,|T|) \in \mXsuf$.
\end{definition}

Suppose that we have $\bc(\bG_{S}) = 0$.
Then, we want to argue that the information $\{\sE_{P,T}(\mX) : S \ni \mX : P \onto T\fragmentco{t}{t'}\}$ mentioned earlier is sufficient to fully retrieve $P$ and $T$.
As already argued before, the information is sufficient to fully reconstruct $\bG_{S}$.
From $\bc(\bG_{S}) = 0$ follows that every character of $P$ and $T$ lies in a red component.
As a consequence, for every character of $P$ and $T$, there exists a path (possibly of length zero) starting at that character, only containing black edges, and ending in a character incident to a red edge.
Since the path only contains black edges, all characters contained in the path must be equal.
Since we store all characters incident to red edges, we can retrieve the character at the starting point of the path.
We conclude that we can fully retrieve $P$ and $T$.
Note that encoding this information occupies $\Oh(k|S|)$ space, since, for each alignment in $S$, we can store the corresponding edit information using $\Oh(k)$ space.

We dedicate the remaining part of \cref{sec:combres} until \cref{sec:recocc} to the case $\bc(\bG_{S}) > 0$.
We  prove which information, other than the edit information, we must store to encode all $k$-edit occurrences.
By storing the edit information for each $\mX \in S$, we  be always able to fully reconstruct $\bG_{S}$ and to infer which characters are contained in red components.
Hence, the only additional information left to store in this case, are the characters
contained in black components.

Throughout the remaining part of \cref{sec:combres} until \cref{sec:recocc}, we  always assume that $S$ encloses $T$ and that $\bc(\bG_{S}) > 0$.
This is where we  need to exploit the periodic structure induced by $S$.
Before describing more formally the periodic structure induced by $S$, we need to introduce two additional strings $T_{|S}$ and $P_{|S}$, constructed by retaining characters contained in black connected components.

\begin{definition}
    We let $T_{|S}$ denote the subsequence of $T$ consisting of characters contained in black components of~$\bG_{S}$.
    Similarly, $P_{|S}$ denotes the subsequence of $P$ consisting of characters contained in black components of~$\bG_{S}$.
\end{definition}

Now, we prove the lemma that characterizes the periodic structure induced by $S$.

\begin{lemma}\label{lem:periodicity}
    For every $c \in \fragmentco{0}{\bc(\bG_S)}$, there exists a black connected component with node set
    \[
        \{P_{|S}[i] : i \equiv_{\bc(\bG_S)} c\} \cup \{T_{|S}[i] : i \equiv_{\bc(\bG_S)} c\},
    \]
    that is, there exists a black connected component containing all characters of $P_{|S}$ and $T_{|S}$ appearing at positions congruent to $c$ modulo $\bc(\bG_S)$.
    Moreover, the last characters of $P_{|S}$ and $T_{|S}$ are contained in the same black connected component, that is, $|T_{|S}| \equiv_{\bc(\bG_{S})} |P_{|S}|$.
\end{lemma}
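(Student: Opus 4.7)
The plan is to reduce the claim to the Periodicity Lemma (\cref{lem:perlemma}) by analyzing, for each alignment $\mX\in S$, the bijection it induces on characters contained in black components. Writing $p\coloneqq|P_{|S}|$ and $q\coloneqq|T_{|S}|$, I first observe that every $\mX\in S$ aligning $P$ onto some $T\fragmentco{t}{t'}$ must \emph{match} (rather than substitute or delete) every character $P_{|S}\position{i}$; otherwise, $P_{|S}\position{i}$ would be incident to a red edge and its component would be red. The symmetric statement holds for every character of $T_{|S}$ with position in $\fragmentco{t}{t'}$. Letting $a_\mX$ denote the number of $T_{|S}$-characters at positions strictly below $t$, $\mX$ thus induces an order-preserving bijection matching $P_{|S}\position{i}$ with $T_{|S}\position{a_\mX+i}$ for every $i\in\fragmentco{0}{p}$. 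Setting $d\coloneqq q-p$, this yields $a_{\mXpref}=0$ and $a_{\mXsuf}=d$.

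Next, I would extract the key inequality $d\le p$ from the enclosing property. The cost bounds on $\mXpref,\mXsuf$ force the right endpoint $t'_{\mathrm{pref}}$ of $\mXpref$ in $T$ to satisfy $t'_{\mathrm{pref}}\ge|P|-k$ and the left endpoint $t_{\mathrm{suf}}$ of $\mXsuf$ to satisfy $t_{\mathrm{suf}}\le|T|-|P|+k$, so the assumption $|T|\le 2|P|-2k$ gives $t_{\mathrm{suf}}\le|P|-k\le t'_{\mathrm{pref}}$. Consequently, the $d$ characters of $T_{|S}$ below $t_{\mathrm{suf}}$ form a subset of the $p$ characters of $T_{|S}$ below $t'_{\mathrm{pref}}$, yielding $d\le p$. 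Now let $\tilde T$ be $T_{|S}$ with each character replaced by the index of its black component, and let $\tilde P$ be defined analogously. Each shift $a_\mX$ is a period of $\tilde T$ because $\tilde P=\tilde T\fragmentco{0}{p}$ occurs at position $a_\mX$ in $\tilde T$. Thanks to $d\le p$, we obtain $|\tilde T|=p+d\ge 2d\ge a+a'\ge a+a'-\gcd(a,a')$ for all $a,a'\in A\coloneqq\{a_\mX:\mX\in S\}$, so \cref{lem:perlemma} applies iteratively and yields that $g\coloneqq\gcd(A)$ is itself a period of $\tilde T$. Hence $\bc(\bG_S)\le g$ because $\tilde T$ uses exactly $\bc(\bG_S)$ distinct labels, one per black component.

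For the matching inequality and the description of each component, I would track index changes along paths in $\bG_S$. Every black edge connects $P_{|S}\position{i}$ to $T_{|S}\position{i+a_\mX}$ for some $\mX\in S$, so along any path within a black component, the telescoping sum of index differences is an integer combination of $A$, and hence a multiple of $g$. Consequently, any two vertices in a single black component have indices (in $P_{|S}$ for $P$-vertices, in $T_{|S}$ for $T$-vertices) congruent modulo $g$; in particular, positions $0,1,\dots,g-1$ of $T_{|S}$, which exist since $g\le d\le p\le q$, occupy pairwise-distinct components, giving $\bc(\bG_S)\ge g$. Combining the bounds yields $\bc(\bG_S)=g$ together with the claimed description of each black component, and the moreover clause follows from $d=a_{\mXsuf}\in g\mathbb{Z}$. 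The corner case $d=0$ forces every $a_\mX=0$, so each black component equals $\{P_{|S}\position{i},T_{|S}\position{i}\}$ for some $i\in\fragmentco{0}{p}$, settling the claim with $\bc(\bG_S)=p$. The main obstacle will be the careful bookkeeping around the iterated application of \cref{lem:perlemma}, which crucially hinges on the sharp inequality $d\le p$ extracted from the enclosing assumption, together with a precise formalization of the index-telescoping argument.
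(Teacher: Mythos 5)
Your route is essentially the paper's: the per-alignment shift structure (your order-preserving bijection with shift $a_\mX$ is the paper's claim about $\Delta_\mX$), the bound $|T_{|S}|\le 2|P_{|S}|$ extracted from the enclosing property, a reduction to exact occurrences of the component-label strings $\tilde P,\tilde T$, an application of the Periodicity Lemma, and a residue/telescoping argument identifying the components with residue classes modulo $g$. Those parts, including the two inequalities $\bc(\bG_S)\le g$ and $\bc(\bG_S)\ge g$ and the corner case $d=0$, are sound (for $\bc(\bG_S)\le g$ you implicitly use that every black component contains a character of $T_{|S}$, which follows from $\mXpref$ matching each $P_{|S}$-character to a $T_{|S}$-character).

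The one step that does not hold as justified is ``each shift $a_\mX$ is a period of $\tilde T$ because $\tilde P=\tilde T\fragmentco{0}{p}$ occurs at position $a_\mX$ in $\tilde T$.'' An occurrence of the length-$p$ prefix at position $a_\mX$ only shows that $a_\mX$ is a period of the prefix $\tilde T\fragmentco{0}{a_\mX+p}$; since $a_\mX\le d$ and $|\tilde T|=p+d$, this is a proper prefix of $\tilde T$ whenever $a_\mX<d$, so the stated reason does not yield the conclusion, and your subsequent iterated application of \cref{lem:perlemma} to $\tilde T$ rests on it. The claim is in fact true, but one must also use the suffix occurrence at $d$ together with $d\le p$: reading the occurrence at $a_\mX$ against the prefix occurrence shows $a_\mX$ is a period of $\tilde P$, reading it against the suffix occurrence shows $d-a_\mX$ is a period of $\tilde P$, and these two facts let one verify $\tilde T\position{i}=\tilde T\position{i+a_\mX}$ also for the indices $i$ with $i+a_\mX\ge p$, which the prefix occurrence alone does not cover. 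A cleaner repair is to avoid the per-shift claim altogether and argue as the paper does: observe $\{0,d\}\subseteq\{a_\mX:\mX\in S\}\subseteq\Occ(\tilde P,\tilde T)$ and $|\tilde T|\le 2|\tilde P|$, and invoke \cref{fct:periodicity} (whose proof applies \cref{lem:perlemma} to $\tilde P$ via differences of consecutive occurrences and then extends to $\tilde T$) to conclude directly that $g=\gcd\{a_\mX:\mX\in S\}$ is a period of $\tilde T$ and $\tilde P$. With that substitution the rest of your argument goes through unchanged; the ``careful bookkeeping'' you flag around iterating \cref{lem:perlemma} is not the real obstacle—the unproved per-shift periodicity of $\tilde T$ is.
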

\begin{proof}
    The following claim characterizes the edges of $\bG_S$ induced by a single alignment $\mX\in S$.
    The crucial observation is that these edges correspond to an \emph{exact occurrence of $P_{|S}$ in $T_{|S}$}.

    \begin{claim}\label{clm:occ}
        For every $\mX \in S$, there exists $\Delta_{\mX}\in \fragment{0}{|T_{|S}|-|P_{|S}|}$ such that $\mX$ induces edges between $P_{|S}\position{p}$ and $T_{|S}\position{p+\Delta_{\mX}}$ for all $p\in \fragmentco{0}{|P_{|S}|}$ and no other edges incident to characters of $P_{|S}$ or $T_{|S}$.
    \end{claim}
    \begin{claimproof}
        By \cref{def:bg}, black edges induced by $\mX$ form a matching between $P_{|S}$ and the characters of $T_{|S}$ contained in the image $\mX(P)$.
        The characters contained in $\mX(P)$ appear consecutively in $T_{|S}$, so $\mX$ induces a matching between $P_{|S}$ and a length-$|P_{|S}|$ fragment of $T_{|S}$.
        Every such fragment is of the form $T_{|S}\fragmentco{\Delta_{\mX}}{\Delta_{\mX}+|P_{|S}|}$ for some $\Delta_{\mX}\in \fragment{0}{|T_{|S}|-|P_{|S}|}$.
        Moreover, $\mX$ is non-crossing and every edge induced by $\mX$ corresponds to an element of $\mX$,so the matching induced by $\mX$ consists of edges between $P_{|S}\position{p}$ and $T_{|S}\position{p+\Delta_{\mX}}$ for every $p\in \fragmentco{0}{|P_{|S}|}$.
    \end{claimproof}

    Next, we explore the consequences of the assumption that $S$ encloses $T$.

    \begin{claim}\label{clm:nosingle}
        There exist alignments $\mXpref,\mXsuf\in S$ such that $\Delta_{\mXpref}=0$ and $\Delta_{\mXsuf}=|T_{|S}|-|P_{|S}|$.
        Moreover, $|T_{|S}|\le 2|P_{|S}|$.
    \end{claim}
    \begin{claimproof}
        Since $S$ encloses $T$, there are alignments $\mXpref, \mXsuf \in S$ such that $(0,0) \in \mXpref$ and $(|P|,|T|) \in \mXsuf$, that is, $\mXpref(P)$ is a prefix of $T$ whereas $\mXsuf(P)$ is a suffix of $T$.
        Every alignment $\mX\in S$ induces a matching between $P_{|S}$ and the characters of $T_{|S}$ contained in the image $\mX(P)$, so we must have $\Delta_{\mXpref}=0$ and $\Delta_{\mXsuf}=|T_{|S}|-|P_{|S}|$.
        The costs of $\mXpref(P)$ and $\mXsuf(P)$ are at most $k$, so $|\mXpref(P)|\ge |P|-k$ and $|\mXsuf(P)|\ge |P|-k$.
        Due to the assumption $|T|\le 2|P|-2k$, this means that every character of $T$ is contained in $\mXpref(P)$ or $\mXsuf(P)$.
        Consequently, the two matchings induced by these alignments jointly cover $T_{|S}$, and thus $|T_{|S}|\le 2|P_{|S}|$.
    \end{claimproof}

    Let us assign a unique label $\$_C$ to each black component $C$ of $\bG_S$ and define strings $P_{\rm{bc}}$ and $T_{\rm{bc}}$ of
    length $|P_{|S}|$ and $|T_{|S}|$, respectively, as follows:
    For $i\in \fragmentco{0}{|P_{|S}|}$, set $P_{\rm{bc}}[i] = \$_C$, where $C$ is black component containing $P_{|S}[i]$.
    Similarly, for $i\in \fragmentco{0}{|T_{|S}|}$, set $T_{\rm{bc}}[i] = \$_C$, where $C$ is the black connect component containing $T_{|S}[i]$.

    By \cref{clm:occ,clm:nosingle}, $\{0,|T_{\rm{bc}}|-|P_{\rm{bc}}|\}\subseteq \{\Delta_{\mX} : \mX\in S\} \subseteq \Occ(P_{\rm{bc}},T_{\rm{bc}})$.
    Due to $|T_{\rm{bc}}|\le 2|P_{\rm{bc}}|$, \cref{fct:periodicity} implies that
    $\gcd(\Occ(P_{\rm{bc}},T_{\rm{bc}}))$ is a period of $T_{\rm{bc}}$, and thus also $g\coloneqq
    \gcd\{\Delta_{\mX} : \mX\in S\}$ is a period of both $T_{\rm{bc}}$ and of its prefix
    $P_{\rm{bc}}$.
    Consequently, for each $c\in \fragmentco{0}{g}$, the set $C_c\coloneqq \{P_{|S}[i] : i \equiv_{g} c\} \cup \{T_{|S}[i] : i \equiv_{g} c\}$ belongs to a single connected component of $\bG_S$.
    It remains to prove $g = \bc(\bG_S)$, and we do this by demonstrating that no edge of $\bG_S$ leaves $C_c$.
    Note that \cref{clm:occ} further implies that every edge incident to $P_{|S}$ or $T_{|S}$ connects $P_{|S}\position{p}$ with $T_{|S}\position{t}$ such that $t=p+\Delta_{\mX}$ for some $\mX\in S$.
    In particular, $t \equiv_g p$, so $P_{|S}\position{p}\in C_c$ holds if and only if $T_{|S}\position{t}\in C_c$.

    Lastly, since $P_{\rm{bc}}$ is a suffix of $T_{\rm{bc}}$, the last characters of $P_{|S}$ and $T_{|S}$ are in the same black component.
\end{proof}

For the sake of convenience, we index black connected components with integers in $\fragmentco{0}{\bc(\bG_S)}$ and introduce some notation:

\begin{definition}\label{def:pitau}
    For $c \in \fragmentco{0}{\bc(\bG_S)}$, we define the \emph{$c$-th black connected
    component} as the black connected component containing $P_{|S}[c]$ and set
    \[
        m_c \coloneqq  \left\lceil\frac{|P_{|S}| - c}{\bc(\bG_{S})} \right\rceil \quad\text{and}\quad n_c \coloneqq  \left\lceil \frac{|T_{|S}| - c}{\bc(\bG_{S})} \right\rceil,
    \]
    as the number of characters in $P$ and $T$, respectively, belonging to the $c$-th black connected component.
    Furthermore:
    \begin{itemize}
        \item We define $c_{\last} \in \fragmentco{0}{\bc(\bG_{S})}$ as the black component containing the last characters of $P_{|S}$ and $T_{|S}$.
        \item For $c \in \fragmentco{0}{\bc(\bG_{S})}$ and $j \in \fragmentco{0}{m_c}$ we define $\pi_j^c \in \fragmentco{0}{|P|}$ as the position of $P_{|S}\position{c + j \cdot \bc(\bG_{S})}$ in $P$.
        \item For $c \in \fragmentco{0}{\bc(\bG_{S})}$ and $i \in \fragmentco{0}{n_c}$, we define $\tau_i^c \in \fragmentco{0}{|T|}$ as the position of $T_{|S}\position{c + i \cdot \bc(\bG_{S})}$ in $T$.
        \item For sake of convenience, we denote $m_{\bc(\bG_S)}=m_0-1$ and $\pi_j^{\bc(\bG_S)}=\pi_{j+1}^0$  for $j\in \fragmentco{0}{m_{\bc(\bG_S)}}$.
        Similarly, we denote $n_{\bc(\bG_S)}=n_0-1$ and $\tau_i^{\bc(\bG_S)}=\tau_{i+1}^0$ for $i\in \fragmentco{0}{n_{\bc(\bG_S)}}$.\qedhere
    \end{itemize}
\end{definition}

\begin{remark} \label{rmk:perstr}
    We conclude this (sub)section with some observations.
    \begin{enumerate}
        \item Consider $c \in \fragmentco{0}{\bc(\bG_{S})}$.
        Notice that the characters $\{T\position{\tau_i^c}\}_{i=0}^{n_{c}-1}$ and $\{P\position{\pi_j^c}\}_{j=0}^{m_{c}-1}$ are precisely those within the $c$-th black connected component.
        Consequently, they are all identical.
        \item For all $c \in \fragment{0}{\bc(\bG_{S})}$, we have $n_c \in \{n_0, n_0 - 1\}$ and $m_c \in \{m_0, m_0 - 1\}$.
        \item The indices ${\{\tau_i^c\}}_{i, c}$ induce a partition of $T$ into (possibly empty) substrings.
        This partition consists of: the initial fragment $T\fragmentco{0}{\tau_0^0}$, all the intermediate fragments $T\fragmentco{\tau_{i}^{c}}{\tau_{i}^{c+1}}$ for $c \in \fragmentco{0}{\bc(\bG_S)}$ and $i \in \fragmentco{0}{n_{c+1}}$, and the final fragment $T\fragmentco{\tau_{n_0-1}^{c_{\last}}}{|T|}$.
        Similarly, the indices ${\{\pi_j^c\}}_{j, c}$ induce a partition of $P$.
        \item Recall that $\bG_S$ contains an edge between $P\position{\pi_j^c}$ and $T\position{\tau_i^c}$ only if there is an alignment $\mX\in S$ such that $(\pi_j^c,\tau_i^c)\in \mX$.
        If $j< m_{c+1}$ and $i<n_{c+1}$, we also have $(\pi_j^{c+1},\tau_i^{c+1})\in \mX$.
        For such $(\pi_j^c,\tau_i^c)$, we can associate with it at least one value $\edal{\mX}(P\fragmentco{\pi_j^c}{\pi_j^{c+1}}, T\fragmentco{\tau_i^c}{\tau_i^{c+1}})$, noting that there might be multiple values associated with $(\pi_j^c,\tau_i^c)$.
        \label{rmk:perstr:4}
    \end{enumerate}
\end{remark}

\subsection{Covering Weight Functions}

\begin{definition}
    \label{def:funcover}
    We say that a \emph{weight function} $\w_S : \fragmentco{0}{\bc(\bG_S)}\to
    \mathbb{Z}_{\ge 0}$ \emph{covers $S$} if all of the following conditions hold:
    \begin{enumerate}
        \item for all $c\in \fragmentco{0}{\bc(\bG_S)}$, $j\in \fragmentco{0}{m_{c+1}}$, and $i\in \fragmentco{0}{n_{c+1}}$,
        we have
        \begin{equation}
            \label{eq:funcover}
            \w_S(c) \ge \ed(P\fragmentco{\pi_j^c}{\pi_j^{c+1}},T\fragmentco{\tau_i^c}{\tau_i^{c+1}});
        \end{equation}
        \label{it:funcover:1}
        \item $\w_S(\bc(\bG_S)-1) \geq \ed(P\fragmentco{0}{\pi_0^0}, T\fragmentco{0}{\tau_0^0})$;
        \label{it:funcover:2}
        \item $\w_S(\bc(\bG_S)-1)\ge \ed(P\fragmentco{0}{\pi_0^0},T\fragmentco{t}{\tau_i^0})$ holds for every $i\in \fragmentco{1}{n_{0}}$ and some $t\in \fragment{\tau_{i-1}^{\bc(\bG_S)-1}}{\tau_i^0}$;
        \label{it:funcover:3}
        \item $\w_S(c_{\last}) \geq \ed(P\fragmentco{\pi_{m_{0}-1}^{c_{\last}}}{|P|}, T\fragmentco{\tau_{n_{0}-1}^{c_{\last}}}{|T|})$; and
        \label{it:funcover:4}
        \item $\w_S(c_{\last}) \ge \ed(P\fragmentco{\pi_{m_{0}-1}^{c_{\last}}}{|P|}, T\fragmentco{\tau_{i}^{c_{\last}}}{t'})$ holds for every $i\in \fragmentco{0}{n_{0}-1}$ and some $t'\in \fragment{\tau_{i}^{c_{\last}}}{\tau_{i}^{c_{\last}+1}}$.
        \label{it:funcover:5}
    \end{enumerate}
    We define $\sum_{c=0}^{\bc(\bG_S)-1} \w_S(c)$ to be the \emph{total weight} of $\w_S$.
\end{definition}

To lay the groundwork for the construction of such a function, we first introduce some helpful notation.

\begin{definition}
    Given a position $x \in \fragmentco{0}{|P|}$, we define a function $\Pi_P(x)$ that maps $x$ to the largest position in $P$ that is less than or equal to $x$ and belongs to a black connected component. If no such character exists in $P$ preceding $x$, then $\Pi_P(x) = -1$. We define a symmetric function $\Pi_T(y)$ for all positions $y \in \fragmentco{0}{|T|}$.
\end{definition}

\begin{lemma} \label{lem:exfuncover}
    The following construction yields a weight function $\w_S : \fragmentco{0}{\bc(\bG_S)}\to
    \mathbb{Z}_{\ge 0}$ that covers $S$ of total weight at most $\mathcal{O}(k|S|)$.
    \begin{enumerate}[(1)]
        \item Initialize the weight function $\w_S(c) = 0$ for all $c\in \fragmentco{0}{\bc(\bG_S)}$.
        \item Iterate through each alignment $\mX \in S$ one by one. For each $\mX$, process all edits performed by $\mX$:
        \begin{enumerate}[(i)]
            \item If $\mX$ deletes $P\position{x}$, or substitutes $P\position{x}$ with $T\position{y}$, consider $\Pi_P(x)$.
    If $\Pi_P(x) = -1$, then increase $\w_S(\bc(\bG_S)-1)$ by one. Otherwise, increase $\w_S(c)$ by one, where $c$ is such that $\Pi_P(x) = \pi_j^c$.
            \label{lem:exfuncover:2a}
            \item  If $\mX$ inserts $T\position{y}$, consider $\Pi_T(y)$.
    If $\Pi_T(y) = -1$, then increase $\w_S(\bc(\bG_S)-1)$ by one. Otherwise, increase $\w_S(c)$ by one, where $c$ is such that $\Pi_T(y) = \tau_i^c$.
            \label{lem:exfuncover:2b}
        \end{enumerate}
    \end{enumerate}
\end{lemma}

\begin{proof}
    \newcommand{\bbG}{\bar{\bG}}
    \newcommand{\bbc}{\bar{\rm{bc}}}
    The total weight of $\w_S$ is clearly $\Oh(k|S|)$ because each edit in an alignment of $\mX$ increases the total weight of $\w_S$ by at most $\Oh(1)$. To demonstrate that $\w_S$ covers $S$, we first construct another function $\w_S' : \fragmentco{0}{\bc(\bG_S)} \to \mathbb{Z}_{\ge 0}$. We  show that $\w_S'$ covers $S$ and satisfies $\w_S'(c) \leq \w_S(c)$ for all $c \in \fragmentco{0}{\bc(\bG_S)}$. Since the properties defined in \cref{def:funcover} are all componentwise monotone, it follows that $\w_S$ also covers $S$.

    First, we introduce some useful notation. Let $\bbG_S$ be the subgraph of $\bG_S$ induced by all vertices $P\position{\pi_j^c}$ and $T\position{\tau_i^c}$ for $c\in \fragmentco{0}{\bc(\bG_S)}$, $j\in \fragmentco{0}{m_{c+1}}$, and $i\in \fragmentco{0}{n_{c+1}}$.
    In other words, we remove all red components as well as vertices $P\position{\pi_{m_0-1}^{c_{\last}}}$ and $T\position{\tau_{n_0-1}^{c_{\last}}}$ from the $c_{\last}$-th black component.
    By doing so, the edges of $\bbG_S$  be exactly those for which the association observed in \cref{rmk:perstr}\eqref{rmk:perstr:4} is well defined.
    We prove that $\bbG_S$ preserves the same connectedness structure of black components as $\bG_S$.

    \begin{claim}\label{claim:funcover:0}
        For each $c\in \fragmentco{0}{\bc(\bG_S)}$, the graph $\bbG_S$ contains a connected component $\bbG_S^c$ induced by vertices $P\position{\pi_j^c}$ and $T\position{\tau_i^c}$ with $j\in \fragmentco{0}{m_{c+1}}$ and $i\in \fragmentco{0}{n_{c+1}}$.
    \end{claim}
    \begin{proof}
        Since $\bbG_S$ is a subgraph of $\bG_S$, the characterization of connected components of $\bG_S$ in \cref{lem:periodicity} implies that $\bbG_S$ does not contain any edge leaving $\bbG_S^c$.
        Thus, it suffices to prove that $\bbG_S^c$ is connected.

        For this, let us define strings $P_{\bbc}$ and $T_{\bbc}$ analogously to $P_{\rm{bc}}$ and $T_{\rm{bc}}$ in the proof of \cref{lem:periodicity}, but with respect to $\bbG_S$ rather than $\bG_S$.
        Specifically, $|P_{\bbc}|=|P_{|S}|-1$ and $|T_{\bbc}|=|T_{|S}|-1$, and the characters $P_{\bbc}[j]$ and $T_{\bbc}[i]$ are equal if and only if the corresponding characters $P_{|S}[j]$ and $T_{|S}[i]$ belong to the same connected component of $\bbG_S$.

        By \cref{clm:occ,clm:nosingle}, we have $\{0,|T_{\bbc}|-|P_{\bbc}|\}\subseteq \{\Delta_{\mX} : \mX\in S\} \subseteq \Occ(P_{\bbc},T_{\bbc})$ and $|T_{\bbc}|=|T_{|S}|-1 \le 2|P_{|S}|-1=2|P_{\bbc}|+1$.
        If $|P_{\bbc}|=0$, then $|T_{\bbc}|\le 1$, and $\bbG_S$ is either empty or consists of a single vertex constituting the connected component $\bbG_S^0$.
        Otherwise, \cref{fct:periodicity} implies that $\gcd(\Occ(P_{\bbc},T_{\bbc}))$ is a period of $T_{\rm{bc}}$, and thus also $\bc(\bG_S)=\gcd\{\Delta_{\mX} : \mX\in S\}$ is a period of both $T_{\bbc}$ and of its prefix $P_{\bbc}$.
        Consequently, $\bbG_S^c$ is a connected subgraph of $\bbG_S$.
    \end{proof}

    Let us also assign weights to edges of $\bbG_S$.
    If $\bbG_S$ contains an edge between $P\position{\pi_j^c}$ and $T\position{\tau_i^c}$,
    we define its weight to be the smallest value $\edal{\mX}(P\fragmentco{\pi_j^c}{\pi_j^{c+1}}, T\fragmentco{\tau_i^c}{\tau_i^{c+1}})$
    such that $\mX \in S$ and $(\pi_j^c, \tau_i^c) \in \mX$.
    By $\w(\bbG_S^c)$, we denote the total weight of edges in $\bbG_S^c$.

    \begin{claim}
        \label{claim:funcover:1}
        If $\w_S'(c) \geq \w(\bbG_S^c)$ for all $c \in \fragmentco{0}{\bc(\bG_S)}$, then property \eqref{it:funcover:1} holds.
    \end{claim}
    \begin{claimproof}
        By \cref{claim:funcover:0}, $\bbG_S^c$ is connected for all $c \in \fragmentco{0}{\bc(\bG_S)}$.
        Thus, for any
        $c \in \fragmentco{0}{\bc(\bG_S)}$, $j\in \fragmentco{0}{m_{c+1}}$, $i \in \fragmentco{0}{n_{c+1}}$,
        we can construct an alignment of cost at most $\w(\bbG_S^c)$
        aligning $P\fragmentco{\pi_j^c}{\pi_j^{c+1}}$ onto $T\fragmentco{\tau_i^{c}}{\tau_i^{c+1}}$.
        To obtain such alignment, we compose
        the alignments inducing the weights on the edges
        contained in the path in $\bbG_S^c$ from $P\position{\pi_j^c}$ to $T\position{\tau_i^{c}}$.
    \end{claimproof}

    Now, we formally define $\w_S'$.
    We set $\w_S'(c) = \w(\bbG_S^c)$ for all $c \in \fragmentco{0}{\bc(\bG_S)} \setminus \{c_{\last}, \bc(\bG_S)-1\}$.
    Furthermore, let $\alpha, \alpha' \in \mathbb{Z}_{\geq 0}$ be determined later.
    If $\bc(\bG_S)-1 \neq c_{\last}$, then we set
    $\w_S'(\bc(\bG_S)-1) = \w(\bbG_S^{\bc(\bG_S)-1}) + \alpha$ and $\w_S'(c_{\last}) = \w(\bbG_S^{c_{\last}}) + \alpha'$.
    Otherwise, if $\bc(\bG_S)-1 = c_{\last}$,
    set $\w_S'(\bc(\bG_S)-1) = \w(\bbG_S^{\bc(\bG_S)-1}) + \alpha + \alpha'$.
    From $\alpha, \alpha' \geq 0$ and \cref{claim:funcover:1},
    follows that property~\eqref{it:funcover:1} holds.

    We  choose $\alpha, \alpha'$ large enough
    to ensure that also properties~\eqref{it:funcover:2},~\eqref{it:funcover:3},~\eqref{it:funcover:4}, and~\eqref{it:funcover:5} hold.
    For that purpose, consider alignments $\mXpref, \mXsuf \in S$ such that $(0,0) \in \mXpref$ and $(|P|,|T|) \in \mXsuf$.
    First, let us define $\alpha$ as $\alpha \coloneqq  \edal{\mXpref}(P\fragmentco{0}{\pi_0^0}, T\fragmentco{0}{\tau_0^0}) + \edal{\mXsuf}(P\fragmentco{0}{\pi_0^0}, T\fragmentco{y}{\tau_{\hi}^{0}})$,
    assuming $\mXsuf$ aligns $P\fragmentco{0}{\pi_0^0}$ onto $T\fragmentco{y}{\tau_{\hi}^{0}}$ for some $\hi \in \fragmentco{0}{n_0}$ and $y \in \fragment{\tau_{\hi-1}^{\bc(\bG_S)-1}}{\tau_{\hi}^0}$.

    \begin{claim}
        \label{claim:funcover:2}
        Properties~\eqref{it:funcover:2},~\eqref{it:funcover:3} hold.
    \end{claim}
    \begin{claimproof}
        Clearly, property~\eqref{it:funcover:2} holds because
        \[
            \w_S'(\bc(\bG_S)-1) \geq \alpha \geq \edal{\mXpref}(P\fragmentco{0}{\pi_0^0}, T\fragmentco{0}{\tau_0^0}) \geq \ed(P\fragmentco{0}{\pi_0^0}, T\fragmentco{0}{\tau_0^0}).
        \]
        Regarding property~\eqref{it:funcover:3},
        consider the case where $(\pi_0^0, \tau_0^0) \in \mXsuf$, that is, $\hi = 0$.
        This implies $n_0=1$, as all other alignments in $S$ also align $P[\pi_0^0]$ with $T[\tau_0^0]$.
        In this case, there is nothing to prove, as the quantifier $i$ in property~\eqref{it:funcover:3} ranges over an empty set.
        It remains to cover the case when $\hi \in \fragmentco{1}{n_0}$. Choose an arbitrary $i \in \fragmentco{1}{n_0}$.
        Since $\tau_{\hi-1}^{\bc(\bG_S)-1}$ and $\tau_{i-1}^{\bc(\bG_S)-1}$ belong to the same connected component $\bbG_S^{\bc(\bG_S)-1}$, we can employ a similar argument as in \cref{claim:funcover:1}.
        Thus, there exists an alignment $\mY$ with a cost of at most $\w(\bbG_S^{\bc(\bG_S)-1})$, aligning $T\fragmentco{\tau_{\hi-1}^{\bc(\bG_S)-1}}{\tau_{\hi}^{0}}$ to $T\fragmentco{\tau_{i-1}^{\bc(\bG_S)-1}}{\tau_{i}^{0}}$.
        Assume $\mY(T\fragmentco{y}{\tau_{\hi}^{0}}) = T\fragmentco{t}{\tau_{i}^{0}}$ for some $t\in \fragment{\tau_{i-1}^{\bc(\bG_S)-1}}{\tau_i^0}$.
        By composing $\mXsuf : P\fragmentco{0}{\pi_0^0} \onto T\fragmentco{y}{\tau_{\hi}^{0}}$ and $\mY : T\fragmentco{y}{\tau_{\hi}^{0}} \onto T\fragmentco{t}{\tau_{i}^{0}}$, we obtain an alignment $P\fragmentco{0}{\pi_0^0} \onto T\fragmentco{t}{\tau_{i}^{0}}$ of a cost of at most
        \[
            \edal{\mXsuf}(P\fragmentco{0}{\pi_0^0}, T\fragmentco{t}{\tau_{\hi}^{0}}) + \w(\bbG_S^{\bc(\bG_S)-1}) \leq \alpha + \w(\bbG_S^{\bc(\bG_S)-1}) = \w_S'(\bc(\bG_S)-1).\claimqedhere
        \]
    \end{claimproof}

    Symmetrically, we set
    $\alpha' \coloneqq  \edal{\mXsuf}(P\fragmentco{\pi_{m_{0}-1}^{c_{\last}}}{|P|}, T\fragmentco{\tau_{n_{0}-1}^{c_{\last}}}{|T|}) + \edal{\mXpref}(P\fragmentco{\pi_{m_{0}-1}^{c_{\last}}}{|P|}, T\fragmentco{\tau_{\hi'}^{c_{\last}}}{y'})$,
    assuming $\mXpref$ aligns $P\fragmentco{\pi_{m_{0}-1}^{c_{\last}}}{|P|}$ onto $T\fragmentco{\tau_{\hi'}^{c_{\last}}}{y'}$
    for some $\hi'\in \fragmentco{0}{n_{0}}$ and $y'\in \fragment{\tau_{\hi'}^{c_{\last}}}{\tau_{\hi'+1}^{c_{\last}}}$.
    Using a symmetric argument to \cref{claim:funcover:2}, we obtain that also
    properties~\eqref{it:funcover:4} and~\eqref{it:funcover:5} hold.

    \begin{claim}\label{claim:funcover:3}
        $\w_S'(c) \leq \w_S(c)$ for all $c \in \fragmentco{0}{\bc(\bG_S)}$.
    \end{claim}
    \begin{claimproof}
        The total cost of $\w_S'$ is the sum of the cost of partial alignments from $S$.
         More specifically, the total cost of $\w_S'$ is the sum of the following terms:
    \begin{enumerate}[1.]
        \item $\sum_{c=0}^{\bc(\bG_S)-1} \w(\bbG_S^c)$, where each of the $\w(\bbG_S^c)$
        is the sum of some $\edal{\mX}(P\fragmentco{\pi_j^c}{\pi_j^{c+1}}, T\fragmentco{\tau_i^c}{\tau_i^{c+1}})$
        for some distinct triplets $(\mX, i, j)$ such that $\mX \in S$, $j\in \fragmentco{0}{m_{c+1}}$ and $i\in \fragmentco{0}{n_{c+1}}$;
        \label{claim:funcover:3:1}
        \item $\edal{\mXpref}(P\fragmentco{0}{\pi_0^0}, T\fragmentco{0}{\tau_0^0})$;
        \label{claim:funcover:3:2}
        \item $\edal{\mXsuf}(P\fragmentco{0}{\pi_0^0}, T\fragmentco{y}{\tau_{\hi}^{0}})$;
        \label{claim:funcover:3:3}
        \item $\edal{\mXsuf}(P\fragmentco{\pi_{m_{0}-1}^{c_{\last}}}{|P|}, T\fragmentco{\tau_{n_{0}-1}^{c_{\last}}}{|T|})$; and
        \label{claim:funcover:3:4}
        \item $\edal{\mXpref}(P\fragmentco{\pi_{m_{0}-1}^{c_{\last}}}{|P|}, T\fragmentco{\tau_{\hi'}^{c_{\last}}}{y'})$.
        \label{claim:funcover:3:5}
    \end{enumerate}
    Note that all alignments that are (possibly) involved in the sum are disjoint and are partial alignments that belong to $S$.
    To prove the claim, it suffices to show that
    each edit contained in one of these partial alignments corresponds to a unique increase of $\w_S$ by one in step \eqref{lem:exfuncover:2a} or \eqref{lem:exfuncover:2b}.
    \begin{description}
        \item[\eqref{claim:funcover:3:1}:] Each edit involved in $\edal{\mX}(P\fragmentco{\pi_j^c}{\pi_j^{c+1}}, T\fragmentco{\tau_i^c}{\tau_i^{c+1}})$ for some distinct triplet $(\mX, i, j)$ corresponds to the increase by one of $\w_S(c)$ when that edit in $\mX$ is processed.
        \item[\eqref{claim:funcover:3:2}, \eqref{claim:funcover:3:3}:] Each edit involved in $\edal{\mXpref}(P\fragmentco{0}{\pi_0^0}, T\fragmentco{0}{\tau_0^0})$ or  $\edal{\mXsuf}(P\fragmentco{0}{\pi_0^0}, T\fragmentco{y}{\tau_{\hi}^{0}})$ corresponds
        to the increase by one of $\w_S(\bc(\bG_S)-1)$ when that edit in $\mXsuf,\mXpref$ is processed, respectively.
        \item[\eqref{claim:funcover:3:4}, \eqref{claim:funcover:3:5}:] Each edit involved in $\edal{\mXsuf}(P\fragmentco{\pi_{m_{0}-1}^{c_{\last}}}{|P|}, T\fragmentco{\tau_{n_{0}-1}^{c_{\last}}}{|T|})$ or $\edal{\mXpref}(P\fragmentco{\pi_{m_{0}-1}^{c_{\last}}}{|P|}, T\fragmentco{\tau_{\hi'}^{c_{\last}}}{y'})$ corresponds
        to the increase of $\w_S(c_{\last})$ when that edit in $\mXsuf,\mXpref$ is processed, respectively. \claimqedhere
    \end{description}
    \end{claimproof}
    This concludes the proof of \cref{lem:exfuncover}.
\end{proof}

In the remaining part of \cref{sec:combres}, we let $\w_S$ denote a \emph{weight
function} that covers $S$ of total weight at most $w$.
We do not restrict ourselves to the case $w=\mathcal{O}(k|S|)$ to make the following
results more general.

\subsection{Period Covers}

In the following subsection, we formalize the characters contained in the black components
that we need to learn.
We specify this through a subset of indices $C_S \subseteq \fragmentco{0}{\bc(\bG_S)}$ of black components,
which we  refer to as \emph{period cover}.

\begin{definition}\label{def:periodcover_alt}
    A set $C_S \subseteq \fragmentco{0}{\bc(\bG_S)}$ is a \emph{period cover with respect
    to $\w_S$} if $\fragment{a}{b}\subseteq C_S$ holds for every interval
    $\fragment{a}{b}\subseteq \fragmentco{0}{\bc(\bG_S)}$
    such that at least one of the following holds:
    \begin{enumerate}
        \item $\selfed(T\fragment{\tau_0^a}{\tau_0^{b}}) \le 6w+11k$ and $a=0$;
            \label{it:periodcover_alt:1}
        \item $\selfed(T\fragment{\tau_0^a}{\tau_0^{b}}) \le 6w+11k$ and $b=\bc(\bG_S)-1$;
            \label{it:periodcover_alt:2}
        \item $\selfed(T\fragment{\tau_0^a}{\tau_0^{b}}) \le 6w+11k$ and $b=c_{\last}$;
            \label{it:periodcover_alt:3}
        \item $\selfed(T\fragment{\tau_0^a}{\tau_0^{b}}) \le 6w+11k$ and $a=c_{\last}+1$; or
            \label{it:periodcover_alt:4}
        \item $\selfed(T\fragment{\tau_0^a}{\tau_0^{b}}) \le 6 \sum_{c=a-1}^{b} \w_S(c)$.
            \label{it:periodcover_alt:5}  \qedhere
    \end{enumerate}
\end{definition}

In the remainder of this (sub)section,
we illustrate that by constructing a period cover $C_S$ in two different ways,
we can effectively (w.r.t. parameters $w$, $k$ and $|S|$) encode the information $\{(c, T[\tau_0^c]):c\in C_S\}$.

First, we establish that this is possible if we construct $C_S$
by straightforwardly verifying whether all intervals $\fragment{a}{b}$
satisfy any of the conditions \eqref{it:periodcover_alt:1},\eqref{it:periodcover_alt:2},\eqref{it:periodcover_alt:3},\eqref{it:periodcover_alt:4},\eqref{it:periodcover_alt:5}
of \cref{def:periodcover_alt}.

\begin{lemma} \label{prp:encode_simple_funcover}
    Let ${\{\fragment{a_i}{b_i}\}}_{i=0}^{\ell-1}$ denote the set that contains all intervals that satisfy any
    of the conditions \eqref{it:periodcover_alt:1},\eqref{it:periodcover_alt:2},\eqref{it:periodcover_alt:3},\eqref{it:periodcover_alt:4},\eqref{it:periodcover_alt:5}
    of \cref{def:periodcover_alt}.
    Consider the period cover $C_S = \bigcup_{i=0}^{\ell-1} \fragment{a_i}{b_i}$.
    Then, we can encode $\{(c, T[\tau_0^c]):c\in C_S\}$ using $\mathcal{O}(w + k|S|)$ space.
\end{lemma}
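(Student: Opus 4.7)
For each maximal interval $\fragment{A_j}{B_j}$ of $C_S$, I will transmit a small subset of the defining intervals $\fragment{a_i}{b_i}$ whose union covers $\fragment{A_j}{B_j}$, together with the LZ77-like factorization of each corresponding fragment $T\fragment{\tau_0^{a_i}}{\tau_0^{b_i}}$ of $T$.
Bob reconstructs the graph $\bG_S$ from the edit information of the alignments in $S$ (which is assumed part of the ambient encoding) and hence already knows every position $\tau_0^c$.
Decoding the transmitted fragments then lets him read off $(c, T[\tau_0^c])$ for every $c$ covered, and since the chosen intervals cover each maximal interval of $C_S$, this yields the entire target set.

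For each maximal interval $\fragment{A_j}{B_j}$, I will construct a \emph{minimal chain cover}: a sequence $\fragment{a_1}{b_1}, \ldots, \fragment{a_r}{b_r}$ of intervals from the collection defining $C_S$ satisfying $a_1 = A_j$, $b_r = B_j$, $a_{s+1} \le b_s + 1$ for every $s \in \fragmentco{1}{r}$, and such that removing any single chain interval would leave a position of $\fragment{A_j}{B_j}$ uncovered.
A classical sweep argument produces such a cover, and a short check yields the crucial property that every index $c$ lies in at most two chain intervals: if three consecutive chain intervals all contained $c$, the middle one would be redundant, contradicting minimality.
For each chain interval $\fragment{a_s}{b_s}$, Alice sends the pair $(a_s, b_s)$ together with $\LZ(T\fragment{\tau_0^{a_s}}{\tau_0^{b_s}})$; by \cref{prp:lz_selfed}, this takes $O(\selfed(T\fragment{\tau_0^{a_s}}{\tau_0^{b_s}}))$ words.

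The main obstacle is bounding the total size by $O(w + k|S|)$ words.
For chain intervals justified by condition~(5) of \cref{def:periodcover_alt}, we have $\selfed(T\fragment{\tau_0^{a_s}}{\tau_0^{b_s}}) \le 6 \sum_{c=a_s-1}^{b_s} \w_S(c)$; by the two-cover property each index $c$ appears in at most three such sums (at most two contributions with $c \in \fragment{a_s}{b_s}$ plus at most one with $c = a_s - 1$, since the values $a_s$ are pairwise distinct across chain intervals), giving $O(\sum_c \w_S(c)) = O(w)$ words overall.
For chain intervals justified by conditions~(1)--(4), at most one per anchor type survives minimality because all such intervals are nested at their anchored endpoint, so there are only $O(1)$ of them in total, each contributing $O(w + k)$ words to the self-edit-distance bound.
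Finally, since every condition-(5) chain interval has weight sum at least $1$ (its fragment is non-empty and so has positive self-edit distance), the total number of chain intervals is $O(w + 1)$, and transmitting the endpoint pairs requires only $O(w)$ additional words.
Summing everything yields $O(w + k) \subseteq O(w + k|S|)$ words, as required.
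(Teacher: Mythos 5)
Your proposal is correct and follows essentially the same route as the paper's proof: you select a sub-cover of the defining intervals in which every component index is covered $O(1)$ times (your minimal chain cover plays the role of the paper's explicit two-rule greedy selection), handle the anchored conditions \eqref{it:periodcover_alt:1}--\eqref{it:periodcover_alt:4} by keeping one interval per anchor, bound the transmitted data via $|\LZ(\cdot)|\le 2\selfed(\cdot)$ and a double-counting argument against $\w_S$, and charge the recovery of the positions $\tau_0^c$ to the $\Oh(k|S|)$ edit information. The only cosmetic differences are your slightly different constant in the multiplicity bound and the explicit count of chain intervals for transmitting endpoints, neither of which changes the argument.
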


\begin{proof}
    First, we briefly argue that it is possible to encode $\{(c, \tau_0^c):c\in C_S\}$
    using $\mathcal{O}(k|S|)$ space.
    As already argued before, by storing for every $\mX \in S$
    the corresponding edit information,
    we can totally reconstruct $\bG_{S}$.
    This allows us to check whether a character of $T$ is contained in a black connected component.
    Moreover, given it is not contained in a red component,
    we can infer in which black component it is contained.
    Since the sets of edit information can be stored using $\mathcal{O}(k)$ space
    (through the compressed edit information)
    for every $\mX \in S$,
    it follows that we can encode the corresponding information using $\mathcal{O}(k|S|)$ space.

    Therefore, to prove the lemma, it suffices to show that it is possible to select a subset of intervals
    indexed by $I \subseteq \fragmentco{0}{\ell}$ such that $\bigcup_{i\in I} \fragment{a_i}{b_i} = C_S$
    and that ${\{T\fragment{\tau_0^{a_i}}{\tau_0^{b_i}}\}}_{i\in I}$ can be efficiently encoded.

    Among all intervals that satisfy condition~\eqref{it:periodcover_alt:1} of \cref{def:periodcover_alt},
    that is, among all intervals $\fragment{a_i}{b_i}$ such that $a_i=0$, we add to $I$ the
    index $j$ that maximizes $b_j$.
    Clearly, for all other intervals $\fragment{a_i}{b_i}$ such that $a_i=0$,
    we have $\fragment{a_i}{b_i} \subseteq \fragment{a_j}{b_j}$, and we can safely leave them out from $I$.
    Moreover, from $\selfed(T\fragment{\tau_0^{a_j}}{\tau_0^{b_j}}) \le 6w+11k$
    and from \cref{prp:lz_selfed} follows that
    $|\LZ(T\fragment{\tau_0^{a_j}}{\tau_0^{b_j}})| \leq \mathcal{O}(k + w)$.
    By using a similar approach for the intervals that satisfy any of the
    conditions~\eqref{it:periodcover_alt:2},~\eqref{it:periodcover_alt:3},~\eqref{it:periodcover_alt:4},
    henceforth, we may assume that we have taken care of all the intervals
    that satisfy any of the
    conditions~\eqref{it:periodcover_alt:1},~\eqref{it:periodcover_alt:2},~\eqref{it:periodcover_alt:3},~\eqref{it:periodcover_alt:4},
    and that ${\{\fragment{a_i}{b_i}\}}_{i=0}^{\ell-1}$ only contains
    intervals that satisfy~\eqref{it:periodcover_alt:5}.

    Now, we proceed by iteratively adding indices to $I$.
    The first index we add to $I$ is $\argmin_i \{a_i\}$.
    Next, we continue to add indices to $I$ based on the last index we added, $j$, as follows:
    \begin{itemize}
        \item if $\{i : a_j < a_i \leq b_j < b_i\} \neq \emptyset$, then we add the index $\argmax_i \{b_i : a_j < a_i \leq b_j < b_i\}$ to $I$;
        \item otherwise, we add the index $\argmin_i \{a_i : b_j < a_i\}$ to $I$.
    \end{itemize}
    We terminate if we can not add any index to $I$ anymore, that is, if $b_j = \max_i \{b_i\}$.
    Note, this iterative selection process is guaranteed to end, because
    in every iteration for the $i$ that we add to $I$,
    we have $a_j < a_i$.

    We want to argue ${\{T\fragment{\tau_0^{a_i}}{\tau_0^{b_i}}\}}_{i\in I} = C_S$.
    To do this, consider an interval $\fragment{x}{y}$ in the interval representation of $C_S$, that is,
    the representation of $C_S$ using the smallest number of non-overlapping intervals.
    We observe that an index $i$ with $a_i = x$ is added either initially or through the first rule.
    Subsequently, the selection process applies a combination of the two rules until we include an index $i$ with $b_i = y$ into $I$.

    Finally, we want to show that
    $\sum_{i \in I} \selfed(T\fragment{\tau_0^{a_i}}{\tau_0^{b_i}}) \leq 24w$.
    This is sufficient to prove that ${\{T\fragment{\tau_0^{a_i}}{\tau_0^{b_i}}\}}_{i\in I}$ can be encoded efficiently,
    because from \cref{prp:lz_selfed} follows
    \[
        \sum_{i \in I} |\LZ(T\fragment{\tau_0^{a_i}}{\tau_0^{b_i}})| \leq 2 \sum_{i \in I} \selfed(T\fragment{\tau_0^{a_i}}{\tau_0^{b_i}}) = \mathcal{O}(w).
    \]
    Consider three indices $i,i',i''$ that are added one after the other to $I$.
    If for the selection of $i'$ or $i''$ we need to apply the second rule, then clearly $b_i < a_{i''}$.
    Otherwise, if always apply the first rule, then from $b_{i'} < b_{i''}$ follows that $b_i < a_{i''}$,
    because otherwise, we would have selected $i''$ instead of $i'$ in the $\argmin$.

    Hence, for every $c \in \fragmentco{0}{\bc(\bG_S)}$
    it holds $|\{i \in I : c \in \fragment{a_i}{b_i}\}| \leq 2$,
    from which follows $|\{i \in I : c \in \fragment{a_i-1}{b_i}\}| = |\{i \in I : c \in \fragment{a_i}{b_i}\}| + |\{i \in I : c + 1 = a_i\}| \leq 4$.
    Note, the previous property also holds when we add less than three indices to $I$.
    Using a double counting argument, we obtain
    \begin{align*}
        \sum_{i \in I} \selfed(T\fragment{\tau_0^{a_i}}{\tau_0^{b_i}})
        \leq 6 \sum_{i \in I} \sum_{c=a_i-1}^{b_i} \w_S(c)
        = 6\sum_{c=0}^{\bc(\bG_S)-1} \sum_{\substack{i \in I \\ c \in \fragment{a_i-1}{b_i}}} \w_S(c)
        \leq 24w,
    \end{align*}
    which concludes the proof.
\end{proof}

One drawback of this approach is the necessity to read all characters within the black components prior to encoding.
To circumvent this, we offer a second alternative construction of a period cover $C_S$ designed
to minimize the number of queries to the string.
Such a strategy proves particularly advantageous in the quantum setting.
In the remaining part of this paper, we abuse notation and write $\w_S(-1) = \w_S(\bc(\bG_S)-1)$.

\begin{lemma}
    \label{def:periodcover}
    The following (constructive) definition delivers a period cover $C_S$.
    Set
    \[
        C_S \coloneqq   \fragment{0}{c_{\mathrm{pref}}} \cup \fragmentco{c_{\mathrm{suff}}}{\bc(\bG_S)} \cup \fragment{c_{\mathrm{lsuff}}}{c_{\mathrm{lpref}}} \cup C^{0,\bc(\bG_S)-1}_S,
    \]
    where $c_{\mathrm{pref}}$, $c_{\mathrm{suff}}$, $c_{\mathrm{lsuff}}$, $c_{\mathrm{lpref}}$, and $C^{0,\bc(\bG_S)-1}_S$ are defined as follows.
    \begin{enumerate}[(a)]

        \item Let  $c_{\mathrm{pref}} \in \fragmentco{0}{\bc(\bG_S)}$ denote the largest index such that $|\LZ(T\fragment{\tau_0^0}{\tau^{c_{\mathrm{pref}}}_0})| \le 12w+22k$.
            \label{it:periodcover:a}
        \item Let  $c_{\mathrm{suff}} \in \fragmentco{0}{\bc(\bG_S)}$ denote the smallest index such that $|\LZ(\rev{T\fragment{\tau^{c_{\mathrm{suff}}}_0}{\tau^{\bc(\bG_S)-1}_0}})| \le 12w+22k$.
            \label{it:periodcover:b}
        \item Let  $c_{\mathrm{lsuff}} \in \fragment{0}{c_\last}$ denote the smallest index such that $|\LZ(\rev{T\fragment{\tau^{c_{\mathrm{lsuff}}}_0}{\tau_{0}^{c_{\mathrm{last}}}}})| \le 12w+22k$.
            \label{it:periodcover:c}
        \item Let  $c_{\mathrm{lpref}} \in \fragmentco{c_\last+1}{\bc(\bG_S)}$ denote the largest index such that $|\LZ(T\fragment{\tau^{c_\last+1}_0}{\tau_{0}^{c_{\mathrm{lpref}}}})| \le 12w+22k$.
            \label{it:periodcover:d}
        \item For \( \fragment{i}{j} \subseteq \fragmentco{0}{\bc(\bG_S)} \), we define \( C_S^{i,j} \subseteq \fragmentco{0}{\bc(\bG_S)} \) as follows.
        \begin{itemize}
            \item If \( \sum_{c=i-1}^{j} \w_S(c) = 0 \), then \( C_S^{i,j} \coloneqq  \emptyset \).
            \item If \( i=j \), then \( C_S^{i,j} \coloneqq  \{i\} \).
            \item Otherwise, let \( h \coloneqq  \floor{(i+j)/2} \).
            Let \( i' \in \fragment{i}{h} \) denote the smallest index such that \( |\LZ(\rev{T\fragment{\tau^{i'}_0}{\tau^h_0}})| \le 12 \sum_{c=i-1}^{j} \w_S(c) \).
            Similarly, let \( j' \in \fragment{h}{j} \) denote the largest index such that \( |\LZ(T\fragmentoc{\tau^h_0}{\tau^{j'}_0})| \le 12 \sum_{c=i-1}^{j} \w_S(c) \).
            In this case, we define recursively \( C_S^{i,j} \coloneqq  C_S^{i,h} \cup C_S^{h+1,j} \cup \fragment{i'}{j'} \).
        \end{itemize}
            \label{it:periodcover:e}
    \end{enumerate}

\end{lemma}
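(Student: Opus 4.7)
The plan is to verify each of the five conditions of \cref{def:periodcover_alt} against a corresponding component of the constructed set $C_S$: conditions \eqref{it:periodcover_alt:1}--\eqref{it:periodcover_alt:4} are captured by the four explicit intervals $\fragment{0}{c_{\mathrm{pref}}}$, $\fragmentco{c_{\mathrm{suff}}}{\bc(\bG_S)}$, and $\fragment{c_{\mathrm{lsuff}}}{c_{\mathrm{lpref}}}$ (which jointly covers \eqref{it:periodcover_alt:3} and \eqref{it:periodcover_alt:4}), while condition \eqref{it:periodcover_alt:5} is handled by the recursive set $C_S^{0,\bc(\bG_S)-1}$.

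For conditions \eqref{it:periodcover_alt:1}--\eqref{it:periodcover_alt:4}, the argument is uniform. Take \eqref{it:periodcover_alt:1} as representative: given $\fragment{0}{b}$ with $\selfed(T\fragment{\tau_0^0}{\tau_0^b}) \le 6w+11k$, \cref{prp:lz_selfed} yields $|\LZ(T\fragment{\tau_0^0}{\tau_0^b})| \le 12w+22k$. Since $c_{\mathrm{pref}}$ is the \emph{largest} index with this property, $b \le c_{\mathrm{pref}}$, and therefore $\fragment{0}{b} \subseteq \fragment{0}{c_{\mathrm{pref}}} \subseteq C_S$. Cases \eqref{it:periodcover_alt:2}--\eqref{it:periodcover_alt:4} are analogous, additionally invoking reversal-invariance of $\selfed$ (whence $|\LZ(\rev{X})| \le 2\selfed(X)$ also holds) and the corresponding extremal choice of $c_{\mathrm{suff}}$, $c_{\mathrm{lsuff}}$, or $c_{\mathrm{lpref}}$. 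Note that $\fragment{c_{\mathrm{lsuff}}}{c_{\mathrm{lpref}}}$ handles \eqref{it:periodcover_alt:3} via the left endpoint (together with $b = c_{\last}$) and \eqref{it:periodcover_alt:4} via the right endpoint (together with $a = c_{\last}+1$).

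The heart of the argument is condition \eqref{it:periodcover_alt:5}, which I prove by induction on $j-i$ in the stronger form: for every $\fragment{a}{b} \subseteq \fragment{i}{j}$ satisfying $\selfed(T\fragment{\tau_0^a}{\tau_0^b}) \le 6\sum_{c=a-1}^b \w_S(c)$, we have $\fragment{a}{b} \subseteq C_S^{i,j}$. When $\sum_{c=i-1}^j \w_S(c) = 0$, the claim is vacuous since the hypothesis forces $\selfed(T\fragment{\tau_0^a}{\tau_0^b}) = 0$, impossible for a non-empty fragment (as observed in the proof of \cref{prp:lz_selfed}, any non-empty string has self-edit distance at least $1$). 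The base case $i = j$ with positive weight is immediate from $C_S^{i,i} = \{i\}$. For the inductive step with $i < j$, let $h := \floor{(i+j)/2}$; if $\fragment{a}{b}$ lies entirely in $\fragment{i}{h}$ or $\fragment{h+1}{j}$, the induction applies directly. The main case is $a \le h < b$: here the monotonicity of $\selfed$ (\cref{prp:prop_selfed}) yields $\selfed(T\fragment{\tau_0^a}{\tau_0^h}) \le \selfed(T\fragment{\tau_0^a}{\tau_0^b}) \le 6\sum_{c=i-1}^j \w_S(c)$, so by \cref{prp:lz_selfed} and reversal-invariance, $|\LZ(\rev{T\fragment{\tau_0^a}{\tau_0^h}})| \le 12\sum_{c=i-1}^j \w_S(c)$; minimality of $i'$ gives $i' \le a$, and a symmetric argument applied to $T\fragmentoc{\tau_0^h}{\tau_0^b}$ (a substring of $T\fragment{\tau_0^a}{\tau_0^b}$) gives $j' \ge b$. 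Hence $\fragment{a}{b} \subseteq \fragment{i'}{j'} \subseteq C_S^{i,j}$.

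I expect the only real obstacle to be the bookkeeping in the recursive step, in particular ensuring that the asymmetric left/right split (with $T\fragment{\tau_0^{i'}}{\tau_0^h}$ closed on the right and $T\fragmentoc{\tau_0^h}{\tau_0^{j'}}$ open on the left) correctly covers the midpoint character $T\position{\tau_0^h}$ exactly once, and that the degenerate configurations (namely $\sum_{c=i-1}^j \w_S(c) = 0$ and the corner cases $c_\last \in \{0, \bc(\bG_S)-1\}$ that can make $c_{\mathrm{lsuff}}$ or $c_{\mathrm{lpref}}$ trivial) are vacuously consistent with the corresponding clauses of \cref{def:periodcover_alt}.
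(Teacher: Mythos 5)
Your proposal is correct and follows essentially the same route as the paper's proof: conditions \eqref{it:periodcover_alt:1}--\eqref{it:periodcover_alt:4} are handled via \cref{prp:lz_selfed} together with the extremal choice of $c_{\mathrm{pref}}$, $c_{\mathrm{suff}}$, $c_{\mathrm{lsuff}}$, $c_{\mathrm{lpref}}$, and condition \eqref{it:periodcover_alt:5} by descending the recursion until $\fragment{a}{b}$ straddles the midpoint $h$, at which point monotonicity of $\selfed$ and the LZ bounds force $\fragment{a}{b}\subseteq\fragment{i'}{j'}$. Your packaging of this descent as an induction on $j-i$ (which also absorbs the singleton case that the paper treats separately) is a cosmetic difference only.
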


\begin{proof}
    Let $\fragment{a}{b} \subseteq  \fragmentco{0}{\bc(\bG_S)}$ be an interval satisfying
    at least one of the conditions of \cref{def:periodcover_alt}.
    We need to prove that $\fragment{a}{b} \subseteq C_S$.
    Assume $\fragment{a}{b}$ satisfies  \cref{def:periodcover_alt}\eqref{it:periodcover_alt:1},
    that is, $a=0$ and $\selfed(T\fragment{\tau_0^0}{\tau_0^b}) \le 6w+11k$.
    By \cref{prp:lz_selfed} we conclude
     \[
        |\LZ(T\fragment{\tau_0^0}{\tau_0^{b}})| \le 2\selfed(T\fragment{\tau_0^0}{\tau_0^{b}}) \le 12w+22k.
    \]
    Consequently, $\fragment{0}{b}\subseteq \fragment{0}{c_{\mathrm{pref}}} \subseteq C_S$.
    Since $\selfed$ is insensitive to string reversal, similar arguments apply for \cref{def:periodcover_alt}\eqref{it:periodcover_alt:2}\eqref{it:periodcover_alt:3}\eqref{it:periodcover_alt:4}.

    Now, assume \cref{def:periodcover_alt}\eqref{it:periodcover_alt:5} holds for $\fragment{a}{b}$ of size at least two.
    Note, for such interval we have $\selfed(T\fragment{\tau_0^a}{\tau_0^{b}}) > 0$.
    We want to argue that there exist indices $i,j \in \fragmentco{0}{\bc(\bG_S)}$ fulfilling all the three following properties:
    \begin{itemize}
        \item $C^{i,j}_S$ is involved in the recursive construction of $C^{0,\bc(\bG_S)}_S$;
        \item $\fragment{a}{b}\subseteq \fragment{i}{j}$; and
        \item $h \in \fragmentco{a}{b}$ for $h \coloneqq  \floor{(i+j)/2}$.
    \end{itemize}
    For that purpose, note that, in the recursive definition, we start with an interval $\fragment{i}{j}=\fragmentco{0}{\bc(\bG_S)}$ which clearly contains $\fragment{a}{b}$.
    At each step, we recurse on $\fragment{i}{h}$ and $\fragment{h+1}{j}$, where $h=\floor{(i+j)/2}$.
    If $\fragment{a}{b}$ is contained in neither of these intervals, then $i \le a \le h < h+1 \le b \le j$
    and, in particular, $h\in \fragmentco{a}{b}$.
    Consequently, $\tau^a_0 \le \tau^h_0 < \tau^b_0$.
    By \cref{prp:lz_selfed} we conclude that
    \[
       |\LZ(\rev{T\fragment{\tau_0^a}{\tau_0^h}})| \le 2\selfed(T\fragment{\tau_0^a}{\tau_0^{b}}) \le 12\sum_{c=a-1}^b \w_S(c)
       \le 12 \sum_{c=i-1}^{j} \w_S(c).
   \]
   Similarly,
   \[
    |\LZ(T\fragmentoc{\tau_0^h}{\tau_0^b})| \le 2\selfed(T\fragment{\tau_0^a}{\tau_0^{b}})  \le 12\sum_{c=a-1}^b \w_S(c)
    \le 12 \sum_{c=i-1}^{j} \w_S(c).
    \]
   Consequently, $\fragment{a}{b}$ is contained in the interval $\fragment{i'}{j'}\subseteq \fragment{i}{j}$ that is added to $C_S^{i,j}\subseteq C_S$.

   Finally, if $\fragment{a}{a}$ satisfies \cref{def:periodcover_alt}\eqref{it:periodcover_alt:5},
   then $\sum_{c=i-1}^j \w_S(c) > 0$ holds for every interval $\fragment{i}{j}$
   such that $\fragment{a}{a} \subseteq \fragment{i}{j} \subseteq \fragmentco{0}{\bc(\bG_S)}$.
   Consequently, $a\in C_S^{a,a}\subseteq C_S^{0,\bc(\bG_S)-1}$.
\end{proof}

\begin{lemma}
    \label{prp:encode_funcover}
    Let $C_S$ be a period cover obtained as described in \cref{def:periodcover}.
    Then, the information $\{(c, T[\tau_0^c]):c\in C_S\}$ can be encoded in $\mathcal{O}(w\log n + k|S|)$ space.
\end{lemma}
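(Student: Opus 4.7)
The plan is to extend the strategy of \cref{prp:encode_simple_funcover}. First, I would encode the positions $\{(c, \tau_0^c) : c \in C_S\}$ in $\mathcal{O}(k|S|)$ space exactly as in \cref{prp:encode_simple_funcover}: transmitting the edit information $\sE_{P,T}(\mX)$ for every $\mX \in S$ enables the receiver to reconstruct $\bG_S$, identify its black components, and hence recover $\tau_0^c$ for every $c \in \fragmentco{0}{\bc(\bG_S)}$. Moreover, since the weight function $\w_S$ is determined by $S$ (via the construction in \cref{lem:exfuncover}), the receiver can also recompute $\w_S$ and therefore the entire recursive structure of \cref{def:periodcover}. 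What remains is to encode the characters $\{T[\tau_0^c] : c \in C_S\}$ in $\mathcal{O}(w \log n)$ additional space, which I would achieve by sending LZ77 factorizations of fragments of $T$ that jointly cover all required positions.

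For parts \eqref{it:periodcover:a}--\eqref{it:periodcover:d} of \cref{def:periodcover}, I would transmit
\[\LZ(T\fragment{\tau_0^0}{\tau_0^{c_{\mathrm{pref}}}}),\quad \LZ(\rev{T\fragment{\tau_0^{c_{\mathrm{suff}}}}{\tau_0^{\bc(\bG_S)-1}}}),\quad \LZ(\rev{T\fragment{\tau_0^{c_{\mathrm{lsuff}}}}{\tau_0^{c_\last}}}),\quad \LZ(T\fragment{\tau_0^{c_\last+1}}{\tau_0^{c_{\mathrm{lpref}}}});\]
by the choice of the four boundary indices, each factorization consists of $\mathcal{O}(w+k)$ phrases, contributing $\mathcal{O}(w+k)$ to the budget. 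For the recursive part $C_S^{0,\bc(\bG_S)-1}$, at every \emph{active} node $(i,j)$ of the recursion---i.e.\ with $i \neq j$ and $W(i,j) := \sum_{c = i-1}^{j} \w_S(c) > 0$---I would emit $\LZ(\rev{T\fragment{\tau_0^{i'}}{\tau_0^h}})$ and $\LZ(T\fragmentoc{\tau_0^h}{\tau_0^{j'}})$, each of size at most $12\,W(i,j)$ by the choice of $i'$ and $j'$.

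The key step is to bound the total number of phrases emitted by the recursion. I would view it as a binary tree rooted at $(0, \bc(\bG_S)-1)$ in which an active node $(i,j)$ has children $(i, h)$ and $(h+1, j)$ with $h := \floor{(i+j)/2}$; since each call halves the interval, the tree has depth $\mathcal{O}(\log \bc(\bG_S)) = \mathcal{O}(\log n)$. At any fixed depth $d$, the active intervals $\fragment{i}{j}$ are pairwise disjoint, so each position $c$ lies in at most two extended intervals $\fragment{i-1}{j}$ (once as an internal index, and possibly once as the $i'-1$ boundary of the subsequent interval). A double-counting argument then yields $\sum_{(i,j)\text{ active at depth }d} W(i,j) \leq 2w$, so only $\mathcal{O}(w)$ phrases are emitted per depth, totalling $\mathcal{O}(w \log n)$ phrases across the entire recursion.

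For recovery, given any $c \in C_S$, the receiver first locates the interval of $C_S$ that contains $c$ (one of the four boundary intervals, or some $\fragment{i'}{j'}$ emitted at an active recursive call), then uses the already recovered position $\tau_0^c$ together with the left endpoint of the covered fragment of $T$ to compute the offset of $\tau_0^c$ inside the relevant LZ77 factorization, from which $T[\tau_0^c]$ can be decoded. I expect the main obstacle to be the per-depth amortization above: a single active call can emit a comparatively large block of phrases, and only the overlap-at-most-two property---combined with the fact that inactive nodes contribute nothing---keeps the global count to $\mathcal{O}(w \log n)$, matching the claimed bound of $\mathcal{O}(w \log n + k|S|)$.
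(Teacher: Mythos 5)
Your proposal is correct and follows essentially the same route as the paper's proof: encode the positions via the edit information of $S$ in $\mathcal{O}(k|S|)$ space, send the four boundary LZ factorizations in $\mathcal{O}(w+k)$ space, and amortize the recursive part per depth using the fact that each $\w_S(c)$ is counted at most twice among the extended intervals $\fragment{i-1}{j}$ at any fixed level, giving $\mathcal{O}(w)$ phrases per level over $\mathcal{O}(\log n)$ levels. The only point handled implicitly (in both your write-up and the paper's) is the singleton leaves $C_S^{i,i}=\{i\}$ with positive weight, whose characters can be stored directly within the same per-depth weight budget.
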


\begin{proof}
    As already argued in the proof of \cref{prp:encode_simple_funcover}, we can
    encode the information $\{(c, \tau_0^c): c\in C_S\}$ using $\mathcal{O}(k|S|)$
    space. Therefore, it suffices to show that
    we can encode in $\mathcal{O}(w\log n + k)$ space
    the intervals $\fragment{\tau^a_0}{\tau^b_0}$ for all $\fragment{a}{b} \subseteq \fragmentco{0}{\bc(\bG_S)}$
    that we add to $C_S$ in any of
    \cref{def:periodcover}\eqref{it:periodcover:a}\eqref{it:periodcover:b}\eqref{it:periodcover:c}\eqref{it:periodcover:e}\eqref{it:periodcover:d}.

    First, note that $\fragment{0}{c_{\mathrm{pref}}}$ can be encoded
    in $\Oh(w+k)$ space since $|\LZ(T\fragment{\tau_0^0}{\tau^{c_{\mathrm{pref}}}_0})| \le 12w+22k$.
    Using similar arguments, we get that the four intervals that we add in any of
    \cref{def:periodcover}\eqref{it:periodcover:a}\eqref{it:periodcover:b}\eqref{it:periodcover:c}\eqref{it:periodcover:d}
    can be encoded in $\mathcal{O}(w + k)$ space.
    It remains to show that all intervals that we add
    in \cref{def:periodcover}\eqref{it:periodcover:e},
    that is, in the recursive construction of $C^{0,\bc(\bG_S)-1}_S$,
    can be encoded in $\mathcal{O}(w\log n)$ space.
    Let $\fragment{i_0}{j_0}, \ldots, \fragment{i_{d-1}}{j_{d-1}}$
    be all the intervals considered by the recursion at a fixed depth.
    Observe that $\sum_{r=0}^{d-1} \sum_{c=i_r-1}^{j_r} \w_S(c) \leq 2w$,
    because for every $c \in \fragmentco{0}{\bc(\bG)-1}$ the term $\w_S(c)$
    does not appear more than twice in the summation.
    Suppose that for $\fragment{i}{j} \in \{\fragment{i_r}{j_r}\}_{r=0}^{d-1}$
    we add $\fragment{i'}{j'}$ to $C_S^{i,j}$.
    Then, we can encode the corresponding
    $T\fragment{\tau_{0}^{i'}}{\tau_{0}^{j'}}$ using $20 \sum_{c=i-1}^{j} \w_S(c)$ space.
    This is because there exists $h$ such that $|\LZ(\rev{T\fragment{\tau_{i'}^0}{\tau_h^0}})| \le 10 \sum_{c=i-1}^{j} \w_S(c)$
    and $|\LZ(T\fragmentoc{\tau_h^0}{\tau_{j'}^0})| \le 10 \sum_{c=i-1}^{j} \w_S(c)$.
    As a consequence, we obtain that we can encode all intervals that we add to $C^{0,\bc(\bG_S)-1}_S$ at a fixed depth
    using at most $\sum_{r=0}^{d-1} 20 \sum_{c=i_r-1}^{j_r} \w_S(c) = \mathcal{O}(w)$ space.
    Since the recursion has depth at most $\mathcal{O}(\log n)$, we conclude that
    the intervals that we add in \cref{def:periodcover}\eqref{it:periodcover:e}
    can be encoded in $\mathcal{O}(w\log n)$ space.
\end{proof}

\subsection{Close Candidate Positions}\label{sec:close}

This (sub)section is devoted to proving the following result:
\prpclose*

\subsubsection{Partition of \texorpdfstring{$P$}{P} and \texorpdfstring{$T$}{T} into Blocks}

\cref{prp:close} heavily relies on the fact that $P$ and $T$ exhibit a periodic structure.
We divide $P$ and $T$ into the blocks ${\{P_j\}}_{j \in \fragmentco{0}{m_0}}$ and ${\{T_i\}}_{i \in \fragmentco{0}{n_0}}$ having a similar structure.

\begin{definition}\label{def:blocks}
    For $j \in \fragmentco{0}{m_0}$ we define
    \[
        P_j \coloneqq
        \begin{cases}
            P\fragmentco{\pi^{0}_{j}}{\pi^{0}_{j+1}} & \text{if $j \neq m_{0}-1$,} \\
            P\fragment{\pi^{0}_{m_0-1}}{\pi^{c_{\last}}_{m_0-1}} & \text{otherwise.}
        \end{cases}
    \]
    Similarly, for $i \in \fragmentco{0}{n_0}$ we define
    \[
        T_i \coloneqq
        \begin{cases}
            T\fragmentco{\tau^{0}_{i}}{\tau^{0}_{i+1}} & \text{if $i \neq n_{0}-1$,} \\
            T\fragment{\tau^{0}_{n_0-1}}{\tau^{c_{\last}}_{n_0-1}} & \text{otherwise.}
        \end{cases}
        \qedhere
    \]
\end{definition}

The notion of similarity between these blocks is also captured by the fact that
for any $j,i$ we
can construct an alignment of $P_j$ onto $T_i$ whose cost is upper bounded by the total
weight of the weight function $\w_S$.
More formally, we can prove the following proposition.

\begin{proposition}
    \label{prop:asji}
    Let $i \in \fragmentco{0}{n_0}, j \in \fragmentco{0}{m_0}$ be arbitrary such that if
    $i = n_0-1$ then $j = m_0-1$.
    Then, there is an alignment $\mA_{S}^{j,i}$ with the following properties.
    \begin{enumerate}[(i)]
        \item $\mA_{S}^{j,i} : P_j \onto T_i$ if $j \neq m_0-1$,
            and $\mA_{S}^{j,i} : P_{m_0-1} \onto T\fragment{\tau_i^0}{\tau_i^{c_{\last}}}$ otherwise.
            \label{it:asji:i}
        \item Let $c \in \fragmentco{0}{\bc(\bG_S)}$ if $j \neq m_0-1$, and $c \in
            \fragment{0}{c_{\last}}$ if $j = m_0-1$.
            Then, $\mA_{S}^{j,i}$ matches $P\position{\pi_{j}^{c}}$ and $T\position{\tau_{i}^{c}}$.
            \label{it:asji:ii}
        \item
            $\mA_{S}^{j,i}$ has cost at most $w$.
            \label{it:asji:iii}
    \end{enumerate}
\end{proposition}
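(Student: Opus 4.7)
The plan is to construct $\mA_S^{j,i}$ as a concatenation of per-component sub-alignments, with each sub-alignment's cost bounded by $\w_S(c)$ as guaranteed by \cref{def:funcover}(\ref{it:funcover:1}). I would first decompose $P_j$ along the black-component positions $\pi_j^0<\pi_j^1<\cdots$ and similarly $T_i$ (or $T\fragment{\tau_i^0}{\tau_i^{c_{\last}}}$) along $\tau_i^0<\tau_i^1<\cdots$. When $j\ne m_0-1$, using the convention $\pi_j^{\bc(\bG_S)}=\pi_{j+1}^0$ and $\tau_i^{\bc(\bG_S)}=\tau_{i+1}^0$, this gives $\bc(\bG_S)$ sub-block pairs $\bigl(P\fragmentco{\pi_j^c}{\pi_j^{c+1}},\ T\fragmentco{\tau_i^c}{\tau_i^{c+1}}\bigr)$ for $c\in\fragmentco{0}{\bc(\bG_S)}$; when $j=m_0-1$, the decomposition stops at $c_{\last}$, with a terminating singleton pair consisting of the single characters $P\position{\pi_{m_0-1}^{c_{\last}}}$ and $T\position{\tau_i^{c_{\last}}}$.

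For each non-degenerate sub-block pair, I would invoke \cref{def:funcover}(\ref{it:funcover:1}) to get $\ed(P\fragmentco{\pi_j^c}{\pi_j^{c+1}}, T\fragmentco{\tau_i^c}{\tau_i^{c+1}})\le \w_S(c)$. The preconditions $j\in\fragmentco{0}{m_{c+1}}$ and $i\in\fragmentco{0}{n_{c+1}}$ deserve a small verification: recall that $m_{c+1}=m_0$ and $n_{c+1}=n_0$ when $c<c_{\last}$, dropping to $m_0-1$ and $n_0-1$ respectively when $c\ge c_{\last}$. Thus in the $j=m_0-1$ case, iterating only over $c<c_{\last}$ keeps the preconditions satisfied, while in the $j\ne m_0-1$ case the proposition's hypothesis forces $i\ne n_0-1$, so $j\le m_0-2<m_{c+1}$ and $i\le n_0-2<n_{c+1}$ for every $c$. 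To also satisfy the matching requirement~(ii), I would pick each sub-alignment so that it matches its leading pair $P\position{\pi_j^c}$ and $T\position{\tau_i^c}$: these two characters lie in the same ($c$-th) black component and are hence equal, and by the elementary identity $\ed(aX,aY)=\ed(X,Y)$, an optimal alignment of the residual fragments $P\fragmentco{\pi_j^c+1}{\pi_j^{c+1}}\onto T\fragmentco{\tau_i^c+1}{\tau_i^{c+1}}$ still costs at most $\w_S(c)$---prepending the match yields the desired sub-alignment.

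Concatenating all these sub-alignments (appending the zero-cost terminating singleton match when $j=m_0-1$) produces $\mA_S^{j,i}$. Property~(i) holds by construction, property~(ii) follows because each sub-alignment starts with the prescribed match, and property~(iii) holds because the total cost is at most $\sum_c \w_S(c)\le w$. The main obstacle is really just the boundary bookkeeping around $c=c_{\last}$, where the two alternative case definitions of $P_j$ and of the $T$-side diverge; the core combinatorial content is a straightforward concatenation, and once the preconditions of \cref{def:funcover}(\ref{it:funcover:1}) are set up correctly, the cost bound $w$ drops out immediately from the total-weight definition.
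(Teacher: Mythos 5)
Your proposal is correct and follows essentially the same route as the paper's proof: per-component sub-alignments of cost at most $\w_S(c)$ obtained from \cref{def:funcover}\eqref{it:funcover:1}, adjusted to match the leading pair $P\position{\pi_j^c}$, $T\position{\tau_i^c}$, then concatenated (with the extra singleton match at $c_{\last}$ when $j=m_0-1$), giving total cost at most $\sum_c \w_S(c)\le w$. Your explicit verification of the index preconditions $j<m_{c+1}$, $i<n_{c+1}$ is a detail the paper leaves implicit, but the argument is the same.
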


\begin{proof}
    Let $c\in \fragmentco{0}{\bc(\bG_S)}$ if $j \neq m_{0}-1$ and let $c \in
    \fragmentco{0}{c_{\last}}$ if $j = m_{0}-1$.
    From \Cref{def:funcover}\eqref{eq:funcover} follows that there is an
    alignment $\mX_c : P\fragmentco{\pi_j^{c}}{\pi_j^{c+1}} \onto T\fragmentco{\tau_i^{c}}{\tau_i^{c+1}}$
    with cost at most $\w_S(c)$.
    Note, since $P\position{\pi_j^{c}} = T\position{\tau_i^{c}}$, we
    may assume that $\mX_c$ matches $P\position{\pi_j^{c}}$ and $T\position{\tau_i^{c}}$.
    Now, it suffices to set
    \[
        \mA_{S}^{j,i} \coloneqq
        \begin{cases}
            \bigcup_{c=0}^{\bc(\bG_S)-1} \mX_c & \text{if $j \neq m_{0}-1$}, \\
            \left(\bigcup_{c=0}^{c_{\last}-1} \mX_c \right) \cup \{(\pi_{m_0-1}^{c_{\last}}+1, \tau_i^{c_{\last}}+1)\} & \text{if $j = m_{0}-1$}.
        \end{cases}
    \]
    Observe that \( \mA_{S}^{j,i} \) is a valid alignment because the last two characters aligned by \( \mX_{c-1} \) are exactly the first two characters aligned by \( \mX_{c} \).
    It is clear that~\eqref{it:asji:i} and~\eqref{it:asji:ii} hold.
    Regarding~\eqref{it:asji:iii}, note that the cost of \( \mA_{S}^{j,i} \) equals the sum of the costs of all \( \mX_{c} \) whose union defines \( \mA_{S}^{j,i} \).
    This sum is upper bounded by the sum of the corresponding \( \w_S(c) \), and is therefore at most \( w \).
\end{proof}

\begin{lemma}
    \label{prp:coverededit}
    Let $j \in \fragmentco{0}{m_0}$, $i \in \fragmentco{0}{n_0}$ such that if $i = n_0-1$ then $j = m_0-1$, and let $\mA \coloneqq  \mA_S^{j,i}$.
    Let $P\fragmentco{x}{x'}$ and $T\fragmentco{y}{y'}$ be fragments of $P_j$ and $T_i$, respectively, such that $\mA$ aligns $P\fragmentco{x}{x'}$ onto $T\fragmentco{y}{y'}$
    with cost $\delta^{\mA}(P\fragmentco{x}{x'}, T\fragmentco{y}{y'}) > 0$. Let $\mX$ be an optimal alignment of $P\fragmentco{x}{x'}$ onto $T\fragmentco{y}{y'}$.
    If there exists no $(\hx, \hy) \in \mX \cap \mA$ such that both $\mX$ and $\mA$ match $P[\hx]$ and $T[\hy]$, then $\{c\in \fragmentco{0}{\bc(\bG)} : \pi^c_j \in \fragmentco{x}{x'}\} \subseteq C_S$.
\end{lemma}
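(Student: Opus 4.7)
The plan is to exhibit condition~\eqref{it:periodcover_alt:5} of \cref{def:periodcover_alt} for the interval $\fragment{a}{b}:=\{c\in \fragmentco{0}{\bc(\bG_S)} : \pi_j^c \in \fragmentco{x}{x'}\}$, which is indeed an interval because $c\mapsto \pi_j^c$ is strictly increasing; if it is empty, the claim is trivial, so I may assume $a\le b$. By definition, $x\le \pi_j^a$ and $\pi_j^b<x'\le \pi_j^{b+1}$, where I adopt the convention $\pi_j^{\bc(\bG_S)}=\pi_{j+1}^0$ from \cref{def:pitau}.

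First, I bound the cost of $\mA$ on $P\fragmentco{x}{x'}$. Recalling from the proof of \cref{prop:asji} that $\mA=\mA_S^{j,i}$ is assembled from pieces $\mX_c : P\fragmentco{\pi_j^c}{\pi_j^{c+1}} \onto T\fragmentco{\tau_i^c}{\tau_i^{c+1}}$ of cost at most $\w_S(c)$, the restriction of $\mA$ to $P\fragmentco{x}{x'}$ is covered by (possibly partial) fragments of $\mX_{a-1},\mX_a,\ldots,\mX_b$, yielding
\[
    \edal{\mA}(P\fragmentco{x}{x'},T\fragmentco{y}{y'}) \le \sum_{c=a-1}^{b}\w_S(c),
\]
with the convention $\w_S(-1)=\w_S(\bc(\bG_S)-1)$: the front term does not actually contribute when $a=0$ (since then $x=\pi_j^0$), and in the special case $j=m_0-1$ with $b=c_{\last}$ the final step of $\mA$ is a free match. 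Since $\mX$ is optimal, $\edal{\mX}(P\fragmentco{x}{x'},T\fragmentco{y}{y'})\le \edal{\mA}(P\fragmentco{x}{x'},T\fragmentco{y}{y'})$ obeys the same bound.

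By hypothesis, $\mA$ and $\mX$ share no matched pair, so \cref{prp:edimpliesselfed} applied to both alignments from $P\fragmentco{x}{x'}$ to the single fragment $T\fragmentco{y}{y'}$ (the $|i-i'|$ and $|j-j'|$ terms both vanish) gives $\selfed(P\fragmentco{x}{x'}) \le \edal{\mA}+\edal{\mX} \le 2\sum_{c=a-1}^{b}\w_S(c)$. Monotonicity of $\selfed$ (\cref{prp:prop_selfed}), together with $x\le \pi_j^a\le \pi_j^b+1\le x'$, transfers this to $\selfed(P\fragment{\pi_j^a}{\pi_j^b})$. To pass from $P$ to $T$, I invoke the alignment $\mA_S^{j,0}$ from \cref{prop:asji}, which is well defined (if $n_0=1$, the lemma's own premise already forces $j=m_0-1$); restricting it to the pieces indexed by $c\in\fragment{a}{b-1}$ and appending the free match $P\position{\pi_j^b}=T\position{\tau_0^b}$ gives an alignment $P\fragment{\pi_j^a}{\pi_j^b}\onto T\fragment{\tau_0^a}{\tau_0^b}$ of cost at most $\sum_{c=a}^{b-1}\w_S(c)$. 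The triangle inequality for $\selfed$ (\cref{prp:prop_selfed}) then closes the argument:
\[
    \selfed(T\fragment{\tau_0^a}{\tau_0^b}) \le 2\sum_{c=a-1}^{b}\w_S(c) + 2\sum_{c=a}^{b-1}\w_S(c) \le 4\sum_{c=a-1}^{b}\w_S(c) \le 6\sum_{c=a-1}^{b}\w_S(c),
\]
so \cref{def:periodcover_alt}\eqref{it:periodcover_alt:5} certifies $\fragment{a}{b}\subseteq C_S$.

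The main technical care goes into the cost decomposition in the second paragraph: one has to verify that only partial copies of $\mX_{a-1}$ and $\mX_b$ are used at the boundaries, and separately handle the corner cases $a=0$, $b=\bc(\bG_S)-1$ (where the summation conventions kick in), and $j=m_0-1$ with $b=c_{\last}$ (where the structure of $\mA_S^{m_0-1,i}$ differs slightly). The hypothesis $\edal{\mA}(P\fragmentco{x}{x'},T\fragmentco{y}{y'})>0$ plays no role in the inequality chain itself, but it ensures $\sum_{c=a-1}^{b}\w_S(c)\ge 1$ via the bound above, ruling out the degenerate case in which condition~\eqref{it:periodcover_alt:5} would become vacuous.
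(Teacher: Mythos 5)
Your proof is correct and follows essentially the same route as the paper's: both arguments feed the disjointness hypothesis into \cref{prp:edimpliesselfed}, bound the relevant alignment costs by $\sum_{c=a-1}^{b}\w_S(c)$ via the piecewise structure of $\mA_S^{j,i}$ and \cref{def:funcover}, and then use monotonicity and the triangle inequality for $\selfed$ to verify condition~\eqref{it:periodcover_alt:5} of \cref{def:periodcover_alt}. The only (harmless) difference is symmetric bookkeeping: you bound $\selfed(P\fragmentco{x}{x'})$ and transfer to $T\fragment{\tau_0^a}{\tau_0^b}$ through a single alignment of cost $\sum_{c=a}^{b-1}\w_S(c)$, arriving at the constant $4$, whereas the paper bounds $\selfed(T\fragmentco{y}{y'})$, restricts to $T\fragment{\tau_i^a}{\tau_i^b}$, and then moves from shift $i$ to shift $0$ at cost $2\sum_{c=a}^{b-1}\w_S(c)$, arriving at the constant $6$.
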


\begin{proof}
    Denote  $\fragment{a}{b}=\{c\in \fragmentco{0}{\bc(\bG)} : \pi^c_j \in \fragmentco{x}{x'}\}$
    and assume that this interval is non-empty (otherwise, there is nothing to prove).
    Moreover, let $\fragment{a'}{b}=\{c\in \fragmentco{0}{\bc(\bG)} : \pi^c_j \in \fragmentoo{x}{x'}\}$.
    Now, we can apply the following chain of inequalities
    \begin{align}
            \selfed(T\fragment{\tau_i^a}{\tau_i^b}) &\leq \selfed(T\fragmentco{y}{y'}) \label{eq:coverededit:1} \\
                                                                            &\leq 2 \edal{\mA}(P\fragmentco{x}{x'}, T\fragmentco{y}{y'}) \label{eq:coverededit:2} \\
                                                                            &\leq 2 \edal{\mA}(P\fragmentco{\pi_j^{a'-1}}{\pi_j^{b+1}}, T\fragmentco{\tau_i^{a'-1}}{\tau_i^{b+1}}) \label{eq:coverededit:3} \\
                                                                            &= 2 \sum_{c=a'-1}^{b}\edal{\mA}(P\fragmentco{\pi_j^{c}}{\pi_j^{c+1}}, T\fragmentco{\tau_i^{c}}{\tau_i^{c+1}}) \label{eq:coverededit:4} \\
                                                                            &\leq 2 \sum_{c=a'-1}^{b} \w_S(c) \label{eq:coverededit:5}\\
                                                                            &\leq 2 \sum_{c=a-1}^{b} \w_S(c)  \label{eq:coverededit:6}
    \end{align}
    where we have used
    \begin{itemize}
        \item[(\ref{eq:coverededit:1})] monotonicity of $\selfed$ (\cref{prp:prop_selfed});
        \item[(\ref{eq:coverededit:2})] \cref{prp:edimpliesselfed} and $\edal{\mX}(P\fragmentco{x}{x'}, T\fragmentco{y}{y'}) \leq \edal{\mA}(P\fragmentco{x}{x'}, T\fragmentco{y}{y'})$;
        \item[(\ref{eq:coverededit:3})] the definition of $\fragment{a'}{b}$ and the fact that  $(\pi_j^{a'-1}, \tau_i^{a'-1}),(\pi_j^{b+1}, \tau_i^{b+1}) \in \mA$;
        \item[(\ref{eq:coverededit:4})] the fact that $(\pi_j^{c}, \tau_i^{c}) \in \mA$ for all $c \in \fragment{a}{b}$; and
        \item[(\ref{eq:coverededit:5})] \cref{def:funcover}.
    \end{itemize}

    Now, notice that $\ed(T\fragment{\tau_i^{a}}{\tau_i^{b}}, T\fragment{\tau_0^{a}}{\tau_0^{b}}) \leq \sum_{c=a}^{b-1} \ed(T\fragment{\tau_i^{c}}{\tau_i^{c+1}}, T\fragment{\tau_0^{c}}{\tau_0^{c+1}}) \leq 2\sum_{c=a}^{b-1} \w_S(c)$ follows from \cref{def:funcover}.
    From this together with \cref{prp:lz_selfed} and  $\selfed(T\fragment{\tau_i^{a}}{\tau_i^{b}}) \leq 2 \sum_{c=a-1}^{b} \w_S(c)$ follows
    $\selfed(T\fragment{\tau_0^{a}}{\tau_0^{b}}) \leq 6 \sum_{c=a-1}^{b} \w_S(c)$.
    Hence, we can use \cref{def:periodcover_alt} to deduce $\fragment{a}{b} \subseteq C_S$.
\end{proof}

\subsubsection{Recovering the Edit Distance for a Single Candidate Position}

\begin{lemma}\label{prp:recperioded}
    Let $j\in \fragmentco{0}{m_0}$ and $i\in \fragmentco{0}{n_0}$ be such that $j=m_0-1$ if $i=n_0-1$.
    Let $\mX: P_j \onto T\fragmentco{y}{y'}$ be an optimal alignment of $P_j$ onto an arbitrary fragment $T\fragmentco{y}{y'}$.
    If $|\tau_i^0 - y|  \leq w + 4k$ and $\ed(P_j, T\fragmentco{y}{y'})\le k$,
    then $\mX$ aligns $P[\pi^c_j]$ to $T[\tau^c_{i}]$ for all $c\in \fragmentco{0}{\bc(\bG_S)}\setminus C_S$ such that $j\in \fragmentco{0}{m_c}$.
\end{lemma}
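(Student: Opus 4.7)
The plan is to proceed by contradiction: assume there exists $c^{*} \in \fragmentco{0}{\bc(\bG_S)} \setminus C_S$ with $j \in \fragmentco{0}{m_{c^{*}}}$ such that $\mX$ does not align $P\position{\pi_j^{c^{*}}}$ to $T\position{\tau_i^{c^{*}}}$. Set $\mA \coloneqq \mA_S^{j,i}$, which by \cref{prop:asji} matches $P\position{\pi_j^{c}}$ with $T\position{\tau_i^{c}}$ for every relevant $c$ and has cost at most $w$. Introduce the set of \emph{common matches} $M \coloneqq \{(\hat x, \hat y) \in \mX \cap \mA : \text{both } \mX \text{ and } \mA \text{ match } P\position{\hat x} \text{ with } T\position{\hat y}\}$. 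The argument splits according to the existence of extremal elements $(x_1, y_1) \in M$ with $x_1 \le \pi_j^{c^{*}}$ (taken maximal) and $(x_2, y_2) \in M$ with $x_2 > \pi_j^{c^{*}}$ (taken minimal).

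In the interior case (both extremal elements exist), the key observation is that $x_1 < \pi_j^{c^{*}}$ strictly: otherwise, $\mA$'s unique match at $P\position{\pi_j^{c^{*}}}$ (with $T\position{\tau_i^{c^{*}}}$) would force $\mX$ to share it, contradicting the assumption. The sub-alignments of $\mX$ and $\mA$ from $(x_1 + 1, y_1 + 1)$ to $(x_2, y_2)$ then both align the fragment $P\fragmentco{x_1+1}{x_2}$ onto $T\fragmentco{y_1+1}{y_2}$; by \cref{fct:ali}, the sub-alignment of $\mX$ is optimal. Maximality of $x_1$ forces $(x_1 + 1, y_1 + 1) \notin M$, and combined with minimality of $x_2$, this ensures that no common match occurs in the sub-alignments. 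Thus \cref{prp:coverededit} yields $\{c' : \pi_j^{c'} \in \fragmentco{x_1+1}{x_2}\} \subseteq C_S$, which contains $c^{*}$---the desired contradiction. If $\mA$'s cost on this sub-fragment happens to be zero, the optimal sub-alignment of $\mX$ must also match perfectly, which already directly contradicts the assumption.

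In the boundary case where $(x_1, y_1)$ does not exist (the missing-$(x_2, y_2)$ case is symmetric), the plan is to apply \cref{prp:edimpliesselfed} to the sub-alignments of $\mX$ and $\mA$ extending from their starts to $(x_2, y_2)$: these align $P\fragmentco{\pi_j^0}{x_2}$ onto $T\fragmentco{y}{y_2}$ and $T\fragmentco{\tau_i^0}{y_2}$ with costs at most $k$ and $w$, sharing no common match. Combined with the hypothesis $|\tau_i^0 - y| \le w + 4k$, this yields $\selfed(P\fragmentco{\pi_j^0}{x_2}) \le 2w + 5k$. Via monotonicity of $\selfed$ (\cref{prp:prop_selfed}) and the triangle inequality applied through the canonical alignment $\mA_S^{j, 0}$ (which aligns $P\fragmentco{\pi_j^0}{\pi_j^{c^{*}} + 1}$ onto $T\fragment{\tau_0^0}{\tau_0^{c^{*}}}$ with cost at most $w$), this transfers to $\selfed(T\fragment{\tau_0^0}{\tau_0^{c^{*}}}) \le 4w + 5k \le 6w + 11k$, and \cref{def:periodcover_alt}\eqref{it:periodcover_alt:1} forces $\fragment{0}{c^{*}} \subseteq C_S$, again contradicting $c^{*} \notin C_S$. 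When neither extremal element exists, an analogous argument applied to the full alignments (using also that the difference between $y'$ and the right endpoint of the $T$-fragment aligned by $\mA$ is bounded by $2w + 5k$ via length comparisons from $\ed(P_j, T\fragmentco{y}{y'}) \le k$ and the cost bound on $\mA$) yields $\selfed(T\fragment{\tau_0^0}{\tau_0^{\bc(\bG_S)-1}}) \le 6w + 10k$, closing the case via \cref{def:periodcover_alt}\eqref{it:periodcover_alt:1}.

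The main obstacle is the careful selection of sub-alignment endpoints so that they are common vertices of $\mX$ and $\mA$ yet not themselves common matches, preserving the no-common-match hypothesis of \cref{prp:coverededit} (respectively \cref{prp:edimpliesselfed}); without the observation that $(x_1 + 1, y_1 + 1) \notin M$, \cref{prp:coverededit} fails even when $\mX$ and $\mA$ clearly diverge near $\pi_j^{c^{*}}$. The hypothesis $|\tau_i^0 - y| \le w + 4k$ is precisely what is needed to control the $|i - i'|$ term appearing in the boundary-case application of \cref{prp:edimpliesselfed}, while the triangle inequality via $\mA_S^{j, 0}$ is what transfers the resulting $\selfed$ bound from a fragment of $P$ to the fragment of $T$ required by \cref{def:periodcover_alt}. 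The corner case $j = m_0 - 1$, where the blocks $P_{m_0 - 1}$ and $T\fragment{\tau_i^0}{\tau_i^{c_{\last}}}$ have a shorter shape, is handled by the same reasoning over the restricted index range $c \in \fragment{0}{c_{\last}}$.
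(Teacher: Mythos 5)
Your proposal is correct and follows essentially the same route as the paper's proof: the same contradiction scheme built on \cref{prp:coverededit}, \cref{prp:edimpliesselfed}, the transfer of $\selfed$ bounds through the canonical alignments $\mA_S^{j,\cdot}$, and the conditions of \cref{def:periodcover_alt}. The differences are only organizational — you anchor on the nearest common \emph{matches} around $\pi_j^{c^*}$ and treat their possible absence as boundary cases resolved directly by the $\selfed$ argument, whereas the paper first establishes flanking common \emph{vertices} of $\mA\cap\mX$ (via the same $\selfed$ contradiction at the extreme uncovered components $c_{\mathrm{l}},c_{\mathrm{r}}$) and then argues per component; you also make explicit the zero-cost corner case of \cref{prp:coverededit}, which the paper leaves implicit.
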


\begin{proof}
    We first assume that $j< m_0-1$ and then briefly argue that the case of $j=m_0-1$ can be handled similarly.
    Assume $\fragmentco{0}{\bc(\bG_S)}\setminus C_S \neq \emptyset$; otherwise there is nothing to prove.
    Denote $c_{\mathrm{l}}= \min (\fragmentco{0}{\bc(\bG_S)}\setminus C_S)$ and $c_{\mathrm{r}} = \max(\fragmentco{0}{\bc(\bG_S)}\setminus C_S)$.
    Henceforth, we set $\mA \coloneqq  \mA_{S}^{j,i}$.

    \begin{claim}
        There exist $(x_{\mathrm{l}},y_{\mathrm{l}}), (x_{\mathrm{r}},y_{\mathrm{r}}) \in \mA \cap \mX$
        such that $(x_{\mathrm{l}},y_{\mathrm{l}}) \leq (\pi_j^{c_{\mathrm{l}}}, \tau_i^{c_{\mathrm{l}}})$
        and $(\pi_j^{c_{\mathrm{r}}}, \tau_i^{c_{\mathrm{r}}}) \leq (x_{\mathrm{r}},y_{\mathrm{r}})$.
    \end{claim}

    \begin{claimproof}

    First, we want to argue that there exists $(x_{\mathrm{l}},y_{\mathrm{l}}) \in \mA \cap \mX$ such that $(x_{\mathrm{l}},y_{\mathrm{l}}) \leq (\pi_j^{c_{\mathrm{l}}}, \tau_i^{c_{\mathrm{l}}})$.
    Note, following \cref{def:periodcover_alt} we must have $\selfed(T\fragment{\tau_0^0}{\tau_0^{c_{\mathrm{l}}}}) > 6w + 11k$; otherwise $c_{\mathrm{l}}$ would have been included in $C_S$.
    Suppose that $\mX$ aligns $P\fragmentco{\pi_j^0}{\pi_j^{c_{\mathrm{l}}}}$ with $T\fragmentco{y}{\hy}$.
    For sake of contradiction, suppose that $\mA : P\fragmentco{\pi_j^0}{\pi_j^{c_{\mathrm{l}}}} \onto T\fragmentco{\tau_i^0}{\tau_i^{c_{\mathrm{l}}}}$ and $\mX : P\fragmentco{\pi_j^0}{\pi_j^{c_{\mathrm{l}}}} \onto T\fragmentco{y}{\hy}$ are disjoint.
    By applying \cref{prp:edimpliesselfed} to $\mA : P\fragmentco{\pi_j^0}{\pi_j^{c_{\mathrm{l}}}} \onto T\fragmentco{\tau_i^0}{\tau_i^{c_{\mathrm{l}}}}$ and $\mX : P\fragmentco{\pi_j^0}{\pi_j^{c_{\mathrm{l}}}} \onto T\fragmentco{y}{\hy}$, we obtain
    \begin{align*}
        \selfed(P\fragmentco{\pi_j^0}{\pi_j^{c_{\mathrm{l}}}}) &\leq |\tau_i^0 - y| + \edal{\mA}(P\fragmentco{\pi_j^0}{\pi_j^{c_{\mathrm{l}}}}, T\fragmentco{\tau_i^0}{\tau_i^{c_{\mathrm{l}}}}) + \edal{\mX}(P\fragmentco{\pi_j^0}{\pi_j^{c_{\mathrm{l}}}}, T\fragmentco{y}{\hy}) + |\tau_i^{c_{\mathrm{l}}} - \hy| \\
                                                                &\leq 2|\tau_i^0 - y| + 2\edal{\mA}(P\fragmentco{\pi_j^0}{\pi_j^{c_{\mathrm{l}}}}, T\fragmentco{\tau_i^0}{\tau_i^{c_{\mathrm{l}}}}) + 2\edal{\mX}(P\fragmentco{\pi_j^0}{\pi_j^{c_{\mathrm{l}}}}, T\fragmentco{y}{\hy}) \\
                                                                &\leq 2w + 8k + 2w + 2k = 4w + 10k,
    \end{align*}
    where we have used
    \begin{align*}
        |\tau_i^{c_{\mathrm{l}}} - \hy| &\leq  |\tau_i^0 - y| + |(\tau_i^{c_{\mathrm{l}}} - \tau_i^0) - (\pi_j^{c_{\mathrm{l}}} - \pi_j^0)| + |(\hy - y) - (\pi_j^{c_{\mathrm{l}}} - \pi_j^0)| \\
                                        &\leq |\tau_i^0 - y| + \edal{\mA}(P\fragmentco{\pi_j^0}{\pi_j^{c_{\mathrm{l}}}}, T\fragmentco{\tau_i^0}{\tau_i^{c_{\mathrm{l}}}}) + \edal{\mX}(P\fragmentco{\pi_j^0}{\pi_j^{c_{\mathrm{l}}}}, T\fragmentco{y}{\hy}).
    \end{align*}
    An application of \cref{prp:lz_selfed} on $P\fragmentco{\pi_j^0}{\pi_j^{c_{\mathrm{l}}}}$ and
    $\mA^{j,0}_S$ yields
    \[
        \selfed(T\fragment{\tau_0^0}{\tau_0^{c_{\mathrm{l}}}}) \leq \selfed(P\fragmentco{\pi_j^0}{\pi_j^{c_{\mathrm{l}}}}) + 2w = 6w + 10k.
    \]
    As a consequence, we have
    \[
        \selfed(T\fragment{\tau_0^0}{\tau_0^{c_{\mathrm{l}}}}) \leq
        \selfed(T\fragmentco{\tau_0^0}{\tau_0^{c_{\mathrm{l}}}}) + 1 \leq 6w + 11k,\]
    and we obtain a contradiction.

    Since the alignments $\mA : P_j \onto T_{i}$ and $\mX : P_j \onto T\fragmentco{y}{y'}$ intersect and have costs
    at most $w$ and $k$, respectively, we conclude that $|\tau_{i+1}^0 - y'| \le w+k \le w+4k$.
    Consequently, by an argument symmetric to the above, there exists $(x_{\mathrm{r}},y_{\mathrm{r}}) \in \mA \cap \mX$ such that $(\pi_j^{c_{\mathrm{r}}}, \tau_i^{c_{\mathrm{r}}}) \leq (x_{\mathrm{r}},y_{\mathrm{r}})$.
    \end{claimproof}

    Now, for the sake of contradiction, suppose that there exists $c\in \fragmentco{0}{\bc(\bG_S)}\setminus C_S$ such that $\mX$ does not align $P[\pi^c_j]$ to $T[\tau^c_{i}]$.
    Let $(x_c,y_c) \in \mA \cap \mX$ be the largest $(x_c,y_c)$ such that $(x_c,y_c) \leq (\pi^c_j, \tau^c_{i})$, and let $(x_c',y_c') \in \mA \cap \mX$ be the smallest $(x_c',y_c')$ such that $(\pi^c_j, \tau^c_{i}) \leq (x_c',y_c')$.
    Note, we know that such $(x_c,y_c), (x_c',y_c')$ exist because $(x_{\mathrm{l}},y_{\mathrm{l}}), (x_{\mathrm{r}},y_{\mathrm{r}}) \in \mA \cap \mX$ are such that $(x_{\mathrm{l}},y_{\mathrm{l}}) \leq (\pi_j^{c_{\mathrm{l}}}, \tau_i^{c_{\mathrm{l}}})\leq (\pi^c_j, \tau^c_{i}) \leq (\pi_j^{c_{\mathrm{r}}}, \tau_i^{c_{\mathrm{r}}}) \leq (x_{\mathrm{r}},y_{\mathrm{r}})$.

    \begin{claim}
        $\mX$ and $\mA$ do not both align $P[x_c]$ to $T[y_c]$ at the same time.
    \end{claim}

    \begin{claimproof}
        For the sake of contradiction, assume $\mX$ and $\mA$ both align $P[x_c]$ to $T[y_c]$.
        Thus, $(x_c+1,y_c+1) \in \mX \cap \mA$.
        Since $(\pi^c_j, \tau^c_{i}) \notin \mX$, we must have $(x_c,y_c) \neq (\pi^c_j, \tau^c_{i})$,
        and $(x_c,y_c)$ appears strictly before $(\pi^c_j, \tau^c_{i}) \in \mA$ in $\mA$.
        Consequently, $(x_c+1,y_c+1) \leq (\pi^c_j, \tau^c_{i})$.
        However, this is a contradiction with $(x_c,y_c)$
        being the largest $(x_c,y_c) \in \mX \cap \mA$ such that $(x_c,y_c) \leq (\pi^c_j, \tau^c_{i})$.
    \end{claimproof}

    As a consequence,
    there is no $(\hx, \hy) \in \mA \cap \mX$ such that
    both $\mA : P\fragmentco{x_c}{x_c'} \onto T\fragmentco{y_c}{y_c'}$ and $\mX : P\fragmentco{x_c}{x_c'} \onto T\fragmentco{y_c}{y_c'}$
    align $P\position{\hx}$ to $T\position{\hy}$.
    Therefore, we can use \cref{prp:coverededit} obtaining $c \in C_S$.
    However, this is a contradiction with the definition of $c$.

    The proof for $j=m_0-1$ is almost identical to the proof for $j<m_0-1$: in the proof, we replace every occurrence of $\fragmentco{0}{\bc(\bG_S)}$  with $\fragment{0}{c_{\last}}$ and every occurrence of $\tau_{i+1}^0$ with $\tau_i^{c_{\last}}+1$.
\end{proof}

Finally, to conclude this (sub)section, we prove \cref{prp:close}.

\prpclose
\begin{proof}
    We assume that $C_S \subsetneq  \fragmentco{0}{\bc(\bG_S)}$; otherwise, there is nothing to prove.

    We prove~\eqref{it:close:in} by induction on $j\in \fragmentco{0}{m_0}$.
    Define $y_0,\ldots, y_{m_0}\in \fragment{t}{t'}$ so that $\mX(P_j)=T\fragmentco{y_j}{y_{j+1}}$
    holds for all $j\in \fragmentco{0}{m_0}$.
    Let us first prove that $|\tau_{i+j}^0 - y_{j}|\le w+4k$.
    If $j=0$, we observe that $|\pi_0^0-(y_0-t)|\le k$ because $\mX$ aligns $P\fragmentco{0}{\pi_0^0}$ with $T\fragmentco{t}{y_0}$ at a cost not exceeding $k$.
    Combining this inequality with the assumption $|\tau_{i}^0 - t - \pi_0^0| \leq w + 3k$,
    we conclude that $|\tau_{i}^0-y_0|\le w+4k$.
    If $j>0$, on the other hand, consider $c\in \fragmentco{0}{\bc(\bG_S)}$,
    for which the inductive assumption yields $(\pi^c_{j-1}, \tau^c_{j+i-1}) \in \mX$.
    Thus, $\mX : P_{j-1}\onto T\fragmentco{y_{j-1}}{y_j}$ and $\A_S^{j-1,i+j-1}:P_{j-1}\onto T\fragmentco{\tau_{i+j-1}^0}{\tau_{i+j}^0}$
    are intersecting alignments of costs at most $k$ and $w$, respectively.
    Consequently, $|\tau_{i+j}^0 - y_{j}|\le w+k \le w+4k$.
    In either case, we have proved that $|\tau_{i+j}^0 - y_{j}|\le w+4k$.
    This lets us apply \cref{prp:recperioded} for $\mX : P_j \onto T\fragmentco{y_{j}}{y_{j+1}}$,
    which implies that $\mX$ aligns $P[\pi_j^c]$ to $T[\tau_{i+j}^c]$ for all $c\in \fragmentco{0}{\bc(\bG_S)}\setminus C_S$ such that $j\in \fragmentco{0}{m_c}$,
    completing the inductive argument.

    We proceed to the proof of~\eqref{it:close:out}.
    First, consider $i'\in \fragmentco{0}{i}$ and suppose, for a proof by contradiction, that $\tau_{i'}^c\ge t$.
    \eqref{it:close:in} implies $(\pi_0^c,\tau_i^c)\in \mX$, so $\mX$ aligns $P\fragmentco{0}{\pi_0^c}$
    onto $T\fragmentco{t}{\tau_i^c}$ at cost at most $k$.
    By \cref{def:funcover}\eqref{it:funcover:3}, there exists an alignment $\mA$ of cost at most $w$
    that aligns $P\fragmentco{0}{\pi_0^c}$ onto $T\fragmentco{\hat{t}}{\tau_i^c}$
    for some $\hat{t}\in \fragment{\tau_{i-1}^{\bc(\bG_S)-1}}{\tau_i^0}$.
    Consequently, $|\hat{t}-t|\le w+k$.
    Since $t \le \tau_{i'}^c \le \tau_{i'}^{\bc(\bG_S)-1} \le \tau_{i-1}^{\bc(\bG_S)-1} \le \hat{t}$,
    we conclude $|T\fragmentco{\tau_{i'}^c}{\tau_{i'}^{\bc(\bG_S)-1}}|\le \hat{t}-t \le w+k$.
    Moreover, $|T\fragmentco{\tau_{0}^c}{\tau_{0}^{\bc(\bG_S)-1}}| \le |T\fragmentco{\tau_{i'}^c}{\tau_{i'}^{\bc(\bG_S)-1}}|+\ed(T\fragmentco{\tau_{0}^c}{\tau_{0}^{\bc(\bG_S)-1}},T\fragmentco{\tau_{i'}^c}{\tau_{i'}^{\bc(\bG_S)-1}})
    \le w+k+2w=3w+k$
    and $|T\fragment{\tau_{0}^c}{\tau_{0}^{\bc(\bG_S)-1}}|\le 3w+k+1 \le 3w+2k$.
    Observe that for any string $X$, by only applying insertions/deletions, we have $\selfed(X) \leq 2|X|$.
    Consequently, $\selfed(T\fragment{\tau_{0}^c}{\tau_{0}^{\bc(\bG_S)-1}})\le 6w+4k$,
    contradicting $c\notin C_S$.

    The argument for $i'\in \fragmentco{i+m_c}{n_c}$ is fairly similar.
    For a proof by contradiction, suppose that $\tau_{i'} < t'$.
    \eqref{it:close:in} implies $(\pi_{m_c-1}^c,\tau_{i+m_c-1}^c)\in \mX$, so $\mX$ aligns $P\fragmentco{\pi_{m_c-1}^c}{|P|}$
    onto $T\fragmentco{\tau_{i+m_c-1}^c}{t'}$ at cost at most $k$.
    By \cref{def:funcover}\eqref{it:funcover:5}, there exists an alignment $\mA$ of cost at most $w$
    that aligns $P\fragmentco{\pi_{m_c-1}^c}{|P|}$ onto $T\fragmentco{\tau_{i+m_c-1}^c}{\hat{t}}$
    for some $\hat{t}\in \fragment{\tau_{i+m_0-1}^{c_{\last}}}{\tau_{i+m_0-1}^{c_{\last}+1}}$.
    Consequently, $|\hat{t}-t'|\le w+k$. Now, we need to consider two cases.

    First, suppose that $c \le c_\last$ so that $i'\ge i+m_c = i+m_0$.
    Since $\hat{t} \le \tau_{i+m_0-1}^{c_{\last}+1} \le \tau_{i'}^0 \le \tau_{i'}^c <  t'$,
    we conclude $|T\fragmentco{\tau_{i'}^{0}}{\tau_{i'}^{c}}|\le t'-\hat{t} \le w+k$.
    Moreover, $|T\fragmentco{\tau_{0}^{0}}{\tau_{0}^{c}}| \le |T\fragmentco{\tau_{i'}^{0}}{\tau_{i'}^{c}}|+\ed(T\fragmentco{\tau_{0}^{0}}{\tau_{0}^{c}},T\fragmentco{\tau_{i'}^{0}}{\tau_{i'}^{c}})
    \le w+k+2w=3w+k$
    and $|T\fragment{\tau_{0}^{0}}{\tau_{0}^{c}}|\le 3w+k+1 \le 3w+2k$.
    Consequently, $\selfed(T\fragment{\tau_{0}^{0}}{\tau_{0}^{c}})\le 6w+4k$,
    contradicting $c\notin C_S$.

    Next, suppose that $c > c_\last$ so that $i' \ge i+m_c = i+m_0-1$.
    Since $\hat{t} < \tau_{i+m_0-1}^{c_{\last}+1} \le \tau_{i'}^{c_\last+1} \le \tau_{i'}^c <  t'$,
    we conclude $|T\fragmentco{\tau_{i'}^{c_\last+1}}{\tau_{i'}^{c}}|\le t'-\hat{t} \le w+k$.
    Moreover, $|T\fragmentco{\tau_{0}^{c_\last+1}}{\tau_{0}^{c}}| \le |T\fragmentco{\tau_{i'}^{c_\last+1}}{\tau_{i'}^{c}}|+\ed(T\fragmentco{\tau_{0}^{c_\last+1}}{\tau_{0}^{c}},T\fragmentco{\tau_{i'}^{c_\last+1}}{\tau_{i'}^{c}})
    \le w+k+2w=3w+k$
    and $|T\fragment{\tau_{0}^{c_\last+1}}{\tau_{0}^{c}}|\le 3w+k+1 \le 3w+2k$.
    Consequently, $\selfed(T\fragment{\tau_{0}^{c_\last+1}}{\tau_{0}^{c}})\le 6w+4k$,
    contradicting $c\notin C_S$.
\end{proof}

\subsection{Adding Candidate Positions to \texorpdfstring{$S$}{S}}

\begin{lemma}
    \label{lem:periodhalves}
    Let $\mY : P \onto T\fragmentco{t}{t'}$ be an alignment of cost at most $k$.
    If $|\tau_{i}^{0} - t - \pi_0^{0}|>w+2k$ holds for every $i\in \fragment{0}{n_0-m_0}$, then $\bc(\bG_{S \cup \{\mY\}}) \leq \bc(\bG_{S})/2$.
\end{lemma}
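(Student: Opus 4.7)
The strategy is to show that under the hypothesis, adding $\mY$ to $S$ forces every original black component $C_c$ of $\bG_S$ either to become part of a red component of $\bG_{S\cup\{\mY\}}$ or to merge with some other original black component. Granting this, every black component of $\bG_{S\cup\{\mY\}}$ consists of at least two original black components glued together by new black edges (a black component in $\bG_{S\cup\{\mY\}}$ cannot contain any red-component vertices, since the pre-existing red edges would still be present), so the halving inequality follows immediately.

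Fix $c\in\fragmentco{0}{\bc(\bG_S)}$ and suppose for contradiction that $C_c$ \emph{survives}, i.e., its connected component in $\bG_{S\cup\{\mY\}}$ is black and contains no other $C_{c'}$. Then every edge added by $\mY$ incident to $C_c$ must be a black edge whose other endpoint also lies in $C_c$. I first rule out the degenerate case $m_c=0$: if $C_c$ contained only $T$-characters, then for every $y$ with $T\position{y}\in C_c$, the alignments $\mXpref$ and $\mXsuf$ could not touch $T\position{y}$ (any insertion or substitution would introduce a red edge incident to $C_c$, and any match would require a $P$-character in $C_c$). Hence all such $y$ lie in $[|\mXpref(P)|,\,|T|-|\mXsuf(P)|)$, which is empty since $|\mXpref(P)|,|\mXsuf(P)|\ge |P|-k$ and $|T|\le 2|P|-2k$ (because $S$ encloses $T$), contradicting $C_c\ne\emptyset$. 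With $m_c\ge 1$, the non-crossing property of $\mY$ forces a strictly increasing map $\sigma:\fragmentco{0}{m_c}\to\fragmentco{0}{n_c}$ such that $\mY$ matches $P\position{\pi_j^c}$ with $T\position{\tau_{\sigma(j)}^c}$ for every $j$. Setting $i:=\sigma(0)$, the bound $\sigma(m_c-1)\le n_c-1$ gives $i\le n_c-m_c$; because both $m_c$ and $n_c$ drop by $1$ exactly at $c>c_{\last}$ (from \cref{def:pitau}), $n_c-m_c=n_0-m_0$, and therefore $i\in\fragment{0}{n_0-m_0}$.

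Two bounds now constrain the shift $\tau_i^0-t-\pi_0^0$. First, the prefix of $\mY$ reaching $(\pi_0^c,\tau_i^c)$ aligns $P\fragmentco{0}{\pi_0^c}$ onto $T\fragmentco{t}{\tau_i^c}$ at cost at most $k$, which yields $|\tau_i^c-t-\pi_0^c|\le k$. Second, \cref{prop:asji} with block indices $(0,i)$ produces an alignment $\mA_S^{0,i}$ of cost at most $w$ matching $P\position{\pi_0^0}$ with $T\position{\tau_i^0}$ and $P\position{\pi_0^c}$ with $T\position{\tau_i^c}$; restricting to $P\fragmentco{\pi_0^0}{\pi_0^c}\onto T\fragmentco{\tau_i^0}{\tau_i^c}$ gives $|(\tau_i^c-\tau_i^0)-(\pi_0^c-\pi_0^0)|\le w$. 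The triangle inequality then delivers $|\tau_i^0-t-\pi_0^0|\le k+w$, directly contradicting the hypothesis $|\tau_i^0-t-\pi_0^0|>w+2k$ for this $i\in\fragment{0}{n_0-m_0}$.

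The main obstacles are the index boundary cases. The existence of $\mA_S^{0,i}$ required by \cref{prop:asji} is subtle only when $i=n_0-1$, which by the bound $i\le n_0-m_0$ forces $m_0=1$; but then $m_c\ge 1$ forces $m_c=1$, hence $c\le c_{\last}$, and the special-case alignment $\mA_S^{0,n_0-1}:P_0\onto T\fragment{\tau_{n_0-1}^0}{\tau_{n_0-1}^{c_{\last}}}$ still matches both $P\position{\pi_0^0}$ with $T\position{\tau_{n_0-1}^0}$ and $P\position{\pi_0^c}$ with $T\position{\tau_{n_0-1}^c}$, so the same bound goes through. Verifying $m_c\ge 1$ is the other delicate step, and as sketched above, it is handled uniformly by combining the cost bounds on $\mXpref,\mXsuf$ with the length restriction $|T|\le 2|P|-2k$ from the enclosing property.
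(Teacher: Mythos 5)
Your proposal is correct, and it reaches the conclusion by a route that genuinely differs from the paper's in one respect. The paper proves the stronger per-vertex claim that $\mY$ aligns $P\position{\pi_0^c}$ with no $T\position{\tau_i^c}$ for any $i\in\fragmentco{0}{n_c}$; this forces a two-case analysis, where shifts $i\in\fragment{0}{n_0-m_0}$ are dispatched exactly by your two length-difference bounds (the cost-$\le k$ prefix of $\mY$ up to $(\pi_0^c,\tau_i^c)$, plus the cost-$\le w$ inter-component alignment coming from \cref{def:funcover}, which is what your invocation of \cref{prop:asji} packages), while the shifts $i\in\fragmentoo{n_0-m_0}{n_0}$ require a separate, more computational argument playing $\mY$ against $\mXsuf$ via the endpoint $(|P|,|T|)\in\mXsuf$. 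You instead negate the weaker statement that some black component survives intact in $\bG_{S\cup\{\mY\}}$: survival forces \emph{every} $P$-character of that component to be matched by $\mY$ into the same component, the monotonicity of $\mY$ then yields the strictly increasing map $\sigma$, and $i=\sigma(0)\le n_c-m_c=n_0-m_0$ (using that $m_c$ and $n_c$ drop at the same threshold $c_{\last}$, which is justified by $|T_{|S}|\equiv_{\bc(\bG_S)}|P_{|S}|$ from \cref{lem:periodicity}). Hence the large-shift case never arises, $\mXsuf$ is not needed, and in fact you only use slack $w+k$ rather than $w+2k$. The price is a slightly more careful counting step (each black component of $\bG_{S\cup\{\mY\}}$ is a union of old components, contains no old red component, and is not a single old black component, hence contains at least two) and the two corner checks ($m_c\ge 1$, and the $i=n_0-1$ restriction in \cref{prop:asji} when $m_0=1$), all of which you handle correctly; what the paper's formulation buys in exchange is the cleaner standalone invariant that $P\position{\pi_0^c}$ is never re-attached to its own component, at the cost of the extra case.
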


\begin{proof}
    It suffices to show that there is no $c \in \fragmentco{0}{\bc(\bG_{S})}$ such that  $\mY$ aligns $P\position{\pi_0^{c}}$ with
    $T\position{\tau_i^c}$ for some $i\in \fragmentco{0}{n_c}$.
    In this way, every black component of $\bG_S$ is merged with another (black or red) component in $\bG_{S \cup \{\mY\}}$,
    and thus the number of black components at least halves, that is, $\bc(\bG_{S'}) \leq \bc(\bG_{S})/2$.

    For a proof by contradiction, suppose that $(\pi_0^c,\tau_i^c)\in \mY$ for some $i\in \fragmentco{0}{n_c}$.
    We consider two cases.

    If $i\in \fragment{0}{n_0-m_0}$, we note that $\mY$ aligns $P\fragmentco{0}{\pi_0^c}$ to $T\fragmentco{t}{\tau_i^c}$. Since $\mY$ has cost at most $k$, we have $|\tau_i^c - t - \pi_0^c| \leq k$.
    At the same time, $\ed(P\fragmentco{\pi_0^0}{\pi_0^c},T\fragmentco{\tau_i^0}{\tau_i^c})\le w$ follows from \cref{def:funcover},
    which means that $|(\pi_0^c-\pi_0^0)-(\tau_i^c-\tau_i^0)|\le w$.
    Combining the two inequalities, we derive $|\tau_i^0 - t - \pi_0^0| \le w+k$.
    This contradicts the assumption that $|\tau_{i}^{0} - t - \pi_0^{0}|>w+2k$ holds for every $i\in \fragment{0}{n_0-m_0}$.

    If $i\in \fragmentoo{n_0-m_0}{n_0}$, then we recall that $S$ contains an alignment $\mX$ mapping $P$ onto a suffix of $T$, and $\mX$ is an alignment of cost at most $k$ that such that $(\pi_0^0,\tau_{n_0-m_0}^0),(\pi_0^c,\tau_{n_0-m_0}^c),(|P|,|T|)\in \mX$.
    As a result,
    \[|(\tau_{n_0-m_0}^c-\tau_{n_0-m_0}^0)-(\pi_0^c-\pi_0^0)| + |(|T|-\tau_{n_0-m_0}^c)-(|P|-\pi_0^c)|\le k.\]
    Since the alignment $\mY$ has cost at most $k$ such that $(0, t), (\pi_0^c,\tau_i^c),(|P|,t')\in \mY$,
    we conclude that
    \[|(\tau_{i}^c-t)-(\pi_0^c)| + |(t'-\tau_{i}^c)-(|P|-\pi_0^c)|\le k.\]
    Combining these two inequalities, we derive
    \[|(\tau_{n_0-m_0}^c-\tau_{n_0-m_0}^0)-(\pi_0^c-\pi_0^0)-(\tau_{i}^c-t)+(\pi_0^c)| + |(|T|-\tau_{n_0-m_0}^c)-(|P|-\pi_0^c)-(t'-\tau_{i}^c)+(|P|-\pi_0^c)|\le 2k,\]
    which simplifies to
    \[|(\tau_{n_0-m_0}^c-\tau_{n_0-m_0}^0)+\pi_0^0-(\tau_{i}^c-t)| + |(|T|-\tau_{n_0-m_0}^c)-(t'-\tau_{i}^c)|\le 2k.\]
    Since $t'\le |T|$ and $\tau_{i}^c \ge \tau_{n_0-m_0}^c$, we have
    \begin{align*}
        |\tau_{n_0-m_0}^0-t-\pi_0^0|
        &= |(\tau_{n_0-m_0}^c-\tau_{n_0-m_0}^0)+\pi_0^0-(\tau_{i}^c-t)+(\tau_{i}^c - \tau_{n_0-m_0}^c)|\\
        &\le |(\tau_{n_0-m_0}^c-\tau_{n_0-m_0}^0)+\pi_0^0-(\tau_{i}^c-t)|+|\tau_{i}^c - \tau_{n_0-m_0}^c| \\
        &= |(\tau_{n_0-m_0}^c-\tau_{n_0-m_0}^0)+\pi_0^0-(\tau_{i}^c-t)|+\tau_{i}^c - \tau_{n_0-m_0}^c \\
        &\le |(\tau_{n_0-m_0}^c-\tau_{n_0-m_0}^0)+\pi_0^0-(\tau_{i}^c-t)|+\tau_{i}^c - \tau_{n_0-m_0}^c+|T|-t' \\
        & = |(\tau_{n_0-m_0}^c-\tau_{n_0-m_0}^0)+\pi_0^0-(\tau_{i}^c-t)|+|\tau_{i}^c - \tau_{n_0-m_0}^c+|T|-t'|\\
        &= |(\tau_{n_0-m_0}^c-\tau_{n_0-m_0}^0)+\pi_0^0-(\tau_{i}^c-t)| + |(|T|-\tau_{n_0-m_0}^c)-(t'-\tau_{i}^c)| \\
        & \le 2k.
    \end{align*}
    This contradicts the assumption that $|\tau_{n_0-m_0}^{0} - t - \pi_0^{0}|>w+2k$.
\end{proof}

\subsection{Recovering all \texorpdfstring{$k$}{k}-edit Occurrences of
\texorpdfstring{$P$}{P} in \texorpdfstring{$T$}{T}}
\label{sec:recocc}

In this (sub)section we drop the assumption that $S$ encloses $T$ and that $\bc(\bG_{S}) > 0$.
Every time we assume that $S$ encloses $T$ or that $\bc(\bG_{S}) > 0$, we  explicitly mention it.

\begin{definition} \label{def:scomplete}
Let $S$ be a set of $k$-edit alignments of $P$ onto fragments of $T$
and let $T\fragmentco{t}{t'}$ be a $k$-error occurrence of $P$ in $T$.
We say that $S$ \emph{captures} $T\fragmentco{t}{t'}$ if $S$ encloses $T$
and exactly one of the two following holds:
\begin{itemize}
    \item $\bc(\bG_{S}) = 0$; or
    \item $\bc(\bG_{S}) > 0$ and $|\tau_i^{0} - t - \pi_0^{0}| \leq w + 3k$
    holds for some $i\in \fragmentco{0}{n_0}$.\qedhere
\end{itemize}
\end{definition}

\begin{theorem} \label{prp:subhash}
Let $S$ be a set of $k$-edit alignments of $P$ onto fragments of $T$
such that $S$ encloses $T$ and $\bc(\bG_{S}) > 0$.
Construct $P^\#$ and $T^\#$ by replacing, for every $c \notin C_S$, every character in the $c$-th black component with a unique character $\#_c$. Then, the two following hold.
\begin{enumerate}[(i)]
    \item For every $a\in \fragmentco{0}{m}$ and $b \in \fragmentco{0}{n}$, we have
        that $P^\#\position{a} = T^\#\position{b}$ implies  $P\position{a} = T\position{b}$.\\
        (No new equalities between characters are created.)
    \label{it:ed_subhash:i}
        \item If $S$ captures the k-error occurrence $T\fragmentco{t}{t'}$, then $\ed(P, T\fragmentco{t}{t'}) \leq \ed(P^\#, T^\#\fragmentco{t}{t'})$.
        Moreover, for all optimal $\mX \mid P \onto T\fragmentco{t}{t'}$ we have $\sE_{P, T}(\mX) = \sE_{P^\#, T^\#}(\mX)$.\\
        (For captured $k$-error occurrences, the edit information and the edit distance are preserved.)
        \label{it:ed_subhash:ii}
\end{enumerate}
\end{theorem}

\begin{proof}

First, for \eqref{it:ed_subhash:i} observe that we substitute with the same sentinel $\#_c$
characters belonging to the $c$-th black connected component, which we know to be uniform, that is
containing the same character.

We proceed with the proof of \eqref{it:ed_subhash:ii}.
Suppose $\mX : P \onto T\fragmentco{t}{t'}$ is an optimal alignment
of cost $k' \leq k$. We begin by showing that $\ed(P, T\fragmentco{t}{t'}) \leq \ed(P^\#, T^\#\fragmentco{t}{t'})$.

Since $S$ encloses $T$ and $\bc(\bG_{S}) > 0$, there exists $i\in \fragmentco{0}{n_0}$ such that $|\tau_i^0-t-\pi_0^0|\le w+3k$,
and \cref{prp:close} applies to $\mX$.
By \cref{prp:close}\eqref{it:close:out}, for every $c \in \fragmentco{0}{\bc(\bG_S)}$ we have that
$\#_c$ appears in $P^{\#}$ at the positions $\Pi_c = \{\pi_j^c \mid j \in \fragmentco{0}{m_c}\}$
and in $T^{\#}\fragmentco{t}{t'}$ at the positions $\Tau_c = \{\tau_j^c \mid j \in \fragmentco{i}{i + m_c}\}$.

Observe that the difference in cost between $\mX : P \onto T\fragmentco{t}{t'}$ and $\mX : P^{\#} \onto T^{\#}\fragmentco{t}{t'}$ is determined by the characters that $\mX$ matches in $P$ and $T$,
but due to the substitutions they do not match anymore in $P^\#$ and $T^\#$. That is,
$\ed(P^\#, T^\#\fragmentco{t}{t'}) - \ed(P, T\fragmentco{t}{t'}) = |E|$, where
\[
    E \coloneqq \big\{(x, y) \mid \text{$\mX$ matches $P\position{x}$ with $T\position{y}$ but not $P^\#\position{x}$ with $T^\#\position{y}$} \ \big\}
    \text{ is such that }
    E \subseteq \big(\textstyle{\bigcup_c \Pi_c \times \bigcup\nolimits_c \Tau_c} \big) \cap \mX.
\]
By \cref{prp:close}\eqref{it:close:in}, we have that $\mX$ aligns $P\position{\pi^c_j}$ to $T\position{\tau^c_{i+j}}$ for every $j \in \fragmentco{0}{m_c}$, meaning that
\[
    E \subseteq \{(\pi^c_j, \tau^c_{i+j}) \mid j \in \fragmentco{0}{m_c}, c \in \fragmentco{0}{\bc(\bG_S)}\}.
\]
Moreover, as $\pi^c_j$ and $\tau^c_{i+j}$, are in the same black connected component, we have $P^{\#}\position{\pi_j^c} = T^{\#}\position{\tau_j^c}$.
Consequently, $E = \emptyset$ and $\ed(P^\#, T^\#\fragmentco{t}{t'}) = \ed(P, T\fragmentco{t}{t'})$.

For the second part of \eqref{it:ed_subhash:ii}, it suffices
to notice that all characters that are substituted in $P,T$
(or equivalently all the hashes in $P^{\#},T^{\#}$)
are always matched by such $\mX$.
Since the characters that are matched by $\mX$
are stored explicitly neither in $\sE_{P, T}(\mX)$ nor in $\sE_{P^\#, T^\#}(\mX)$,
we conclude $\sE_{P, T}(\mX) = \sE_{P^\#, T^\#}(\mX)$.
\end{proof}

As \cref{prp:subhash}\eqref{it:ed_subhash:i} implies that the edit distance between $P$ and $T\fragmentco{t}{t'}$
cannot be larger than the edit distance between $P^\#$ and $T^\#\fragmentco{t}{t'}$,
we obtain the following corollary.

\begin{corollary} \label{cor:subhash}
Let $S$ be a set of $k$-edit alignments of $P$ onto fragments of $T$
such that $S$ encloses $T$ and $\bc(\bG_{S}) > 0$.
Construct $P^\#$ and $T^\#$ as in \cref{prp:subhash}.
If $S$ captures all $k$-error occurrences, then:
\begin{enumerate}[(i)]
    \item $\ed(P, T\fragmentco{t}{t'}) = \ed(P^\#, T^\#\fragmentco{t}{t'})$ and $\sE_{P, T}(\mX) = \sE_{P^\#, T^\#}(\mX)$ for all optimal $\mX \mid P \onto T\fragmentco{t}{t'}$ of cost at most $k$.
    \item $\ed(P, T\fragmentco{t}{t'}) \leq \ed(P^\#, T^\#\fragmentco{t}{t'})$ for all $t,t' \in \fragment{0}{n}$. \lipicsEnd
\end{enumerate}
\end{corollary}

\subsection{The Communication Complexity of Edit Distance}

\ccompl

\begin{proof}
    We first prove the second part of the theorem, that is, we describe a protocol allowing Alice to encode using $\Oh(n/m\cdot k\log m \log(m|\Sigma|))$ bits the edit information for all alignments in
    \[S' \coloneqq \{\mX : P \onto T\fragmentco{t}{t'} \mid t,t' \in \fragment{0}{n}, \edal{\mX}(P, T\fragmentco{t}{t'}) = \ed(P, T\fragmentco{t}{t'}) \leq k\}.\]

    We may assume $k \leq m/4$; otherwise, it suffices to send $P$ and $T$ to Bob.
    We want to argue that we may further restrict ourselves to the case where $S'$ encloses $T$.

    \begin{claim}
        If such a protocol
        using $\Oh(k\log m\log(m|\Sigma|))$ bits exists in the case where $S'$ encloses $T$,
        then such a protocol also exists for the general case.
    \end{claim}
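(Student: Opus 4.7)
The plan is to reduce the general case to the enclosing one by covering $T$ with $\Oh(n/m)$ overlapping windows of length $\Oh(m)$ and running the assumed enclosing-case protocol on each. We may assume $k \le m/4$ because, otherwise, Alice can send $P$ and $T$ verbatim within the target budget.

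I would partition $\fragment{0}{n}$ into consecutive blocks $B_0,B_1,\ldots$ of length $\floor{m/4}$, so that there are $\Oh(n/m)$ blocks. For each block $B_j$, if $B_j \cap \OccE_k(P,T) = \emptyset$, Alice marks the block as empty with a single bit and moves on. Otherwise, let $L_j := \min(B_j \cap \OccE_k(P,T))$, set $R_j$ to be the symmetric maximum, and let $R_j'$ be the maximum right endpoint $t'$ over all optimal $k$-error alignments $\mX : P \onto T\fragmentco{t}{t'}$ with $t \in B_j \cap \OccE_k(P,T)$. Define the window $T_j := T\fragmentco{L_j}{R_j'}$. Because optimal $k$-error alignments have length at most $m+k$, we obtain
\[
|T_j| \;=\; R_j' - L_j \;\le\; (R_j + m + k) - L_j \;\le\; |B_j| + m + k \;\le\; \tfrac{3m}{2} \;\le\; 2m - 2k.
\]
Furthermore, by construction the set $S_j'$ of optimal $k$-error alignments of $P$ onto fragments of $T_j$ contains both a $\mXpref$ aligning $P$ to a prefix of $T_j$ (witnessing $L_j$) and a $\mXsuf$ aligning $P$ to a suffix of $T_j$ (witnessing $R_j'$), so $S_j'$ encloses $T_j$.

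Alice then invokes the assumed enclosing-case protocol on each non-empty $(P, T_j)$, sending $\Oh(k \log m \log(m|\Sigma|))$ bits per window, preceded by the offset $L_j$ in $\Oh(\log n)$ bits (which is absorbed into the main term after handling the trivial cases $n \le 2m$ and $k=0$ separately). Summing over the $\Oh(n/m)$ windows yields the desired total of $\Oh(n/m \cdot k \log m \log(m|\Sigma|))$ bits. Bob decodes each window to recover $S_j'$, translates every alignment from $T_j$-coordinates to $T$-coordinates via $L_j$, and outputs the union. Every element of $S'$ has its starting position $t$ in a unique block $B_j$, and by the choice of $R_j'$ the corresponding alignment is entirely contained in $T_j$; conversely, every decoded alignment obviously lies in $S'$. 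The edit information is preserved, because an alignment of $P$ onto a fragment of $T_j$ is literally the same object as the alignment of $P$ onto the corresponding fragment of $T$.

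The main obstacle is the bookkeeping behind $R_j'$: distinct optimal alignments starting at the same (or different) positions in $B_j$ may end at different right endpoints, and the window $T_j$ must contain all of them simultaneously while still respecting the enclosing-case length budget $2|P|-2k$. Taking $R_j'$ to be the maximum over \emph{all} optimal right endpoints (rather than just one per starting position) handles this cleanly, and the calculation above shows that it never blows up the window past $\tfrac{3m}{2}$, so the enclosure hypothesis applies to every $(P,T_j)$ and the reduction closes.
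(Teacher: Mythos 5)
Your reduction is essentially the paper's: cover $T$ with $\Oh(n/m)$ windows of length at most $2m-2k$, invoke the assumed enclosing-case protocol on each window, and take the union; your anchoring of the window at $L_j$ and extension to the maximal right endpoint $R_j'$ is a clean way to get the length bound $3m/2 \le 2m-2k$. However, there is one concrete step that can fail: the assertion ``so $S_j'$ encloses $T_j$.'' Enclosure, as defined in the paper, requires \emph{two distinct} alignments $\mXpref,\mXsuf\in S_j'$ with $(0,0)\in\mXpref$ and $(|P|,|T_j|)\in\mXsuf$. In your construction the prefix witness (an optimal alignment starting at $L_j$) and the suffix witness (an optimal alignment ending at $R_j'$) may be one and the same alignment --- most plainly when the window admits exactly one optimal alignment of cost at most $k$, e.g.\ a block containing a single $k$-error occurrence with a unique optimal alignment. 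In that case the hypothesis of the claim cannot be invoked for $(P,T_j)$, and your argument gives nothing for that window. The paper routes around exactly this degeneracy by treating the single-alignment window separately: Alice just sends the edit information of that lone alignment directly, using $\Oh(k\log(m|\Sigma|))$ bits, which fits the budget. Adding this case (or otherwise guaranteeing two distinct prefix/suffix witnesses) closes the gap; without it the step is unjustified.

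A smaller point: sending the absolute offset $L_j$ with $\Oh(\log n)$ bits per window is not automatically ``absorbed into the main term'' --- for $n$ vastly larger than $m$, the total $\Oh(n/m\cdot\log n)$ can exceed $\Oh(n/m\cdot k\log m\log(m|\Sigma|))$, and restricting to $n\le 2m$ does not help in the general case. Since Bob knows the block boundaries, it suffices to transmit the relative offset $L_j-b_j\in\fragmentco{0}{\floor{m/4}}$ in $\Oh(\log m)$ bits, which is within budget; with that adjustment (and the single-alignment case above) your proof matches the paper's in substance.
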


    \begin{claimproof}
        Divide $T$ into $\Oh(n/m)$ contiguous blocks of length $m-2k\ge m/2$ (with the last block potentially being shorter).
        For the $i$-th block starting at position $b_i \in \fragmentco{0}{|T|}$, define $B_i = \fragment{b_i}{b_i + 2m-2k} \cap \fragment{0}{|T|}$ and $S_i = \{\mX : P \onto T\fragmentco{t}{t'} \mid t,t' \in B_i, \edal{\mX}(P, T\fragmentco{t}{t'}) = \ed(P, T\fragmentco{t}{t'}) \leq k\}$.
        Since $S' = \bigcup_{i} S_i$, it suffices to show that there exists such a protocol using $\Oh(k\log m\log(m|\Sigma|))$ bits for the case where instead of $T$ we consider an arbitrary $B_i$.

        Therefore, let $i$ be arbitrary.
        If $S_i = \emptyset$, then Alice does not need to send anything.
        On the other hand, if $|S_i| = 1$, then Alice directly sends to Bob the edit information of the unique alignment contained in $S_i$ using $\Oh(k\log(m|\Sigma|))$ bits.
        Lastly, if $|S_i| \geq 2$, Alice uses the protocol from the assumption of the claim on $S_i$ and $T\fragmentco{\ell_i}{r_i}$, where $\ell_i = \min \{t \mid \mX : P \onto T\fragmentco{t}{t'} \in S_i\}$ and $r_i = \max \{t' \mid \mX : P \onto T\fragmentco{t}{t'} \in S_i\}$.
        This is indeed possible, as $S_i$ encloses $T\fragmentco{\ell_i}{r_i}$ because $|B_i|-1 \leq 2m - 2k$ and $|S_i| \geq 2$.
    \end{claimproof}

    Henceforth, suppose $S'$ encloses $T$.
    We construct a subset $S \subseteq S'$ that
    captures all $k$-error occurrences as follows.
    First, we add to $S$ the alignments $\mXpref, \mXsuf \in S'$
    such that $(0, 0) \in \mXpref$ and $(|P|, |T|) \in \mXsuf$.
    This ensures that $S$ encloses $T$.
    Next, we iteratively add alignments to $S$ according to the following rules:
    \begin{itemize}
        \item if $S$ captures all $k$-error occurrences,
        then we stop;
        \item otherwise, $\bc(\bG_S) > 0$, and we add to $S$ an arbitrary alignment $\mY : P \onto T\fragmentco{t}{t'}$ of cost at most $k$ such that $|\tau_{i}^{0} - t - \pi_0^{0}|>w+2k$ holds for every $i\in \fragment{0}{n_0-m_0}$
        ($w = \Oh(k|S|)$ denotes the total weight of the weight function from \cref{prp:encode_simple_funcover}).
    \end{itemize}
    Note that we never apply the second rule more than $\Oh(\log n)$ times:
    from \cref{lem:periodhalves} follows that every time we add $\mY$ to $S$,
    $\bc(\bG_{S \cup \{\mY\}}) \leq \bc(\bG_{S})/2$, and if $\bc(\bG_{S}) = 0$,
    then the first rule applies.
    Consequently, $|S| = \Oh(\log n)$.

    Now, if $\bc(\bG_{S}) = 0$, then the information that Alice sends to Bob is
    $\{\sE_{P,T}(\mX) : \mX : P \onto T \in \mX\}$.
    As already argued before, this allows Bob to fully reconstruct $P$ and $T$.

    Otherwise, if $\bc(\bG_{S}) > 0$, then Alice additionally sends the encoding of
    $\{(c, T\position{\tau_0^c}):c\in C_S\}$ to Bob, where $C_S$ is the
    black cover from \cref{prp:encode_simple_funcover}.
    This allows Bob to construct the strings $P^\#$ and $T^\#$ from \cref{prp:subhash}.
    \cref{cor:subhash} guarantees that by computing the answer over $P^\#$ and $T^\#$
    instead of $P$ and $T$, Bob still obtains the information required by the protocol.

    Note, $\{\sE_{P,T}(\mX) : \mX : P \onto T\fragmentco{t}{t'} \in S\}$ and $\{(c, T\position{\tau_0^c}):c\in C_S\}$
    can be stored using $\mathcal{O}(w + k|S|) = \mathcal{O}(k\log n)$ space.
    Since a space unit consists of either a position of $P$/$T$,
    or a character from $\Sigma$,
    we conclude that the second part of the theorem holds.

    Lastly, we prove the first part of the theorem.
    If there is no requirement to send the edit information for every $\mX \in S$,
    we encode the characters in $P$ to an alphabet $\Sigma'$ with a size of at most $m+1$.
    This new alphabet includes an additional special character representing
    all characters in $T$ that are not present in $P$.
    By mapping all characters of $P$ to $T$ to the corresponding characters of $\Sigma'$,
    $S'$ and all alignments contained in $S'$ remain unchanged.
    Thereby, we reduce the number of bits needed by the protocol to $\Oh(n/m \cdot k\log^2 m)$.
\end{proof}

\cclb
\begin{proof}
    Let $p = \lfloor{n/(2m-2)}\rfloor$ and $T=S_0\cdot S_0\cdot S_1\cdot S_1 \cdots S_{p-1}\cdot S_{p-1}\cdot \zero^{n-p(2m-2)}$, where $S_0,\ldots,S_{p-1}\in \{\mathtt{0},\mathtt{1}\}^{m-1}$ are strings that contain exactly $k$ copies of $\one$ and $m-1-k$ copies of $\zero$.
    We shall prove that, for every $q\in \fragmentco{0}{p}$ and $i\in \fragmentco{0}{m-1}$, we have $S_q\position{i}=\mathtt{0}$ if and only if $q(2m-2)+i\in \OccE_{k}(P,T)$. Consequently, $T$ can be recovered from $\OccE_{k}(P,T)$, and the theorem follows because the number of possibilities for $T$ is $\binom{m-1}{k}^{p} = 2^{\Omega(n/m\cdot k\log(m/k))}$.

        If $S_q\position{i}=\mathtt{0}$, then $T\fragmentco{q(2m-2)+i}{q(2m-2)+i+m}=S_q\fragmentco{i}{m-1}\cdot S_q\fragment{0}{i}$ contains exactly $k$ copies of $\one$ and $m-k$ copies of $\zero$, so $q(2m-2)+i\in \OccE_{k}(P,T)$.

        Similarly, if $S_q\position{i}=\mathtt{1}$, then $T\fragmentco{q(2m-2)+i}{q(2m-2)+i+m}$ contains exactly $k+1$ copies of $\one$ and $m-k-1$ copies of $\zero$.
        For $s\ge 0$, every alignment $P\onto T\fragmentco{q(2m-2)+i}{q(2m-2)+i+m+s}$ makes at least $k+1$ insertions or substitutions (to account for $\mathtt{1}$s).
        Every alignment $P\onto T\fragmentco{q(2m-2)+i}{q(2m-2)+i+m-s}$, on the other hand, makes at least $s$ deletions (to account for the length difference) and at least $k+1-s$  insertions or substitutions (to account for $\mathtt{1}$s).
        The overall number of edits is at least $k+1$ in either case, so $q(2m-2)+i\notin \OccE_{k}(P,T)$.
\end{proof}

\section{Quantum Algorithms on Strings}

We assume the input string $S \in \Sigma^n$
can be accessed in a quantum query model \cite{ambainis2004quantum,DBLP:journals/tcs/BuhrmanW02}:
there is an input oracle $O_S$ that performs the unitary mapping
$O_s \colon \lvert i, b \rangle  \mapsto \lvert i, b \oplus S[i] \rangle$ for any index $i \in \fragmentco{0}{n}$ and any $b \in \Sigma$.

The \emph{query complexity} of a quantum algorithm (with $2/3$ success probability)
is the number of queries it makes to the input oracles.
The \emph{time complexity} of the quantum algorithm additionally
counts the elementary gates \cite{PhysRevA.52.3457}
for implementing the unitary operators that are independent of the input.
Similar to prior works \cite{GS22,AJ22,ambainis2004quantum}, we assume quantum random access quantum memory.

We say an algorithm succeeds \emph{with high probability (w.h.p)},
if the success probability can be made at least $1 - 1/n^c$
for any desired constant $c > 1$.
A bounded-error algorithm can be boosted to succeed w.h.p.\ by $O(\log n)$ repetitions.
In this paper, we do not optimize sub-polynomial factors
of the quantum query complexity (and time complexity) of our algorithms.

For our quantum algorithms, we rely on the following primitive quantum operation.

\begin{theoremq}[\GS (Amplitude amplification) \cite{DBLP:conf/stoc/Grover96,brassard2002quantum}]
    \label{prp:grover}
    Let $f\colon \fragmentco{0}{n} \to \{0,1\}$ denote a function, where $f(i)$ for each $i\in
    \fragmentco{0}{n}$ can be evaluated in time $T$.
    There is a quantum algorithm that, with high probability and in time $\tilde O(\sqrt{n}\cdot T)$,
    finds an $x\in f^{-1}(1)$ or reports that $f^{-1}(1)$ is empty.
\end{theoremq}

In designing lower bounds for our quantum algorithm, we rely on the following framework.

\begin{theoremq}[Adversary method {\cite{Ambainis02}}]\label{thm:adv}
    Let $f(x_0 , ..., x_{n-1})$ be a function of $n$ binary variables and $X,Y\subseteq \{0,1\}^n$
    be two sets of inputs such that $f(x) \ne f(y)$ if $x \in X$ and $y \in Y$. Let $R \subseteq X\times Y$  be such that:
    \begin{enumerate}
        \item For every $x \in X$, there exist at least $m$ different $y \in Y$ such that $(x, y) \in R$.
        \item For every $y \in Y$, there exist at least $m'$ different $x \in X$ such that $(x, y) \in R$.
        \item For every $x \in X$ and $i \in \fragmentco{0}{n}$, there is at most one $y \in  Y$ such
        that $(x, y) \in R$ and $x_i \ne y_i$.
        \item For every $y \in X$ and $i \in \fragmentco{0}{n}$, there is at most one $x \in  X$ such
        that $(x, y) \in R$ and $x_i \ne y_i$.
    \end{enumerate}
    Then, any quantum algorithm computing $f$ uses $\Omega\left(\sqrt{mm'}\right)$ queries
\end{theoremq}

Throughout this paper,
we build upon several existing quantum algorithms for string processing.
We summarize below the key techniques and results that are relevant to our work.

\subsection{Essential Quantum Algorithms on Strings}

\begin{theoremq}[Quantum Exact Pattern Matching~\cite{HV03}]\label{thm:quantum_matching}
    There is an $\Ohtilde(\sqrt{n})$-time quantum algorithm that, given a pattern $P$ of length $m$ and a text $T$ of length~$n\ge m$,
    finds an exact occurrence of $P$ in $T$ (or reports that \(P\) does not occur in \(T\)).
\end{theoremq}

Kociumaka, Radoszewski, Rytter, and Waleń~\cite{ipm} established that the computation of $\per(S)$ can be reduced to $\Oh(\log |S|)$ instances of exact pattern matching and longest common prefix operations involving substrings of $S$.
Consequently, the following corollary holds.

\begin{corollaryq}[Finding Period]\label{cor:quantum_per}
There is an $\Ohtilde(\sqrt{n})$-time quantum algorithm that, given a string $S$ of length~$n$, computes $\per(S)$.
\end{corollaryq}

We modify the algorithm of~\cite{HV03} such that it outputs the whole set $\Occ(P, T)$.

\begin{lemma}\label{thm:quantum_matching_all}
    There is an $\Ohtilde(\sqrt{\mathsf{occ} \cdot n})$-time quantum algorithm that, given a pattern $P$ of length $m$ and a text $T$ of length $n\ge m$, outputs $\Occ(P, T)$, where $\mathsf{occ} = |\Occ(P, T)|$.
\end{lemma}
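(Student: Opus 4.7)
The plan is to combine a decomposition of $T$ into overlapping blocks of length $\Theta(m)$ with a Grover-based "find-all-marked-elements" routine built on top of the single-occurrence algorithm of~\cref{thm:quantum_matching}. First, I would partition $T$ into $K = \Oh(n/m)$ overlapping blocks $B_1,\ldots,B_K$ defined by $B_j = T\fragmentco{(j-1)m}{(j-1)m+2m-1}$ (with the last block clipped so as not to exceed~$T$). Since $|B_j|\le 2m-1$ and $|P|=m$, every occurrence of~$P$ in~$T$ starts at a position in the first $m$ indices of exactly one $B_j$ and is fully contained therein; hence $\Occ(P,T)$ can be recovered by enumerating $\Occ(P,B_j)$ for each $j\in \fragment{1}{K}$ and translating indices back.

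The first stage identifies $M \coloneqq \{j : \Occ(P,B_j)\ne \emptyset\}$ via a standard Grover-based "find-all-marked" routine: iteratively apply amplitude amplification (\cref{prp:grover}) restricted to indices not yet found, combined with exponential search to handle the unknown count $|M|$. Each marking test is a call to~\cref{thm:quantum_matching} on $B_j$, costing $\Ohtilde(\sqrt{m})$ queries after boosting to success probability $1-1/\mathrm{poly}(n)$. Summing the geometric-like series $\sum_{i=1}^{|M|}\sqrt{K/i}=\Ohtilde(\sqrt{|M|\cdot K})$ and multiplying by the per-test cost yields $\Ohtilde\bigl(\sqrt{|M|\cdot K}\cdot \sqrt{m}\bigr)=\Ohtilde\bigl(\sqrt{|M|\cdot n}\bigr)\le \Ohtilde\bigl(\sqrt{\mathsf{occ}\cdot n}\bigr)$, using $|M|\le \min(\mathsf{occ},K)$.

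The second stage computes $\Occ(P,B_j)$ for each $j\in M$ using the following structural observation. For any three occurrences $p_1<p_2<p_3$ of~$P$ in a window of length at most $2m-1$, the shifts $p_2-p_1$ and $p_3-p_2$ are periods of~$P$ and satisfy $(p_2-p_1)+(p_3-p_2)=p_3-p_1\le m-1 < m + \gcd(p_2-p_1,p_3-p_2)$, so the Periodicity Lemma~(\cref{lem:perlemma}) makes $\gcd(p_2-p_1,p_3-p_2)$ a period of~$P$. Iterating this argument and noting that if $p_1,p_2$ were consecutive with $p_2-p_1>\per(P)$ then $T\fragmentco{p_1}{p_1+(p_2-p_1)+m}$ would be $\per(P)$-periodic (giving an intermediate occurrence at $p_1+\per(P)$) shows that $\Occ(P,B_j)$ is either empty, a singleton, a two-element set, or an arithmetic progression of common difference $\per(P)$. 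Hence it suffices to find the leftmost occurrence $\ell_j$, the second-leftmost (revealing the difference $d_j$ when it exists), and the rightmost occurrence $r_j$; each is obtained by $\Oh(1)$ calls to~\cref{thm:quantum_matching} applied to $B_j$ or its reverse after shifting out a known prefix, for a total of $\Ohtilde(\sqrt{m})$ per block. Using $|M|\le K = \Oh(n/m)$, the aggregate cost of this stage is $|M|\cdot \Ohtilde(\sqrt{m})\le \Ohtilde(\sqrt{|M|\cdot n})\le \Ohtilde(\sqrt{\mathsf{occ}\cdot n})$, and writing out the enumerated occurrences adds $\Oh(\mathsf{occ})\le \Oh(\sqrt{\mathsf{occ}\cdot n})$ because $\mathsf{occ}\le n$.

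The main obstacle I anticipate is making the structural claim fully rigorous---in particular, showing that $T\fragmentco{p_1}{p_1+d+m}$ is $\per(P)$-periodic whenever $P$ occurs at $p_1$ and $p_1+d$ with $d\le m-1$ and $\per(P)\mid d$, so that the putative intermediate occurrence at $p_1+\per(P)$ really exists. Secondary and routine technicalities include amplifying each bounded-error subroutine to succeed with high probability, handling the (possibly short) final block uniformly, and translating local to global positions; these contribute only poly-logarithmic overhead absorbed by the $\Ohtilde$ notation.
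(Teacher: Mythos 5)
Your proposal is correct, but it takes a genuinely different route from the paper. The paper proves the lemma by reporting occurrences left to right: having found $t_i$, it locates $t_{i+1}$ with \cref{thm:quantum_matching} wrapped in an exponential-plus-binary search over the window starting at $t_i+1$, and charges $\Ohtilde(\sqrt{t_{i+1}-t_i})$ per step, summing to $\Ohtilde(\sqrt{\mathsf{occ}\cdot n})$ by Cauchy--Schwarz; no periodicity argument is needed. You instead tile $T$ with $\Oh(n/m)$ overlapping length-$(2m-1)$ blocks, use a Grover-with-removal search to find the $|M|\le\min(\mathsf{occ},n/m)$ marked blocks in $\Ohtilde(\sqrt{|M|\cdot n})$ time, and then exploit the fact that occurrences inside a window shorter than $2m$ form an arithmetic progression, so each marked block is settled by finding its leftmost, second-leftmost, and rightmost occurrences at cost $\Ohtilde(\sqrt m)$. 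The paper's scheme is shorter and avoids any structural lemma; yours pays the structural argument but caps the per-block cost at $\Ohtilde(\sqrt m)$ regardless of how many occurrences the block contains, which makes the dense regime (consecutive occurrences much closer than $m$ apart) completely transparent, whereas the sequential scheme's per-step cost is really $\Ohtilde(\sqrt{t_{i+1}-t_i+m})$ since each window must contain a full length-$m$ occurrence. Two small points: the "main obstacle" you flag is already available in the paper as \cref{fct:periodicity} --- applied to $T'=B_j\fragmentco{\ell_j}{r_j+m}$ (which has $P$ as a prefix and a suffix and length at most $2m-1$) it shows that $g=\gcd(\Occ(P,T'))$ is a period of $T'$, hence every multiple of $g$ up to $r_j-\ell_j$ is an occurrence and the second-leftmost occurrence sits at $\ell_j+g$, so there is no need to argue separately that the common difference equals $\per(P)$; and finding the leftmost/rightmost occurrence in a block needs $\Oh(\log m)$ calls to \cref{thm:quantum_matching} via binary search rather than $\Oh(1)$, a discrepancy absorbed by the $\Ohtilde$ notation.
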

\begin{proof}
    Let $t_1 < t_2 < \cdots < t_{\mathsf{occ}}$ denote the positions that are contained in $\Occ(P, T)$.

    Suppose that we have just determined $t_i$ for some $i\in \fragmentco{0}{\mathsf{occ}}$ (if $i = 0$, then we set $t_0 = -1$).
    To find $t_{i+1}$ we use \cref{thm:quantum_matching} combined with exponential search.
    More specifically, at the $j$-th jump of the exponential search, we use \cref{thm:quantum_matching} to check  whether $\Occ(P, T\fragmentco{t_{i}+1}{\min(\sigma_i + 2^{j - 1},|T|)}) \neq \emptyset$.
    Once we have found the first $j$ for which this set is non-empty, we use \cref{thm:quantum_matching} combined with binary search to find the exact position of $t_{i+1}$.
    By doing so, we use the algorithm of \cref{thm:quantum_matching} as subroutine at most $\Oh(\log (t_{i+1} - t_{i})) = \Oh(\log n)$ times, each time requiring at most $\Ohtilde(\sqrt{t_{i+1} - t_i})$ time.

    As a consequence, finding all $t_1, \ldots, t_{\mathsf{occ}}$  requires $\Ohtilde(\sum_{i=0}^{\mathsf{occ}-1} \sqrt{t_{i+1} - t_i})$ time.
    By the Cauchy--Schwarz inequality, $\sum_{i=0}^{\mathsf{occ}-1} \left(1\cdot \sqrt{t_{i+1} - t_i}\right) \leq \sqrt{\mathsf{occ}} \cdot \sqrt{t_{\mathsf{occ}}-t_0} \le \sqrt{\mathsf{occ} \cdot n}$.
    Therefore, the total time complexity becomes $\Ohtilde(\sqrt{\mathsf{occ} \cdot n})$.
\end{proof}

Via \cref{thm:quantum_matching} we may also check for rotations of primitive strings.

\begin{lemma}\label{claim:hd_compstructure_rot}
    Given a string \( X \) and a primitive string \( Q \) of the same length $n$, we can
    determine in \(\Ohtilde(\sqrt{n})\) quantum time whether there exists an index \( a \in
    \fragmentco{0}{n} \) such that \(\rot^a(Q) = X\). If such an index exists, we
    can find the unique \( a \) that satisfies this condition.
\end{lemma}
\begin{proof}
    It suffices to use \cref{thm:quantum_matching} to check whether \( Q \) occurs in \( XX\fragmentco{0}{n-1} \).
    If an occurrence is found, return its starting position.
    Since \( Q \) is primitive, the occurrence must be unique.
\end{proof}

Another useful tool is the following quantum algorithm to compute the edit distance of two strings.

\begin{theoremq}[Quantum Bounded Edit Distance {\cite[Theorem 1.1]{GJKT24}}]\label{prp:quantumed}
    There is a quantum algorithm that, given quantum oracle access to strings $X,Y$ of length at most $n$, computes their edit distance $k \coloneqq \ed(X,Y)$, along with a sequence of $k$ edits transforming $X$ into $Y$.
    The algorithm has a query complexity of $\Ohtilde(\sqrt{n + kn})$ and a time complexity of $\Ohtilde(\sqrt{n + kn} + k^2)$.
\end{theoremq}

The algorithm of \cref{prp:quantumed} returns lists edits from left to right, with each edit specifying the positions and the characters involved (one for insertions and deletions, two for substitutions).
In the following corollary, we show that this representation can be easily transformed into the edit information $\sE_{X, Y}(\mA)$ of some optimal alignment $\mA : X \onto Y$.

\begin{corollary}\label{prp:quantumed_w_info}
    There is a quantum algorithm that, given quantum oracle access to strings $X,Y$ of length at most $n$,
    computes their edit distance $k \coloneqq \ed(X,Y)$ and the edit information $\sE_{X, Y}(\mA)$ of some optimal alignment $\mA : X \onto Y$.
    The algorithm has the same query and time complexity as in \cref{prp:quantumed}.
\end{corollary}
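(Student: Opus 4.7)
The plan is to invoke \cref{prp:quantumed} as a black box to obtain $k = \ed(X,Y)$ together with a left-to-right list of $k$ edits transforming $X$ into $Y$, and then transform this list into the edit information $\sE_{X,Y}(\mA)$ of the induced optimal alignment $\mA$ via a purely combinatorial post-processing step that issues no further oracle queries.

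The observation underlying this conversion is that a left-to-right list of edits already implicitly specifies the alignment $\mA = (x_i,y_i)_{i=0}^m$: the non-trivial steps (insertions, deletions, substitutions) are precisely the listed edits (in the same order), while the remaining steps are all matches and can be reconstructed from the running position in $X$ and $Y$. Concretely, I would sweep through the output list while maintaining a pair of counters $(x,y)$, initialized to $(0,0)$ and advanced as follows: on a deletion of a character $c$ (at position $x$ of $X$), emit the 4-tuple $(x, c, y, \varepsilon)$ and increment $x$; on an insertion of $c$ (at position $y$ of $Y$), emit $(x, \varepsilon, y, c)$ and increment $y$; on a substitution of $c$ by $c'$, emit $(x, c, y, c')$ and increment both $x$ and $y$. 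Between consecutive edits the counters are fast-forwarded by the corresponding run of matches, which contributes nothing to $\sE_{X,Y}(\mA)$ since those steps satisfy $\mathsf{cx}_i = \mathsf{cy}_i$. The characters $c,c'$ needed in the 4-tuples are supplied by \cref{prp:quantumed}'s output, so no query to $O_X$ or $O_Y$ is required in this phase.

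Correctness is then an immediate unrolling of \cref{def:edinfo}: the emitted 4-tuples are exactly those indices $i \in \fragmentco{0}{m}$ with $\mathsf{cx}_i \ne \mathsf{cy}_i$, and the values of $(x_i, \mathsf{cx}_i, y_i, \mathsf{cy}_i)$ match the definition in each of the three cases. For the complexity, the post-processing takes $O(k)$ classical time and zero quantum queries, so both the query complexity $\Ohtilde(\sqrt{n + kn})$ and the time complexity $\Ohtilde(\sqrt{n + kn} + k^2)$ of \cref{prp:quantumed} are preserved, matching the statement.

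There is no real obstacle here; the only point requiring care is confirming that \cref{prp:quantumed} indeed returns enough data (the type, the character(s), and enough positional information) per edit to recover $(x_i,y_i)$ in a single left-to-right pass, but this is guaranteed by the combination of the stated left-to-right ordering and the running-offset bookkeeping described above.
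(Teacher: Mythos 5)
Your proposal is correct and takes essentially the same approach as the paper: both convert the left-to-right edit list returned by \cref{prp:quantumed} into $\sE_{X,Y}(\mA)$ via a single $\Oh(k)$-time pass with running-position bookkeeping, issuing no further oracle queries. The only cosmetic difference is that the paper processes the edits right-to-left starting from $(|X|,|Y|)$, whereas you sweep left-to-right from $(0,0)$; both directions work given that each reported edit carries its character(s) and position information.
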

\begin{proof}
    We use \cref{prp:quantumed} to retrieve the sequence of $k$ edits, and then we construct $\sE \coloneqq \sE_{X, Y}(\mA)$ as follows.
    We iterate through the returned list of edits in the reverse (right-to-left) order and add elements to $\sE$ according to the following rules, maintaining a pair $(x',y')$ initialized as $(|X|,|Y|)$.
    \begin{itemize}
        \item If the current edit substitutes $X\position{x}$ with $Y\position{y}$, we set $(x',y')\coloneqq (x,y)$ and add $(x', X\position{x}, y', Y\position{y})$ to $\sE$.
        \item If the current edit inserts $Y\position{y}$, we set $(x',y')\coloneqq (x'+y-y'+1,y)$ and add $(x', \epsilon, y', Y\position{y})$ to $\sE$.
        \item Lastly, if the current edit deletes $X\position{x}$, we set $(x',y')\coloneqq (x,y'+x-x'+1)$ and add $(x', X\position{x}, y', \epsilon)$ to $\sE$.
    \end{itemize}
    Observe that we can reconstruct the underlying alignment $\mA$ by a similar iterative scheme used to construct $\sE$:
    just before setting $(x',y')$ to $(x'',y'')$, we add to $\mA$ all pairs of the form $(x'-i,y'-i)$ with $i\in \fragmentco{0}{\min(x'-x'',y'-y'')}$.
    Moreover, after processing the entire edit sequence, we add to $\mA$ all the pairs of the form $(x'-i,y'-i)$ with $i\in \fragment{0}{x'}$; we then have $x'=y'$.
    It is easy to verify at this point that the following construction of $\mA$ leads to a correctly defined alignment of cost $k$ such that $\sE = \sE_{X, Y}(\mA)$.

    Lastly, note that the construction of $\sE$ requires at most $\Oh(k)$ additional time and no queries.
\end{proof}

From the work that established \cref{prp:quantumed},
we also obtain an algorithm for computing the LZ factorization of a string.

\begin{theoremq}[Quantum LZ77 factorization {\cite[Theorem 1.2]{GJKT24}}]\label{prp:quantumlz}
    There is a quantum algorithm that, given quantum oracle access to a string $X$ of length $n$ and an integer threshold $z\ge 1$, decides whether $|\LZ(X)|\le z$ and, if so, computes the LZ factorization $\LZ(X)$.
    The algorithm uses $\Ohtilde(\sqrt{zn})$ query and time complexity.
\end{theoremq}

\begin{corollary}\label{cor:testselfed}
    There is a quantum algorithm that, given quantum oracle access to a string $X$ of length $n$ and an integer threshold $k\ge 1$, decides whether $\selfed(X)\le k$ using $\Ohtilde(\sqrt{kn})$  queries and $\Ohtilde(\sqrt{kn}+k^2)$ time.
\end{corollary}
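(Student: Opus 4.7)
The plan is to combine the quantum LZ-factorization algorithm with a classical Landau--Vishkin-style dynamic programming. First, invoke \cref{prp:quantumlz} with threshold $z = 2k$. By \cref{prp:lz_selfed}, we have $|\LZ(X)| \le 2\selfed(X)$, so if the test rejects (i.e., $|\LZ(X)| > 2k$), we conclude $\selfed(X) > k$ and output NO. Otherwise, we obtain the full factorization $\LZ(X)$ with at most $2k$ phrases. This phase uses $\Ohtilde(\sqrt{kn})$ queries and time.

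In the remaining case, the string $X$ is implicitly accessible via $\LZ(X)$, and we build, classically and without further queries, a data structure answering longest common extension (LCE) queries on $X$ in polylogarithmic time. Following the techniques underlying~\cite{GJKT24}, such a structure can be built from $\LZ(X)$ in $\Ohtilde(k)$ time, since $|\LZ(X)| = \Oh(k)$.

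We then compute $\selfed(X)$ via a Landau--Vishkin-style DP adapted to self-alignments. By the reflection argument from the proof of \cref{prp:lz_selfed}, we may restrict to self-alignments whose points $(x,y)$ satisfy $x \ge y$; such an alignment touches the main diagonal $d := x - y = 0$ only at isolated points, always entering via an insertion and leaving via a deletion, since a matching step on $d = 0$ would align $X[x]$ with itself. For each diagonal $d \in \fragment{0}{k}$ and each cost $e \in \fragment{0}{k}$ with $d \le e$, we compute the furthest $x$ reachable on diagonal $d$ by a self-alignment of cost~$e$; extension along each $d \ge 1$ uses one LCE query, while transitions between adjacent diagonals handle insertions, deletions, and substitutions in the usual way. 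Crucially, along $d = 0$ we forbid match-based extension (no LCE) but still permit isolated touches by transitioning in via insertion from $d = 1$ and out via deletion back to $d = 1$. The answer is YES iff $(n, n)$ is reachable at cost at most $k$. There are $\Oh(k^2)$ DP cells, each performing one LCE query, giving $\Ohtilde(k^2)$ classical time.

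Summing the two phases yields $\Ohtilde(\sqrt{kn})$ quantum queries and $\Ohtilde(\sqrt{kn} + k^2)$ time, matching the statement. The main obstacle is implementing the diagonal-avoidance constraint cleanly: one must ensure that no LCE extension occurs along $d = 0$ and that every traversal through the main diagonal corresponds to a legal insertion-then-deletion transition rather than an implicit self-matching step. The reflection argument and the resulting ``excursion structure'' (each excursion starts and ends at consecutive touches on $d = 0$ with offset staying strictly above $0$ in between) make this adaptation routine and do not affect the asymptotic complexity.
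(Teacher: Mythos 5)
Your proposal is correct and follows essentially the same route as the paper: run the quantum LZ77 algorithm (\cref{prp:quantumlz}) with $z=2k$, reject via \cref{prp:lz_selfed} if the factorization is too large, otherwise build an LCE structure from the $\Oh(k)$-phrase factorization and decide $\selfed(X)\le k$ with a diagonal-restricted Landau--Vishkin computation in $\Ohtilde(k^2)$ time. The only difference is that you re-derive the main-diagonal-avoiding DP explicitly, whereas the paper invokes it as a black box (Lemma~4.5 of \cite{CKW23}, with the LCE structure of \cite{tomohiro}); your excursion/reflection argument is exactly what that lemma encapsulates.
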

\begin{proof}
    First, we apply \cref{prp:quantumlz} with $z=2k$. If $|\LZ(X)|>2k$, we return \no.
    Otherwise, we check $\selfed(X)\le k$ as follows.

    We construct a classical data structure for LCE (Longest Common Extension) queries in $X$ following the method outlined in \cite{tomohiro}.
    This data structure can be built in $\Ohtilde(k)$ time and supports LCE queries in $\Oh(\log z)$ time, where $z = |\LZ(X)|$.

    We then employ an algorithm, described in Lemma 4.5 of \cite{CKW23},
    which is a simple modification of the classic Landau-Vishkin algorithm \cite{LV88}.
    This algorithm requires only a blackbox for computing LCE queries.
    By using this algorithm, we can check whether $\selfed(X)\le k$ using $\Oh(k^2)$ LCE queries to the data structure,
    thus requiring $\Ohtilde(k^2)$ time.

    Regarding the correctness anaylsis, if $|\LZ(X)|>2k$, then according to \cref{prp:lz_selfed}, $\selfed(X) > k$.
    Otherwise, the correctness follows from the correctness of the algorithm presented in \cite{CKW23}.
\end{proof}

\subsection{Quantum Algorithm for Edit Distance Minimizing Suffix}

In our Quantum \PMwE algorithm, there is a recurring need for a procedure that identifies a suffix of a text that minimizes the edit distance to a specific pattern.
To this end, we define
\[
    \edpx{X}{Y} \coloneqq \min_{y \in \fragment{0}{|Y|}} \ed(X, Y\fragmentco{y}{|Y|}).
\]
Similarly, we define $\edsx{X}{Y} \coloneqq \min_{y \in \fragment{0}{|Y|}} \ed(X,Y\fragmentco{0}{y})$ for prefixes.
In this (sub)section, we discuss a quantum subroutine that calculates $\edpx{X}{Y}$.
Note, by reversing the order of its input strings, the same subroutine computes $\edsx{X}{Y}$.

\begin{lemmaq}\label{prp:candidatepos}
    Let $T$ denote a text of length $n$, let $P$ denote a pattern of length $m$, and let $k > 0$ denote a positive threshold.
    Then, we can check whether $\edpx{P}{T} \leq k$ (and, if this is the case, calculate the value of $\edpx{P}{T}$ and the suffix of $T$ minimizing the edit distance) in $\Ohtilde(\sqrt{km} + k^2)$ quantum time using $\Ohtilde(\sqrt{km}+k)$ quantum queries.
\end{lemmaq}
\begin{proof}
    Our algorithm relies on the following combinatorial claim.
    \begin{claim}\label{clm:candidatepos}
        Let $\$$ denote a character that occurs in neither $P$ nor $T$.
        If $n \leq m+k$ and $\edpx{P}{T}\le k$, then $\ed(\$^{2k}P,T)=\edpx{P}{T}+2k$.
        Moreover, if an optimum alignment $\mA : \$^{2k}P\onto T$ substitutes exactly $s$ among $\$$ characters, then $\edpx{P}{T}=\ed(P,T\fragmentco{s}{n})$.
    \end{claim}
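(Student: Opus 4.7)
The plan is to prove both claims simultaneously by analyzing how any alignment $\mA:\$^{2k}P\onto T$ decomposes at the boundary between the $\$^{2k}$ prefix and the $P$ suffix. By \cref{fct:ali} applied to the factorization $\$^{2k}P=\$^{2k}\cdot P$, there will exist $y\in\fragment{0}{n}$ such that $\mA$ decomposes into sub-alignments $\mA_1:\$^{2k}\onto T\fragmentco{0}{y}$ and $\mA_2:P\onto T\fragmentco{y}{n}$ whose costs sum to $\edal{\mA}(\$^{2k}P,T)$. Since $\$$ occurs nowhere in $T$, $\mA_1$ performs no matches; letting $s$ denote its number of substitutions, its remaining operations must consist of $2k-s$ deletions and $y-s$ insertions, yielding $\edal{\mA_1}(\$^{2k},T\fragmentco{0}{y})=2k+y-s$. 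Using $s\le\min(2k,y)$, this will be at least $\max(2k,y)\ge 2k$, and combining with $\edal{\mA_2}(P,T\fragmentco{y}{n})\ge\edpx{P}{T}$ delivers the lower bound $\ed(\$^{2k}P,T)\ge 2k+\edpx{P}{T}$.

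For the matching upper bound, I will pick $y^\star\in\fragment{0}{n}$ attaining $\ed(P,T\fragmentco{y^\star}{n})=\edpx{P}{T}\le k$. Any witnessing alignment forces $|n-y^\star-m|\le k$; together with the hypothesis $n\le m+k$, this gives $y^\star\le 2k$. I will then exhibit an explicit alignment $\$^{2k}P\onto T$ that substitutes the first $y^\star$ dollars with $T\fragmentco{0}{y^\star}$, deletes the remaining $2k-y^\star$ dollars, and concatenates an optimal alignment $P\onto T\fragmentco{y^\star}{n}$. Its cost equals $y^\star+(2k-y^\star)+\edpx{P}{T}=2k+\edpx{P}{T}$, which combined with the lower bound gives the first claim.

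For the moreover part, I assume that $\mA$ is optimal and that its number $s$ of substituted $\$$ characters coincides (by definition) with the substitution count of the induced $\mA_1$ above. Substituting into the identity $(2k+y-s)+\edal{\mA_2}(P,T\fragmentco{y}{n})=\edal{\mA}(\$^{2k}P,T)=2k+\edpx{P}{T}$, I obtain $(y-s)+(\edal{\mA_2}(P,T\fragmentco{y}{n})-\edpx{P}{T})=0$. Since both summands are non-negative, each must vanish, so $y=s$ and $\ed(P,T\fragmentco{s}{n})\le\edal{\mA_2}(P,T\fragmentco{s}{n})=\edpx{P}{T}$; combined with the definition of $\edpx{P}{T}$, this forces $\edpx{P}{T}=\ed(P,T\fragmentco{s}{n})$.

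The only subtle step is invoking \cref{fct:ali} to obtain a clean split of $\mA$ at abscissa $2k$ in $\$^{2k}P$; once that is in hand, the argument reduces to bookkeeping on the operation counts of an alignment between $\$^{2k}$ and a $\$$-free string, and I do not anticipate further obstacles. A minor corner case is $y=0$, where $\mA_1$ simply deletes every $\$$ and $s=0$, which is covered uniformly by the above counting.
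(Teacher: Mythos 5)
Your proposal is correct and follows essentially the same route as the paper: both split an (optimal) alignment of $\$^{2k}P$ onto $T$ at the boundary after the $2k$ sentinel characters, use that every $\$$ must be deleted or substituted to lower-bound the cost of the first piece by $2k$ (your explicit count $2k+y-s$ just makes this sharper), and obtain the matching upper bound from the observation that the minimizing suffix starts within the first $2k$ positions thanks to $n\le m+k$. The "moreover" part likewise matches the paper's argument, with your exact-cost bookkeeping making the deduction $y=s$ slightly more explicit.
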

    \begin{claimproof}
            Consider a position $i\in \fragment{0}{n}$ such that $\ed(P,T\fragmentco{i}{n})=\edpx{P}{T}\le k$.
            Observe that $|m-(n-i)|\le k$, so the assumption $n\le m+k$ implies $i\le k+n-m \le 2k$.
            Consequently, $\ed(\$^{2k},T\fragmentco{0}{i})\le \max(2k,i) = 2k$ and $\ed(\$^{2k}P,T) \le \ed(\$^{2k},T\fragmentco{0}{i})+\ed(P,T\fragmentco{i}{n})\le 2k+\edpx{P}{T}$.

            Next, consider an optimal alignment $\mA : \$^{2k}{P}\onto T$ and a position $i\in \fragment{0}{n}$ such that $(2k,i)\in \mA$.
            Note that $\ed(\$^{2k}P,T)=\edal{\mA}(\$^{2k},T\fragmentco{0}{i})+\edal{\mA}(P,T\fragmentco{i}{n}) \ge 2k+\edpx{P}{T}$; this is because $\$$ does not occur in $T$ (so its every occurrence needs to be involved in an edit) and due to the definition of $\edpx{P}{T}$.

            Since $\ed(T,\$^{2k}P)\le 2k+\edpx{P}{T}$, we have $\edal{\mA}(T\fragmentco{0}{i},\$^{2k})=2k$ and $\edal{\mA}(T\fragmentco{i}{n},P)=\edpx{P}{T}$.
            In particular, $\mA$ substitutes exactly $s=i$ characters $\$$, and $\edpx{P}{T}=(T\fragmentco{s}{n},P)$ holds.
    \end{claimproof}

    First, observe that we can assume that $n \leq m+k$.
    Otherwise, we can truncate $T$ to $T\fragmentco{n - m - k}{n}$ since the edit distance between all suffixes of $T$ that we left out and the pattern $P$ exceeds $k$.

    Following \cref{clm:candidatepos}, we use \cref{prp:quantumed} to check whether $\ed(T, \$^{2k}P)\le 3k$.
    If that is not the case, we report that $\edpx{P}{T}>k$.
    Otherwise, we retrieve $\ed(T, \$^{2k}P)$ along with an optimal sequence of edits transforming $T$ into $\$^{2k}P$.
    We report $\ed(T, \$^{2k}P)-2k$ as the distance $\edpx{P}{T}$ and $T\fragmentco{s}{n}$, where $s$ is the number of characters $\$$ that $\mA$ substitutes, as the minimizing suffix.

    The correctness of this approach follows directly from \cref{clm:candidatepos}.
    The complexity is dominated by the application of \cref{prp:quantumed}: $\Ohtilde(\sqrt{km}+k^2)$ time and $\Ohtilde(\sqrt{km}+k)$ queries.
\end{proof}

Given a string $Q$ and a threshold $k$, \cref{prp:candidatepos} also allows us to check if  $\eds{P}{Q} \leq k$ holds,
and to return $\eds{P}{Q}$ using $\Ohtilde(\sqrt{kn} + k^2)$ time and $\Ohtilde(\sqrt{kn})$ queries.
In fact, it is easy to verify that $\eds{P}{Q} = \edpx{P}{Q^{\infty}\fragmentco{q - |P| - k}{q}}$,
where $q = |Q| \cdot \ceil{(|P| + k)/|Q|}$.
Throughout the rest of the paper, we assume that, whenever we use \cref{prp:candidatepos}
to compute expressions of the form $\eds{P}{Q}$, we return also the negative index $q-x$, where $x \in \fragment{q - |P| - k}{q}$ is the positions of the suffix minimizing $\edpx{P}{Q^{\infty}\fragmentco{q - |P| - k}{q}}$.
Similar consideration holds for $\ed(P,Q^*)$, for which together with $\ed(P,Q^*)$,
we return the index $y$ that minimizes $\ed(P,Q^\infty\fragmentco{0}{y})$.

Next, we show that via \cref{prp:candidatepos}
we can also compute $\edl{S}{Q}$, provided $Q$ is primitive.

\begin{lemma}\label{lem:find_min_approx_per}
    Let $S$ be a string, let $Q$ be a primitive string,
    and let $k > 0$ be a positive threshold
    such that $|S| \geq (2k+1)|Q|$ or $|Q|=1$.
    Then, there is a quantum algorithm that either verifies that $\edl{S}{Q} > k$,
    or outputs the value $\edl{S}{Q}$ in the contrary case.
    In the latter case, it also outputs $\ell,r$
    such that $\edl{S}{Q} = \ed(S, Q^\infty\fragmentco{\ell}{r})$.
    The algorithm requires $\Oh(\sqrt{k|S|})$ time and $\Oh(\sqrt{k|S|}+|S|^2)$ queries.
\end{lemma}

\begin{proof}
        We first sort out the case when $|Q|=1$.
        In such case we have $\edl{S}{Q} = \ed(S,Q^*)$,
        and it suffices to use \cref{prp:candidatepos}.
        By doing so, we also stay within in the claimed time and query complexities.

        In the remaining part of the proof we handle the case $|Q| > 1$.
        Consider the following procedure that returns two indices and a cost value.
        \begin{enumerate}[(i)]
        \item Divide $S$ into blocks of length $|Q|$, obtaining $\floor{|S|/|Q|} \geq 2k+1$ full blocks.
        \label{lem:find_min_approx_per:i}
        \item Sample u.a.r. a block $B = S\fragmentco{b}{b+|Q|}$,
            and via \cref{claim:hd_compstructure_rot} find the rotation $a \in \fragmentco{0}{|Q|}$
            for which $B=Q_a$ holds, where $Q_a \coloneqq \rot^{a}(Q)$ (or verify that there is no such $a$).
        \label{lem:find_min_approx_per:ii}
        \item If there is no such $a$, report a failure and return.
        \label{lem:find_min_approx_per:iii}
        \item Combining \cref{prp:candidatepos} with an exponential/binary search,
            verify whether $\eds{S\fragmentco{0}{b}}{Q_a} \leq k$ and $\ed(S\fragmentco{b+|Q|}{m},Q_a^*) \leq k$ hold.
            If this is not the case, report a failure and return.
        \label{lem:find_min_approx_per:iv}
        \item Otherwise, let $b^-,b^+$ be the returned indices by \cref{prp:candidatepos} in the form as remark shortly before \cref{lem:find_min_approx_per}.
        Return the cost value $k' \coloneqq \eds{S\fragmentco{0}{b}}{Q_a} + \ed(S\fragmentco{b+|Q|}{m},Q_a^*) \leq k$ together with $\ell \coloneqq b^-+b+j|Q|+a, r \coloneqq b+(j+1)|Q|+b^++a$ where $j$ is the smallest $j \in \Zz$ such that $\ell,r \geq 0$.
        \label{lem:find_min_approx_per:v}
    \end{enumerate}

   The following claim analyzes a single run of the procedure.
   \begin{claim}\label{claim:find_min_approx_per}
       With constant probability we do not fail and
       $\edl{S}{Q} = k'$ holds.
       Moreover, given we do not fail and return a cost $k'$,
       then $k' = \ed(S,Q^{\infty}\fragmentco{\ell}{r})$ and $\edl{S}{Q} \leq k'$
       always holds.
   \end{claim}

   \begin{claimproof}
    Consider an optimal alignment $\mX$ involved in $\edl{S}{Q}$.
    Observe that $\mX$ can make edits in at most $k$ of the full blocks.
    In each of the at most $k+1$ other blocks,
    no edit occurs, which implies that they are all equal to a rotation of $Q$.
    Consequently, with probability at least $(k+1)/(2k+1) \geq 1/2$, we have that $\mX$ matches $B$ with $Q_b$ for some rotation of $Q_b$ of $Q$.

    Further, assume that this even happens.
    Thus, $Q_a = Q_b$ and the algorithm proceeds to \eqref{lem:find_min_approx_per:iv},
    setting $a$ such that $a \equiv_{|Q|} b$.
    By decomposing $\mX$ into three parts, we obtain:
    \[
        \edl{S}{Q} = \eds{S\fragmentco{0}{b}}{Q_b} + \underbrace{\ed(B, Q_b)}_{= 0} + \ed(S\fragmentco{b+|Q|}{m},Q_b^*).
    \]
    As $\edl{S}{Q} \leq k$,
    the first and the third term of this last sum are upper bounded by $k$.
    Moreover, these two terms must equal to $\eds{S\fragmentco{0}{b}}{Q_a}$ and $\ed(S\fragmentco{b+|Q|}{m},Q_a^*)$, respectively.
    We conclude that the algorithm proceeds to \eqref{lem:find_min_approx_per:v},
    and sets $k' = \edl{S}{Q}$.

    For the second part of the claim, we observe that
    by the definition of $\ell,r$ we have
    $\eds{S\fragmentco{0}{b}}{Q_a} = \eds{S\fragmentco{0}{b}}{Q^{\infty}\fragmentco{\ell}{b+j|Q|+a}}$ and
    $\ed(S\fragmentco{b+|Q|}{m},Q_a^*) = \eds{S\fragmentco{0}{b}}{Q^{\infty}\fragmentco{\ell}{b+j|Q|+a}}$.
    Consequently, $k' = \ed(S,Q^{\infty}\fragmentco{\ell}{r})$.
    Lastly, by the triangle inequality, we conclude
    \[
        k' =
        \eds{S\fragmentco{0}{b}}{Q_b} + \ed(B, Q_b) + \ed(S\fragmentco{b+|Q|}{m},Q_b^*)
        \leq \edl{S}{Q}.
        \claimqedhere
    \]
   \end{claimproof}

    To recover a quantum algorithm with the same as described in the statement
    of this lemma, it suffices to repeat $\Oh(\log n)$ times the procedure.
    If we obtained at least once not a failure, we select the minimum obtained cost across all runs that did not fail,
    and return it together with the two corresponding indices $\ell,r$.
    Otherwise, we return that $\edl{S}{Q} > k$.
    The correctness follows from \cref{claim:find_min_approx_per}.

   The quantum and query time required for a single execution of the procedure is
   dominated by \eqref{lem:find_min_approx_per:iv} which requires $\Oh(\sqrt{k|S|})$ queries and $\Oh(\sqrt{|S|m}+|S|^2)$ time. The total quantum time for the algorithm differs from that of a single execution by only a logarithmic factor.
\end{proof}

\subsection{Quantum Algorithm for Gap Edit Distance}
\label{sec:quantumgaped}

In this section we  show how to devise a quantum subroutine for the \gaped problem which is defined as follows.

\defproblem{$(\beta,\alpha)$-\gaped}%
{strings $X,Y$ of length $|X| = |Y| =n$, and integer thresholds $\alpha \geq \beta \geq
0$.}%
{\yes if $\ed(X,Y) \leq \beta$, \no if $\ed(X,Y) > \alpha$, and an arbitrary answer
otherwise.}

We formally prove the following.

\begin{lemma}\label{lem:quantum_gap_ed}
    For all positive integers $k$ and $n$, there exists a positive integer $\ell = \Oh(n^{1+o(1)})$
    and a function $f: {\{0, 1\}}^{\ell} \times \Sigma^n \times \Sigma^n \rightarrow \{\text{\yes}, \text{\no}\}$ such that:
    \begin{itemize}
         \item if $\mathrm{seed}\in {\{0, 1\}}^{\ell}$ is sampled uniformly at random, then,
         for every $X, Y \in \Sigma^n$, the value $f(\mathrm{seed},X, Y)$ is with high probability a correct answer to the $(k, kn^{o(1)})$-\gaped instance with input $X$ and $Y$;
        \item there exists a quantum algorithm that computes $f(\mathrm{seed},X, Y)$
        in $\Oh(n^{1 + o(1)})$ quantum time using $\Oh(n^{0.5 + o(1)})$
        queries providing quantum oracle access to  $X, Y \in \Sigma^n$ and $\mathrm{seed}\in {\{0, 1\}}^{\ell}$.\qedhere
    \end{itemize}
\end{lemma}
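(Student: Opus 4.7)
The plan is to quantize the classical $(k, k \cdot n^{o(1)})$-gap edit-distance algorithm of Goldenberg, Kociumaka, Krauthgamer, and Saha~\cite{gapED}, whose classical complexity is $n^{1+o(1)}/k$ time. The classical algorithm proceeds by a recursive decomposition of the inputs into blocks: at each level, it samples a block of $X$, searches a window of $Y$ for a bounded-edit-distance match, and recurses on the complementary pieces. All randomness used by the algorithm (the sampling decisions and the random hash functions used for equality testing) will be moved into the seed, so that the procedure becomes deterministic given the seed and correct with high probability over a uniformly random seed. Since the classical algorithm uses $n^{o(1)}$ random bits per level with $n^{o(1)}$ levels, taking $\ell = n^{1 + o(1)}$ is comfortably sufficient; the extra slack accommodates $\mathrm{polylog}(n)$-bit descriptions of hash functions and error boosting at each level.

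Given the seed, the quantum algorithm will run the classical procedure with two substitutions. First, every bounded edit-distance computation between two substrings of length $\ell$ against a threshold $k_\ell$ will be replaced by the quantum bounded edit-distance subroutine of \cref{prp:quantumed}, using $\Ohtilde(\sqrt{\ell k_\ell})$ queries. Second, every classical loop that scans $N$ candidate alignments for one witnessing a small edit distance will be replaced by Grover search (\cref{prp:grover}) applied to that quantum verifier, turning an $\Ohtilde(N \cdot T)$-cost scan into an $\Ohtilde(\sqrt{N} \cdot T)$-cost one. Each Grover invocation will be boosted to success probability $1 - 1/\mathrm{poly}(n)$ by $\Oh(\log n)$ repetitions, so a union bound over all $n^{o(1)}$ invocations keeps the overall error sub-constant while contributing only sub-polynomial factors to the final complexity.

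The main obstacle will be to verify that the quadratic Grover speedup composes to the uniform bound $\Oh(n^{1/2 + o(1)})$ queries across all regimes of $k$. Concretely, I must balance two sources of cost at each level: the $\Ohtilde(\sqrt{k_\ell \ell})$ cost of one quantum bounded edit-distance call on length-$\ell$ substrings, and the Grover cost $\Ohtilde(\sqrt{n/\ell})$ of searching through the $\Oh(n/\ell)$ candidate windows. These combine to roughly $\Ohtilde(\sqrt{n k_\ell})$ per level, and because the classical algorithm's accounting guarantees $k_\ell \le n^{o(1)}$ at the levels actually executed, each level contributes $\Oh(n^{1/2 + o(1)})$ queries. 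Summing over the $n^{o(1)}$ levels then yields the claimed global bound.

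A final technicality is that the seed must admit quantum oracle access and the queries to it must be counted. However, along any execution path the algorithm touches only an $n^{o(1)}$-sized portion of the seed (the randomness for the levels actually descended into), so queries to the seed are absorbed into the $\Oh(n^{1/2+o(1)})$ budget and, with the arithmetic operations at each level, the total time stays within $\Oh(n^{1+o(1)})$.
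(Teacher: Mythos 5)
There is a genuine gap, and it stems from a mischaracterization of the classical algorithm of~\cite{gapED} that you intend to quantize. That algorithm does not ``sample a block of $X$, search a window of $Y$ for a bounded-edit-distance match, and recurse on the complementary pieces,'' and it performs no exact bounded edit-distance computations at all. It is a \emph{non-adaptive} recursion that alternates between \gaped and \shifted \gaped: at each level it partitions $X$ and $Y$ into \emph{aligned} blocks at several scales and samples block indices (\cref{alg:gapED}), reduces each shifted instance to a structured set of shifted \gaped instances (\cref{alg:shiftedgapED}), and bottoms out in a zero-versus-large Hamming test on sampled positions. The paper's quantization (\cref{prp:quantum_gap_ed_time,prp:quantum_shifted_gap_ed_time}) therefore only needs Grover search over the independent \yes/\no oracle calls at each level and over the sampled positions in the base case; the exact quantum bounded edit-distance subroutine of \cref{prp:quantumed} is never invoked inside the gap routine.

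Because of this, the crucial quantitative step of your plan fails. You balance a Grover search over $\Oh(n/\ell)$ windows against an exact verification of cost $\Ohtilde(\sqrt{k_\ell \ell})$ via \cref{prp:quantumed}, obtaining $\Ohtilde(\sqrt{n k_\ell})$ per level, and you assert that ``the classical algorithm's accounting guarantees $k_\ell \le n^{o(1)}$ at the levels actually executed.'' No such guarantee exists: at the top of the recursion the relevant threshold is $\beta=k$, and one level down it is still roughly $k\cdot\operatorname{polylog}(n)/n^{o(1)}$; the thresholds shrink only by sub-polynomial factors per level, so they remain polynomial in $n$ whenever $k$ is. With $k_\ell=\Theta(k)$ your per-level bound becomes $\Ohhat(\sqrt{nk})$, not $\Oh(n^{1/2+o(1)})$ --- indeed a single application of \cref{prp:quantumed} to the whole input with threshold $kn^{o(1)}$ already costs that much, so your scheme gains nothing over the trivial exact algorithm and does not yield the claimed query bound. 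A secondary issue: the paper's function $f$ is well defined precisely because the classical algorithm is non-adaptive given the seed, and the quantum algorithm merely reproduces its output with high probability; in your scheme the recursion path depends on which window Grover happens to return, so you would additionally need to argue that your (different) classical procedure is correct and that its output is a deterministic function of the seed alone.
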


The quantum algorithm developed for the gap edit distance problem
is a direct adaptation of Goldenberg, Kociumaka, Krauthgamer, and Saha's classical algorithm~\cite{gapED} for the same problem.
The paper~\cite{gapED} presents several algorithms for the gap edit problem,
each progressively improving performance. Among these algorithms,
we opt to adapt the one from Section 4,
stopping short of the faster implementation discussed in Section 5.

Notably, all algorithms presented in~\cite{gapED} are non-adaptive,
meaning they do not require access to the string to determine which recursive calls to make.
This characteristic allows us to pre-generate randomness using a sequence of bits from $\{0, 1\}^{\ell}$.
Consequently, when provided with a fixed sequence of bits,
the algorithm's output becomes deterministic,
producing the same output not only for cases where $\ed(X,Y) \leq \beta$ and $\ed(X,Y) > \alpha$
but also for scenarios where $\beta < \ed(X,Y) \leq \alpha$.

In \cref{sec:gapedpresentation}, we  first briefly present the algorithm of~\cite{gapED}.
Then, in \cref{sec:gapedadaptation}, we argue how the algorithm can be extended to the quantum framework.
For a more rigorous proof, an intuition for why the algorithm works, and a more comprehensive introduction to the problem, we recommend referring directly to~\cite{gapED}.

\subsubsection{Gap Edit Distance and Shifted Gap Edit Distance}
\label{sec:gapedpresentation}

The algorithm presented in \cite{gapED} solves an instance of \gaped by querying an oracle solving smaller instances of a similar problem: the \shifted \ \gaped problem.
Before properly defining this problem, we need to introduce the notion of \emph{shifted edit distance}.

\begin{definition}
Given two strings $X,Y$ and a threshold $\beta$, define the \emph{$\beta$-shifted edit distance} $\edshifted{\beta}(X,Y)$ as
\[
    \edshifted{\beta}(X,Y) \coloneqq \min \left(\bigcup_{\Delta=0}^{\min(|X|,|Y|,\beta)} \left\{\ed(X\fragmentco{\Delta}{|X|}, Y\fragmentco{0}{|Y| - \Delta}), \ed(X\fragmentco{0}{|X| - \Delta}), Y\fragmentco{\Delta}{|Y|}\right\}\right).\qedhere
\]
\end{definition}

The \shifted \ \gaped problem consists in either verifying that the shifted edit distance between two strings $X,Y$ is small, or that the (regular) edit distance between $X$ and $Y$ is large.
More formally:

\defproblem{$\beta$-\shifted $(\gamma,3\alpha)$-\gaped}
{strings $X,Y$ of length $|X| = |Y| =n$, and integer thresholds $\alpha \geq \beta \geq
\gamma \geq 0$.}%
{\yes if $\edshifted{\beta}(X,Y) \leq \gamma$, \no if $\ed(X,Y) > 3\alpha$, and an
arbitrary answer otherwise.}

The insight of~\cite{gapED} is that an instance of \shifted \ \gaped can be reduced back to smaller instances of \gaped.
In this way,~\cite{gapED} obtains a recursive algorithm switching between instances of \shifted \ \gaped and of \gaped.
The size of the instances shrinks throughout different calls until the algorithm is left with instances having trivial solutions.

\subparagraph*{From Gap Edit Distance to Shifted Gap Edit Distance}

The algorithm that reduces \gaped instances to \shifted \ \gaped instances
partitions the strings $X$ and $Y$ into blocks of different lengths.
It samples these blocks and utilizes them as input for instances of the \shifted \ \gaped problem.
If a limited number of instances return \no, the routine outputs \yes.
Otherwise, it returns \no.
A description is provided in \cref{alg:gapED}.

\begin{algorithm}[t]
\KwInput{An instance $(X,Y)$ of $(\beta, \alpha)$-\gaped, and a parameter $\phi \in
\mathbb{Z}_+$ satisfying
\[
    \phi \geq \beta \geq \psi \coloneqq \floor{({112\beta \phi
\ceil{\log n}})/{\alpha}}.
\]}
Set $\rho \coloneqq {84\phi}/{\alpha}$\;
\For{$p \in \fragment{\ceil{\log (3\phi)}}{\floor{\log(\rho n)}}$}{
    Set $m_p \coloneqq \ceil{{n}/{2^p}}$ to be number of blocks of length $2^p$ we partition $X,Y$ into (last block might be shorter)\;
    For $i \in \fragmentco{0}{m_p}$ set $X_{p,i} = X\fragmentco{i \cdot 2^p}{\min(n, (i+1)2^p)}$ and $Y_{p,i} = Y\fragmentco{i \cdot 2^p}{\min(n, (i+1)2^p)}$\;
    \For {$t \in \fragmentco{0}{\ceil{\rho m_p}}$}{
        Select $i \in \fragmentco{0}{m_p}$ uniformy at random\;
        Solve the instance $(X_{p,i},Y_{p,i})$ of the $\beta$-\shifted $(\psi,3\alpha)$-\gaped problem\; \label{ln:gapedrec}
    }
}
\If{we obtained at most 5 times \no at \cref{ln:gapedrec}}{
    \Return{\yes}
}\Else{
    \Return{\no}
}
    \caption{Randomized Algorithm reducing \gaped to \shifted \ \gaped}\label{alg:gapED}
\end{algorithm}

Goldenberg, Kociumaka, Krauthgamer, and Saha proved in~\cite{gapED} the following.
\begin{lemma}[Lemma 4.3 of~\cite{gapED}]
Suppose all oracle calls at \cref{ln:gapedrec} of \cref{alg:gapED}, return the correct answer.
Then, \cref{alg:gapED} solves an instance $(X,Y)$ of $(\beta, \alpha)$-\gaped with error probability at most $e^{-1}$. \lipicsEnd
\end{lemma}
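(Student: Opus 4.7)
The plan is to handle the two non-trivial regimes for $\ed(X,Y)$ separately and show that in each case the random number of NO oracle responses falls on the correct side of the threshold $5$ with probability at least $1 - 1/e$. In both cases the error probability is bounded by a concentration inequality applied to the count of NO responses across the $O(\log n)$ scales $p$ and the $\lceil \rho m_p\rceil$ uniform block samples at each scale.

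For the YES regime ($\ed(X,Y) \le \beta$), I will fix an optimal alignment $\mA : X \onto Y$ of cost at most $\beta$. For every scale $p$ and block index $i$, $\mA$ maps $X_{p,i}$ onto some fragment $Y\fragmentco{a_i}{a_{i+1}}$ whose endpoints differ from those of $Y_{p,i}$ by at most $\beta$, because the cumulative insertion/deletion imbalance of $\mA$ is bounded by $\beta$. Hence the $\beta$-shifted edit distance satisfies $\edshifted{\beta}(X_{p,i},Y_{p,i}) \le \edal{\mA}(X_{p,i}, Y\fragmentco{a_i}{a_{i+1}})$, and summing over $i$ at a fixed scale yields $\sum_i \edshifted{\beta}(X_{p,i},Y_{p,i}) \le \beta$. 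Thus at most $\beta/\psi$ blocks per scale can be responsible for a correct NO answer, and the expected number of NO responses across all scales and samples is at most $\sum_p \rho m_p \cdot (\beta/(\psi m_p)) = O(\rho \beta \log n / \psi)$. Plugging in $\rho = 84\phi/\alpha$ and $\psi \approx 112\beta\phi\lceil \log n \rceil/\alpha$ makes this at most $3/4$, and Markov's inequality gives $\Pr[\text{at least }6\text{ NOs}] \le 1/8 < 1/e$.

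For the NO regime ($\ed(X,Y) > 3\alpha$), I will argue that at some scale $p^\star$ in the enumerated range $\fragment{\lceil \log(3\phi)\rceil}{\lfloor \log(\rho n)\rfloor}$ the number of blocks with $\ed(X_{p^\star,i},Y_{p^\star,i}) > 3\alpha$ is at least $\Omega(1/\rho)\cdot m_{p^\star}$, so that in $\lceil \rho m_{p^\star}\rceil$ samples the expected number of NO responses is a sufficiently large constant, allowing a Chernoff lower tail bound of the form $\Pr[\mathrm{Bin}(N,q)\le \mu/2] \le e^{-\mu/8}$ to push the probability of seeing at most $5$ NOs below $1/e$. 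The lower bound on the number of ``bad'' blocks at scale $p^\star$ follows by contradiction: if every scale had too few bad blocks, one could combine the cheap optimal block alignments on good blocks with trivial $\le 2^p$ alignments on bad blocks at the scale that best balances these two contributions against $3\alpha$, producing a global alignment of $X$ onto $Y$ of cost strictly less than $3\alpha$ and contradicting $\ed(X,Y) > 3\alpha$.

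The main obstacle is the NO case. The YES case reduces after a single accounting identity to a Markov calculation; the NO case, by contrast, requires locating the right scale $p^\star$ among a logarithmic range of options and tracking the inequalities linking $\alpha$, $\beta$, $\phi$, $\psi$, and $\rho$ tightly enough that the combined slack in the Chernoff bound still delivers $1/e$. The numerical constants $84$ in $\rho$ and $112$ in $\psi$ are chosen precisely to give this slack, so I expect the bulk of the technical work to lie in carefully propagating the parameter inequalities through the block decomposition to produce both the existence of a scale with enough bad blocks and the final concentration estimate.
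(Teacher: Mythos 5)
This lemma is imported verbatim from \cite{gapED} and the paper offers no proof of it, so there is nothing internal to compare against; your proposal has to be judged as a reconstruction of the source's argument. Your completeness half (the regime $\ed(X,Y)\le\beta$) is essentially right: a correct oracle can answer \no only on blocks whose $\beta$-shifted distance exceeds $\psi$, and under a fixed optimal alignment the per-scale count of such blocks is $O(\beta/\psi)$, so Markov finishes the job. Be aware, though, that the clean identity $\sum_i \edshifted{\beta}(X_{p,i},Y_{p,i})\le\beta$ is optimistic --- restricting the global alignment to a block and repairing the right endpoint to match $Y_{p,i}$ shifted by $\delta_i$ costs up to twice the alignment's cost on that block, so the sum is more honestly $2\beta$; with the constants $84/112$ this still leaves Markov at $\tfrac{3}{2}\cdot\tfrac16=\tfrac14<e^{-1}$, but there is little slack and a factor $3$ would already break the bound.

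The soundness half has a genuine gap. First, the regime you must handle is $\ed(X,Y)>\alpha$, the \no-condition of $(\beta,\alpha)$-\gaped, not $\ed(X,Y)>3\alpha$. Second, and fatally, both pillars of your argument fail under the fixed per-block threshold $3\alpha$. The stitching contradiction does not close: a block on which a correct oracle may answer \yes is certified only to satisfy $\ed(X_{p,i},Y_{p,i})\le 3\alpha$, so combining the ``good'' blocks at a fixed scale yields a global alignment of cost up to $3\alpha\, m_p$ plus the bad-block contributions, which contradicts nothing unless $m_p=1$, where the claim is circular. And the structural claim that some scale contains many blocks with $\ed(X_{p^\star,i},Y_{p^\star,i})>3\alpha$ is simply false: take $X=\zero^n$ and $Y$ equal to $\zero^n$ with $2\alpha$ ones spread evenly, so $\ed(X,Y)=2\alpha>\alpha$, yet every block at every scale (block length at most $\rho n$) has edit distance at most $2\rho\alpha=168\phi$, which is below $3\alpha$ whenever $\phi\le\alpha/56$ --- and the parameter constraint $\psi\le\beta\le\phi$ forces roughly $\phi\le\alpha/(112\lceil\log n\rceil)$, so the counterexample lives squarely inside the intended regime. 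In that instance a worst-case correct oracle answers \yes everywhere and the algorithm errs with probability $1$. The resolution in \cite{gapED} is that the \no-threshold of each sub-instance is calibrated to its scale (proportional to the block length), and the soundness argument is hierarchical: one recurses from expensive blocks into their children across scales until every leaf is cheap, so that the total certified cost of the resulting partition is $O(\alpha)$ and the expensive blocks encountered, weighted by the sampling rates $\rho m_p$, have expectation large enough for the Chernoff tail you invoke. A single-scale argument with a scale-independent threshold cannot be repaired into a proof of this lemma.
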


\subparagraph*{From Shifted Gap Edit Distance to Gap Edit Distance}

In contrast to the previous algorithm,
the algorithm reducing \shifted \ \gaped instances to \gaped instances is deterministic.
For an instance $(X,Y)$ of $\beta$-\shifted $(\gamma, 3\alpha)$-\gaped
satisfying $\alpha \geq 3\gamma$,
the algorithm makes several calls to an oracle that solves
$(\gamma, 3\alpha)$-\gaped instances.
The routine returns \yes if and only if
at least one such oracle call returned \yes.
A description is provided in \cref{alg:shiftedgapED}.

\begin{algorithm}[t]
\KwInput{An instance $(X,Y)$ of $\beta$-\shifted $(\gamma, 3\alpha)$-\gaped
satisfying $\alpha \geq 3\gamma$.}
Set $\xi \coloneqq \floor{\sqrt{(1+\beta)(1+\gamma)}} - 1$ and $n' = n - \beta$\;
For all $x \in \fragment{0}{\beta}$ such that $x \equiv_{1+\xi} \beta$ or $x \equiv_{1+\xi} 0$,
all $y \in \fragment{0}{\xi}$ such that $y \equiv_{1+\gamma} 0$,
all $y \in \fragment{\beta-\xi}{\beta}$ such that $y \equiv_{1+\gamma} \beta$,
solve the instance $(X\fragmentco{x}{x+n'}, Y\fragmentco{y}{y+n'})$ of the
$(3\gamma, \alpha)$-\gaped problem\; \label{ln:shiftedgapedrec}

\If{we obtained at least one time \yes at \cref{ln:shiftedgapedrec}}{
    \Return{\yes}
}\Else{
    \Return{\no}
}
    \caption{Deterministic Algorithm reducing \shifted \ \gaped to \gaped.}\label{alg:shiftedgapED}
\end{algorithm}

\begin{lemma}[Lemma 4.3 of \cite{gapED}]
    Suppose all oracle calls at \cref{ln:shiftedgapedrec} of \cref{alg:shiftedgapED}, return correct answer.
    Then, \cref{alg:shiftedgapED} solves an instance $(X,Y)$ of $\beta$-\shifted $(3\gamma, \alpha)$-\gaped. \lipicsEnd
\end{lemma}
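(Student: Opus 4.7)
My plan is to prove both sides of the gap promise: (i) if $\edshifted{\beta}(X, Y) \leq 3\gamma$, the algorithm returns \yes; and (ii) if $\ed(X, Y) > \alpha$, the algorithm returns \no. Since the algorithm outputs \yes iff some oracle call returns \yes, in each direction I must translate a global condition on $(X, Y)$ into a statement about the grid of subproblems $(X\fragmentco{x}{x+n'}, Y\fragmentco{y}{y+n'})$ that the algorithm feeds to the inner $(3\gamma, \alpha)$-\gaped oracle.

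For direction (i), I would begin with an optimal shift $\Delta \in \fragment{0}{\beta}$ witnessing $\edshifted{\beta}(X, Y) \leq 3\gamma$. WLOG the shift is on the $X$ side, giving $\ed(X\fragmentco{\Delta}{n}, Y\fragmentco{0}{n-\Delta}) \leq 3\gamma$ (the opposite direction is symmetric and is precisely the reason the algorithm also iterates over $y \in \fragment{\beta - \xi}{\beta}$). I would then choose the grid pair $(x, y) = (x^\ast, 0)$ where $x^\ast \in \fragment{0}{\beta}$ lies in the congruence classes $\{0, \beta\} \pmod{1+\xi}$ closest to $\Delta$, so that $|x^\ast - \Delta| \leq \xi/2$. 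An alignment-restriction argument combined with the triangle inequality then bounds $\ed(X\fragmentco{x^\ast}{x^\ast+n'}, Y\fragmentco{0}{n'})$ in terms of $3\gamma$ plus an error controlled by $\xi$. The choice $\xi = \floor{\sqrt{(1+\beta)(1+\gamma)}}-1$ is calibrated exactly so that this perturbed edit distance still lands within the oracle's \yes threshold, forcing the (correct) oracle to return \yes.

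For direction (ii), I would show that every subproblem in the grid has edit distance strictly larger than $\alpha$, so every oracle call returns \no. The natural starting point is the triangle decomposition $\ed(X, Y) \leq (x+y) + \ed(X\fragmentco{x}{x+n'}, Y\fragmentco{y}{y+n'}) + (2\beta - x - y)$, which yields $\ed(X\fragmentco{x}{x+n'}, Y\fragmentco{y}{y+n'}) \geq \ed(X, Y) - 2\beta > \alpha - 2\beta$. This naive bound is insufficient by itself; I would refine it by exploiting the fact that all sampled grid points $(x, y)$ cluster near a corner of $\fragment{0}{\beta}^2$, so that if some subproblem had edit distance at most $\alpha$ it would witness a shifted alignment of $(X, Y)$ of comparable cost, contradicting $\ed(X, Y) > \alpha$ given the precondition $\alpha \geq 3\gamma$.

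The hard part will be reconciling the declared thresholds $3\gamma$ (accepted shifted-ED) and $\alpha$ (rejected global ED) with the grid resolution $1+\xi \approx \sqrt{\beta\gamma}$: the slack introduced by rounding to a grid point must fit inside the ratio between the oracle's \yes and \no thresholds. Direction (i) is what pins down the geometric-mean choice of $\xi$, since any coarser grid would push the approximate subproblem past the \yes threshold, while direction (ii) is where I expect the most delicate accounting, since the global $\alpha$-gap has to be preserved through a $2\beta$-size boundary shrinkage and translated back via the shifted-ED semantics; the two congruence classes $\{0, \beta\}$ in the iteration over $x$, together with the endpoint-only choices of $y$, will be essential to make this accounting close.
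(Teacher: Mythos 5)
Your proposal has a genuine gap in each direction. In the completeness direction, the grid point you select is wrong: you pick $x^\ast$ on the coarse grid with $|x^\ast-\Delta|\le \xi/2$ and pair it with $y=0$, so the realized shift $x^\ast-y$ misses the true shift $\Delta$ by up to $\xi/2=\Theta(\sqrt{(1+\beta)(1+\gamma)})$, and the resulting subproblem's edit distance exceeds the shifted edit distance by $\Theta(\xi)$ --- far outside the inner oracle's \yes threshold of $3\gamma$ whenever $\beta\gg\gamma$. The point of the two-level grid is that $x$ and $y$ must be chosen \emph{together}: one picks $x$ on the $(1+\xi)$-spaced grid with $x-\Delta\in\fragment{0}{\xi}$ and then $y$ on the $(1+\gamma)$-spaced grid with $|y-(x-\Delta)|\le\gamma$, so that the realized shift $x-y$ is within $\gamma$ of $\Delta$ and the subproblem's edit distance exceeds $\edshifted{\beta}(X,Y)$ by only $O(\gamma)$. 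Your claim that $\xi$ is ``calibrated exactly so that this perturbed edit distance still lands within the oracle's \yes threshold'' is incorrect: $\xi$ is the geometric mean balancing the \emph{number} of $x$-values against the number of $y$-values (i.e., the query count); the perturbation that must fit inside the \yes threshold is the $\gamma$-sized rounding of $y$, which is exactly why the inner threshold is $3\gamma$ rather than $\gamma$.

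Second, the thresholds you set out to prove are not the ones the algorithm can deliver. The lemma as transcribed here (``solves $\beta$-\shifted $(3\gamma,\alpha)$-\gaped'') has its parameters garbled relative to the algorithm's own input declaration (``an instance of $\beta$-\shifted $(\gamma,3\alpha)$-\gaped''); the deliverable guarantee is \yes when $\edshifted{\beta}(X,Y)\le\gamma$ and \no when $\ed(X,Y)>3\alpha$. Under your literal reading both directions are false: completeness from $\edshifted{\beta}(X,Y)\le 3\gamma$ would need a subproblem of edit distance at most $3\gamma$ with zero slack left for rounding, and soundness from $\ed(X,Y)>\alpha$ fails because cropping $\beta$ characters from each string can decrease the edit distance by up to $2\beta\le 2\alpha$, so some subproblem may well drop to $3\gamma$ or below and force a \yes. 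No ``corner-clustering refinement'' rescues this --- a subproblem of cost at most $\alpha$ only certifies $\ed(X,Y)\le\alpha+2\beta$, which does not contradict $\ed(X,Y)>\alpha$. The fix is the threshold $3\alpha$, under which your own naive bound $\ed(X\fragmentco{x}{x+n'},Y\fragmentco{y}{y+n'})\ge\ed(X,Y)-2\beta>3\alpha-2\beta\ge\alpha$ (using $\beta\le\alpha$) closes soundness immediately. (For context: this paper does not prove the lemma --- it quotes it from the cited reference --- so the comparison here is against the correct argument rather than an in-paper proof.)
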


\subparagraph*{The Base Case}
The algorithm is structured to switch
between instances of \cref{alg:gapED} and \cref{alg:shiftedgapED}
until it reaches an instance of the $(\beta, \alpha)$-\gaped problem with $\beta = 0$.
This instance represents the base case of the algorithm.
To handle it, the algorithm simply checks whether there is at least one position $i \in \fragmentco{0}{n}$ such that $X[i] \neq Y[i]$, that is,
it checks whether the Hamming distance $\hd(X, Y)$ is non-zero.
If such a position is found, the algorithm returns \no;
otherwise, it returns \yes.
This check is sufficient because if $\ed(X, Y) = 0$,
then no such position exists and $\hd(X, Y) = 0$.
Conversely, if $\ed(X, Y) \geq \alpha$,
there must be at least $\alpha$ positions $i \in \fragmentco{0}{n}$ such that $X[i] \neq Y[i]$, that is, $\hd(X, Y) \geq \ed(X, Y) \geq \alpha$.
If we sample ${c n}/{(1 + \alpha)}$ positions,
the probability of sampling at least one of those $\alpha$ positions is at most
\[
    {\left(1 - {\alpha}/{n}\right)}^{{c n}/{(1 + \alpha)}} \leq e^{-{c \alpha}/({1 +
    \alpha})}.
\]
By choosing $c = \Oh(1)$ appropriately, we may assume any constant lower bound for the success probability.

\subsubsection{Adapting the Algorithm to the Quantum Setting}
\label{sec:gapedadaptation}

The algorithm
is formally constructed through Proposition 4.6 and Proposition 4.7 in \cite{gapED}.
These propositions prove, respectively, the existence of an algorithm for the \gaped problem and for the \shifted \ \gaped problem which, together with the information for the respective instance, take in input a parameter $h$, describing the depth of the recursive calls we make until we are at the base described in the previous (sub)section.

In this (sub)section we want to argue that we can take Proposition 4.6 and Proposition 4.7 in \cite{gapED} and adapt them to the quantum setting, by creating subroutines that replicate their behavior when provided in input the same sequence of bits.
The adaptation to the quantum case is rather straightforward, since the outcome of both
\cref{alg:gapED} and \cref{alg:shiftedgapED} relies on independent calls to other oracles
that simply return \yes or \no, and such structure can be exploited by \GS.

\begin{proposition}(Compare with Proposition 4.6 of  \cite{gapED})\label{prp:quantum_gap_ed_time}
    There exists a quantum algorithm with the following properties:
    \begin{itemize}
        \item it takes as input $h \in \mathbb{Z}_{\geq 0}$, $\epsilon \in \mathbb{R}_{+}$,
        and an instance of the $(\beta, \alpha)$-\gaped problem, satisfying
        $\beta < (336\ceil{\log n})^{{-h}/{2}}\alpha^{{h}/{(h+1)}}$,
        and a sequence of bits $\mathrm{seed} \in \{0, 1\}^{\ell}$
        of length
        \[\ell = \mathcal{O}\left({(1+\beta)}/{(1+\alpha)}\cdot n \cdot
            \log^{\mathcal{O}(h)} n \cdot \log {\epsilon}^{-1} \cdot
             2^{\mathcal{O}(h)}\right);
        \]
        \item it outputs with high probability the same output of the non-adaptive
            algorithm from Proposition 4.6 of \cite{gapED}
            with input $h$, $\epsilon$, the instance of the $(\beta, \alpha)$-\gaped problem,
            and $\mathrm{seed}$; and
        \item it uses \[
                \mathcal{O}\left(({1+\beta})/({1+\alpha})\cdot n \cdot \log^{\mathcal{O}(h)} n
                \cdot \log {\epsilon}^{-1} \cdot 2^{\mathcal{O}(h)}\right)
            \text{ time} \]
            and
            \[
                \mathcal{O}\left({\left(({1+\beta})/({1+\alpha})\right)}^{{1}/{2}}\cdot
                n^{{1}/{2}} \cdot \log^{\mathcal{O}(h)} n \cdot \log {\epsilon}^{-1} \cdot 2^{\mathcal{O}(h)}\right)
                \text {queries.}
            \] \lipicsEnd
    \end{itemize}
\end{proposition}

\begin{proposition}(Compare with Proposition 4.7 of  \cite{gapED}) \label{prp:quantum_shifted_gap_ed_time}
    There exists a quantum algorithm with the following properties:
    \begin{itemize}
        \item it takes as input $h \in \mathbb{Z}_{\geq 0}$, $\epsilon \in \mathbb{R}_{+}$,
        and an instance of the $\beta$-\shifted $(\gamma, 3\alpha)$-\gaped problem, satisfying
        $\gamma < {1}/{3} \cdot(336\ceil{\log n})^{{-h}/{2}}\alpha^{{h}/({h+1})}$,
        and a sequence of bits $\mathrm{seed} \in \{0, 1\}^{\ell}$
        of length
        \[\ell = \mathcal{O}\left(({1+\beta})/({1+\alpha})\cdot n \cdot
            \log^{\mathcal{O}(h)} n \cdot \log ({n}/{\epsilon}) \cdot
    2^{\mathcal{O}(h)}\right); \]
        \item it outputs with high probability the same output of the non-adaptive
            algorithm from Proposition 4.7 of \cite{gapED}
            with input $h$, $\epsilon$, the instance of the $\beta$-\shifted $(\gamma, 3\alpha)$-\gaped problem,
            and $\mathrm{seed}$; and
        \item it uses \[
            \mathcal{O}\left(({1+\beta})/({1+\alpha})\cdot n \cdot \log^{\mathcal{O}(h)} n
            \cdot \log ({n}/{\epsilon}) \cdot 2^{\mathcal{O}(h)}\right)
            \text{ time,}
            \] and \[
            \mathcal{O}\left({\left(({1+\beta})/({1+\alpha})\right)}^{{1}/{2}}\cdot
            n^{{1}/{2}} \cdot \log^{\mathcal{O}(h)} n \cdot \log ({n}/{\epsilon}) \cdot 2^{\mathcal{O}(h)}\right)
            \text{queries.}
            \tag*{\lipicsEnd}
        \]
    \end{itemize}
\end{proposition}

The aforementioned results
\cite[Propositions 4.6 and 4.7]{gapED}
are proved by induction over $h$,
alternating between Proposition 4.6 and Proposition 4.7 of \cite{gapED},
mirroring the structure of the recursive calls of the algorithm.

In order to ensure correctness with probability at least $\epsilon$,
the propositions use the fact that
we can boost an algorithm's success probability from any constant
to $1 - \epsilon$
by repeating the algorithm $\mathcal{O}(\log {\epsilon}^{-1})$ times.
We prove \cref{prp:quantum_gap_ed_time} and \cref{prp:quantum_shifted_gap_ed_time}
using the same structure of induction.

\begin{proof}[Proof of \cref{prp:quantum_gap_ed_time} and \cref{prp:quantum_shifted_gap_ed_time}]
    As mentioned in the previous (sub)section,
    the base case is represented by \cref{prp:quantum_gap_ed_time} when $\beta=0$ or $h=0$.
    There, it suffices to interpret the seed as sampled positions
    from $\fragmentco{0}{n}$, and to execute \GS over all sampled positions checking whether
    there exists a sampled position $i \in \fragmentco{0}{n}$ such that $X\position{i} \neq Y\position{i}$,
    thereby obtaining the claimed query time and quantum time.

    \begin{description}
        \item[Induction step for \cref{prp:quantum_gap_ed_time}]

        If $\beta \neq 0$ and $h \neq 0$,
        the algorithm from Proposition 4.6 of \cite{gapED}
        uses \cref{alg:gapED} with $\phi = \beta$,
        employing as an oracle Proposition 4.7 of \cite{gapED} with parameters $h - 1$ and
        $\Theta({1}/{n})$.

        We adapt \cref{alg:gapED} to the quantum setting as follows.
        First, we extract the randomness required
        by the routine, that is, for each level $p \in \fragment{\ceil{\log (3\phi)}}{\floor{\log(\rho n)}}$
        we have to read
        \[\ceil{\rho m_p} = \ceil{{84\phi n}/{\alpha 2^p}}\]
        new positions from the sequence bits from the input.

        This allows us to execute \GS over all
        selected instances separately for each of the levels.
        Our goal is to detect at least
        five oracle calls returning \no.
        Since we are only looking for constantly many of those, it suffices to pay an
        additional $\tilde{\mathcal{O}}(1)$ factor in \GS.
        The quantum time becomes
        \begin{align*}
            & \sum_{p=\ceil{\log (3\phi)}}^{\floor{\log(\rho n)}}
            \mathcal{O}\left(\ceil{\rho m_p} + \ceil{\rho m_p}^{0.5} \cdot {\beta}/{\phi}
            \cdot 2^{p} \cdot \log^{\mathcal{O}(h-1)} n \cdot \log n \cdot
        2^{\mathcal{O}(h-1)}\right)\\
            & \leq \sum_{p=\ceil{\log (3\phi)}}^{\floor{\log(\rho n)}}
            \mathcal{O}\left(\ceil{\rho m_p} \cdot {\beta}/{\phi} \cdot 2^{p} \cdot \log^{\mathcal{O}(h-1)} n \cdot \log n \cdot 2^{\mathcal{O}(h-1)}\right) \\
            & \leq \mathcal{O}\left( {\beta}/{\alpha} \cdot n \cdot \log^{\mathcal{O}(h)} n \cdot 2^{\mathcal{O}(h)} \right)
        \end{align*}
        Similarly, the query complexity becomes
        \[
            \sum_{p=\ceil{\log (3\phi)}}^{\floor{\log(\rho n)}}
            \mathcal{O}\left(\ceil{\rho m_p}^{\frac{1}{2}}
            {\left({\beta}/{\phi}\right)}^{0.5} \cdot 2^{{p}/{2}} \cdot
        \log^{\mathcal{O}(h-1)} n \cdot \log n \cdot 2^{\mathcal{O}(h-1)}\right)
            = \mathcal{O}\left( {\left({\beta}/{\alpha}\right)}^{0.5} n^{0.5} \cdot \log^{\mathcal{O}(h)} n \cdot 2^{\mathcal{O}(h)} \right).
        \]
        \item[Induction step for \cref{prp:quantum_shifted_gap_ed_time}]
            The algorithm from \cite[Proposition 4.7]{gapED}
        uses \cref{alg:shiftedgapED} employing as an oracle
        \cite[Proposition 4.6]{gapED} with parameters $h$ and $\Theta({\epsilon}/{n})$.

        Adapting \cref{alg:shiftedgapED} is even more straightforward, since it suffices
        to use \GS over all
        \[
            4\ceil{({1+\beta})/({1+\xi})}\ceil{({1+\xi})/({1+\gamma})} \leq 16 \cdot
            ({1+\beta})/({1+\gamma})
        \] independent oracle calls, where $1+\xi = \floor{\sqrt{(1+\beta)(1+\gamma)}}$.

        Observe that each of the oracle calls involves a string of length at most $n$.
        Therefore, the quantum time becomes
        \begin{align*}
            &\mathcal{O}\left({\left(({1+\beta})/({1+\gamma})\right)}^{0.5} \cdot
                ({1+3\gamma})/({1+\alpha}) \cdot n \cdot \log^{\mathcal{O}(h)} n \cdot \log n
        \cdot 2^{\mathcal{O}(h)}\right)\\
        &\quad \leq
            \mathcal{O}\left(({1+\beta})/({1+\alpha}) \cdot n \cdot \log^{\mathcal{O}(h)} n \cdot \log n \cdot 2^{\mathcal{O}(h)}\right).
        \end{align*}
        Similarly, the query complexity becomes
        \begin{align*}
            &
            \mathcal{O}\left({\left(({1+\beta})/({1+\gamma})\right)}^{0.5} \cdot
            {\left(({1+3\gamma})/({1+\alpha})\right)}^{0.5} \cdot n^{0.5}
        \cdot \log^{\mathcal{O}(h)} n \cdot \log n \cdot 2^{\mathcal{O}(h)}\right) \\
        &\quad =
            \mathcal{O}\left({\left(({1+\beta})/({1+\alpha})\right)}^{0.5}
            \cdot n^{0.5} \cdot \log^{\mathcal{O}(h)} n \cdot \log n \cdot 2^{\mathcal{O}(h)}\right).
            \tag*{\qedhere}
        \end{align*}
    \end{description}
\end{proof}

\begin{proof}[Proof of~\cref{lem:quantum_gap_ed}]
    We apply, \cref{prp:quantum_gap_ed_time},
    setting $\beta = k$, $\alpha = kn^{o(1)}$ and $\epsilon$ polynomially small.
    Further, we set $h$ to be the smallest integer such that
    \[\beta < (336\ceil{\log n})^{{-h}/{2}}\alpha^{{h}/({h+1})}.\]
    By doing so, we obtain the desired quantum time and query time.
\end{proof}

\section{Near-Optimal Quantum Algorithm for Pattern Matching with Edits}
\label{sec:pmwe}

In the following section, we  demonstrate how the combinatorial results from \cref{sec:combres} pave the way
for a quantum algorithm with nearly optimal query time up to a sub-polynomial factor.
As already mentioned earlier in \cref{sec:overview}, the development of such an algorithm also relies on other components
such as the combinatorial results of~\cite{CKW20} and the quantum subroutine for the Gap Edit Distance problem from \cref{sec:quantumgaped}.

The primary result we focus on is not \cref{thm:qpmwe} itself, but rather \cref{thm:qedfindproxystrings}.
Specifically, given a pattern \( P \) of length \( m \), a text \( T \) of length \( n \le 3/2 \cdot m \), and an integer threshold \( k > 0 \), we show how to compute proxy strings $P^\#, T^\#$ that are equivalent to the original $P,T$ for the purpose of finding $k$-error occurrences.

\begin{restatable*}{theorem}{qedfindproxystrings}\label{thm:qedfindproxystrings}
    There exists a quantum algorithm that, given a pattern $P$ of length $m$, a text $T$ of length $n \le 3/2 \cdot m$, and an integer threshold $k > 0$, outputs strings $P^\#$ and $T^\#$ such that, for every fragment $T\fragmentco{t}{t'}$:
    \begin{enumerate}[(a)]
    \item if $\ed(P,T\fragmentco{t}{t'})\le k$, then $\ed(P, T\fragmentco{t}{t'}) = \ed(P^\#, T^\#\fragmentco{t}{t'})$ and $\sE_{P, T}(\mX) = \sE_{P^\#, T^\#}(\mX)$ holds for every optimal alignment $\mX : P \onto T\fragmentco{t}{t'}$;
    \item $\ed(P, T\fragmentco{t}{t'}) \leq \ed(P^\#, T^\#\fragmentco{t}{t'})$.
    \end{enumerate}
    The algorithm uses $\Ohhat(\sqrt{km})$ queries and $\Ohhat(\sqrt{k}m+k^2)$ time.
\end{restatable*}

We show how to derive \cref{thm:qpmwe} from \cref{thm:qedfindproxystrings}.

\qpmwe

\begin{proof}
    We first show how to prove \cref{thm:qpmwe} assuming that we have a quantum algorithm
    $\mathsf{A}$ that solves \PMwE when $n \leq 5/4 \cdot m+k$, using $\Ohhat(\sqrt{km})$ queries and $\Ohhat(\sqrt{k}m + k^{3.5})$ time.

    Consider partitioning \(T\) into \(\Oh(n/m)\) contiguous blocks of length \(\ceil{m/4}\) each (with the last block potentially being shorter).
    For every block that starts at position $i \in \fragmentco{0}{|T|}$,  we consider a segment $T_i\coloneqq T\fragmentco{i}{\min(n, i + \floor{5/4 \cdot m}+k)}$ of length at most $5/4 \cdot m+k$.
    Every $k$-error occurrence \(T\fragmentco{t}{t'}\) starting at $t\in \fragmentco{i}{i+\ceil{m/4}}$ is of length at most $m+k$ and thus is fully contained in the segment $T_i$.

    For the first claim, we iterate over all such segments, applying $\mathsf{A}$ to \(P\), the current segment $T_i$, and \(k\).
    The final set of occurrences is obtained by taking the union of all sets returned by the quantum algorithm.

    For the second claim, we employ \GS over all segments, utilizing a function that determines whether the set of $k$-error occurrences of $P$ in the segment $T_i$ is empty or not.

    For the purpose of showing how to satisfy the assumption,  we distinguish three different regimes of $k$:
    \begin{itemize}
        \item If $k < m/4$, we set as $\mathsf{A}$ an algorithm that first uses \cref{thm:qedfindproxystrings} to obtain the proxy strings $P^\#$ and $T^\#$ and then applies the classical algorithm of \cref{prp:classical_pmwe} to compute $\OccE_k(P^\#, T^\#)$.
        The result is returned as $\OccE_k(P, T)$.
        Observe that the conditions for \cref{thm:qedfindproxystrings} are satisfied since $5/4\cdot m +k < 3/2 \cdot m$.
        \item If $m/4 \le k \leq m$, we set as $\mathsf{A}$ an algorithm that first reads $P$ and $T$ completely and then directly applies the classical algorithm of \cref{prp:classical_pmwe} to compute $\OccE_k(P, T)$.
        From $k = \Omega(m)$ and $n =\Oh(m)$ follows that this requires $m+n = \Oh(\sqrt{km})$ queries and $\Ohtilde(m + k^{3.5})$ time.
        \item If $m \leq k$, then $\mathsf{A}$ simply returns $\OccE_k(P,T)=\fragment{0}{n}$ represented as an arithmetic progression.
        \qedhere
    \end{itemize}
\end{proof}

We prove that the query complexity is optimal for $k = o(m)$.

\qpmwelb
\begin{proof}
   Let $p = \lfloor{(n+m)/(2m)}\rfloor$. For a tuple $\Ab=(A_0,\ldots,A_{p-1})$ of $p$ subsets of  $\fragmentco{0}{m}$, consider a text $T_{\Ab}\in \{\zero,\one\}^n$ such that:
    \[T_{\Ab}[i] = \begin{cases}
        \zero & \text{if }i=2qm+r\text{ for }q\in \fragmentco{0}{p}\text{ and }r\in \fragmentco{0}{m}\setminus A_{q},\\
        \one & \text{otherwise.}
    \end{cases}\]

    \begin{claim}\label{claim:lb}
        For $q\in \fragmentco{0}{p}$, we have $\OccE_k(P,T_{\Ab})\cap \fragmentco{2qm-m}{2qm+m} \ne \emptyset$ if and only if $|A_q|\le k$.
    \end{claim}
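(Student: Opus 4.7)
The forward direction is immediate: setting $t = 2qm$ and $t' = 2qm + m$, the window $T_{\Ab}\fragmentco{2qm}{2qm+m}$ agrees with $P = \zero^m$ except in the $|A_q|$ positions indexed by $A_q$, so $\ed(P, T_{\Ab}\fragmentco{2qm}{2qm+m}) \le |A_q| \le k$, placing $2qm$ in $\OccE_k(P,T_{\Ab}) \cap \fragmentco{2qm-m}{2qm+m}$.

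For the backward direction, I will assume $|A_q| > k$ and rule out every starting position $t \in \fragmentco{2qm-m}{2qm+m} \cap \fragment{0}{n}$ by means of the explicit formula
\[
\ed(\zero^m, W) = \max\bigl(u,\ u + m - L,\ L - m\bigr),
\]
valid for any binary string $W$ of length $L$ with $u$ ones; the upper bound is witnessed by aligning as many $\zero$s of $W$ to $P$ as possible, while each of the three quantities on the right is an elementary lower bound (one edit per $\one$ in $W$; $|L-m|$ mandatory indels from the length mismatch). It therefore suffices to show, for $L = t' - t$ and $u$ the number of ones in $T_{\Ab}\fragmentco{t}{t+L}$, that either $u > k$, or $u + m - L > k$ when $L \le m$, or $L - m > k$ when $L \ge m$.

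The structural heart of the argument is the observation that, because block $q$ occupies exactly the $m$ positions $\fragmentco{2qm}{2qm+m}$ and is flanked by all-ones gaps, any window of length $L \approx m$ starting in $\fragmentco{2qm-m}{2qm+m}$ either (a) lies almost entirely inside block $q$ and misses at most $m - L$ of its positions, so $u \ge |A_q| - (m - L)$ and hence $u + m - L \ge |A_q| > k$; or (b) spills into an adjacent gap, contributing one to $u$ for each unit of spill-over. A case split on whether $t$ lies in block $q$ (set $r = t - 2qm$) or in gap $q-1$ (set $s = 2qm - t$), and further on the position of $t' = t + L$ relative to the block/gap boundaries, shows that in every subcase one of the three lower bounds above exceeds $k$: in the block-$q$ case the two contributions combine to $u \ge |A_q| + \max(0, L-m)$, and in the gap case the initial $s$ ones already ensure enough surplus.

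The only delicate regime is when $t$ lies deep in gap $q-1$ with $s$ large while $L$ is so small that the window barely reaches block $q$ and may miss $A_q$ completely; there, however, $L - s < m - |A_q|$, so the length penalty alone gives $u + m - L \ge s + m - L > |A_q| > k$. I do not anticipate real difficulty: the whole proof is routine bookkeeping once the edit-distance formula above is in hand. The minor boundary issues ($q = 0$, $q = p - 1$, and the truncation of $T_{\Ab}$ at length $n$) only restrict the range of feasible $t'$ and therefore do not weaken any of the lower bounds.
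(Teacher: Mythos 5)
Your proof is correct in substance, but it takes a genuinely more computational route than the paper. The paper's converse is a short counting argument: any fragment $W$ with $\ed(\zero^m,W)\le k$ must contain at least $m-k$ zeros and at most $k$ ones; since the window starts at or after position $2qm-m$ (so past block $q-1$) and every zero of $T_{\Ab}$ in a later block is shielded from the window's range by a run of $m>k$ ones, all zeros of $W$ lie in block $q$, which has only $m-|A_q|$ of them, whence $m-|A_q|\ge m-k$. You instead derive the exact formula $\ed(\zero^m,W)=\max(u,\ u+m-L,\ L-m)$ and, arguing contrapositively, rule out each start $t\in\fragmentco{2qm-m}{2qm+m}$ by a positional case analysis; your cases (window inside block $q$, spill-over into an all-ones flank, start in the preceding gap, and the short-window regime $L-s<m-|A_q|$) do exhaust all situations, and in each one of the three lower bounds exceeds $k$, so the argument goes through, including at the boundaries. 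What your route buys is an exact distance formula and a fully explicit verification; what it costs is precisely the bookkeeping that the paper's zero-counting sidesteps — the paper uses only the same two elementary bounds ($\ed\ge u$ and $\ed\ge u+m-L$), applied once and globally rather than per start/end position. One small slip in your summary sentence: $u\ge |A_q|+\max(0,L-m)$ is too strong when the window lies strictly inside block $q$ with $L<m$ (it can miss up to $m-L$ of the ones); the correct uniform bound in the block case, which your item (a) in fact states and which suffices everywhere, is $u\ge |A_q|+L-m$, i.e., $\max(u,\,u+m-L)\ge|A_q|>k$.
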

    \begin{claimproof}
        Observe that $\ed(P,T_{\Ab}\fragmentco{2qm}{2qm+m})=|A_q|$ because $T_{\Ab}\fragmentco{2qm}{2qm+m}$ consists of $|A_q|$ copies of $\one$ and $m-|A_q|$ copies of $\zero$. Thus, $|A_q|\le k$ implies $2qm\in \OccE_k(P,T_{\Ab})$.

        For the converse implication, suppose that $\ed(P,T\fragmentco{i}{j})\le k$ holds for some fragment $T_{\Ab}\fragmentco{i}{j}$ with $i\in \fragmentco{2qm-m}{2qm+m}$.
        In particular, this means that $T_{\Ab}\fragmentco{i}{j}$ contains at least $m-k$ copies of $\zero$ and at most $k$ copies of $\one$.
        The fragment $T_{\Ab}\fragmentco{2qm}{2qm+m}$ is separated from other $\zero$s in $T_{\Ab}$ by at least $m>k$ copies of $\one$ in both directions, so all $\zero$s contained in $T_{\Ab}\fragmentco{i}{j}$ must lie within  $T_{\Ab}\fragmentco{2qm}{2qm+m}$.
       The number of $\zero$s within  $T_{\Ab}\fragmentco{2qm}{2qm+m}$ is $m-|A_q|$.
       Consequently, $m-|A_q|\ge m-k$.
    \end{claimproof}

    We prove the query complexity lower bounds using \cref{thm:adv}.
    Consider a set $X$ consisting of texts $T_{\Ab}$ such that $|\{q : |A_q|=k+1\}|=\lceil(p+1)/2\rceil$ and $|\{q : |A_q|=k\}|=\lfloor(p-1)/2\rfloor$,
    as well as set a $Y$ consisting of all texts $T_{\Ab}$ such that $|\{q : |A_q|=k+1\}|=\lceil(p-1)/2\rceil$ and $|\{q : |A_q|=k\}|=\lfloor(p+1)/2\rfloor$.
    Moreover, let $R\subseteq X\times Y$ be a relation consisting of instances that differ by exactly one substitution in $T_{\Ab}$. Observe that:
    \begin{itemize}
        \item Every instance in $X$ is in relation with exactly $\lfloor(p+1)/2\rfloor \cdot  (k+1)$ instances in $Y$.
        \item Every instance in $Y$ is in relation with exactly $\lceil(p+1)/2\rceil \cdot  (m-k)$ instances in $X$.
        \item Instances in $R$ differ at exactly one bit.
    \end{itemize}
    By \cref{claim:lb}, any (quantum) algorithm computing $\OccE_k(P,T)$ must be able to distinguish instances in $X$ from instances in $Y$.
    Consequently, the quantum algorithm has query complexity of at least \[\Omega\left(\sqrt{\lfloor(p+1)/2\rfloor \cdot  (k+1)\cdot \lceil(p+1)/2\rceil \cdot  (m-k)}\right)=\Omega\left(n/m\cdot \sqrt{k(m-k)}\right).\]

    Next, consider a set $X'$ consisting of all texts $T_{\Ab}$ such that $|\{q : |A_q|=k+1\}|=p$,
    as well as a set $Y'$ consisting of all texts $T_{\Ab}$ such that $|\{q : |A_q|=k+1\}|=p-1$ and $|\{q : |A_q|=k\}|=1$.
    Moreover, let $R'\subset X'\times Y'$ be a relation consisting of instances that differ by exactly one substitution in $T_{\Ab}$.
    Observe that:
    \begin{itemize}
        \item Every instance in $X$ is in relation with exactly $p \cdot  (k+1)$ instances in $Y'$.
        \item Every instance in $Y$ is in relation with exactly $m-k$ instances in $X'$.
        \item Instances in $R$ differ at exactly one bit.
    \end{itemize}
    Since $\max\OccE_k(P,T_{\Ab})< 2pm-m$, by \cref{claim:lb}, any (quantum) algorithm deciding $\OccE_k(P,T)\ne \emptyset$ must be able to distinguish instances in $X'$ from instances in $Y'$.
    Consequently, the quantum algorithm uses at least $\Omega\left(\sqrt{p\cdot (k+1) \cdot  (m-k)}\right)=\Omega\left(\sqrt{n/m}\cdot \sqrt{k(m-k)}\right)$ queries.
\end{proof}

\Cref{sec:pmwe} is organized as follows.
First, in \cref{sec:combres_implementation}, we explain how to construct the combinatorial structures
used throughout \cref{sec:combres}.
Next, in \cref{sec:proxy_strings_ed}, we prove \cref{thm:qedfindproxystrings} by utilizing the communication results along with two additional components:
the \gaped algorithm from \cref{lem:quantum_gap_ed},
and an algorithm capable of finding a candidate based on the findings of \cite{CKW20}.
We use the latter component in \cref{sec:proxy_strings_ed} without a proof.
We defer its proof to \cref{sec:qed_intro_analyze}.

\subsection{Construction of the Combinatorial Structures from \cref{sec:combres}}
\label{sec:combres_implementation}

In this (sub)section, we present algorithmic constructions for the combinatorial results from \cref{sec:combres}. Many of these constructions could be optimized for faster execution. However,  as our focus does not lie in achieving sublinear quantum time anyway, we present simpler constructions.

\begin{proposition} \label{prop:construct_g_s}
    Let $P$ be a string of length $m$, let $T$ be a string of length $n \leq 3/2 \cdot m$, and let $k > 0$ be a threshold.
    Further, let $S$ be a set of alignments of $P$ onto substrings of $T$ of cost at most $k$ such that $S$ encloses $T$.

    Then, there exists a (classical) algorithm that, given $\sE_{P,T}(\mX)$ for all $\mX : P \onto T\fragmentco{t}{t'}$ such that $\mX \in S$, constructs $\bG_S$ from \cref{def:bg} and computes $\bc(\bG_S)$.
    Moreover, if $\bc(\bG_S) > 0$, the algorithm finds $\tau_i^c$ and $\pi_j^c$ from \cref{def:pitau} for all $c \in \fragmentco{0}{\bc(\bG_S)}$,  $i \in \fragmentco{0}{n_c}$ and $j \in \fragmentco{0}{m_c}$.
    The algorithm requires $\Ohtilde(m \cdot |S|)$ time.
\end{proposition}

\begin{proof}

    Consider the following algorithm.
    \begin{enumerate}[(i)]
        \item Reconstruct all $\mX : P \onto T\fragmentco{t}{t'} \in S$
            using the edit information $\sE_{P, T}(\mX)$.
            \label{lem:constrgs:i}
        \item Construct $\bG_S$ from \cref{def:bg}: mark an edge red if it corresponds to at least one edit in any alignment in $S$, otherwise mark it as black.
            Counting the connected components containing only black edges yields $\bc(\bG_S)$.
            If $\bc(\bG_S) = 0$, report that $\bc(\bG_S) = 0$ and return.
            \label{lem:constrgs:ii}
        \item Select for every black connected component the smallest position of a character of $P$ contained in it.
            Then, sort in non-decreasing order these $\bc(\bG_S)$ positions and mark the black components
            containing the $c$-th character (0-indexed) in this sorted sequence
            as the $c$-th black connected component.
            \label{lem:constrgs:iii}
        \item For every $c \in \fragmentco{0}{\bc(\bG_S)}$, sort
            in non-decreasing order the characters of $P$ and $T$ contained in the $c$-th black component
            w.r.t their position at which they appear in $P$ and $T$.
            The $i$-th character in the first sorted sequence corresponds to $\tau_i^c$,
            and the $j$-th correponds to $\pi_j^c$.
            \label{lem:constrgs:iv}
    \end{enumerate}

    Regarding the time complexity,
    note that reconstructing each of the alignments in \(S\) results in \(|S|\) alignments,
    each with a size of \(\mathcal{O}(m + k)\).
    Consequently, the graph \(\bG_S\) has at most \(\mathcal{O}(m)\) vertices and \(\mathcal{O}(|S|\cdot m)\) edges.
    Hence, steps \eqref{lem:constrgs:i}, \eqref{lem:constrgs:ii}, \eqref{lem:constrgs:iii}, and \eqref{lem:constrgs:iv}
    can all be implemented in \(\Ohtilde(|S|\cdot m)\) time.
\end{proof}

Next, we introduce a data structure that, given a position in $T$ or $P$,
supports \emph{projection queries}.

\begin{definition}
    Let $P$ be a string of length $m$, $T$ a string of length $n \leq 3/2 \cdot \cdot m$, and let $k > 0$ be an integer threshold. Additionally, let $S$ be a set of alignments of $P$ onto $T$ such that $S$ encloses $T$ and $\bc(\bG_S) \neq 0$.

    A \emph{projection query} can take one of the following two forms:
    \begin{enumerate}[(a)]
        \item Given $y \in \fragmentco{0}{n}$, output $c$ such that $\Pi_T(y) = \tau_i^c$ if $\Pi_T(y) \neq -1$, or output $\bc(\bG_S) - 1$  otherwise.
        \label{def:proj_ds:a}
        \item Given $x \in \fragmentco{0}{m}$, output $c$ such that $\Pi_P(x) = \pi_i^c$ if $\Pi_P(x) \neq -1$, or output $\bc(\bG_S) - 1$ otherwise.
        \label{def:proj_ds:b} \qedhere
    \end{enumerate}
\end{definition}

\begin{proposition}\label{prop:proj_ds}
    Given the same input as \cref{prop:construct_g_s} such that $\bc(\bG_S) \neq 0$,
    we can construct a data structure in $\Ohtilde(m \cdot |S|)$ time
    that supports $\Oh(1)$-time projection queries.
\end{proposition}
\begin{proof}
    We briefly argue how to support queries of type \eqref{def:proj_ds:a};
    queries of type \eqref{def:proj_ds:b} can be handled similarly.
    Using \cref{prop:construct_g_s}, we first compute and sort all $\tau_i^c$ in non-increasing order.
    Iterating over all $y \in \fragmentco{0}{n}$ from $0$ to $n-1$,
    we keep a pointer to the largest $\tau_i^c$ of the sorted sequence such that $\tau_i^c \leq y$.
    By doing so, we can directly store the answer to the query with input $y$ in an array and retrieve them in $\Oh(1)$ time.
\end{proof}

Through a data structure supporting projection queries we can build the function $\w_S$ from \cref{lem:exfuncover}.

\begin{proposition} \label{prp:w_interval_sum}
    There exists a (classical) data structure that, given the same input as \cref{prop:construct_g_s} and access to $t_q$-time projection queries, can be constructed in $\Ohtilde(k|S|\cdot q_q)$ time.
    The data structure supports the following queries: Given $\fragmentco{\ell}{r} \subseteq \fragmentco{0}{\bc(\bG_S)}$ output $\sum_{c=\ell}^{r-1} \w_S(c)$ in $\Ohtilde(1)$ time, where $\w_S$ is defined as \cref{lem:exfuncover}.
\end{proposition}

\begin{proof}
    We build a balanced binary search tree where the keys are the identifiers in \(\fragmentco{0}{\bc(\bG_S)}\).
    Our final goal is to store all \(c \in \fragmentco{0}{\bc(\bG_S)}\) such that \(\w_S(c) \neq 0\).
    We store a key \(c \in \fragmentco{0}{\bc(\bG_S)}\) only if \(\w_S(c) \neq 0\), and we store along with it \(\w_S(c)\).
    Using common techniques to support range sums,
    we can output \(\sum_{c=i}^{j-1} \w_S(c)\) for any \(\fragmentco{i}{j} \subseteq \fragmentco{0}{\bc(\bG_S)}\) in $\Ohtilde(1)$ time.

    To construct \(\w_S\),
    we strictly follow the construction described in \cref{lem:exfuncover}.
    We begin with an empty binary search tree, meaning \(\w_S(c) = 0\) for all \(c \in \fragmentco{0}{\bc(\bG_S)}\).
    We iterate over the edits as described in \cref{lem:exfuncover}, and use
    a projection query to find the key that needs to be updated.
\end{proof}

Next, we show that how the black cover from \cref{prp:quantum_blackcover}
can be constructed in the quantum setting.

\begin{lemma} \label{prp:quantum_blackcover}
    The construction of a black cover described in \cref{def:periodcover} can be adapted to the quantum setting.

    That is, there exists a quantum algorithm that,
    given oracle access to \(P\) and \(\Ohtilde(1)\)-time oracle access to interval sums of a function \(\w_S\)
    that covers \(S\) with total weight \(w\),
    outputs a period cover \(C_S \subseteq \fragmentco{0}{\bc(\bG_S)}\) of compressed size \(\Ohtilde(w + k|S|)\).
    This algorithm requires \(\Ohtilde(\sqrt{wm})\) queries and runs in \(\Ohtilde(\sqrt{wm} + w^2)\) time.

    The representation of \(C_S\) consists of $\mathsf{sz} = \Ohtilde(w)$ intervals
    $\fragment{a_1}{b_1}, \ldots, \fragment{a_\mathsf{sz}}{b_\mathsf{sz}} \subseteq \fragmentco{0}{\bc(\bG_S)}$
    such that $C_S = \bigcup_{i=1}^{\mathsf{sz}} \fragment{a_i}{b_i}$. For each $i \in \fragment{1}{\mathsf{sz}}$ either
    \begin{center}
        \quad $\LZ(T\fragment{\tau_0^{a_i}}{\tau_0^{b_i}})$ \quad or \quad $\LZ(\rev{T\fragment{\tau_0^{a_i}}{\tau_0^{b_i}}})$
    \end{center}
    is provided. For such factorizations it holds
    \[
        \sum_{i=1}^{\mathsf{sz}} (1 - \mathrm{I}_i) \cdot |\LZ(T\fragment{\tau_0^{a_i}}{\tau_0^{b_i}})|
        + \mathrm{I}_i \cdot \LZ(\rev{T\fragment{\tau_0^{a_i}}{\tau_0^{b_i}}})
        = \Ohtilde(w + k|S|),
    \]
    where $\mathrm{I}_i \in \{0,1\}$ indicates whether the compressed representation along with the $i$-th interval is reversed.
\end{lemma}

\begin{proof}
    The algorithm computes $c_{\mathrm{pref}}$, $c_{\mathrm{suff}}$, $c_{\mathrm{lsuff}}$,
    and $c_{\mathrm{lpref}}$ and retrieves the $\LZ$ compression of the relative strings
    by combining \cref{prp:quantumlz} with an exponential/binary search.
    Furthermore, the algorithm computes $C^{0,\bc(\bG_S)-1}_S$
    using the recursive construction described in \cref{def:periodcover}.
    In the recursion, when computing $C_S^{i,j}$ for $\fragment{i}{j} \subseteq \fragmentco{0}{\bc(\bG_S)}$,
    the algorithm again uses \cref{prp:quantumlz} combined with an exponential/binary search
    to find the corresponding indices $i'$ and $j'$.
    Once the computation of $C_S^{i,j}$ is complete,
    the recursion returns $i'$, $h$, $j'$, $\LZ(\rev{T\fragment{\tau^{i'}_0}{\tau^h_0}})$,
    and $\LZ(T\fragmentoc{\tau^h_0}{\tau^{j'}_0})$,
    along with the same information from the lower recursion levels.
    The collection of intervals $\fragment{i'}{h}$ and $\fragmentoc{h}{j'}$
    (the latter may be extended to $\fragment{h}{j'}$ without changing $C_S$, as $h$ is already included in the former intervals) across all recursion calls form
    $\fragment{a_1}{b_1}, \ldots, \fragment{a_\mathsf{sz}}{b_\mathsf{sz}}$.
    The bound on $\mathsf{sz}$ and on the sum of the compressed representation is provided in the proof of \cref{prp:encode_funcover}.

    For the complexity analysis,
    note that the computation of $c_{\mathrm{pref}}$, $c_{\mathrm{suff}}$, $c_{\mathrm{lsuff}}$,
    and $c_{\mathrm{lpref}}$ via \cref{prp:quantumlz} combined with an exponential/binary search
    requires $\Ohtilde(\sqrt{(k + w)m} + (k + w)^2) = \Ohtilde(\sqrt{km} + k^2)$
    quantum time and $\Ohtilde(\sqrt{(k + w)m}) = \Ohtilde(\sqrt{km})$ query time.

    Regarding the analysis of the recursion,
    let $\fragment{i_0}{j_0}, \ldots, \fragment{i_{d-1}}{j_{d-1}}$
    be the intervals considered by it at a fixed depth.
    Further, define $w_{i,j} \coloneqq \sum_{c=i-1}^{j} \w_S(c)$ for $i, j \in \fragmentco{0}{\bc(\bG_S)}$.
    As already noted in the proof of \cref{prp:encode_funcover},
    $\sum_{r=0}^{d-1} w_{i_r, j_r} = \Oh(w)$.

    Suppose that for $\fragment{i}{j} \in \{\fragment{i_r}{j_r}\}_{r=0}^{d-1}$
    we add $\fragment{i'}{j'}$ to $C_S^{i,j}$.
    Computing $i'$ and $j'$ by combining \cref{prp:quantumlz}
    with an exponential/binary search requires
    $\Ohtilde(\sqrt{(j - i + 1) \cdot w_{i,j}})$ query time and
    $\Ohtilde(\sqrt{(j - i + 1) \cdot w_{i,j}} + w_{i,j}^2)$ quantum time.
    The expression $\sum_{r=0}^{d-1} \sqrt{(j_r - i_r + 1) \cdot w_{i_r,j_r}}$
    is maximized when the lengths $j_r - i_r + 1$ and the weights $w_{i_r,j_r}$
    are all roughly the same size.
    From $\sum_{r=0}^{d-1} w_{i_r, j_r} = \Oh(w)$ and
    $\sum_{r=0}^{d-1} (j_r - i_r + 1) = \Oh(m)$, we obtain
    that for the level of recursion we fixed the algorithm requires
    \begin{align*}
        \sum_{r=0}^{d-1} \Ohtilde\left(\sqrt{(j_r - i_r) \cdot w_{i_r,j_r}} + w_{i_r,j_r}^2 \right)
        = \sum_{r=0}^{d-1} \Ohtilde\left(\sqrt{{m}/{d}} \cdot \sqrt{{w}/{d}}\right) + \Ohtilde\left(w^2\right)
        = \Ohtilde (\sqrt{wm} + w^2)
        = \Ohtilde (\sqrt{km} + k^2)
    \end{align*}
    quantum time and $\Ohtilde(\sum_{r=0}^{d-1} \sqrt{(j_r - i_r + 1) \cdot w_{i_r,j_r}}) \leq \Ohtilde(\sqrt{km})$ query time.
    The overall claimed complexities follow by using the fact that the recursion has depth at most $\Ohtilde(1)$.
\end{proof}

Another operation which we needs to be implemented is the one
of checking whether a $k$-error occurrence $T\fragmentco{t}{t'}$ is captured.
To this end, it suffices to give a data structure that, given $t \in \fragmentco{0}{n}$,
computes $\min_{i \in \fragment{0}{n_0 - m_0}}|\tau_{i}^0 - (t+\pi_0^0)|$.

\begin{lemma}\label{prop:capture_ds}
    There exists a (classical) data structure that, given the same input as \cref{prop:proj_ds},
    can be constructed in $\Ohtilde(k|S|)$ time and can answer queries in $\Oh(1)$ time of the following type:
    Given $t \in \fragmentco{0}{n}$, compute $\min_{i \in \fragment{0}{n_0 - m_0}}|\tau_{i}^0 - (t+\pi_0^0)|$.
\end{lemma}

\begin{proof}
    Using \cref{prop:construct_g_s}, we first compute and sort all $\tau_i^0$ for $i \in \fragmentco{0}{n_0}$
    in non-increasing order.
    Iterating over all $y \in \fragmentco{0}{n}$ from $0$ to $n-1$,
    we keep a pointer to the largest $\tau_i^0$ of the sorted sequence such that $\tau_i^c \leq y+\pi_0^0$.
    By doing so, we can store for each $y \in \fragmentco{0}{n}$ the value $\min_{i \in \fragment{0}{n_0 - m_0}}|\tau_{i}^0 - (y+\pi_0^0)|$,
    as the minimum is achieved either by the current $\tau_i^0$ pointed to by the pointer, or by the next value $\tau_{i+1}^0$, if it exists.
\end{proof}

Finally, we conclude this (sub)section by showing
how to construct the strings $P,T$ and $P^\#,T^\#$
in the two respective cases that $\bc(\bG_S) = 0$
and $\bc(\bG_S) > 0$.

\begin{proposition} \label{prp:reconstruct_pt}
    There exists a (classical) algorithm that, given the same input as \cref{prop:construct_g_s},
    whenever $\bc(\bG_S) = 0$, reconstructs $P$ and $T$ in $\Ohtilde(m \cdot |S|)$ time.
\end{proposition}

\begin{proof}
    The edit information of $\mX$ for all $\mX \in S$ allows us to
    associate to each node in $\bG_S$ incident to a red edge
    the corresponding character from $P$ or $T$.
    We propagate characters to other nodes contained in red components of
    $\bG_S$ by executing a depth search using exclusively black edges.
    By doing so, we are guaranteed to retrieve the correct characters of $P$ and $T$
    because a black edge between two nodes indicates equality between
    the two corresponding characters.
    Moreover, from $\bc(\bG_S) = 0$ follows that all characters
    are contained in red components, and therefore
    this procedure delivers us a full reconstruction of
    the characters of $P$ and $T$.
\end{proof}

\begin{proposition} \label{prp:reconstruct_pt_hash}
    There exists a (quantum) algorithm that,
    given the same input as \cref{prop:construct_g_s}
    such that $\bc(\bG_S) > 0$,
    constructs $P^\#, T^\#$ from \cref{prp:subhash}
    using $\Ohtilde(\sqrt{k|S| \cdot m})$ queries and $\Ohtilde(m \cdot |S|)$ time.
\end{proposition}

\begin{proof}
    First, we identify all characters associated with nodes contained in red components, following the approach used in the proof of \cref{prp:reconstruct_pt}. For each character in the \( c \)-th black component, we assign a unique sentinel character \( \#_c \).

    Next, we retrieve a black cover. To achieve this, we first use the projection data structure from \cref{prop:proj_ds} to compute the data structure from \cref{prp:w_interval_sum}, which provides \(\Oh(1)\) interval sums of the function \( \w_S \) that covers \( S \) with a total weight \( w = \Oh(k|S|) \). This allows us to construct the black cover using \cref{prp:quantum_blackcover}.
    We then decompress all intervals from the black cover. For any character within a decompressed interval that belongs to a black component, we propagate its known value throughout all the nodes in that component, replacing the associated sentinel character.

    Finally, the strings \( P^\# \) and \( T^\# \) are obtained by selecting the associated character at each position.
\end{proof}

\subsection{Construction of the Proxy Strings
    \texorpdfstring{$P^\#$}{P-sharp}
    and
    \texorpdfstring{$T^\#$}{T-sharp}
via Candidate Sets}
\label{sec:proxy_strings_ed}

The goal of this (sub)section is to prove \cref{thm:qedfindproxystrings}.
One of the fundamental ingredients we use to prove \cref{thm:qedfindproxystrings} is \cref{lem:qed_analyze} that we state here, but whose proof we defer to the next (sub)section.

\begin{restatable*}{lemmaq}{qedanalyze}\label{lem:qed_analyze}
   There is a quantum algorithm that, given a pattern $P$ of length $m$, a text $T$ of length $n$, and an integer threshold $k > 0$ such that $n < 3/2 \cdot m$, computes a set $C$ of \emph{candidate positions} such that $\OccE_k(P,T) \subseteq C \subseteq \fragment{0}{n}$, and outputs one of two compressed representations of $C$:
    \begin{itemize}
        \item either it outputs $\floor{C/k}$ of size $|\floor{C/k}|=\Ohtilde(k)$, \textbf{or}
        \item it outputs $q \in \Z_{>0}$ and $I \subseteq \fragmentco{0}{q}$ of size $|I| = \Oh(k)$ such that $C = \{ t \in \fragment{0}{n-m-k} : t \ \mathrm{mod} \ q \in I\}$
        and $\OccE_k(P,T) \subseteq C \subseteq \OccE_{44k}(P,T)$.
    \end{itemize}
    The algorithm takes $\Ohtilde(\sqrt{km})$ query time and $\Ohtilde(\sqrt{km}+k^2)$ quantum time. \qedhere
\end{restatable*}

At a high level, our goal is to use the candidate set from \cref{lem:qed_analyze} to construct a set \( S \) of alignments of \( P \) onto fragments of \( T \) that not only captures all \( k \)-error alignments but also avoids unnecessarily costly alignments. If the candidate set \( C \) is already in the second compressed form from \cref{lem:qed_analyze}, we know that \( C \) inherently avoids such costly alignments.

However, if \( |C| = \Ohtilde(k) \), we aim to apply the quantum \gaped algorithm to filter out candidate positions where costly \( k \)-error occurrences arise. A challenge we face is that the \gaped algorithm requires two input strings of the same length, while the candidate set only provides the starting positions of the \( k \)-error occurrences.

The following corollary of \cref{lem:quantum_gap_ed} helps to handle this mismatch in formulation.

\begin{corollary}[of \cref{lem:quantum_gap_ed}]\label{cor:quantum_gap_ed}
    For all positive integers $k$, $m$, and $n$, there exists an integer $\ell=m^{1+o(1)}$ and a function $\phi : \{0,1\}^\ell \times \Sigma^m \times \Sigma^n \times \fragment{0}{n}\to \{\text{\yes}, \text{\no}\}$ such that:
    \begin{itemize}
        \item if $\mathrm{seed}\in {\{0, 1\}}^{\ell}$ is sampled uniformly at random, then, with high probability,
        $\OccE_k(P,T)\subseteq \{t\in \fragmentco{0}{n} : \phi(\mathrm{seed},P,T,t)=\text{\yes}\}\subseteq \OccE_{K}(P,T)$ holds for some $K=k\cdot m^{o(1)}$;
       \item there exists a quantum algorithm that computes $\phi(\mathrm{seed},P,T,t)$
       in $\Oh(m^{1 + o(1)})$ time using $\Oh(m^{0.5 + o(1)})$ queries to $P$, $T$, and $\mathrm{seed}$.
    \end{itemize}
\end{corollary}
\begin{proof}
Denote $T_\$ \coloneqq T\cdot \$^{m-1}$ and $\Sigma_\$=\Sigma\cup\{\$\}$, where $\$\notin \Sigma$ is a unique special character.
Let $f : \{0,1\}^\ell\times \Sigma_\$^m\times \Sigma_\$^m$ be the function of \cref{lem:quantum_gap_ed} for length $n$ and threshold $2k$.
We set $\phi(\mathrm{seed},P,T,t)=f(\mathrm{seed},P,T_\$\fragmentco{t}{t+m})$.
By \cref{lem:quantum_gap_ed}, this function can be evaluated in $\Oh(m^{1 + o(1)})$ time
using $\Oh(m^{0.5 + o(1)})$ queries to $P$, $T$, and $\mathrm{seed}$.

If $t\in \OccE_k(P,T)$, then there is $t'\in \fragment{t}{n}$ such that $\ed(P, T\fragmentco{t}{t'})\le k$,
and thus $\ed(P, T_\$\fragmentco{t}{t'})\le k$.
Moreover, $\ed(P, T_\$\fragmentco{t}{t+m}) \le \ed(P, T_\$\fragmentco{t}{t'})+|t+m-t'| = \ed(P, T\fragmentco{t}{t'})+\big||P|-|T\fragmentco{t}{t'}|\big| \le 2\ed(P, T\fragmentco{t}{t'}) \le 2k$.
Hence, $\phi(\mathrm{seed},P,T,t)=f(\mathrm{seed},P,T_\$\fragmentco{t}{t+m})=\text{\yes}$ holds w.h.p.

Finally, note that $\phi(\mathrm{seed},P,T,t)=\text{\no}$ holds with high probability unless $\ed(P,T_\$\fragmentco{t}{t+m})\le (2k)\cdot m^{o(1)}$.
If $t+m \le n$, the latter immediately implies $\ed(P,T\fragmentco{t}{t+m})\le (2k)\cdot m^{o(1)}$.
Otherwise, $\ed(P,T_\$\fragmentco{t}{t+m})=\ed(P\fragmentco{0}{p},T\fragmentco{t}{n})+\ed(P\fragmentco{p}{m},\$^{t+m-n})$ holds for some $p\in \fragment{0}{m}$, and thus $\ed(P,T\fragmentco{t}{n}) \le \ed(P\fragmentco{0}{p},T\fragmentco{t}{n})+ \ed(P\fragmentco{p}{m},\varepsilon) =   \ed(P\fragmentco{0}{p},T\fragmentco{t}{n}) + m-p \le  \ed(P\fragmentco{0}{p},T\fragmentco{t}{n})+\ed(P\fragmentco{p}{m},\$^{t+m-n})=\ed(P,T_\$\fragmentco{t}{t+m}) \le (2k)\cdot m^{o(1)}$.
In either case, $t\in \OccE_{K}(P,T)$ for $K = (2k)\cdot m^{o(1)}$.
\end{proof}

Now, we present an oracle that, given a set \( F \subseteq \fragmentco{0}{n} \), either identifies a position \( t \in F \cap \OccE_{K}(P, T) \), where \( K = k \cdot m^{o(1)} \), or confirms that \( F \cap \OccE_k(P, T) = \emptyset \).
Even though \cref{lem:verifier} is formulated for general $F$, we  use \( F \) as the starting points of all \( K \)-error occurrences that the set \( S \) already captures. Then, the oracle helps us in either finding an uncaptured \( K \)-error occurrence or confirming that all \( k \)-error occurrences are already captured.

\begin{lemma}\label{lem:verifier}
There exists a quantum algorithm that, given a pattern $P$ of length $m$, a text $T$ of length $n\le 3/2\cdot m$, an integer $k>0$, and a set $F\subseteq \fragment{0}{n}$, outputs one of the following:
\begin{itemize}
    \item a position $t\in F\cap \OccE_{K}(P,T)$, where $K=k\cdot m^{o(1)}$; or
    \item $\bot$, indicating that $F\cap \OccE_k(P,T)=\emptyset$.
\end{itemize}
The algorithm uses $\Ohhat(\sqrt{k}m+k^2)$ time and $\Ohhat(\sqrt{km})$ queries to $P$, $T$, and the characteristic function of $F$.

Within the same complexity bounds, we can guarantee that the reported position satisfies $t\le \min(F\cap \OccE_k(P,T))$ or, alternatively, that it satisfies $t \ge \max(F\cap \OccE_k(P,T))$.\footnote{We assume $\min \emptyset = +\infty$ and $\max \emptyset = -\infty$.}
\end{lemma}
\begin{proof}
First, we apply \cref{lem:qed_analyze}, which returns $\floor{C/k}$ for a set $\OccE_k(P,T)\subseteq C \subseteq \fragment{0}{n}$ of candidate positions.
The two cases of \cref{lem:qed_analyze} govern further behavior of the algorithm.

If $|\floor{C/k}|=\Ohtilde(k)$, we use \cref{cor:quantum_gap_ed} for threshold $2k$ and a fixed $\mathrm{seed}\in \{0,1\}^\ell$ chosen uniformly at random.
We apply \GS to find $c\in \floor{C/k}$ such that
$\Phi(\mathrm{seed},P,T,ck)=\text{\yes}$ and $\fragmentco{ck}{ck+k}\cap F \ne \emptyset$
(using \GS again to test the latter condition).
If the search is successful, the algorithm reports an element of $\fragmentco{ck}{ck+k}\cap F$; otherwise, it outputs $\bot$.

If $t\in F\cap \OccE_k(P,T)$, then $\floor{t/k}\in \floor{C/k}$, $k\cdot \floor{t/k}\in \OccE_{2k}(P,T)$, and $\fragmentco{\floor{t/k}k}{\floor{t/k}k+k}\cap F \ne \emptyset$, so the search is successful with high probability.
Moreover, if both the outer and the inner \GS return the smallest (the largest) witnesses, then the reported value is guaranteed to be at most (at least, respectively) $t$.
For the converse implication, note that the reported position $t\in \fragmentco{ck}{ck+k}\cap F$ satisfies $t\in \OccE_{K}(P,T)$ for $K=K'+k$, where $K'=(2k)\cdot m^{o(1)}$ is the higher threshold of \cref{cor:quantum_gap_ed} such that $ck\in \OccE_{K'}(P,T)$ holds with high probability.

As for the complexity, observe that a single evaluation of the oracle for the outer \GS takes $\Ohhat(\sqrt{m}+\sqrt{k})$ queries and $\Ohhat(m+\sqrt{k})$ time.
Since the search space is of size $|\floor{C/k}|=\Ohtilde(\min(k,{m}/{k}))$, the total query and time complexity is $\Ohhat(\sqrt{km})$ and $\Ohhat(m\sqrt{k})$, respectively.

In the remaining case, we have $C\subseteq \OccE_{44k}(P,T)$.
We apply \GS to find $x\in \fragment{0}{n-m-k}$ such that $x \in F$ and $x \ \mathrm{mod} \ q \in I$.
If the search is successful, the algorithm reports such $x$; otherwise, it outputs $\bot$.

If $t\in F\cap \OccE_k(P,T)$, then from $C \subseteq \OccE_k(P,T)$, follows that the search is successful.
For the converse implication, note that for the reported $t$ always $t \in C\subseteq \OccE_{44k}(P,T)$ holds.

Since the search space is of size $\Oh(m)$, the total query and time complexity is $\Ohhat(\sqrt{m})$.
\end{proof}

With all components in place, we now proceed to prove \cref{thm:qedfindproxystrings}.

\qedfindproxystrings

\begin{proof}
    For this proof, let \( K \coloneqq k\cdot m^{o(1)} \) be defined as in \cref{lem:verifier} for $m$, $n$, and $k$.
    Additionally, set \( K' \coloneqq 2K + k \).

    We first show how to prove the lemma in the case we have two alignments that enclose $T$.

     \begin{claim}\label{claim:qedfindproxystringscropped}
        Suppose we are given $\sE_{P,T}(\mX_{\pref})$ and $\sE_{P,T}(\mX_{\suf})$ for alignments $\mX_{\pref}$ and $\mX_{\suf}$ of cost at most $K'$ such that $S \coloneqq \{\mX_{\pref}, \mX_{\suf}\}$ encloses $T$.

        Then, we can compute strings $P^\#,T^\#$ for which the two following hold:
        \begin{enumerate}[(a)]
    \item For every $a\in \fragmentco{0}{m}$ and $b \in \fragmentco{0}{n}$, we have
        that $P^\#\position{a} = T^\#\position{b}$ implies  $P\position{a} = T\position{b}$.
        \label{claim:qhdfindproxystringscropped:it:a}
        \item for all optimal alignments $\mX : P \onto T\fragmentco{t}{t'}$ of cost at most $k$, we have $\ed(P, T\fragmentco{t}{t'}) = \ed(P^\#, T^\#\fragmentco{t}{t'})$ and $\sE_{P, T}(\mX) = \sE_{P^\#, T^\#}(\mX)$.
        \label{claim:qhdfindproxystringscropped:it:b}
\end{enumerate}
        We use $\Ohhat(\sqrt{km})$ queries and $\Ohhat(\sqrt{k}m+k^2)$ quantum time.
    \end{claim}

    \begin{claimproof}
        Consider the following procedure.
        \begin{enumerate}[(i)]
        \item Start with $S = \{\mX_{\pref}, \mX_{\suf}\}$ containing alignments of cost at most $K'$. Iterate through the following steps, adding new elements to $S$.
        Instead of storing each $\mX \in S$ explicitly, store only $\sE_{P,T}(\mX)$.
        Continue this process until the loop is not interrupted.
        \label{alg:qedfindproxystringscropped:i}
        \begin{enumerate}[(1)]
            \item Check whether $\bc(\bG_S) = 0$ via \cref{prop:construct_g_s}.
            If $\bc(\bG_S) = 0$, interrupt the loop.
             \label{alg:qedfindproxystringscropped:1}
            \item Otherwise, if $\bc(\bG_S) > 0$, initialize the data structure of
                \cref{prop:capture_ds}.
                Let $w$ be total weight of $\w_S$ from \cref{lem:exfuncover}, that is, the
                sum of the costs of all alignments in $S$.
                Apply \cref{lem:verifier} for $F = \{t \in \fragment{0}{n} : \forall_{i\in
                \fragment{0}{n_0-m_0}}\, |\tau_i^0-t-\pi_0^0|> w+3K'\}$, using the data
                structure to test in $\Oh(1)$ time whether a given position belongs to
                $F$.
                If \cref{lem:verifier} returns $\bot$, interrupt the loop.
                \label{alg:qedfindproxystringscropped:2}
            \item If \cref{lem:verifier} returns $t\in F$, apply
                \cref{prp:quantumed_w_info} to retrieve $\sE_{P,T}(\mY)$ for an optimal
                alignment $\mY : P \onto T\fragmentco{t}{\min(t + m, n)}$ and $\mY$ to
                $S$.
            \label{alg:qedfindproxystringscropped:3}
        \end{enumerate}
        \item If the loop was interrupted in \eqref{alg:qedfindproxystringscropped:1},
            reconstruct $P,T$ via \cref{prp:reconstruct_pt} and return $P,T$.
        Otherwise, if the loop was interrupted in
        \eqref{alg:qedfindproxystringscropped:2}, construct $P^\#,T^\#$ via
        \cref{prp:reconstruct_pt_hash} and return $P^\#,T^\#$.
    \end{enumerate}

    Each time we add \( \mY \) to \( S \), the corresponding position $t$ satisfies \( |\tau_{i}^{0} - t - \pi_0^{0}| > w + 3K' \) for all \( i \in \fragment{0}{n_0-m_0} \) and $\ed(P,T\fragmentco{t}{t'})\le K$ for some $t'\in \fragment{t}{n}$.
    Since $|\min(t+m,n)-t'| \le |t+m-t'|\le  \ed(P,T\fragmentco{t}{t'}) \le K$, we conclude that $\ed(P,T\fragmentco{t}{\min(t+m,n)})\le 2K$, and thus the cost of $\mY$ is at most $2K \le K'$.
    Consequently, adding $\mY$ to $S$ preserves the invariant and, by \cref{lem:periodhalves}, halves the number of black components, that is, \( \bc(S \cup \{\mY\}) \leq \bc(S)/2 \).

    If the loop is interrupted in \eqref{alg:qedfindproxystringscropped:2}, then there is no $k$-error occurrence $T\fragmentco{t}{t'}$ of $P$ such that $t\in F$, and thus $S$ captures all $k$-error occurrences of $P$ in $T$, so the correctness of the procedure follows from \cref{prp:reconstruct_pt_hash}.
    If the loop is interrupted in \eqref{alg:qedfindproxystringscropped:1}, then the correctness follows from \cref{prp:reconstruct_pt}.

    Regarding computational complexity, observe that the main loop iterates $\Oh(\log n)$ times because the number of black components is initially $\Oh(n+m)$, it halves upon the successful completion of every iteration, and the algorithm exits the loop as soon as the number of black components reaches zero.
    For each iteration, the runtime is dominated by the application of \cref{lem:verifier}, which takes $\Ohhat(\sqrt{km})$ queries and $\Ohhat(m\sqrt{k}+k^2)$ time.
    Subsequent applications of \cref{prp:reconstruct_pt,prp:reconstruct_pt_hash} require $\Ohtilde(\sqrt{K'|S|m})=\Ohhat(\sqrt{km})$ queries and $\Ohtilde(m|S|)=\Ohtilde(m)$ time.
    \end{claimproof}

    The following routine reconnects the procedure from \cref{claim:qedfindproxystringscropped} to the general case.
    \begin{enumerate}[(i)]
        \item If \( 4K' > m \), read the entire strings \( P \) and \( T \), and return them directly.
        \label{alg:qedfindproxystrings:i}
        \item Apply \cref{lem:verifier} for $F=\fragment{0}{n}$, which either outputs $\bot$, indicating that $\OccE_k(P,T)=\emptyset$,
        or returns positions $t_{\pref},t_{\suf}\in \OccE_{K}(P,T)$ such that $t_\pref \le \min \OccE_k(P,T)$ and $t_\suf \ge \max \OccE_k(P,T)$.
        \label{alg:qedfindproxystrings:iii}
        \item If \cref{lem:verifier} outputs $\bot$, return two strings \( P^\# \) and \( T^\# \) of the original length, each with a unique sentinel character at every position. Otherwise, use \cref{prp:quantumed_w_info} to collect \( \sE_{P,T}(\mX_{\pref}) \) and \( \sE_{P,T}(\mX_{\suf}) \) for some optimal alignments \( \mX_{\pref} : P \onto  T\fragmentco{t_{\pref}}{\min(t_{\pref} + m, n)} \) and \( \mX_{\suf} : P \onto T\fragmentco{t_{\suf}}{t_{\mathrm{end}}} \), where $t_{\mathrm{end}} \coloneqq \min(t_{\suf} + m+k, n)$.
        \label{alg:qedfindproxystrings:iv}
        \item Apply \cref{claim:qedfindproxystringscropped} on the pattern \( P \), the cropped text \( T' = T\fragmentco{t_{\pref}}{t_{\mathrm{end}}} \), the edit information \( \sE_{P,T}(\mX_{\pref}) \) and \( \sE_{P,T}(\mX_{\suf}) \) with indices shifted so that they are defined on \( T' \).
        \label{alg:qedfindproxystrings:v}
        \item Finally, modify the returned \( P^\# \) and \( T^\# \) by prepending \( t_{\pref} \) unique sentinels to \( T^\# \) and appending \( n - t_{\mathrm{end}} \) unique sentinels to \( T^\# \). Return \( P^\# \) and \( T^\# \).
        \label{alg:qedfindproxystrings:vi}
    \end{enumerate}

   Next, we argue the correctness of the routine.

    If the routine returns at \eqref{alg:qedfindproxystrings:i}, then \( P \) and \( T \) trivially satisfy the claimed properties. Additionally, since \( 4K' > m \) implies \( k = \hat{\Omega}(K) = \hat{\Omega}(m) \), the \(\Oh(m)\) queries needed to read the strings are within the claimed query complexity of \(\Ohhat(\sqrt{km})\).

    On the other hand, if the routine returns at \eqref{alg:qedfindproxystrings:iv}, then \( \OccE_k(P,T) = \emptyset \).
    In this case, the returned strings \( P^\# \) and \( T^\# \) satisfy \( \ed(P, T\fragmentco{t}{t'}) \leq \ed(P^\#, T^\#\fragmentco{t}{t'}) \) for all $t,t'$, thereby fulfilling the condition \( \OccE_k(P^\#,T^\#) = \emptyset\) as desired.

    Finally, let us consider the case where we neither return at \eqref{alg:qedfindproxystrings:i} nor at \eqref{alg:qedfindproxystrings:iii}.
    Since $t_{\pref},t_\suf \in \OccE_{K}(P,T)$ and the underlying $K$-error occurrences have lengths in $\fragment{m-K}{m+K}$, the costs of $\mX_{\pref}$ and $\mX_\suf$ are at most $2K<K'$ and $2K+k=K'$, respectively.
    Furthermore, $4K' \le m$ implies $|T'|\le n \le 3/2 \cdot m \le 2m-2K'$, so the shifted alignments $\mX_{\pref}'$ and $\mX_{\suf}'$, which are defined on $T'$ instead of $T$ and are used as input for \cref{claim:qedfindproxystringscropped}, enclose $T'$.
    Therefore, the assumptions of \cref{claim:qedfindproxystringscropped} are satisfied.

    Let $\mX : P \onto T\fragmentco{t}{t'}$ be an alignment with cost $k' \leq k$. Our goal is to prove the first part of the statement, speciffically that $\mX : P^\# \onto T^\#\fragmentco{t}{t'}$ also has a cost of $k' \leq k$.
    By construction, $t\in \OccE_k(P,T)$, so $t_\pref \le t \le t_\suf$ and $t' \le \min(n, t+m+k) \le \min(n,t_\suf+m+k)=t_{\mathrm{end}}$.
    This implies the existence of a corresponding alignment $\mX' : P \onto T'\fragmentco{t-t_{\pref}}{t'-t_{\pref}}$ with shifted indices and the same cost $k'$. According to \cref{claim:qedfindproxystringscropped}\eqref{claim:qhdfindproxystringscropped:it:b}, this alignment is preserved in the resulting $P^\#$ and $T^\#$. Prepending sentinel characters to $T^\#$ then ensures that $\mX : P^\# \onto T^\#\fragmentco{t}{t'}$ keeps cost $k'$.

    For the reverse direction, observe that $P^\#$ and $T^\#$, even after the addition of sentinel characters, satisfy \cref{claim:qedfindproxystringscropped}\eqref{claim:qhdfindproxystringscropped:it:a}. Therefore, for any alignment $\mX : P \onto T\fragmentco{t}{t'}$, we have $\edal{\mX}(P, T\fragmentco{t}{t'}) \leq \edal{\mX}(P^\#, T^\#\fragmentco{t}{t'})$. This proves the other direction of the statement.
\end{proof}

\subsection{Proof of Lemma~\ref{lem:qed_analyze} via Structural Insights of
    \texorpdfstring{$k$}{k}-Error
Occurrences}
\label{sec:qed_intro_analyze}

In this (sub)section we prove \cref{lem:qed_analyze}.

\qedanalyze

We briefly outline the proof. The first key to this is the structural insights from the pattern analysis in \cite{CKW20}, which distinguish between three possible structures that $P$ may exhibit.

\EI

In \cref{sec:decomp} we devise a quantum algorithm
computing this structural decomposition.

\begin{restatable*}{lemma}{lemanalyze}\label{lem:analyzeP}
    Given a pattern $P$ of length $m$, we can find a structural decomposition as described
    in \cref{prp:EI} in $\tilde{\mathcal{O}}(\sqrt{km})$ query time and
    $\tilde{\mathcal{O}}(\sqrt{km} + k^2)$ quantum time.
\end{restatable*}

\Cref{lem:analyzeP}, combined with the next three lemmas, is sufficient to prove \cref{lem:qed_analyze}.

\begin{restatable*}{lemma}{edbreakcase}\label{lem:ed_breakcase}
    Let $P$ denote a pattern of length $m$, let $T$ denote a text of length $n \leq 3/2 \cdot m$, and let $k > 0$ denote an integer threshold. Suppose for $P$ case \eqref{item:a:prp:EI} of \cref{prp:EI} applies, that is, $P$ contains $2k$ disjoint \emph{breaks} $B_1, \ldots, B_{2k}$ each
    having period $\per(B_i) > m/128k$ and length $|B_i| = \lfloor m/8k \rfloor$.

    Then, there exists a quantum algorithm that, in $\Ohtilde(\sqrt{km})$ time, outputs $\floor{C/k}$, where $C$ is a set of     \emph{candidate positions} such that $\OccE_k(P,T) \subseteq C \subseteq \fragment{0}{n}$   and $|\floor{C/k}|=\Ohtilde(k)$.
\end{restatable*}

\begin{restatable*}{lemma}{edrepregion}\label{lem:ed_repregion}
    Let $P$ denote a pattern of length $m$, let $T$ denote a text of length $n \leq 3/2 \cdot m$, and let $k > 0$ denote an integer threshold. Suppose for $P$ case \eqref{item:b:prp:EI} of \cref{prp:EI} applies, that is, $P$ contains disjoint \emph{repetitive regions} $R_1,\ldots, R_{r}$ of total length $\sum_{i=1}^r |R_i| \ge \deltavN/\deltavD \cdot m$ such that each region $R_i$ satisfies $|R_i| \ge m/\betav k$ and has a primitive \emph{approximate period} $Q_i$ with $|Q_i| \le m/\alphav k$ and $\edl{R_i}{Q_i} = \ceil{\betav k/m\cdot |R_i|}$.

    Then, there exists a quantum algorithm that, using $\Ohtilde(\sqrt{km})$ queries and $\Ohtilde(\sqrt{km}+k^2)$ time, outputs $\floor{C/k}$, where $C$ is a set of  \emph{candidate positions} such that $\OccE_k(P,T) \subseteq C \subseteq \fragment{0}{n}$   and $|\floor{C/k}|=\Ohtilde(k)$.
\end{restatable*}

\begin{restatable*}{lemma}{edapproxpercase}\label{lem:ed_approxpercase}
    Let $P$ denote a pattern of length $m$, let $T$ denote a text of length $n$, and let $k > 0$ denote an integer threshold such that $n < 3/2 \cdot m + k$. Suppose for $P$ case \eqref{item:c:prp:EI} of \cref{prp:EI} applies, that is, $P$ has a primitive \emph{approximate period} $Q$ with $|Q|\le m/\alphav k$ and $\edl{P}{Q} < \betav k$.

    There is a quantum algorithm that computes a set $C$ of \emph{candidate positions} such that $\OccE_k(P,T) \subseteq C \subseteq \fragment{0}{n}$, and outputs one of two compressed representations of $C$:
    \begin{itemize}
        \item either it outputs $\floor{C/k}$ of size $|\floor{C/k}|=\Ohtilde(k)$, \textbf{or}
        \item it outputs $q \in \Z_{>0}$ and $I \subseteq \fragmentco{0}{q}$ of size $|I| = \Oh(k)$ such that $C = \{ t \in \fragment{0}{n-m-k} : t \ \mathrm{mod} \ q \in I\}$
        and $\OccE_k(P,T) \subseteq C \subseteq \OccE_{44k}(P,T)$.
    \end{itemize}
    The algorithm takes $\Ohtilde(\sqrt{km})$ query time and $\Ohtilde(\sqrt{km}+k^2)$ quantum time.
\end{restatable*}

\begin{proof}[Proof of \cref{lem:qed_analyze}]
    Use \cref{lem:analyzeP} to determine which case of \cref{prp:EI} applies, and based on that outcome, use the appropriate lemma among \cref{lem:ed_breakcase}, \cref{lem:ed_repregion}, or \cref{lem:ed_approxpercase}.
\end{proof}

\subsubsection{Structural Decomposition of the Pattern}
\label{sec:decomp}

This (sub)section aims to prove the following lemma.

\lemanalyze*

Charalampopoulos, Kociumaka, and Wellnitz~\cite{CKW20} introduced a high-level algorithm (see \cref{alg:E1}) that computes such a structural decomposition and can be readily adapted to various computational settings.
This algorithm is also easily adaptable to the quantum case.
It is worth noting that \cref{alg:E1} provides a constructive proof of \cref{prp:EI} on its own.
For a more detailed proof, we recommend referring directly to~\cite{CKW20}.

\begin{algorithm}[t]
    $\mathcal{B} \gets \{\}; \mathcal{R} \gets \{\}$\;
    \While{\bf true}{
        Consider fragment $P' = P\fragmentco{j}{j+\floor{m/\betav k}}$
        of~the next $\floor{m/\betav k}$ unprocessed characters of~$P$\;
        \If{$\per(P') > m/\alphav k$}{
            $\mathcal{B} \gets \mathcal{B} \cup \{P'\}$\;
            \lIf{$|\mathcal{B}| = 2k$}{\Return{breaks $\mathcal{B}$}}
            }\Else{
            $Q \gets P\fragmentco{j}{j+\per(P')}$\;
            Search for prefix $R$ of~$P\fragmentco{j}{m}$
            with $\edl{R}{Q} =
            \lceil\betav k/m\cdot |R|\rceil$ and $|R|>|P'|$\;\label{ln:efwd}
            \If{such $R$ exists}{
                $\mathcal{R} \gets \mathcal{R} \cup \{(R, Q)\}$\;
                \If{$\sum_{(R, Q) \in \mathcal{R}} |R|\ge \deltavN/\deltavD \cdot
                    m$}{
                    \Return{repetitive regions (and their corresponding
                    periods) $\mathcal{R}$}\;
                }
                }\Else{
                Search for suffix $R'$ of~$P$
                with $\edl{R'}{Q} = \lceil\betav
                k/m\cdot |R'|\rceil$ and $|R'|\ge m-j$\;\label{ln:ebcw}
                \lIf{such $R'$ exists}{%
                    \Return{repetitive region $(R',Q)$}%
                    }\lElse{%
                    \Return{approximate period $Q$}%
                }
            }
        }
    }
    \caption{A constructive proof~of~\cref{prp:EI} \cite[Algorithm 9]{CKW20}.}\label{alg:E1}
\end{algorithm}

\Cref{alg:E1} maintains an index $j$ indicating
the position of the string that has been processed
and returns as soon as one of the structural properties
described in \cref{prp:EI} is found.
At each step, the algorithm considers the fragment $P' = P\fragmentco{j}{+\lfloor m/8k \rfloor}$.
If $\per(P') \geq m/128k$, $P'$ is added to $\mathcal{B}$.
If not, the algorithm attempts to extend $P'$ to a repetitive region.
Let $Q$ be the string period of $P'$.
The extension involves searching for a prefix $R$ of $P\fragmentco{j}{m}$
such that $\edl{R}{Q} = \lceil\betav k/m\cdot |R|\rceil$.
If such a prefix is found, $R$ is added to $\mathcal{R}$.
If not, it indicates that there were not enough mismatches between $Q$ and $P\fragmentco{j}{m}$.
At this point, the algorithm attempts to extend $P\fragmentco{j}{m}$ backward to a repetitive region.
If successful, a sufficiently long repetitive region has been found to return.
Otherwise, it suggests that $\edl{P}{Q}$ is small,
implying that $P$ has approximate period $Q$.

In adapting \cref{alg:E1} to the quantum setting,
the primary focus lies on the procedure for identifying
a suitable prefix $R$ of $P\fragmentco{j}{m}$ such that $\edl{R}{Q} = \lceil\betav k/m\cdot |R|\rceil$.
The first step towards accomplishing this consists into,
given a prefix $R$ of $P\fragmentco{j}{m}$,
to determine whether $\edl{R}{Q}$ is less than,
equal to, or greater than $\lceil\betav k/m\cdot |R|\rceil$.
Note that we already know that for any prefix $R$ of $P\fragmentco{j}{m}$ such that $|R|
\leq \floor{m/\betav k}$, we have $\edl{R}{Q} = 0$.
Therefore, we only consider the case where $|R| > \floor{m/\betav k}$.
This allows us to later combine an exponential search/binary search to verify the existence
of a prefix with the desired property.

For the sake of conciseness, we introduce the following definition.

\begin{definition}
    Let $P$ denote a pattern of length $m$, and let $Q$ denote a primitive string of
    length $|Q| \leq m/\alphav k$ for some positive threshold $k$.
    For $j' \in \fragmentoc{j+\floor{m/\betav k}}{m}$, define
    $\Delta(j') = \edl{P\fragmentco{j}{j'}}{Q} - \lceil\betav k/m\cdot |j'-j'|\rceil$.
\end{definition}

First, notice that via \cref{lem:find_min_approx_per} we can verify
whether $\Delta(j') < 0$, $\Delta(j') = 0$ or $\Delta(j') > 0$ holds for a given $j' \in \fragmentoc{j+\floor{m/\betav k}}{m}$.
Indeed, it suffices to apply \cref{lem:find_min_approx_per}
on $P\fragmentco{j}{j'}$, $Q$ and $\lceil\betav k/m\cdot |j'-j|\rceil$.
Note that the assumptions for it is satisfied, as from $|Q| \leq m/128k$ follows that $|j'-j| \geq (2 \lceil\betav k/m\cdot |j'-j|\rceil + 1)|Q|$ holds.
Each call to compute the sign of $\Delta(j')$ requires $\tilde{\mathcal{O}}(|j' - j| \cdot \sqrt{k/m})$ query time and
$\tilde{\mathcal{O}}(|j' - j|\cdot\sqrt{k/m} + |j' - j|^2 \cdot k^2/m^2)$ quantum time.

Next, we show how this helps us in finding a prefix $R$ of $P\fragmentco{j}{m}$
such that $\edl{R}{Q} = \lceil\betav k/m\cdot |R|\rceil$, that is, in finding
$j' \in \fragmentoc{j}{m}$ such that $\Delta(j') = 0$.

\begin{lemma}
    \label{prp:findprefix}
    Let $P$ denote a pattern of length $m$, and let $Q$ denote a primitive string of
    length $|Q| \leq m/\alphav k$ for some threshold $k$.

    Then, there is a quantum procedure that either
    \begin{enumerate}
        \item finds a prefix $R$ of $P\fragmentco{j}{m}$ such that $\edl{R}{Q} =
            \lceil\betav k/m\cdot |R|\rceil$ in $\tilde{\mathcal{O}}(|R| \cdot
            \sqrt{k/m})$ query time and $\tilde{\mathcal{O}}(|R|\cdot\sqrt{k/m} + |R|^2
            \cdot k^2/m^2)$ quantum time; or
        \item verifies that $\edl{P\fragmentco{j}{m}}{Q} \leq \lceil\betav k/m\cdot
            |R|\rceil$ in $\tilde{\mathcal{O}}(\sqrt{km})$ query time and
            $\tilde{\mathcal{O}}(\sqrt{km} + k^2)$ quantum time.
            Moreover, it marks the search for $R$ as unsuccessful.
    \end{enumerate}
\end{lemma}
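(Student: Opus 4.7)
The plan is to locate a prefix with the desired property via exponential search followed by binary search, repeatedly invoking \cref{prp:checkprefix} to evaluate the sign of $\Delta(j')$. Two simple structural observations drive the analysis. First, the map $j' \mapsto \Delta(j')$ is unit-Lipschitz, since both $\edl{P\fragmentco{j}{j'}}{Q}$ and $\lceil \betav k/m \cdot (j'-j)\rceil$ change by at most one when $j'$ increases by one. Second, at $j' = j + \lfloor m/\betav k\rfloor$ we already have $\Delta(j') \le -1$ without any query: the corresponding prefix of $P\fragmentco{j}{m}$ has exact period equal to $|Q|$, so its edit distance to $Q^*$ vanishes, while the ceiling term is at least one.

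In the first phase I would repeatedly double the offset, setting $j'_i := j + \min(m-j,\, 2^i \lfloor m/\betav k\rfloor)$ and invoking \cref{prp:checkprefix} at each $j'_i$ in turn. This branches three ways. If some call returns $\Delta(j'_i) = 0$, I output $R := P\fragmentco{j}{j'_i}$ and stop. If some call returns $\Delta(j'_i) > 0$, I enter a binary-search phase on $[j'_{i-1}, j'_i]$, noting that $\Delta(j'_{i-1}) < 0$ (otherwise the exponential phase would have stopped earlier). If instead the doubling exhausts the range, i.e., $j'_i = m$ with $\Delta(m) < 0$, I mark the search unsuccessful and conclude $\edl{P\fragmentco{j}{m}}{Q} < \lceil \betav k/m \cdot (m-j)\rceil$, which matches the desired upper bound in case~(2) of the lemma.

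The binary search exploits unit-Lipschitzness: the invariant $\Delta(a) < 0 < \Delta(b)$ on the current interval is preserved by testing the midpoint $c$ and recursing on $[c, b]$ if $\Delta(c) < 0$ or on $[a, c]$ if $\Delta(c) > 0$; it terminates as soon as $\Delta(c) = 0$, which must occur because any sign change of a unit-Lipschitz integer-valued function passes through zero. This is the key observation that rescues both phases in the absence of monotonicity of $\Delta$, and is in my view the main conceptual obstacle --- although $\Delta$ may oscillate arbitrarily between $+1$ and $-1$, only its sign (not its magnitude or shape) enters the argument.

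For the cost bookkeeping, each invocation of \cref{prp:checkprefix} at $j'$ costs $\tilde{\mathcal{O}}((j'-j)\sqrt{k/m})$ queries and $\tilde{\mathcal{O}}((j'-j)\sqrt{k/m} + (j'-j)^2 k^2/m^2)$ quantum time. In the doubling phase the offsets form a geometric progression, and in the bisection phase every offset is bounded by the final $|R|$; summing over the $\Theta(\log m)$ calls therefore yields the claimed $\tilde{\mathcal{O}}(|R|\sqrt{k/m})$ queries and $\tilde{\mathcal{O}}(|R|\sqrt{k/m} + |R|^2 k^2/m^2)$ quantum time in the successful case. In the unsuccessful case $|R| \le m-j \le m$, so these bounds specialize to $\tilde{\mathcal{O}}(\sqrt{km})$ queries and $\tilde{\mathcal{O}}(\sqrt{km} + k^2)$ quantum time, as stated.
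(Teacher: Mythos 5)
Your proposal is correct and follows essentially the same route as the paper: an exponential search followed by a binary search over calls to \cref{prp:checkprefix}, maintaining the invariant $\Delta(\mathrm{low})<0<\Delta(\mathrm{high})$, concluding via the discrete intermediate-value argument that adjacent points cannot straddle zero, and with the same cost accounting (geometric sum in the doubling phase, offsets bounded by $\Oh(|R|)$ in the bisection phase, and $|R|\le m$ in the unsuccessful case). One minor imprecision: $\Delta$ is not literally unit-Lipschitz (a single step can decrease it by $2$ when the edit distance drops while the ceiling term increases), but the zero-crossing argument only needs the upward bound $\Delta(j'+1)\le\Delta(j')+1$, which does follow from the two facts you cite, so the argument stands.
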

\begin{proof}
    The algorithm proceeds as follows:
    \begin{enumerate}[(i)]
        \item Via \cref{lem:find_min_approx_per} verify whether there exists $i \in \fragmentco{0}{\ceil{\log(m-j)}}$ such that $\Delta(\min(2^{i},m-j)) \geq 0$;
        \item if such $i$ does not exist then mark the search of $R$ as not successful;
        \label{it:findprefix:ii}
        \item else if $\Delta(\min(2^{i},m-j)) = 0$, return the prefix $R = P\fragmentco{j}{\min(2^{i},m-j)}$; lastly
        \label{it:findprefix:iii}
        \item if neither of the two previous two cases holds, execute a binary search.
            The two indices $\ell,h$ are kept as a lower bound and an upper bound,
            and they are initialized to $\ell \coloneqq j + |Q|$ and $h \coloneqq \min(j + 2^i, m)$.
            In every iteration, set $\mathsf{mid}$ to be the middle between position $\ell$ and $h$,
            and computed the sign of $\Delta(\mathsf{mid})$.
            If $\Delta(\mathsf{mid}) = 0$, then return the prefix $R = P\fragmentco{j}{\mathsf{mid}}$.
            If $\Delta(\mathsf{mid}) > 0$ recurse on the right side.
            Lastly, if $\Delta(\mathsf{mid}) < 0$, recurse on the left side.
            \label{it:findprefix:iv}
    \end{enumerate}

    Clearly, the algorithm is correct when it returns in \eqref{it:findprefix:ii} or in \eqref{it:findprefix:iii}.
    It remains to show that if the algorithm proceeds to \eqref{it:findprefix:iv}, then it finds $j' \in \fragmentoc{j}{m}$ such that $\Delta(j') = 0$.
    It is easy to verify that the algorithm maintains the invariant $\Delta(\ell) < 0$ and $\Delta(h) > 0$.
    Indeed, from $Q = P\fragmentco{j}{j + |Q|}$ and from how we perform the exponential search follows
    that the invariant holds in the beginning.
    Moreover, from~\eqref{it:findprefix:iv} follows that we preserve the invariant
    at every step of the algorithm.
    For the sake of contradiction, assume the binary search has ended without finding such a prefix $R$,
    that is, $h = \ell+1$ and the invariant still holds.
    Let $R_{\ell} = P\fragmentco{j}{\ell}$ and $R_{h} = P\fragmentco{j}{h}$
    Observe that for $h = l+1$, we have
    \[
        \edl{R_l}{Q} = \edl{R_h}{Q} + c \text{ and }
        \lceil\betav k/m\cdot |R_l| \rceil = \lceil\betav k/m\cdot |R_h| \rceil + c'
        \text{ for } c,c' \in \{0,1\}.
    \]
    However, there are no values of $c,c'$ such that
    \[
        \edl{R_l}{Q} < \lceil\betav k/m\cdot |R_l|\rceil
        \text{ and }
        \edl{R_h}{Q} > \lceil\betav k/m\cdot |R_h|\rceil.
    \]
    Hence, the binary search terminates and finds a prefix $R$ of
    $P\fragmentco{j}{m}$ such that $\edl{R}{Q} = \lceil\betav k/m\cdot |R|\rceil$.

    Regarding the complexity analysis,
    notice that if the algorithm returns in
    \eqref{it:findprefix:ii} then the quantum and query time is dominated by
    computing the sign of $\Delta(j')$ with $|j' - j| = \Oh(m)$;
    otherwise, it is dominated by computing the sign of $\Delta(j')$ with $|j' - j| = \Oh(|R|)$.
\end{proof}

Finally, we show how to adapt \cref{alg:E1} to the quantum setting.

\lemanalyze

\begin{proof}
    We show how to implement \cref{alg:E1} in the claimed query and quantum time.
    There is no need to modify its pseudocode.
    We go through line by line and show how we can turn it into a quantum algorithm by using suitable quantum subroutines.

    First, observe that the while loop at line 2 requires at most $\mathcal{O}(k)$
    iterations, as we process at least $\lfloor m/8k \rfloor$ characters in each
    iteration.

    Since we can find the period of $P'$ using \cref{cor:quantum_per}
    in $\tilde{\mathcal{O}}(\sqrt{m/k})$ quantum time and query time,
    we can conclude that line 4 summed over all iterations runs in
    $\tilde{\mathcal{O}}(k\cdot \sqrt{m/k}) = \tilde{\mathcal{O}}(\sqrt{km})$ time.
    To check for the existence of the prefix $R$ at line 9 we use \cref{prp:findprefix}.
    Using $\sum_{(R, Q)} |R| \leq m$, we conclude that summed over all $R \in \mathcal{R}$ this results in
    $\tilde{\mathcal{O}}(\sqrt{km})$ query time and
    \[
        \sum_{(R, Q)} \tilde{\mathcal{O}}(|R| \cdot \sqrt{k/m} + |R|^2 \cdot k/m) =
        \tilde{\mathcal{O}}(\sqrt{k^2/m^2}) \sum_{(R, Q)} |R| +
        \tilde{\mathcal{O}}(k^2/m^2) \sum_{(R, Q)} |R|^2 \leq
        \tilde{\mathcal{O}}(\sqrt{km} + k^2) \text{ quantum time.}
    \]

    If we did not manage to find such a prefix $R$, we incur the same quantum and query
    complexity, and we return either at line 15 or at line 17.
    Before returning, we still have to determine if $R'$ exists.
    Similarly, by using again \cref{prp:findprefix}, we can argue that the search for $R'$
    requires no more than the claimed query and quantum time.
\end{proof}

\subsection{Finding a Candidate Set for the Periodic Case}
\label{sec:periodic_case}

\begin{lemma}
\label{lem:periodiccase_ed_progression}
Let $P$ denote a pattern of length $m$, let $T$ denote a text of length $n$,
and a positive threshold $k > 0$.
Suppose there is a positive integer $d \geq 2k$ and a primitive string $Q$ with $|Q| \leq m/8d$, $\ed(P, Q^\infty\fragmentco{\ell_P}{r_P}) \leq d$,
and $\ed(T, Q^\infty\fragmentco{\ell_T}{r_T}) \leq 4d$
for some $\ell_P \leq r_P$ and $\ell_T \leq r_T$.

Then, the set
\[
C \coloneqq \{x \in \fragment{0}{n-m+k} \mid \text{there exists $x' \in \Occ_{\infty}$ such that $|x-x'| \leq 6d$}\},
\]
where $\Occ_{\infty} = \{\ell_P- \ell_T + j \cdot |Q| : j \in \mathbb{Z}\}$, satisfies
\begin{enumerate}[(a)]
    \item $\OccH_k(P,T) \subseteq C$, and
    \label{lem:periodiccase:a}
    \item $C \subseteq \OccH_{22d}(P,T)$.
    \label{lem:periodiccase:b}
\end{enumerate}
\end{lemma}

\begin{proof}
    Let $Q_T = Q^{\infty}\fragmentco{\ell_T}{r_T}$, $Q_P = Q^{\infty}\fragmentco{\ell_P}{r_P}$, and let $\mX_P : P \onto Q_P$, $\mX_T : T \onto Q_T$ be optimal alignments.

    First, we prove~\eqref{lem:periodiccase:a}.
    Suppose $\mX : P \onto T\fragmentco{t}{t'}$ is an alignment of cost at most $k$.
    Note, since $\mX$ has cost at most $k$, we have $t \in \fragment{0}{n-m+k}$.

    We want to prove that there exists $x' \in \Occ_{\infty}$ such that $|x' - t| \leq 6d$.
    Consider the alignment  $\mX_T' \subseteq \mX_T$  such that $\mX_T' : T\fragmentco{t}{t'} \onto Q_T\fragmentco{q}{q'}$ for some $q,q' \in \fragment{0}{|Q_T|}$,
    and the alignment $\mY = \mX_P^{-1} \circ \mX \circ \mX_T'$ such that $\mY : Q_P \onto Q_T\fragmentco{q}{q'}$.
    From $1 \leq |Q| \leq m/8d$ and $|Q_P| \leq m - d$,
    follows that $Q_P$ contains
    \[
    \left\lfloor\frac{|Q_P|}{|Q|}\right\rfloor
    \geq \left\lfloor \frac{m-d}{|Q|}\right\rfloor
    \geq \left\lfloor 8d - \frac{8d^2}{m} \right\rfloor
    \geq \left\lfloor 8d - d \right\rfloor
    = 7d
    \]
    full occurrences of $\hat{Q} \coloneqq Q_P\fragmentco{0}{|Q|}$.
    Since $\mY$ has cost at most $d + k + 4d$, at least $\floor{|Q_P|/|Q|} - d - k - 4d \geq 7d - d - k - 4d \geq 1$ full occurrences of $\hat{Q}$ are matched without edits in $\mY$.

    Consider such an arbitrary occurrence that starts at position $q_P \in \fragment{0}{|Q_P|-|Q|}$ in $Q_P$ and is matched by $\mY$ to
    a occurrence of $\hat{Q}$ starting at position $q_T \in \fragment{q}{q'-|Q|}$ in $Q_T$.
    Further, let $\hat{r}$ and $\hat{t}$ be such that
    $(\hat{r}, q_R) \in \mX_R$, $(\hat{r}, \hat{t}) \in \mX$, $(\hat{t}, q_T) \in \mX_T'$.
    Note, for $\hat{r}$ and $\hat{t}$ we have that
    $|(q_T - q_P) - t| \leq d$,
    $|\hat{r} - (\hat{t} - t)| \leq k$,
    $|\hat{t} - q_T| \leq 4d$.
    By using the triangle inequality, we obtain
    \[
        |(q_T - q_P) - t|
        \leq |\hat{r} - q_P| + |\hat{r} - (\hat{t} - t)| + |\hat{t} - q_T|
        \leq 5d + k \leq 6d.
    \]
    To conclude the proof of~\eqref{lem:periodiccase:a}
    it suffices to argue that $q_T-q_P \in \Occ_{\infty}$ (therefore we can set $x' = q_T - q_P$).
    Since $Q$ is a primitive string, and thus also $\hat{Q}$,
    we have that $q_P$ and $q_T$ are such that
    $q_P + \ell_P \equiv_{|Q|} 0$ and  $q_T + \ell_T \equiv_{|Q|} 0$.
    Consequently, $q_P - q_T \equiv_{|Q|} \ell_T - \ell_P$ and $q_T-q_P \in \Occ_{\infty}$.

    We proceed to the proof of~\eqref{lem:periodiccase:b}.
    Consider an arbitrary $x \in \fragment{0}{n-m+k}$
    such that there exists $x' \in  \Occ_{\infty}$ with $|x - x'| \leq 6d$.
    Let $y = \min(x + m, n)$. We would like to argue that
    \begin{align} \label{eq:periodiccase:3}
        m - (y - x) \leq k.
    \end{align}
    If $y + m \leq n$, then~\eqref{eq:periodiccase:3} trivally holds because $x - y = m$.
    Otherwise, if $y + m > n$,
    then from $x \leq n-m+k$ directly follows $m - (y - x) = m - n + x \leq k$ and~\eqref{eq:periodiccase:3} holds.

    Next, consider $\mY_T \subseteq \mX_T$ such that $\mY_T : T\fragmentco{x}{y} \onto Q_T\fragmentco{\ell_T' - \ell_T}{r_T' - r_T}$ for some $\ell_T', r_T'$
    for which $Q^{\infty}\fragmentco{\ell_T'}{r_T'} =  Q_T\fragmentco{\ell_T' - \ell_T}{r_T' - r_T}$ holds.
    We want to argue that the optimal alignment $\mA : Q_T \onto Q_P$ has cost at most $17d$.
    This is sufficient in order to prove~\eqref{lem:periodiccase:b},
    because then $\mX_P \circ \mA^{-1} \circ \mY_T^{-1} : P \onto T\fragmentco{y}{y'}$ has cost at most $22d$.

    From $(x, \ell_T' - \ell_T) \in \mX_T$
    together with $|x - x'| \leq 6d$ follows
    $|x' - (\ell_T' - \ell_T)| \leq |y - (\ell_T' - \ell_T)| + |y - x'| \leq 4d + 7d \leq 11d$.
    Now, since $x' \in  \Occ_{\infty}$ there exists $j \in \mathbb{Z}$ such that $x' = \ell_P- \ell_T + j \cdot |Q|$.
    Consequently, for such $j$ it holds that
    \begin{equation} \label{eq:periodiccase:1}
        |\ell_P - \ell_T' + j \cdot |Q|| = |x' - (\ell_T' - \ell_T)| \leq 11d.
    \end{equation}
    Moreover, from $\ed(T\fragmentco{x}{y}, Q^{\infty}\fragmentco{\ell_T'}{r_T'}) \leq 4d$,
    $\ed(P, Q^{\infty}\fragmentco{\ell_P}{r_P}) \leq d$ and~\eqref{eq:periodiccase:3} follows
    \begin{equation} \label{eq:periodiccase:2}
        |(r_P - \ell_P) - (r_T' - \ell_T')|
        \leq |m - (r_P - \ell_P)| + |m - (y-x)| + |(y - x) - (r_T' - \ell_T')| \leq 4d+k+d \leq 6d.
    \end{equation}
    By combining~\eqref{eq:periodiccase:1} and~\eqref{eq:periodiccase:2},
    we conclude that the cost of $\mA$ is at most $17d$.
\end{proof}

\begin{lemma}[{\cite[Lemma 6.9]{CKW20}}]\label{lem:locked}
    Consider a string $S$, a primitive string $Q$, and an alignment $\mA : S \onto
    Q^{\infty}\fragmentco{\ell}{r}$ of optimal cost $d \coloneqq \edl{S}{Q}$.
    There is a (classical) algorithm that, given $\mA$ and $|Q|$, in $\Oh(d+1)$ time\footnote{The original proof in \cite{CKW20} starts by constructing $\mA$; here, we assume that $\mA$ is already provided.} constructs a set $\mathcal{L}$ of disjoint fragments of $S$ such that
    \begin{itemize}
        \item $\edl{L}{Q} > 0$ for every $L \in \mathcal{L}$,
        \item $\sum_{L\in \mathcal{L}} \edl{L}{Q} = d$, and
        \item $\sum_{L\in \mathcal{L}} |L| \le (5|Q|+1)d+2|Q|$.\lipicsEnd
    \end{itemize}
\end{lemma}

\begin{lemma}\label{lem:periodiccase_ed_small_c}
Let $P$ denote a pattern of length $m$, let $T$ denote a text of length $n$,
and a positive threshold $k > 0$.
Suppose there is a positive integer $d \geq 2k$ and a primitive string $Q$ with $|Q| \leq m/8d$, $\edl{P}{Q} = d$,
and $\edl{T}{Q} \leq 4d$.

Then, there exists a quantum algorithm that outputs $\floor{C/d}$, where $C$ is a set of     \emph{candidate positions} such that $\OccE_k(P,T) \subseteq C \subseteq \fragment{0}{n}$   and $|\floor{C/d}|=\Ohtilde(d)$
using $\Ohtilde(\sqrt{dm})$ queries and $\Ohtilde(\sqrt{dm}+d^2)$ time.
\end{lemma}

\begin{proof}
    Consider the following procedure.
    \begin{enumerate}[(i)]
        \item Via \cref{lem:find_min_approx_per} compute $\ell_T,r_T$ and $\ell_P,r_P$
        such that $Q_T \coloneqq Q^{\infty}\fragmentco{\ell_T}{r_T}$ and $Q_P \coloneqq Q^{\infty}\fragmentco{\ell_P}{r_P}$ are such to minimize $\edl{T}{Q}$ and $\edl{P}{Q}$, respectively.
        Moreover, via \cref{prp:quantumed_w_info} compute $\sE_{P, Q_P}(\mX_P)$ and $\sE_{T, Q_T}(\mX_T)$
        for some optimal alignments $\mX_P : P \onto Q_P$, $\mX_T : T \onto Q_T$.
        \label{it:periodiccase_ed_small_c:i}
        \item Apply \cref{lem:locked} to compute a family $\mathcal{L} = \{L_1, \ldots, L_f\}$ of disjoint fragments of $P$ such that $\sum_{i=1}^{f} \edl{L_i}{Q}=d$ and $\sum_{i=1}^{f} |L_i| \le (5|Q|+1)d+2|Q| \le 7|Q|d$.
        \label{it:periodiccase_ed_small_c:ii}
        \item Select $L=P\fragmentco{\ell}{r}\in \mathcal{L}$ at random with probability proportional to $\edl{L}{Q}$.
        \label{it:periodiccase_ed_small_c:iii}
        \item If $|L| > 28|Q| \cdot \edl{L}{Q}$, then report a failure.
        \label{eq:periodiccase_ed_small_c:iv}
        \item Return $\floor{C/d}$, where
        \[C \coloneqq \{x \in C' : \mX_T \text{ makes at
            least }1/8 \cdot \edl{L}{Q}\text{ edits in
    }T\fragmentco{\max\{0,x+\ell-k\}}{\min\{n,x+r+k\}}\}\] and $C'$ is the set from
        \cref{lem:periodiccase_ed_progression}.
        \label{it:periodiccase_ed_small_c:v}
    \end{enumerate}

    First, we give an upper bound on the size of the returned set.

    \begin{claim}\label{claim:periodiccase_ed_small_c:size}
        The candidate set $C$ satisfies $|\floor{C/d}|=\Oh(d)$.
    \end{claim}
    \begin{claimproof}
        We divide $\Z$ into \emph{blocks} of the form $\fragmentco{jd}{(j+1)d}$.
        We want to bound the number $b$ of blocks
        that contain some $x \in C$ such that $\mX_T$ makes at least $1/8\cdot \edl{L}{Q}$ edits in $T\fragmentco{\max\{0,x+\ell-k\}}{\min\{n,x+r+k\}}\}$.

        For this purpose, for each edit $(e,e') \in \mX_T$, let $b_e$ be the number
        of such blocks that contain some $x \in C$ such that
        $e \in \fragmentco{\max\{0,x+\ell-k\}}{\min\{n,x+r+k\}}\}$.
        Note that $b \leq 8 / \edl{L}{Q} \cdot \sum_e b_e$.
        Next, we perform a case distinction.

        If $|Q| < 12d$, we observe that for each edit $(e,e') \in \mX_T$, we have that
        $\fragmentco{e-r-k}{e-\ell+k}$ overlaps with at most
        $2 + (r - \ell + 2k)/d \leq 2+(|L| + 2k)/d \leq 3 + |L|/d$ blocks.
        As this latter condition is necessary for a block to be counted in $b_e$,
        we have $b_e \leq (3+|L|/d)$.
        Since $\mX_T$ contains at most $4d$ edits, we get
        \[
            b
            \leq 8/ \edl{L}{Q} \cdot 4d \cdot (3+|L|/d)
            \leq 8/ \edl{L}{Q} \cdot 4d \cdot (3 + 28|Q| \cdot \edl{L}{Q}/d)
            \leq 96d+896|Q| = \Oh(d).
        \]

        Otherwise, if $|Q| \geq 12d$,
        we use that $x$ that lies at distance at most $6d$
        from a position in $\Occ_{\infty}$,
        where this last set is defined as in \cref{lem:periodiccase_ed_progression}\eqref{lem:periodiccase:a}.
        That is,
        \[
            x \in I_z
            \coloneqq \fragment{c+z|Q|-6d}{c+z|Q|+6d}
        \]
        for some $z \in \Z$ and fixed $c\in \Z$. Let $o_z$ be the number of blocks that $I_z$ overlaps with. Observe that $o_z \leq 2 + (14d + 1)/d \leq 15$.
        Moreover, for each edit $(e,e') \in \mX_T$, we have that
        $\fragmentco{e-r-k}{t-\ell+k}$ overlaps with at most
        $2 + (r - \ell + 2k)/|Q| \leq 2+(|L| + 2k)/d \leq 3 + |L|/d$ intervals of the form $I_z$. For every such $I_z$ we have that this latter condition is necessary for each of the $o_z$ blocks to be counted in $b_e$. Consequently,
        \[
            b_e \leq 15 \cdot (2 + (d+|L|)/|Q|) = 30 + d/|Q| + |L|/|Q| \leq 30 + 12 + 28 = 70,
        \]
        which together with the bound on the number of edits in $\mX_T$,
        yields
        \[
            b \leq 8 / \edl{L}{Q} \cdot \textstyle{\sum_e} b_e \leq 8 / \edl{L}{Q} \cdot 4d \cdot 70 = \Oh(d).
            \claimqedhere
        \]
    \end{claimproof}

    Next, we observe that, by Markov's inequality,
    the probability that we report a failure in \eqref{eq:periodiccase_ed_small_c:iv} is
    \[
        \pr{|L| / \edl{L}{Q} > 28|Q|} \leq
        \frac{\sum_{i=1}^{f} |L_i| / \edl{L}{Q} \cdot \edl{L}{Q}/d}{28|Q|}
        = \frac{7|Q|}{28|Q|}
        = \frac{1}{4}.
    \]
    We also aim to give a bound on the probability that we fail
    to include an arbitrary position $t \in \OccE_k(P, T)$ to $C$.
    The following claim serves this purpose and holds without the assumption that we succeed in \eqref{eq:periodiccase_ed_small_c:iv}.

    \begin{claim}
        Fix an optimal $k$-edit alignment $\mX : P \onto T\fragmentco{t}{t'}$. With probability at most $4/7$ we have $t \notin C$.
    \end{claim}
    \begin{claimproof}
        Write $L_i = P\fragmentco{\ell_i}{r_i}$ for the $i$-th fragment in $\mathcal{L}$.
        Moreover, suppose $\mX$ aligns $L_i$ onto $L_i'$, that is $L_i' = \mX(L_i)$,
        and $\mX_T$ aligns $L_i'$ onto $Q_i \substr Q^\infty$, that is $Q_i = \mX_T(L_i')$.

        Suppose $\mX_T$ makes $k_i$ edits in
        $T\fragmentco{\max\{0,t+\ell_i-k\}}{\min\{n,t+r_i+k\}}$ and
        define $I \coloneqq \{i \in \fragmentco{i}{f} \mid k_i \geq 1/8 \cdot \edl{L_i}{Q}\}$.
        By \cref{lem:periodiccase_ed_progression}\eqref{lem:periodiccase:a} we have $t \in C'$,
        and thus or goal is to find an upper bound to
        \begin{equation}\label{eq:prob_l_bound}
            \pr{t \in C}
            = \pr{L \in \{L_i\}_{i\in I}}
            = \frac{\sum_{i\in I}\edl{L_i}{Q}}{\sum_{i=1}^{f} \edl{L_i}{Q}}
            = \frac{\sum_{i\in I}\edl{L_i}{Q}}{d}.
        \end{equation}
        To this end, we begin by observing that we have
        $L_i' \subseteq \fragmentco{\max\{0,t+\ell_i-k\}}{\min\{n,t+r_i+k\}}$
        because $\mX$ makes at most $k$ edits.
        As a consequence,
        \begin{equation*}\label{eq:li_bound}
            \textstyle{\sum_{i \notin I}} \edl{L_i'}{Q}
            \leq \textstyle{\sum_{i \notin I}} \edal{\mX_T}(L_i',Q_i)
            \leq \textstyle{\sum_{i\notin I}} k_i
            < \textstyle{\sum_{i \notin I}} 1/8\cdot \edl{L_i}{Q}.
        \end{equation*}
        Further, from the disjointess of $L_i$ follows
        $\sum_{i\notin I} \ed(L_i',L_i) = \ed(P, T\fragmentco{t}{t'})\leq k$.
        Combining this last inequality together with \eqref{eq:li_bound}
        and the triangle inequality, we obtain
        \[
        \textstyle{\sum_{i\notin I}} \edl{L_i}{Q}
        \leq \textstyle{\sum_{i\notin I}} \big( \ed(L_i',L_i) + \edl{L_i'}{Q} \big)
        \leq k + \textstyle{\sum_{i\notin I}} 1/8 \cdot \edl{L_i}{Q}
        \]
        which when rearranged yields $\sum_{i\notin I} \edl{L_i}{Q} \leq 8/7 \cdot k \leq 4/7 \cdot d$,
        concluding the proof.
    \end{claimproof}

    Therefore, by union bound, we do not fail and
    we include an arbitrary position $t \in \OccE_k(P, T)$
    to $C$ with probability at least $1 - 4/7 - 1/4 \geq 5/28$.
    By repeating $\tilde{\mathcal{O}}(1)$ times independently the procedure,
    and by returning the union of the candidate sets of each run, we
    obtain with high probability a candidate set as claimed in the statement.

    We analyze the complexity of a single run of a procedure;
    the overall quantum and query time only differs from a logarithmic factor.
    First, notice that step \eqref{it:periodiccase_ed_small_c:i}
    because of \cref{lem:find_min_approx_per} and \cref{prp:quantumed_w_info}
    requires $\Oh(\sqrt{dm})$ queries and $\Oh(\sqrt{dm}+d^2)$ time.
    From \cref{lem:locked} follows that step \eqref{it:periodiccase_ed_small_c:ii}
    requires $\Oh(d)$ time.

    Lastly, notice that we can implement \eqref{it:periodiccase_ed_small_c:v} in $\Oh(d^2)$ time.
    In order to do this, we iterate over each edit $(e,e') \in \sE_{T, Q_T}(\mX_T)$,
    and compute the blocks $\fragmentco{jd}{(j+1)d}$ that contains some $x \in C'$
    and overlap with $\fragmentco{e-r-k}{e-\ell+k}$. For each such block we put a mark on it.
    To compute $\floor{C/k}$, we return all $j$ such that $\fragmentco{jd}{(j+1)d}$ contains
    at least $1/8\cdot \edl{L}{Q}$ marks.
    The overall number of marks is counted by $\sum_e b_e$ where $b_e$ is defined as in the proof of \cref{claim:periodiccase_ed_small_c:size}.
    From $\edl{L}{Q} \leq d$ and the upper bounds by $\Oh(d)$ on $8 / \edl{L}{Q} \cdot \sum_e b_e$ from the proof of \cref{claim:periodiccase_ed_small_c:size},
    we obtain that $\sum_e b_e = \Oh(d^2)$.
    We conclude that we  never put more than $\Oh(d^2)$ marks across all iterations.
\end{proof}

\begin{lemma}
    \label{lem:ed_compstructure}
    Let $P$ denote a pattern of length $m$,
    let $T$ denote a text of length $n \leq 3/2 \cdot m$, and let $k > 0$ denote an integer threshold.
    Suppose there is a positive integer $d \geq 2k$ and a primitive string $Q$ with $|Q| \leq m/8d$ and $\edl(P,Q) \leq d$.

    Then, there exists a quantum algorithm that outputs $\floor{C/d}$, where $C$ is a set of  \emph{candidate positions} such that $\OccE_k(P,T) \subseteq C \subseteq \OccE_{22d}(P,T)$ and $|\floor{C/d}| = \Oh(d)$.
    Alternatively, provided that $\hd(P,Q^*) = d$,
    the algorithm gives the possibility to output $q \in \Z_{>0}$ and $I \subseteq \fragmentco{0}{q}$ of size $|I| = \Oh(d)$ such that $C = \{ t \in \fragment{0}{n-m-k} : t \ \mathrm{mod} \ q \in I\}$
    and $\OccE_k(P,T) \subseteq C \subseteq \OccE_{22d}(P,T)$.
    The algorithm requires $\Ohtilde(\sqrt{dm})$ queries and $\Ohtilde(\sqrt{dm}+d^2)$ time.
\end{lemma}

\begin{proof}
    Consider the following procedure.
    \begin{enumerate}[(i)]
        \item Use \cref{lem:find_min_approx_per} on $P,Q$, and $d$ in the role of $S,Q$ and $k$, respectively.
        Let $x,y$ be the indices returned by \cref{lem:find_min_approx_per} for which $\edl{P}{Q} = \ed(P,Q^{\infty}\fragmentco{x}{y})$ holds.
        \label{lem:ed_compstructure:i}
        \item Use \cref{lem:find_min_approx_per} on $T' \coloneqq T\fragmentco{n-m-k}{m-k},Q$, and $\floor{3/2 \cdot d}$ as $S,Q$ and $k$, respectively.
        \label{lem:ed_compstructure:ii}
        \item If $\edl{T'}{Q} > \floor{3/2 \cdot d}$, then output $C = \emptyset$.
        \label{lem:ed_compstructure:iii}
        \item Otherwise, consider $x',y'$ returned by \cref{lem:find_min_approx_per} in
            \eqref{lem:ed_compstructure:ii} for which $\edl{T'}{Q} =
            \ed(T',Q^{\infty}\fragmentco{x'}{y'})$ holds.
        Set $Q_{\ell} \coloneqq \rot^{-y'}(Q)$ and $Q_{r} \coloneqq \rot^{x'}(Q)$
        By combining \cref{prp:candidatepos} with an exponential/binary search
        compute the smallest $\ell$ such that $\eds{T\fragmentco{\ell}{m-k}}{Q_{\ell}} \leq \floor{3/2 \cdot d}$ holds.
        Similarly, compute the largest $r$ for which $\ed(T\fragmentco{n-m-k}{r}, Q_r^*) \leq \floor{3/2 \cdot d}$ holds.
        \label{lem:ed_compstructure:iv}
        \item Set $\bar{T} \coloneqq T\fragmentco{\ell}{r}$.
            Whenever, a candidate set $C$ of size $\floor{C/d} = \Ohtilde(k)$ is asked,
            use \cref{lem:periodiccase_ed_small_c} on $P,\bar{T},k,d$ in the role of $P,T,k,d$, respectively,
            to find such set and return it.

            Otherwise, if $\hd(P,Q^*) = d$ and $C$ is asked in the other compressed representation,
            return $|Q|$ and $\big(\fragment{\ell_P- \ell_T-3d}{\ell_P- \ell_T+3d} \ \mathrm{mod} \ |Q|\big)$
            as $q$ and $I$, respectively.
        \label{lem:ed_compstructure:v}
    \end{enumerate}

    For the proof of correctness we rely on \cite{CKW20}.
    Step \eqref{lem:ed_compstructure:i}, \eqref{lem:ed_compstructure:ii},
    \eqref{lem:ed_compstructure:iii} and \eqref{lem:ed_compstructure:iv}
    of the procedure simulate one to one \cite[Algorithm 13]{CKW20}.
    In \cite[Lemma 6.8]{CKW20} is proven that if the procedure returns at \eqref{lem:ed_compstructure:iii}, then $\OccE_k(P,T) = \emptyset$.
    Otherwise, if we proceed to \eqref{lem:ed_compstructure:iv} the authors show that $\bar{T}$
    satisfies $\eds{\bar{T}}{Q} \leq 3d$.

    From this last property of $\bar{T}$ follows that the conditions for \cref{lem:periodiccase_ed_small_c} are satisfied.
    Whenever, $\hd(P,Q^*) = d$ holds,
    the conditions for \cref{lem:periodiccase_ed_progression} are satisfied as well.
    In this last case, observe that the definitions of $C$ given in the statement of \cref{lem:periodiccase_ed_progression} and the definition with respect to $q$ and $I$
    are equivalent.
\end{proof}

\subsubsection{Finding a Candidate Set in the Three Structural Cases}
\label{sec:three_cases}

We conclude this (sub)section by providing the proofs for \cref{lem:ed_breakcase}, \cref{lem:ed_repregion}, and \cref{lem:ed_approxpercase}.

\edbreakcase

\begin{proof}
    Consider the following procedure:
    \begin{enumerate}[(i)]
        \item select uniformly at random a break $B=P\fragmentco{\beta}{\beta + |B|}$ among the $2k$ disjoint breaks $B_1, \ldots, B_{2k}$;
        \label{breakcase:i}
        \item compute $\Occ(B, T)$ using \cref{thm:quantum_matching_all}; lastly
        \label{breakcase:ii}
        \item\label{breakcase:iii} Compute and return the set
            \[C = \{x \in \fragment{0}{n} \mid \text{there exists $x' \in \Occ(B, T)$ such that $|x' - \beta -x| \leq k$}\}.\]
    \end{enumerate}

    \begin{claim} \label{claim:breakcase}
        Let $\mA : P \onto T\fragmentco{t}{t'}$ be an alignment of cost at most $k$.
        Then, with constant probability, the following two claims hold:
        \begin{enumerate}[(a)]
            \item $\edal{\mA}(P\fragmentco{\beta}{\beta + |B|}, \mA(P\fragmentco{\beta}{\beta + |B|})) = 0$; and
            \label{it:breakcase:claim:a}
            \item $t \in C$.
            \label{it:breakcase:claim:b}
        \end{enumerate}
        Moreover, $|\floor{C/k}| = \Oh(k)$.
    \end{claim}

    \begin{claimproof}
        For~\eqref{it:breakcase:claim:a}, note that the probability that no edit occurs in break $B$ is at least $k/2k = 1/2$.

        For~\eqref{it:breakcase:claim:b}, consider $x' \in \Occ(B,T)$ such that $(\beta, x') \in \mA$.
        Since $\mA$ has cost at most $k$, then $|(x' - t) - \beta| \leq k$ and $t \in C$.

        For the last part of the claim, note that $|\Occ(B,T)|\le \lceil|T|/\per(B)\rceil \le 192k = \Oh(k)$ because $\per(B) > m/128k$ and $n \leq \threehalves m$.
        By rewriting $C = \bigcup_{x' \in \Occ(B,T)} C_{x'}$, where $C_{x'} = \fragment{0}{|T|} \cap \fragment{x'-\beta-k}{x'-\beta+k}$, we obtain $|\floor{C_{x'}/k}| = \Oh(1)$ for all $x'\in \Occ(B,T)$.
        Consequently, $|\floor{C/k}| \leq \sum_{x'} |\floor{C_{x'}/k}| = \Oh(k)$.
    \end{claimproof}

    From \cref{claim:breakcase} follows that $x\in C$ holds with constant probability for an arbitrary $x \in \OccE_k(P,T)$.
    Consequently, by repeating the procedure $Oh(\log n)$ times independently and returning the union of the candidate sets, we obtain with high probability a candidate set as described in the statement of the lemma.

    The quantum time required for a single execution of the procedure is dominated by the computation of $\Occ(B,T)$ using \cref{thm:quantum_matching_all}, which takes $\Ohtilde(\sqrt{km})$ quantum time. The total quantum time for the algorithm differs from that of a single execution by only a logarithmic factor.
\end{proof}

\edrepregion

\begin{proof}
    First, we prove a useful claim.

    \begin{claim} \label{claim:ed_repregions_standard_trick}
        Let $R \in \{R_1, \ldots, R_{r}\}$ be a repetitive region. Then, we can compute a candidate set $\OccE_{\kappa}(R, T) \subseteq C_R$ such that $|\floor{C_R/d}| = \Ohtilde(k)$, where $d = \edl{R}{Q}$ and $\kappa = \floor{4k/m \cdot |R|}$.
        This requires $\Ohtilde(\sqrt{km})$ queries and $\Ohtilde(\sqrt{km}+k^2)$ time.
    \end{claim}

    \begin{claimproof}
        We first show how to prove \cref{claim:ed_repregions_standard_trick} under two assumptions:
        \begin{enumerate}[(a)]
            \item we have an upper bound $u$ on $\kappa$, and
            \item we have a quantum algorithm that, given $T' \coloneqq T\fragment{t}{t'} \substr T$ of length $|T'| \leq 5/4 \cdot |R|+u$, outputs $\floor{C/d}$ where $C$ is such that $\OccE_{\kappa}(R,T') \subseteq C \subseteq \fragment{0}{n}$
            and $|\floor{C/d}| = \Ohtilde(d)$.
            The algorithm must do this within $\Oh(\sqrt{d|R|})$ queries and $\Oh(\sqrt{d|R|}+d^2)$ time.
            \label{asmp:ed_repregions_standard_trick:b}
        \end{enumerate}

        We combine the Standard Trick with the algorithm from \eqref{asmp:ed_repregions_standard_trick:b}.
        Divide $T$ into $\Oh(m/|R|)$ contiguous blocks of length $|R|/4$ (the last block might be shorter),
        and iterate over all blocks.
        When iterating over the $i$-th block of the form $T\fragmentco{x_i}{\min(x_i + |R|/4, n)}$,
        consider the segment $S_i = T\fragmentco{x_i}{\min(x_i + 5/4 \cdot |R|+u, n)}$ and use \cref{lem:ed_compstructure} on the algorithm from \eqref{asmp:ed_repregions_standard_trick:b} on $R$, $S_i$, $\kappa$, $d$ to retrieve a set $C_i$. Importantly, as $\kappa \leq u$, each $\kappa$-error occurrence is fully contained in at least one $S_i$.

        After processing all blocks, combine the candidate sets from each block to form
        $C_R = \bigcup_{i=1}^{b} C_i$ ()
        of size $|\floor{C_R/d}| = \Ohtilde(m/|R| \cdot d) = \Ohtilde(k)$.
        This requires $\Ohtilde(m/|R| \cdot \sqrt{d|R|}) = \Ohtilde(m/|R| \cdot \sqrt{k/m} \cdot |R|) = \Ohtilde(\sqrt{km})$ queries and
        $\Ohtilde(m/|R| \cdot (\sqrt{d|R|} + d^2)) = \Ohtilde(\sqrt{km}+k^2)$ time.

        To finish the proof, we distinguish two regimes of $\kappa$ where we can satisfy the two assumptions:
        \begin{itemize}
            \item If $\kappa \leq |R|/4$, we set $u = |R|/4$ and use as algorithm \cref{lem:ed_compstructure} on $R,T',\kappa,d$
            in the role of $P,T,k$ and $d$, respectively. As $d = \hd(R,Q^*)$, we can choose to retrieve a candidate set $C_i$ of bounded size.
            \item If $|R|/4 \leq \kappa \leq |R|$, we set $u = |R|/4$ and an algorithm
            that returns $C = \fragment{t}{t'}$.
            As $d \geq \kappa \geq |R|/4$, the set $C$ satisfies $|C| = \Oh(|R|) = \Oh(d)$.
            \claimqedhere
        \end{itemize}
    \end{claimproof}

    Next, consider the following procedure:
    \begin{enumerate}[(i)]
        \item Select a repetitive region $R=P\fragmentco{\rho}{\rho + |R|}$ among $R_1, \ldots, R_r$ with probability proportional to their length, that is,
        \[
            \pr{R = R_i} = \frac{|R_i|}{\sum_{i'=1}^{r} |R_{i'}|} \quad \text{for $i \in \fragment{1}{r}$.}
        \]
        \label{it:approxpercase:i}
        \item Use \cref{claim:ed_repregions_standard_trick} to compute $\floor{C_R/d}$ for a set $\OccE_{\kappa}(R, T)\subseteq C_R \subseteq \fragment{0}{n}$ such that $|\floor{C_R/d}|=\Ohtilde(k)$,
        where $\kappa = \floor{4k/m \cdot |R|}$ and $d = \edl{R}{Q}$.
        \label{it:approxpercase:ii}
        \item Compute and return the set
        \[
            C = \{x \in \fragment{0}{|T|} \mid \text{there exists $x' \in d \cdot \floor{C_R/d}$ such that $|x' - \rho - x| \leq 10k$}\}.
        \]
        \label{it:approxpercase:iv}
    \end{enumerate}

    \begin{claim} \label{approxpercase:claim}
        Let $\mA : P \onto T\fragmentco{t}{t'}$ be an alignment of cost at most $k$.
        Then, with constant probability, the following two claims hold:
        \begin{enumerate}[(a)]
            \item $\edal{\mA}(R, \mA(R)) \leq \lfloor 4k/m \cdot |R| \rfloor$; and
            \label{it:approxpercase:claim:a}
            \item $t \in C$.
            \label{it:approxpercase:claim:b}
        \end{enumerate}
        Moreover, $|\floor{C/k}| = \Ohtilde(k)$.
    \end{claim}

    \begin{claimproof}
        For~\eqref{it:approxpercase:claim:a}, define the index set $I \coloneqq \{i \in \fragment{1}{r} : \edal{\mA}(R_i, \mA(R_i)) \leq \lfloor 4k/m \cdot |R_i| \rfloor\}$.
        Observe that
        \[
            k \ge \edal{\mA}(P, T\fragmentco{t}{t'})
            \geq \sum_{i \in I} \edal{\mA}(R_i, \mA(R_i))
            \geq \sum_{i \notin I} 4k/m \cdot |R_i|
            \geq 4k/m \cdot \sum_{i \notin I} |R_i|,
        \]
        that is, $\sum_{i\notin I} |R_i| \le m/4$.
        At the same time, $\sum_{i=1}^r |R_i| \ge 3/8\cdot m$, so $\sum_{i\in I} |R_i| \ge (3/8-1/4)\cdot m \ge m/8$.
        It follows directly that
        $\pr{\edal{\mA}(R, \mA(R)) \leq \lfloor 4k/m \cdot |R| \rfloor} = \sum_{i \in I} |R_i| / \sum_{i=1}^{r} |R_i|
        \geq (m/8) / m \geq \Omega(1)$, concluding the proof for~\eqref{it:approxpercase:claim:a}.

        For~\eqref{it:approxpercase:claim:b} let $t_R$ be such that $(\rho, t_R) \in \mA$
        for which $t_R \in \OccE_{\kappa}(R,T) \subseteq C_R$ holds.
        From $|(t_R - t) - \rho| \leq k$ together with the triangle inequality follows
        $|t - (d \floor{t_R/d} - \rho)| \leq |d \floor{t_R/d} - t_R| + |(t_R - t) - \rho| \leq d + k \leq 10k$.
        Hence, for $x' = d\floor{t_R/d}$ we have $|x' - \rho - t| \leq 10k$ and $t \in C$ holds.

        For the last part of the claim,
        we rewrite
        $C = \bigcup_{x' \in d \cdot \floor{C_R/d}} C_{x'}$,
        where $C_{x'} = \{x \in \fragment{0}{|T|} \mid |x' - \rho - x| \leq 10k\}$, we obtain
        $|\floor{C_{x'}/k}| = \Oh(1)$ for all $x'\in d \cdot \floor{C_R/d}$.
        From $|\floor{C_R/d}| = \Ohtilde(k)$ follows that $|\floor{C/k}| \leq \sum_{x'} |\floor{C_{x'}/k}| = \Ohtilde(k)$.
    \end{claimproof}

    From \cref{approxpercase:claim} follows that $x\in C$ holds with constant probability for an arbitrary $x \in \OccE_k(P,T)$.
    Consequently, by repeating the procedure $\Oh(\log n)$ times independently and returning the union of the candidate sets, we obtain with high probability a candidate set as described in the statement of the lemma.

    The quantum time required for a single execution of the procedure is dominated by the computation of $C_R$ using \cref{claim:ed_repregions_standard_trick}, which takes $\Ohtilde(\sqrt{km})$ queries and $\Ohtilde(\sqrt{km}+k^2)$ time.
    The total quantum time for the algorithm differs from that of a single execution by only a logarithmic factor.
\end{proof}

\edapproxpercase

\begin{proof}
    We apply \cref{lem:ed_compstructure} to $P$, $T$, $k$, and $\max(\hd(P, Q^*), 2k)$, using the latter in place of $d$. If $\hd(P, Q^*) \geq 2k$, we set $d = \hd(P, Q^*)$ and use the algorithm from \cref{lem:ed_compstructure} to obtain $q$ and $I$.

    If, on the other hand, $\hd(P, Q^*) < 2k$, the algorithm from \cref{lem:ed_compstructure} returns a set $C'$ such that $\OccE_k(P, T) \subseteq C'$ and $|C' / 2k| = \Ohtilde(k)$. However, we note that dividing by $2k$ produces an incorrect result. Therefore, instead of returning $\floor{C'/2k}$, we return $\bigcup_{c \in C'} \{2c, 2c + 1\}$ as $\floor{C/k}$.
    Clearly, $|\floor{C/k}| = \Ohtilde(k)$. Additionally, since $\OccE_k(P, T) \subseteq C'$, for every $t \in \OccE_k(P, T)$, there exists some $c \in C'$ such that $t \in [c2k, (c+1)2k)$, which implies $\floor{t/k} \in C$.
\end{proof}

\bibliographystyle{alphaurl}
\bibliography{main}

\end{document}